\documentclass[10pt,reqno]{amsart}

\usepackage[margin=1in]{geometry}
\usepackage{amsmath}
\usepackage{amsthm, amsfonts,amssymb,euscript}
\usepackage{mathtools}
\usepackage{enumitem}
\usepackage{bm}
\usepackage{amssymb}
\usepackage{graphicx}
\usepackage{caption}
\usepackage{subcaption}
\usepackage{amsfonts}
\usepackage{float}
\usepackage{color}
\usepackage{slashed}
\usepackage{mathrsfs}
\usepackage{upgreek}
\usepackage{pifont}
\usepackage{bbm}
\usepackage[colorlinks=true]{hyperref}

\usepackage[usenames,dvipsnames,svgnames,table]{xcolor}
\hypersetup{linkcolor=BrickRed, urlcolor=green, citecolor=blue, linktoc=page}

\numberwithin{equation}{section}

\newtheorem*{proposition*}{Proposition}
\newtheorem*{theorem*}{Theorem}
\newtheorem*{conjecture*}{Conjecture}
\newtheorem*{claim*}{Claim}
\newtheorem*{lemma*}{Lemma}
\newtheorem*{corollary*}{Corollary}
\newtheorem{theorem}{Theorem}[section]
\newtheorem{proposition}[theorem]{Proposition}

\newtheorem{lemma}[theorem]{Lemma}
\newtheorem{corollary}[theorem]{Corollary}

\newtheorem*{definition*}{Definition}

\newtheorem*{assumption*}{Assumption}

\newtheorem*{remark*}{Remark}
\newtheorem{remark}{Remark}[section]
\newtheorem*{condition*}{Condition}
\newtheorem{condition}{Condition}

\usepackage{xargs}   
\usepackage[colorinlistoftodos,prependcaption,textsize=tiny]{todonotes}
\newcommandx{\unsure}[2][1=]{\todo[linecolor=red,backgroundcolor=red!25,bordercolor=red,#1]{#2}}
\newcommandx{\change}[2][1=]{\todo[linecolor=blue,backgroundcolor=blue!25,bordercolor=blue,#1]{#2}}
\newcommandx{\info}[2][1=]{\todo[linecolor=OliveGreen,backgroundcolor=OliveGreen!25,bordercolor=OliveGreen,#1]{#2}}
\newcommandx{\improvement}[2][1=]{\todo[linecolor=Plum,backgroundcolor=Plum!25,bordercolor=Plum,#1]{#2}}
\newcommandx{\thiswillnotshow}[2][1=]{\todo[disable,#1]{#2}}

\newcommand{\la}{\langle}
\newcommand{\ra}{\rangle}
\newcommand{\R}{\mathbb{R}}
\newcommand{\s}{\mathbb{S}}
\newcommand{\C}{\mathbb{C}}

\newcommand{\Z}{\mathbb{Z}}
\newcommand{\N}{\mathbb{N}}

\newcommand{\snabla}{\slashed{\nabla}}

\newcommand{\Sc}{\textrm{\ding{34}}}

\newcommand{\Lbar}{\underline{L}}

\newcommand{\hpsi}{\widehat{\psi}}

\newcommand{\h}{\mathbbm{h}}
\newcommand{\re}{\textnormal{Re}\,}
\newcommand{\im}{\textnormal{Im}\,}

\newcommand{\sign}{\,\textnormal{sign}\,}
\newcommand{\q}{q}

\DeclareFontFamily{U}{mathx}{}
\DeclareFontShape{U}{mathx}{m}{n}{<-> mathx10}{}
\DeclareSymbolFont{mathx}{U}{mathx}{m}{n}
\DeclareMathAccent{\widecheck}{0}{mathx}{"71}

\allowdisplaybreaks

\usepackage[foot]{amsaddr}

\begin{document}

\author{Dejan Gajic$^*$$^1$}
\address{$^{1}$\small Institut f\"ur Theoretische Physik, Universit\"at Leipzig, Br\"uderstrasse 16, 04103 Leipzig, Germany }
\email{$^*$dejan.gajic@uni-leipzig.de}
\title[Charged scalar fields on Reissner--Nordstr\"om II: late-time tails and instabilities]{Charged scalar fields on Reissner--Nordstr\"om spacetimes II: late-time tails and instabilities}
\maketitle

\begin{abstract}This is the second part of a series of papers deriving the precise, late-time behaviour and (in)stability properties of charged scalar fields on near-extremal Reissner--Nordstr\"om spacetimes via energy estimates.  In this paper, we use purely physical-space based methods to establish the precise late-time behaviour of solutions to the charged scalar field equation in the form of oscillating and decaying late-time tails that satisfy inverse-power laws, assuming global integrated energy decay estimates, which are proved in the companion paper \cite{gaj26a}.

This paper provides the first pointwise decay estimates for charged scalar fields on black hole backgrounds without an assumption of smallness of the scalar field charge.

We also prove the existence of asymptotic instabilities for the radiation field along future null infinity and, in the extremal case, also along the future event horizon. Both the energy methods and the precise late-time asymptotics derived in this paper are expected to play an important role in future nonlinear studies of black hole dynamics in the context of the spherically symmetric (Einstein--)Maxwell--charged scalar field equations, as well as in the context of extremal Kerr spacetimes.
\end{abstract}

\tableofcontents
\section{Introduction}
Late-time dynamics around sub-extremal Kerr black holes solutions \cite{kerr63} to the Einstein vacuum equations
\begin{equation}
\label{eq:ee}
{\rm Ric}[g]=0	
\end{equation}
are intimately related to the following three geometric properties of Kerr spacetimes $g=g_{M,a}$: the \emph{red-shift} along null generators of the event horizon, the existence of \emph{trapped null geodesics} and the presence of an \emph{ergoregion}. 

The latter two manifest themselves in the study of wave equations on Kerr as energy concentration around trapped null geodesics for finite times \cite{janpaper} and energy amplification in certain time-frequency regimes  known as \emph{rotational superradiance} \cite{zel71,zel72}. In contrast, red-shift emerges as a favourable energy decay mechanism near the event horizon \cite{redshift}.

The study of late-time dynamics of waves on extremal Kerr black holes, members of the Kerr family that attain a maximal angular momentum $(|a|=M)$, becomes considerably more difficult, due to a \emph{strong coupling} of superradiance with the \emph{degeneration} of red-shift and the existence of trapped null geodesics arbitrarily close to the event horizon. The absence of red-shift and the coupling of geometric obstructions to decay may be thought of as the source of pointwise growth or asymptotic instabilities along the event horizon in the context of linear wave equations \cite{aretakis4,zimmerman1,zimmerman4,gaj23}.

A robust control of the strong coupling between the lack of red-shift, the trapping of null geodesics and superradiance for \emph{linear} wave equations on fixed (near-)extremal Kerr backgrounds is fundamental for an understanding of the ultimate fates of dynamical spacetime solutions to \eqref{eq:ee} arising from small initial data perturbations of extremal Kerr data.

A shortcut towards shedding light on the nonlinear effects of the above strong coupling of difficulties in black hole dynamics, bypassing a full resolution in the extremal Kerr setting, is the consideration of small perturbations of (near-)extremal Reissner--Nordstr\"om initial data in the context of the spherically symmetric Einstein--Maxwell-(massless) charged scalar field (EMCSF) system of equations:
\begin{align}
	\label{eq:eemax}
{\rm Ric}[g]-\frac{1}{2}R[g]g=&\: 8\pi \left(\mathbb{T}^{EM}[g,F]+\mathbb{T}^{SF}[g,\phi]\right),\\
	\label{eq:max1}
dF=&\:0,\\
	\label{eq:max2}
\nabla_{\nu}F^{\nu \mu}=&\:4\pi\mathfrak{q}\im(\phi \overline{\:(^AD)^{\mu}\phi}) ,\\
	\label{eq:CSFnonlin}
(g^{-1})^{\mu \nu}\:(^AD)_{\mu}\:(^AD)_{\nu}(\phi)=&\:0,
\end{align}
where $F$ is a Faraday tensor, describing electromagnetic fields, $\phi$ is a scalar field with associated charge parameter (or coupling constant) $\mathfrak{q}$, $\mathbb{T}^{EM}[g,F]$ and $\mathbb{T}^{SF}[g,\phi]$ are the electromagnetic and scalar field stress-energy tensors, respectively, and $\:^AD=\nabla-i\mathfrak{q}A\otimes(\cdot)$ is the gauge derivative, with the 1-form $A$ denoting an electromagnetic potential satisfying $dA=F$.

In the present paper, we perform a precise, quantitative analysis of the \emph{linearization} of \eqref{eq:eemax}--\eqref{eq:CSFnonlin} around electrically charged (sub-)extremal Reissner--Nordstr\"om solutions:
\begin{equation*}
 \left(g_{M,Q}=-\Omega^2dt^2+\Omega^{-2}dr^2+r^2\slashed{g}_{\s^2},\quad F_{Q}=-Qr^{-2}dt\wedge dr,\quad \phi\equiv 0\right),
 \end{equation*}
 with $\Omega^2=1-2Mr^{-1}+Q^2r^{-2}$, which reduces to a single equation in spherical symmetry:
\begin{align}
\label{eq:CSFintro}
	(g^{-1}_{M,Q})^{\mu \nu}\:(^AD)_{\mu}\:(^AD)_{\nu}\phi=0,
\end{align}
where $dA=F_Q$. Extremality corresponds to $|Q|=M$ in this setting. The equation \eqref{eq:CSFintro} is called the \emph{charged scalar field equation} on a Reissner--Nordstr\"om background. Together with the companion paper \cite{gaj26a}, the present paper forms the first mathematically rigorous, global analysis of \eqref{eq:CSFintro} in both the sub-extremal and extremal cases without any restrictions on the size of the charge parameter $\mathfrak{q}$. This may be thought of as, in particular, a first step towards a global analysis of \eqref{eq:eemax}--\eqref{eq:CSFnonlin} for small perturbations of (near-)extremal Reissner--Nordstr\"om data. However, in contrast with the motivating problem \eqref{eq:eemax}--\eqref{eq:CSFnonlin}, we will \underline{not} assume spherical symmetry of $\phi$ in our analysis of \eqref{eq:CSFintro}.

A global understanding of \eqref{eq:CSFintro} relies on methods that can treat the degeneration of red-shift, the presence of trapped null geodesics as well as \emph{charged superradiance} \cite{bek73, gib75}, which may be thought of as a close analogue of rotational superradiance.

In the first part of our series \cite{gaj26a}, we establish global, \emph{integrated energy estimates} via a combination of physical-space and Fourier-space based analysis in the setting of (near-)extremal Reissner--Nordstr\"om, which may be thought of as weak decay statements that serve as the foundation for the stronger, more precise decay statements derived in the present paper. The main results in \cite{gaj26a} immediately extend to the full sub-extremal range of parameters under the additional assumption of mode stability on the real axis.

In the present paper, we obtain a complete understanding of precise, leading-order late-time behaviour and instability phenomena, starting from integrated energy estimates and using purely physical-space-based methods.

The main results of the present paper can be split into the following two theorems:

\begin{theorem}[Late-time asymptotics; informal version of Theorem \ref{thm:mainthmpoint}]
\label{thm:introtails}
	Let $r^{-1}\psi$ be a solution to \eqref{eq:CSFintro} with $\mathfrak{q}\neq 0$ with respect to the conformally smooth electromagnetic gauge $A=-\frac{Q}{r}d\tau$, with $\tau$ a time function whose level sets are horizon-intersecting and asymptotically hyperboloidal spacelike hypersurfaces and with $\partial_{\tau}$ a Killing vector field normalized at infinity that is timelike away from the event horizon.  Assume that $\psi$ arises from the evolution of smooth, compactly supported initial data. Then, for $1-\frac{|Q|}{M}\ll 1$ or $|\mathfrak{q}Q|\ll 1$:
	\begin{align*}
		\psi(\tau,r,\theta,\varphi)\sim \sum_{\substack{\ell\in \N_0\\\ell(\ell+1)<\mathfrak{q}^2Q^2}}\sum_{\substack{m\in \Z\\ |m|\leq \ell}}\left[\Psi_{\ell m}^{\infty}(\tau,r;\mathfrak{q}, \mathfrak{I}_{\ell m}[\psi])+\Psi^+_{\ell m}(\tau,r;\mathfrak{q},\mathfrak{H}_{\ell m}[\psi])\right]Y_{\ell m}(\theta,\varphi)\quad (\tau\to \infty),
	\end{align*}
	where $\Psi^{+}_{\ell m}(\tau,r;\mathfrak{q},\mathfrak{H}_{\ell m}[\psi])=e^{-i\mathfrak{q}Qr_+^{-1}\tau}\Psi_{\ell m}^{\infty}\left(\tau,r_++\frac{r_+^2}{r-r_+};-\mathfrak{q},\mathfrak{H}_{\ell m}[\psi]\right)$, with $r_+$ the event horizon area radius, and where $\Psi_{\ell m}^{\infty}$ are globally-defined functions satisfying the following late-time asymptotics with $r_0\geq r_+$:
	\begin{align}
	\label{eq:introasympPsiinfty1}
	\Psi^{\infty}_{\ell m}(\tau,\infty;\mathfrak{q},\mathfrak{I}_{\ell m}[\psi])\sim &\: \mathfrak{I}_{\ell m}[\psi]\tau^{-\frac{1}{2}-\frac{1}{2}\beta_{\ell}+i\mathfrak{q}Q},\quad & (\beta_{\ell}\in (0,1)),\\
		\label{eq:introasympPsiinfty2}
		\Psi^{\infty}_{\ell m}(\tau,\infty;\mathfrak{q},\mathfrak{I}_{\ell m}[\psi])\sim &\: \mathfrak{I}_{\ell m}[\psi]\frac{\tau^{-\frac{1}{2}+i\mathfrak{q}Q}}{\log \tau},\quad & (\beta_{\ell}=0),\\
			\label{eq:introasympPsiinfty3}
		\Psi^{\infty}_{\ell m}(\tau,\infty;\mathfrak{q},\mathfrak{I}_{\ell m}[\psi])\sim &\: \mathfrak{I}_{\ell m}[\psi]\tau^{-\frac{1}{2}-\sigma\frac{\beta_{\ell}}{2}+i\mathfrak{q}Q}\sum_{n=0}^{\infty}\frac{\zeta^{\sigma n}\tau^{-\sigma n\beta_{\ell}}}{\Gamma\left(\frac{1}{2}+i\mathfrak{q}Q-(n+\frac{1}{2})\sigma\beta_{\ell}\right)},\quad & (\beta_{\ell}\in i(0,\infty)),\\
			\label{eq:introasympPsiinfty4}
		\Psi^{\infty}_{\ell m}(\tau,r_0;\mathfrak{q},\mathfrak{I}_{\ell m}[\psi])\sim &\:\mathfrak{I}_{\ell m}[\psi]\mathfrak{w}_{\ell}(r_0;\mathfrak{q}Q)\tau^{-1-\beta_{\ell}}, \quad & (\beta_{\ell}\in (0,1)),\\
			\label{eq:introasympPsiinfty5}
		\Psi^{\infty}_{\ell m}(\tau,r_0;\mathfrak{q},\mathfrak{I}_{\ell m}[\psi])\sim &\:\mathfrak{I}_{\ell m}[\psi]\mathfrak{w}_{\ell}(r_0;\mathfrak{q}Q)\frac{\tau^{-1}}{\log^2\tau},\quad & (\beta_{\ell}=0),\\
		\label{eq:introasympPsiinfty6}
			\Psi^{\infty}_{\ell m}(\tau,r_0;\mathfrak{q},\mathfrak{I}_{\ell m}[\psi])\sim &\:\mathfrak{I}_{\ell m}[\psi]\mathfrak{w}_{\ell}(r_0;\mathfrak{q}Q)\tau^{-1-\sigma\beta_{\ell}}\\ \nonumber
			\times &\sum_{n=0}^{\infty}\zeta^{\sigma n}\frac{\tau^{-\sigma n\beta_{\ell}}}{\Gamma(\sigma \beta_{\ell}+1) \Gamma(\frac{1}{2}-\frac{1}{2}\sigma \beta_{\ell}+i\mathfrak{q}Q)\Gamma(-(n+1)\sigma\beta_{\ell})}, & (\beta_{\ell}\in i(0,\infty)).
	\end{align}
	Here,
	\begin{itemize}
	\item $\beta_{\ell}:=\sqrt{(2\ell+1)^2-4\mathfrak{q}^2Q^2}\in [0,1)\cup i(0,\infty)$,
	\item $\sigma=\sign(\mathfrak{q}Q)$, $\zeta=\zeta(\mathfrak{q}Q,\beta_{\ell})$ is a complex constant (defined in Theorem \ref{thm:mainthmpoint}) and $\Gamma(\cdot)$ is the gamma function,
	\item $\mathfrak{I}_{\ell m}[\psi]$ and $\mathfrak{h}_{\ell m}[\psi]$ are linear functionals of (derivatives of) initial data at $\tau=0$, so that each $\mathfrak{I}_{\ell m}[\psi]$ vanishes only for a codimension-$1$ set of initial data, $\mathfrak{H}_{\ell m}[\psi]\equiv 0$ when $|Q|<M$ and in the case $|Q|=M$, each $\mathfrak{H}_{\ell m}[\psi]$ also vanishes only for a codimension-$1$ set of initial data,
	\item $\mathfrak{w}_{\ell}Y_{\ell m}$ are time-independent solutions to \eqref{eq:CSFintro} that are smooth at $r=r_+$ and satisfy $\mathfrak{w}_{\ell}(r_+;\mathfrak{q}Q)=1$. 
		\end{itemize}
		In the case that $1-\frac{|Q|}{M}\gtrsim 1$ and $|\mathfrak{q}Q|\gtrsim 1$, the above results remain valid under the condition of the validity of mode stability on the real axis.
\end{theorem}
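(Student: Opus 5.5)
The plan follows the physical-space strategy announced in the abstract. I take as given the integrated energy decay estimates of \cite{gaj26a}, valid for $1-\frac{|Q|}{M}\ll 1$ or $|\mathfrak{q}Q|\ll 1$, or in general under mode stability on the real axis. The first step is to decompose $\psi=\sum_{\ell,m}\psi_{\ell m}Y_{\ell m}$. Each mode satisfies a $1+1$-dimensional equation in the hyperboloidal coordinates $(\tau,r)$. In the gauge $A=-\frac{Q}{r}d\tau$ its potential near $r=\infty$ has the form $\frac{\ell(\ell+1)-\mathfrak{q}^2Q^2+\dots}{r^2}$, with an additional first-order term $2i\mathfrak{q}Q\,r^{-1}\partial_r$ coming from the long-range gauge coupling.

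The second step treats the modes with $\ell(\ell+1)\geq \mathfrak{q}^2Q^2$. For these I will show, via the $r^p$-weighted hierarchy of Dafermos--Rodnianski, that $\psi_{\ell m}$ decays faster than every rate in \eqref{eq:introasympPsiinfty1}--\eqref{eq:introasympPsiinfty6}. The relevant rates are $\tau^{-1-\mathrm{Re}\,\beta_\ell}$ in the interior and $\tau^{-\frac12-\frac12\mathrm{Re}\,\beta_\ell}$ at infinity, with $\mathrm{Re}\,\beta_\ell\ge 1$ for these modes. The hierarchy is commuted with $r^2\partial_r$ and with $\partial_\tau$ as needed to upgrade energy decay to pointwise decay. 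Near the horizon in the extremal case, the Couch--Torrence-type inversion $r\mapsto r_++\frac{r_+^2}{r-r_+}$ maps the horizon problem to the null-infinity problem with $\mathfrak{q}\mapsto-\mathfrak{q}$ and an extra phase $e^{-i\mathfrak{q}Qr_+^{-1}\tau}$. This is exactly the structure of $\Psi^+_{\ell m}$, so every near-infinity argument is run a second time at the horizon. In the sub-extremal case the red-shift makes the horizon contribution lower order, which gives $\mathfrak{H}_{\ell m}\equiv0$.

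The third step handles the finitely many modes with $\ell(\ell+1)<\mathfrak{q}^2Q^2$. Here I renormalize: set $\Phi=r^{-\frac12+\frac12\beta_\ell-i\mathfrak{q}Q}\,\psi_{\ell m}$, or use the corresponding conjugate weight, so that the leading $r^{-2}$ potential and the $r^{-1}$ gauge term cancel in the operator $\partial_r(r^2\partial_r\,\cdot)$. The analogue of a Newman--Penrose constant is then conserved along $\mathcal I^+$ to leading order, and its initial value defines $\mathfrak{I}_{\ell m}[\psi]$ as a linear functional of data, nonvanishing off a codimension-one set. The asymptotic profile comes from comparison with an explicit model. On the region $r\gtrsim\tau$, passing to the self-similar variable $x=r/\tau$ reduces the leading-order equation to a confluent hypergeometric (Kummer) equation. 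Its solution behaves like $\tau^{-\frac12-\frac12\beta_\ell+i\mathfrak{q}Q}$ at $x=\infty$, which is $\mathcal I^+$, and its $x\to0$ expansion produces the coefficients involving $\Gamma$-functions.

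Where $\beta_\ell\in i(0,\infty)$ both branches $\pm\beta_\ell$ have the same real part. I then plan to sum the reflections between the interior and the asymptotic region as a geometric series, with ratio $\zeta^\sigma\tau^{-\sigma\beta_\ell}$, which gives the series in \eqref{eq:introasympPsiinfty3} and \eqref{eq:introasympPsiinfty6}. For $\beta_\ell=0$ the two branches merge into $r^{-\frac12}$ and $r^{-\frac12}\log r$, which yields the logarithmic corrections in \eqref{eq:introasympPsiinfty2} and \eqref{eq:introasympPsiinfty5}. In the region $r\le r_0$ I will match using the time-independent solution $\mathfrak{w}_\ell$ that is regular at $r_+$. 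Integrating the equation $\partial_\tau\psi\approx \tau^{-1}(\dots)$ inward and using the elliptic estimate for the stationary operator should give $\psi_{\ell m}\approx\mathfrak{w}_\ell\,\tau^{-1-\beta_\ell}$ times the constant obtained from matching with the $x\to0$ asymptotics.

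The main obstacle is the third step. There the decay rates lie below the threshold that the standard $r^p$ hierarchy can reach, since $p$ is limited to $p<\beta_\ell$. The time-integral construction (inverting $\partial_\tau$) also fails in the complex $\beta_\ell$ case. I would first try to control the error between $\psi_{\ell m}$ and the explicit self-similar solution by weighted energy estimates in $(\tau,x)$ with weights adapted to $\mathrm{Re}\,\beta_\ell$. The oscillatory factor $\tau^{i\mathfrak{q}Q}$ would be absorbed into the gauge-renormalized variable. The remaining errors would be closed with a Grönwall argument exploiting the integrated energy decay of \cite{gaj26a}.
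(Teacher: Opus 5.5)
Your description of the shape of the tail (Couch--Torrence symmetry, self-similar Minkowski profiles, the series in $\zeta^{\sigma}\tau^{-\sigma\beta_\ell}$, matching via $\mathfrak{w}_\ell$) agrees with the paper. However, there is a genuine gap in your third step, and it sits exactly where the content of the theorem lies: what $\mathfrak{I}_{\ell m}[\psi]$ is, and why $\psi-\Psi$ is of lower order.

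First, a Newman--Penrose-type quantity conserved along $\mathcal{I}^+$ does not exist for $\q\neq0$. Evaluating \eqref{eq:maineqradfieldconf} at $\rho_\infty=0$ (with $\h\equiv0$ near infinity) gives $2T(r^2X\psi_{\ell m})=-(\ell(\ell+1)-i\q)\psi_{\ell m}$ on $\mathcal{I}^+$. After commuting with $\partial_{\rho_\infty}^k$, the corresponding coefficient is $k(k+1)+2i\q k+i\q-\ell(\ell+1)$, whose imaginary part $\q(2k+1)$ never vanishes. So no such conservation law survives at any order. Multiplying by $r^{-\frac12+\frac12\beta_\ell-i\q}$ only destroys smoothness at $\mathcal{I}^+$; it does not create a conserved quantity.

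More decisively, the theorem concerns compactly supported data. For such data anything read off at $\mathcal{I}^+\cap\Sigma_0$ vanishes, whereas $\mathfrak{I}_{\ell m}$ must be nonzero off a codimension-one set. With your definition you would be trying to prove $\psi_{\ell m}=o(\text{tail})$, which is generically false. In the paper, $\mathfrak{I}_{\ell m}$ is a \emph{global} integral over all of $\Sigma_0$ of the data against the horizon-regular stationary solution $\mathfrak{w}_\ell$ times an explicit phase, normalised by $B_{\infty,\ell m}\neq0$ (Proposition \ref{prop:timeinvffixedmodeconstr}, Corollary \ref{cor:Bconstnonzero}). It arises as a solvability condition, not from a conservation law.

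That solvability condition comes from the time-integral construction you set aside. It does fail for $\psi$, but the paper applies it to $\hpsi=\psi-\Psi$. Here $\Psi$ is built from Minkowski solutions $\Psi_0^\infty$, which are Gauss (not Kummer) hypergeometric functions of $\varrho=(\tau+1)/(2r)$, multiplied by $\mathfrak{w}_\ell/\mathfrak{w}^0_\ell$. The constant $\zeta$ (resp.\ $\eta$) is tuned to $\alpha_-/\alpha_+$ so that $r^3G[\Psi]=O((\tau+1)^{-\frac32-\frac12\re\beta_\ell})$ (Proposition \ref{prop:estinhomG}). For $\beta_\ell\in i(0,\infty)$ your geometric series converges only thanks to the bound on $|\alpha_-/\alpha_+|$ in Corollary \ref{cor:kappanotsmallstatsoln}.

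Data for $T^{-1}\hpsi_\ell$ (and $K^{-1}\hpsi_\ell$ when $|Q|=M$) are then obtained by integrating the ODE twisted by $\mathfrak{w}_\ell$, imposing regularity at $r_+$ and decay at infinity. These conditions are compatible exactly when $\int_{r_+}^\infty r\widetilde{G}_\ell\,dr=0$, and this is what fixes $\mathfrak{I}_{\ell m}$ (resp.\ $\mathfrak{H}_{\ell m}$). The construction works for imaginary $\beta_\ell$ too, and high-$\ell$ elliptic estimates allow summation over $\ell$.

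Decay of $\hpsi$ then follows by applying the $TK$-energy decay of Proposition \ref{prop:edecay} and Corollary \ref{prop:edecaygeneralp} to $(TK)^{-1}\hpsi$. That decay is itself gained by using the equation to trade one $TK$ for two powers of the weight, followed by an iterated dyadic interpolation.

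Your replacement, weighted error estimates closed by Gr\"onwall, has no such mechanism. For $\re\beta_0=0$, Theorem \ref{thm:iedpaper1} allows only $1<p<1+\epsilon$, which yields no useful rate for $\psi$. Without the correct coefficient, the error does not even decay faster than the tail. The same limitation affects your second step: the standard $p\le2$ hierarchy alone does not beat $\tau^{-1-\beta_0}$ at bounded $r$.
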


\begin{theorem}[Instabilities and energy concentration; informal version of Theorem \ref{thm:mainpointwinst}]
\label{thm:introinst}
	Under the assumptions of Theorem \ref{thm:introtails}, the following asymptotic instability and non-decay results hold: 
	\begin{enumerate}[label=\emph{(\roman*)}]
	\item For all $k\in \N_0$, there exist constants $\mathfrak{c}_k>0$, such that for all $\ell\in \N_0$ with $\ell(\ell+1)<\mathfrak{q}^2Q^2$ and $m\in \Z$ with $|m|\leq \ell$:
\begin{align}
\label{eq:introlowboundbetanonzeroRNhor}	
		\limsup_{\tau\to \infty}(1+\tau)^{-k-\frac{1}{2}+\frac{\re \beta_{\ell}}{2}}|D_X^{k+1}\psi_{\ell m}|(\tau,r_+)\geq &\:|\mathfrak{c}_k| |\mathfrak{H}_{\ell m}[\psi]| \quad &(|Q|=M,\:\beta_{\ell}\neq 0),\\
		\label{eq:introlowboundzerobetaRNhor}	
		\limsup_{\tau\to \infty}\log(1+\tau)(1+\tau)^{-k-\frac{1}{2}}|D_X^{k+1}\psi_{\ell m}|(\tau,r_+)\geq &\:|\mathfrak{c}_k| |\mathfrak{H}_{\ell m}[\psi]|\quad &(|Q|=M,\:\beta_{\ell}=0),\\
		\label{eq:introlowboundbetanonzeroRNinf}	
		\limsup_{\tau\to \infty}(1+\tau)^{-k-\frac{1}{2}+\frac{\re \beta_{\ell}}{2}}|(r^2D_X)^{k+1}\psi_{\ell m}|(\tau,\infty)\geq &\:|\mathfrak{c}_k| |\mathfrak{I}_{\ell m}[\psi]| \quad &(\beta_{\ell}\neq 0),\\
		\label{eq:introlowboundzerobetaRNinf}	
		\limsup_{\tau\to \infty}\log(1+\tau)(1+\tau)^{-k-\frac{1}{2}}|(r^2D_X)^{k+1}\psi_{\ell m}|(\tau,\infty)\geq &\:|\mathfrak{c}_k| |\mathfrak{I}_{\ell m}[\psi]|\quad &(\beta_{\ell}=0),
		\end{align}
with $X$ a radial vector field tangential to $\Sigma_{\tau}$, the level sets of $\tau$. 
\item When $|\mathfrak{q}Q|>\frac{1}{2}$ ($\beta_0\in i(0,\infty)$), \emph{energy concentration} occurs near $\mathcal{H}^+$ in the case $|Q|=M$ and, with additional $r$-weights, also near $\mathcal{I}^+$ when $|Q|\leq M$, in the following sense: there exists a constant $\mathfrak{c}_{\rm int}>0$, such that for all $r_+\leq r_H<r_I<\infty$:
\begin{align}
\label{eq:introenblowup1}
	\limsup_{\tau\to \infty}\int_{\Sigma_{\tau}\cap\{r\leq r_H\}} \mathbb{T}^{SF}[g_{M,Q},r^{-1}\psi](\mathbf{n}_{\tau},\mathbf{n}_{\tau})\,d\mu_{\Sigma_{\tau}}\geq &\: \mathfrak{c}_{\rm int}\sum_{\substack{\ell\in \N_0\\ \ell+\frac{1}{2}<|\mathfrak{q}Q|}}\sum_{\substack{m\in \Z\\ |m|\leq \ell}}|\mathfrak{H}_{\ell m}[\psi]|^2,\\
	\label{eq:introendecy1}
	\lim_{\tau\to \infty}\int_{\Sigma_{\tau}\cap\{r> r_H\}} \mathbb{T}^{SF}[g_{M,Q},r^{-1}\psi](\mathbf{n}_{\tau},\mathbf{n}_{\tau})\,d\mu_{\Sigma_{\tau}}= &\: 0,\\
	\label{eq:introenblowup2}
\limsup_{\tau\to \infty}\int_{\Sigma_{\tau}\cap \{r\geq r_I\}} r^2|D_{r}\psi|^2\,\sin\theta d\theta d\varphi dr\geq &\: \mathfrak{c}_{\rm int}\sum_{\substack{\ell\in \N_0\\ \ell+\frac{1}{2}<|\mathfrak{q}Q|}}\sum_{\substack{m\in \Z\\ |m|\leq \ell}}|\mathfrak{I}_{\ell m}[\psi]|^2,
\end{align}
with induced volume form $d\mu_{\tau}$ on $\Sigma_{\tau}$ and $\mathbf{n}_{\tau}$ the corresponding future-directed normal. 
\end{enumerate}
In the case that $1-\frac{|Q|}{M}\gtrsim 1$ and $|\mathfrak{q}Q|\gtrsim 1$, the above results remain valid under the condition of the validity of mode stability on the real axis.
\end{theorem}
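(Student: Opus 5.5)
The strategy is to derive Theorem \ref{thm:introinst} from the precise asymptotics of Theorem \ref{thm:introtails}, in the form of Theorem \ref{thm:mainthmpoint}, together with the conformal (Couch--Torrence-type) relation between horizon and infinity contained in the definition of $\Psi^+_{\ell m}$. We first project onto a single spherical harmonic $\psi_{\ell m}$. This commutes with \eqref{eq:CSFintro} and with $D_X$, so it is enough to prove lower bounds for each $(\ell,m)$ with $\ell(\ell+1)<\mathfrak{q}^2Q^2$.

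\emph{Part (i) at null infinity.} Write $\psi_{\ell m}=\Psi^\infty_{\ell m}+\Psi^+_{\ell m}+\text{error}$. The error decays faster, including after applying $(r^2D_X)^{k+1}$; this is to be checked from the remainder estimates in Theorem \ref{thm:mainthmpoint}. Along $\mathcal{I}^+$ we use the equation \eqref{eq:CSFintro} in the form $\partial_\tau (r^2 D_X\psi)\approx$ (angular and charge terms)$\cdot\psi$ plus $O(r^{-1})$ corrections. Integrating this in $\tau$, or equivalently using the self-similar profile of $\Psi^\infty_{\ell m}$ in the variable $\tau/r$ (for $\tau\lesssim r$), gives the effect of each application of $r^2 D_X$: it acts like multiplication by $\tau$ times a nonvanishing constant. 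Hence
\[
(r^2D_X)^{k+1}\Psi^\infty_{\ell m}(\tau,\infty)\sim c_k\,\mathfrak{I}_{\ell m}[\psi]\,\tau^{k+\frac12-\frac{\beta_\ell}{2}+i\mathfrak{q}Q}\cdot(\text{bounded oscillatory factor}).
\]
The constant is $c_k\neq0$, essentially a Pochhammer product $\prod_j(\frac12+\frac{\beta_\ell}{2}-i\mathfrak{q}Q+j)$, which cannot vanish because $i\mathfrak{q}Q\neq0$. Taking the $\limsup$ then gives \eqref{eq:introlowboundbetanonzeroRNinf} and \eqref{eq:introlowboundzerobetaRNinf}. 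The $\log$ factor comes from \eqref{eq:introasympPsiinfty2}.

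\emph{The case $\beta_\ell\in i(0,\infty)$.} Here the modulus of $\tau^{-\sigma n\beta_\ell}$ is not small, so we cannot simply take a leading term. Instead the series in \eqref{eq:introasympPsiinfty3} is a function $F(\log\tau)$ that is almost periodic in $\log \tau$. Its limsup is positive as long as $F\not\equiv0$, and this follows because the Gamma-function coefficients are nonzero. This is why the statement is a $\limsup$ rather than a $\lim$.

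\emph{Part (i) at the horizon.} This follows by transferring the previous result through $\Psi^+_{\ell m}(\tau,r)=e^{-i\mathfrak{q}Qr_+^{-1}\tau}\Psi^\infty_{\ell m}(\tau,r_++r_+^2/(r-r_+);-\mathfrak{q},\mathfrak{H}_{\ell m})$. In extremal RN, $D_X$ at $r_+$ corresponds to $r^2D_X$ at infinity under this map, up to lower-order terms. In contrast, $\Psi^\infty_{\ell m}$ is smooth and decays like $\tau^{-1-\re\beta_\ell}$ near $r_+$, so it does not compete.

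\emph{Part (ii).} Integrate the pointwise profile. Near $\mathcal{H}^+$, the term $D_X\Psi^+_{\ell m}\sim\tau^{1/2}$ lives on a region of width $\sim\tau^{-1}$ in $r-r_+$, by self-similarity $\tau(r-r_+)\sim1$. When $\re\beta_\ell=0$ the energy integral of $|D_X\psi|^2$ over that region is therefore $\sim \tau\cdot\tau^{-1}=O(1)$ and is nonzero. Orthogonality of the $Y_{\ell m}$ then gives \eqref{eq:introenblowup1}, and \eqref{eq:introenblowup2} follows symmetrically near infinity with width $r\sim\tau$. For \eqref{eq:introendecy1}, on $\{r>r_H\}$ the non-degenerate energy is controlled by the decaying pointwise bounds together with the integrated energy decay of \cite{gaj26a}, and it tends to zero.

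The main obstacle is making the step ``each $r^2D_X$ costs a factor $\tau$ with nonzero constant'' rigorous, uniformly in the transition region $r\sim\tau$. It also requires the error terms to stay subleading after $k+1$ derivatives. The first thing to try is commuting \eqref{eq:CSFintro} with $r^2D_X$ and proving the derivative asymptotics by induction, directly from the time-integral representation used in Theorem \ref{thm:mainthmpoint}.
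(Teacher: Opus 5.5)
Your part (i) follows the paper's route. At $\rho_\infty=0$, equation \eqref{eq:maineqradfieldconf} with $\h=0$ becomes a transport equation in $\tau$ for $r^2X\psi$, with source built from $(\slashed{\Delta}_{\s^2}+i\q)\psi$ and $r^3G_{\widehat{A}}$. The paper integrates this for $\hpsi=\psi-\Psi$, using $|\hpsi|(\tau,\infty)\lesssim\tau^{-1+\nu}$ from \eqref{eq:mainpointwisedecay} and \eqref{eq:erroresteqforPsiinf}, which gives \eqref{eq:decayr2Xhpsi}. It then reads off the growth of $r^2X\Psi^\infty_0$ explicitly, inducts in $k$ by commuting with $\partial_{\rho_\infty}^k$, and transfers to $\mathcal{H}^+$ via the $\rho_+\leftrightarrow\rho_\infty$, $\q\leftrightarrow-\q$, $T\leftrightarrow K$ symmetry.

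Two corrections to how you framed it. First, the remainder bounds of Theorem \ref{thm:mainthmpoint} cannot by themselves control $(r^2X)^{k+1}(\psi-\Psi)$, since they only involve $(r-M)X$; it is the transport along $\mathcal{I}^+$ (resp.\ $\mathcal{H}^+$) that does this. Second, only values at $\rho_\infty=0$ (resp.\ $\rho_+=0$) enter, so no uniformity in the region $r\sim\tau$ is needed. For $\beta_\ell\in i(0,\infty)$ the series is genuinely periodic in $\log\tau$, and the averaging in \eqref{eq:expvaluePsi0infest} is the quantitative form of your ``$F\not\equiv 0$'' remark.

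The genuine gap is in part (ii). You compute only the energy of the tail function, which is Proposition \ref{prop:Psi0inftygrowthdecay}(ii). You never show that the remainder $\hpsi$ carries asymptotically negligible non-degenerate energy near $\mathcal{H}^+$, or negligible $r^2$-weighted energy near $\mathcal{I}^+$. Without that, nothing prevents $\psi$ and $\Psi$ from having very different energies.

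``Integrating the pointwise profile'' does not supply this. For $|Q|=M$, \eqref{eq:mainthmpoint1} controls only $(r-r_+)X\hpsi$. When $\re\beta_0=0$ this gives at best $|X\hpsi|\lesssim (r-r_+)^{-1}\tau^{-1+\nu}(1+\tau(r-r_+))^{-1/2}$, which is not square-integrable at $r=r_+$. Symmetrically, $|rX\hpsi|\lesssim \tau^{-1+\nu}(1+\tau/r)^{-1/2}$ makes $\int^\infty r^2|X\hpsi|^2\,dr$ diverge. The transport argument of part (i) controls $X\hpsi$ only on $\mathcal{H}^+$ and $\mathcal{I}^+$ themselves, not on the slabs $\{r\le r_H\}$ and $\{r\ge r_I\}$.

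The missing ingredient is decay of the full non-degenerate, $r^2$-weighted energy of the remainder, \eqref{eq:nindegendecaypsihat}: $\int_{\Sigma_\tau}\mathcal{E}_2[\hpsi]\,d\sigma dr\lesssim (1+\tau)^{-1+8\epsilon+\nu}$. The paper proves this in two steps:
\begin{itemize}
\item It uses the equation, applied to $(TK)^{-1}\hpsi$, to bound $\mathcal{E}_2$ by $\mathcal{E}_1$, by $\mathcal{E}_0$ of second-order $\mathbf{Z}$-derivatives, and by the inhomogeneity.
\item It then interpolates between \eqref{eq:energydecayptieminvmin1} and \eqref{eq:energydecayptieminvfaster}.
\end{itemize}
Only with this in hand does $\mathcal{E}_2[\psi]\geq (1-\delta)\mathcal{E}_2[\Psi]-C_\delta\,\mathcal{E}_2[\hpsi]$ yield \eqref{eq:introenblowup1} and \eqref{eq:introenblowup2}. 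That inequality must be combined with averaging in $\log\tau$ over the finitely many $\ell$ with $\ell+\frac12<|\q|$.
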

Precise and more general versions of the above theorems are stated later in the paper in the form of Theorems \ref{thm:mainthmpoint} and \ref{thm:mainpointwinst}. In the next section we provide some initial remarks.

\subsection{Initial remarks}
In this section, we place Theorems \ref{thm:introtails} and \ref{thm:introinst} in a broader context, discuss some new dynamical aspects of charged scalar fields that follow from the theorems, and we mention some remaining open problems. We will also discuss some related previous work. An overview of further, related previous work can be found in \S \ref{intro:prevresults}.

\subsubsection{Neutral scalar fields}
\label{eq:intronetural}
	In the case $\mathfrak{q}=0$, \eqref{eq:CSFintro} reduces to the wave equation on Reissner--Nordstr\"om: 
	\begin{equation}
	\label{eq:introwaveeq}
		\square_{g_{M,Q}}\phi=0.
	\end{equation}
While there is no superradiance present in this setting, the absence of red-shift alone already leads to the existence of an asymptotic instability in the extremal case, known as the \emph{Aretakis instability} \cite{aretakis2}, which can be stated as follows: for $k\geq 1$, there exists a constant $c_k>0$ such that
	\begin{equation}
	\label{eq:aretakisinst}
		||\partial_r^{k+1}\psi(\tau,r_+,\cdot)||_{L^2(\s^2)}\geq c_k H_0[\psi]\tau^{k},
	\end{equation}
	with $H_0[\psi]=\int_{\s^2}\partial_r(r\psi)|_{\Sigma_0\cap \mathcal{H}^+}\,\sin \theta d\theta d\varphi$. Observe that the $\tau$-growth in \eqref{eq:aretakisinst} is \emph{slower} than the $\tau$-growth in \eqref{eq:introlowboundbetanonzeroRNhor}	 and \eqref{eq:introlowboundzerobetaRNhor}. Furthermore, whereas $H_0[\psi]$ vanishes for initial data supported away from the event horizon and the instabilities become weaker (by losing a power of $\tau$ growth \cite{aretakis2012}), we will show that the constants $\mathfrak{h}_{\ell m}[\psi]$ are generically \underline{non-vanishing}, even for data supported away from the horizon.
	
	A precise quantitative analysis of \eqref{eq:introwaveeq} was performed in \cite{paper4}, where an analogue of Theorem \ref{thm:introtails} was obtained. There, it was also shown that the non-degenerate energy decays in $\tau$, in contrast with \eqref{eq:introenblowup1}.
	
	Another important difference with the present paper is that the instability \eqref{thm:mainpointwinst} follows from the existence of conservation laws along the event horizon and does not rely on a quantitative understanding of the asymptotics derived in \cite{paper4}. On the other hand, Theorem \ref{thm:introtails} is crucial for proving Theorem \ref{thm:introinst}. The inability to rely on conservation laws in a derivation of instabilities is a key motivation for deriving precise late-time asymptotics in the present paper.

	In view of the above comparisons, the instabilities in Theorem \ref{thm:introinst} may be thought of as being considerably \emph{stronger} than the Aretakis instability.

	In the neutral scalar field setting, there are related results in the context of the full nonlinear EMCSF system \eqref{eq:eemax}--\eqref{eq:CSFnonlin} with $\mathfrak{q}=0$, in which $Q$ is not a dynamical quantity. See the recent \cite{aku26} for a complete stability statement in the form of an analysis of the moduli space of initial data leading to Reissner--Nordstr\"om spacetimes with different masses, providing, in particular, a mathematical confirmation of earlier numerical work in \cite{harvey2013}. We also refer to the earlier codimension-1 stability result of extremal Reissner--Nordstr\"om in \cite{aku24} and the related nonlinear analysis in the neutral scalar field setting (outside of spherical symmetry) in \cite{yannis1,aagnonlin,au25}. 

\subsubsection{Comparison with the extremal Kerr setting}
	The instabilities along the event horizon $\mathcal{H}^+$ ($r=r_+$) in the $|Q|=M$ case in Theorem \ref{thm:introinst} bear a close resemblance to the \emph{azimuthal instabilities} along the extremal Kerr ($|a|=M$) event horizon in \cite{gaj23}[Theorem 4.2], if we restrict to $\beta_{\ell}\in i(0,\infty)$ and replace $\beta_{\ell}$ with:
	\begin{equation*}
		\sqrt{1+4\Lambda_{m\ell}(a\upomega_+)-8m^2},
	\end{equation*}
	where $\Lambda_{m\ell}(a\upomega_+)$ are Boyer--Lindquist oblate spheroidal eigenvalues at the superradiant treshold frequency $m\upomega_+$, labelled by the azimuthal number $m\in \Z\setminus\{0\}$ and $\ell\in \N_0$, with $\ell\geq |m|$. These instabilities were first discovered via a heuristic Fourier-based analysis around $\omega=m\upomega_+$ in \cite{zimmerman1} and then \cite{zimmerman4}.
	
	One may think of the aziumuthal number $m$ as playing the role of the dimensionless charge parameter $\mathfrak{q}Q$, see also \cite{gaj26a}[\S 1.3] for a closer comparison in Fourier space.
		
	In contrast with Theorem \ref{thm:introinst}, \cite{gaj23}[Theorem 4.2] relies on a global integrability assumption. Failure of this integrability assumption would correspond to the (stronger) instability scenarios (b) and (c) in \cite{gaj23}[Corollary 4.3]. Furthermore, the present paper also treats the cases $\beta_{\ell}\in [0,1)$. 
	
	The improvements in the present paper when compared to \cite{gaj23} are closely related to the application of uniform estimates that apply in both the near-extremal and extremal settings and are derived in \cite{gaj26a}. The new methods introduced in the present paper remain applicable in the setting in \cite{gaj23} and will be applied to improve the results in \cite{gaj23} in future work.
	
		Similarly, the late-time tails in Theorem \ref{thm:introtails} in the $|Q|=M$ case, away from future null infinity $\mathcal{I}^+$ are analogues of the late-time tails derived in \cite{gaj23}[Theorem 4.1]. However, while \cite{gaj23}[Theorem 4.1] is conditional on a priori qualitative global integrability assumptions, Theorem \ref{thm:introtails} is entirely unconditional when $0\leq 1-\frac{|Q|}{M}\ll 1$.
		
		The late-time tails along $\mathcal{I}^+$, on the other hand, are particular to the setting of CSF and differ from those derived in \cite{gaj23}[Theorem 4.1].

\subsubsection{A conformal isometry}
\label{sec:confisom}
	The symmetry between the behaviour along $\mathcal{H}^+$ and $\mathcal{I}^+$ in the $|Q|=M$ case is related to the existence of the Couch--Torrence conformal isometry \cite{couch} $\Phi_{\rm C-T}$. Initial data for $\psi$ imposed on a hypersurface that is invariant under $\Phi_{\rm C-T}$, such that $\psi\circ \Phi_{\rm C-T}=\psi$ will lead to identical behaviour along $\mathcal{I}^+$ and $\mathcal{H}^+$ up to additional change in sign of $\mathfrak{q}$; see Proposition \ref{prop:mainequations}. 
	
	One may interpret the asymptotic instabilities of $r^2$-weighted derivative quantities in \eqref{eq:introlowboundbetanonzeroRNinf}	 and \eqref{eq:introlowboundzerobetaRNinf} as close analogues of horizon instabilities. In fact, the asymptotic instabilities along $\mathcal{I}^+$ are also present in the sub-extremal setting. This is not surprising, since $\mathcal{I}^+$ can be viewed as a Killing horizon with zero surface gravity after performing an appropriate conformal rescaling of the metric.

\subsubsection{Wagging tails and null infinity signatures of extremality}
\label{sec:introwagging}
	In contrast with the uncharged scalar field setting, the late-time asymptotics in Theorem \ref{thm:introtails} feature oscillations in $\tau$ at different levels. 
	
	First of all, note that $e^{-i\mathfrak{q} f}\psi$ is a solution to \eqref{eq:CSFintro} with respect to the electromagnetic potential $A+df$ for any smooth function $f$, so an overall phase factor, by itself, has no gauge-invariant meaning and can be absorbed via an appropriate (time-dependent) redefinition of $A$. The presence or absence of the phase factor $e^{-i\mathfrak{q}Q\tau}$ does play an important, gauge-invariant role when we consider the gauge-invariant time derivatives $|D_{\tau}\psi|$, see \S \ref{sec:introminkrole} for a further discussion.
	
	In the $\beta_{\ell}\in i(0,\infty)$ case however, the gauge-invariant quantity $|\psi|$ also features oscillations in its late-time asymptotics, since the asymptotics are governed by an infinite sum of oscillating terms, see \eqref{eq:introasympPsiinfty3} and \eqref{eq:introasympPsiinfty6}.
	
	\subsubsection{The role of Minkowski and Bertotti--Robinson}
\label{sec:introminkrole}
The projection of $\Psi^{\infty}$ to spherical harmonics with angular frequency $\ell$ has the following structure (see Theorem \ref{thm:mainthmpoint}):
\begin{equation*}
	\Psi^{\infty}_{\ell}=\frac{\mathfrak{w}_{\ell}}{\mathfrak{w}^0_{\ell}}(\Psi_0^{\infty})_{\ell},
\end{equation*}
where
\begin{equation}
\label{eq:CSFminkintro}
	(m^{-1})^{\mu \nu}\:(^{\widehat{A}}D)_{\mu}\:(^{\widehat{A}}D)_{\nu}(r^{-1}\Psi^{\infty}_0)=0,
\end{equation}
with $m$ the Minkowski metric and $\widehat{A}=-Qr^{-1}du$ (with respect to Bondi coordinates $(u,r)$) and where:
\begin{equation*}
	\mathfrak{w}_{\ell}^0=\begin{cases}\alpha_+ r^{\frac{1}{2}+i\q-\frac{1}{2}\beta_{\ell}}+\alpha_- r^{\frac{1}{2}+i\q-\frac{1}{2}\beta_{\ell}}\quad &(\beta_{\ell}\neq 0),\\
		\alpha_+ r^{\frac{1}{2}+i\q}\log r+\alpha_- r^{\frac{1}{2}+i\q}\quad &(\beta_{\ell}= 0),
\end{cases}
\end{equation*}
with $\mathfrak{w}_{\ell}^0Y_{\ell m}$ time-independent solutions to \eqref{eq:CSFminkintro} and $\alpha_{\pm}\in \C$ the constants appearing in the large-$r$ asymptotics of $\mathfrak{w}_{\ell}$:
\begin{equation*}
	\mathfrak{w}_{\ell}=\begin{cases}\alpha_+ r^{\frac{1}{2}+i\q-\frac{1}{2}\beta_{\ell}}(1+O(r^{-1})+\alpha_- r^{\frac{1}{2}+i\q-\frac{1}{2}\beta_{\ell}}(1+O(r^{-1})\quad &(\beta_{\ell}\neq 0),\\
		\alpha_+ r^{\frac{1}{2}+i\q}\log r(1+O(r^{-1})+\alpha_- r^{\frac{1}{2}+i\q}(1+O(r^{-1})\quad &(\beta_{\ell}= 0).
\end{cases}
\end{equation*}

Similarly, the function $\Psi^{+}_{0}(\tau,r,\theta,\varphi):=\Psi^{\infty}_{0}(\tau, r_++\frac{r_+^2}{r-r_+},\theta,\varphi)$ is a solution to:
\begin{equation}
\label{eq:CSFBRintro}
(g_{\rm B-R}^{-1})^{\mu \nu}\:(^{A_0}D)_{\mu}\:(^{A_0}D)_{\nu}(r_+^{-1}\Psi^{+}_0)=0,
\end{equation}
with 
\begin{equation*}
	\left(g_{\rm B-R}=r_+^2(-x^2d\tilde{v}^2-2d\tilde{v}dx+\slashed{g}_{\s^2}),\quad  F=-r_+ d\tilde{v}\wedge dx\right)
\end{equation*}
the Bertotti--Robinson solution (or near-horizon extremal Reissner--Nordstr\"om solution) \cite{ber59,rob59} to the Einstein--Maxwell equations and $A_0=-r_+ \tilde{x}dv$.\footnote{Note that the conformally rescaled Minkowski spacetime $(\R^{1+3},r^{-2}m)$ can be embedded in a Bertotti--Robinson spacetime; see \cite{gvdm24}[\S 1.5.1]. }

\textbf{The leading-order late-time asymptotics of solutions to \eqref{eq:CSFintro} are therefore determined by appropriate solutions to the charged scalar field equation on Minkowski and Bertotti--Robinson (up to the factors $\frac{\mathfrak{w}_{\ell}}{\mathfrak{w}^0_{\ell}}$)!}

Note, however, that while the Bertotti--Robinson solutions determine the asymptotic behaviour of $|\psi|$ along $\mathcal{H}^+$, they do \underline{not} determine the asymptotic behaviour of the tangential gauge derivative $|D_{\tau}\psi|$ along $\mathcal{H}^+$:\footnote{When $\beta_{\ell}\in i(0,\infty)$ the asymptotics include an infinite sum of oscillating factors, which we omit here for ease of notation.}
	\begin{equation*}
		|D_{\tau}\psi_{\ell m}|(\tau,r_+)\sim |\mathfrak{q}Q||\mathfrak{I}_{\ell m}[\psi]|\begin{cases}\tau^{-1-\re \beta_{\ell}}\quad & (\beta_{\ell}\in (0,1)\cup i(0,\infty)),\\
		\frac{\tau^{-1}}{\log^2\tau}\quad & (\beta_{\ell}=0).
\end{cases}
	\end{equation*}
	This illustrates how one must proceed with caution when translating late-time properties of waves on near-horizon geometries to the behaviour along the event horizon of genuine extremal black hole spacetimes.
	
	In particular, the leading-order late-time behaviour of $D_{\tau}\psi_{\ell m}$ at $r=r_+$ therefore agrees in the extremal and sub-extremal cases (in stark contrast with the $\mathfrak{q}=0$ setting!). This implies in particular that in \underline{both} the extremal and sub-extremal cases the stress-energy tensor satisfies:
	\begin{equation*}
		\mathbb{T}^{SF}[g_{M,Q},r^{-1}\psi_{\ell m}]|_{\mathcal{H}^+}(\partial_{\tau},\partial_{\tau})\sim \mathfrak{q}^2Q^2|\mathfrak{I}_{\ell m}[\psi]|^2\begin{cases}\tau^{-2-2\re \beta_{\ell}}\quad & (\beta_{\ell}\in (0,1)\cup i(0,\infty)),\\
\frac{\tau^{-2}}{\log^4 \tau}\quad & (\beta_{\ell}=0).
\end{cases}	
	\end{equation*}
	In other words, the energy density of $\psi$ measured with respect to the Killing vector field $\partial_{\tau}$, \underline{tangential} to $\mathcal{H}^+$ is dictated by late-time tails arising from far-away behaviour of the wave operator, at \emph{null infinity}, and it decays in time.

	On the other hand, we have by \eqref{eq:introlowboundbetanonzeroRNhor} and \eqref{eq:introlowboundzerobetaRNhor} that the energy density $\mathbb{T}^{SF}[g_{M,Q},\psi_{\ell m}]|_{\mathcal{H}^+}(\partial_{r},\partial_{r})$ \underline{transversal} to $\mathcal{H}^+$ is dictated by tails arising from the \emph{event horizon}, described by $\Psi_0^+$, and features growth  in time.
	
	By analogy (recall \S \ref{sec:confisom}), we also obtain along $\mathcal{I}^+$ in the extremal case:
	\begin{equation*}
		|D_T\psi_{\ell m}|(\tau,\infty)\sim |\mathfrak{q}Q||\mathfrak{H}_{\ell m}[\psi]|\tau^{-1-\re \beta_{\ell}}.
	\end{equation*}
Gauge-invariant time derivatives along $\mathcal{I}^+$ therefore decay faster in the sub-extremal case than in the extremal case and in the latter, we can read off the value of the constants $|\mathfrak{H}_{\ell m}[\psi]|$ from the limiting behaviour of $|D_{\tau}\psi_{\ell m}|$ at $\mathcal{I}^+$, thereby providing an ``null infinity signature'' of extremality and the constants $\mathfrak{H}_{\ell m}[\psi]$ responsible for the extremal instability phenomena in Theorem \ref{thm:introinst}; see \cite{aag18, aags23} for results on null infinity signatures in the neutral scalar field setting. 
	
	\subsubsection{Transient instabilities along near-extremal event horizons}
	In the estimates that are implicit in the tilde-notation in Theorem \ref{thm:introtails}, we keep precise track of the dependence on the surface gravity $\kappa_+$ of the background Reissner--Nordstr\"om spacetime; see also the precise statements in Theorem \ref{thm:mainthmpoint}. This allows us to estimate the time interval in the near-extremal setting for which the instabilities of Theorem \ref{thm:introinst} are relevant.

	We show that for $\delta_1\ll \delta_2\ll 1$ and $\kappa_+\leq \delta_1 r_+^{-1}$:
		\begin{align*}
		\sup_{\delta_1 \kappa_+^{-1}\leq \tau\leq \delta_2 \kappa_+^{-1}}(\tau+1)^{-k-\frac{1}{2}+\frac{1}{2}\re\beta_0}|D_X^{k+1}\psi_{\ell m}|(\tau,r_+)\geq&\:  \frac{|\mathfrak{c}_k|}{2}\left|\mathfrak{H}_{\ell m}[\psi_M]\right|\quad &(\beta_{0}\neq 0),\\
		\sup_{\delta_1 \kappa_+^{-1}\leq \tau\leq \delta_2 \kappa_+^{-1}}\log(1+\tau)\tau^{-k-\frac{1}{2}}|D_X^{k+1}\psi_{\ell m}|(\tau,r_+)\geq&\:  \frac{|\mathfrak{c}_k|}{2}\left|\mathfrak{H}_{\ell m}[\psi_M]\right|\quad &(\beta_{0}= 0).
	\end{align*}
	where $\psi_{M}$ is the solution to \eqref{eq:CSFintro} arising from the same initial data as $\psi$, but corresponding to the metric $g_{M,Q}$ with $|Q|=M$ instead of $|Q|<M$; see Proposition \ref{prop:transinstb}.
	
Therefore, $|D_X^{k+1}\psi_{\ell m}|$ must attain a maximum along $\mathcal{H}^+$ greater or equal to a constant that scales like $\kappa_+^{-\frac{1}{2}-k+\frac{1}{2}\re \beta_0}$ in the case $|Q|<M$ when $\beta_0\neq 0$ (and similarly with an additional decaying factor $(\log \kappa_+^{-1})^{-1}$ when $\beta_0=0$). This may be thought of as a \emph{transient instability}. See also the results in \cite{zimmerman2} that are consistent with the above observations.

	\subsubsection{Sub-extremality and mode (in)stability}
	\label{sec:intromodeinstb}
		The results in Theorems \ref{thm:introtails} and \ref{thm:introinst} apply also away from extremality under the additional assumption of mode stability on the real axis; see Condition \ref{cond:quantmodestabp2} for the precise assumption. Roughly, this corresponds to the absence of oscillating, finite-energy solutions with an exponential time dependence.
				
		Mode stability in the charged scalar field context remains an open problem away from extremality. See also \cite{gaj26a}[Appendix B] for a partial mode stability result and  \cite{gaj26a}[Remark 1.4] for a further discussion.
				
		In the extremal case, and by continuity, in the near-extremal case, mode stability is shown to hold in \cite{gaj26a}, which allows for a proof of the desired integrated energy estimates in these settings in \cite{gaj26a}. This stands in sharp contrast with the presence of growth and instabilities along $\mathcal{H}^+$ in Theorem \ref{thm:introinst}, which can be associated with the behaviour in Fourier space at frequencies $\omega=\mathfrak{q}Qr_+^{-1}$, which are actually \emph{excluded} in the mode stability statements as modes are ill-defined at those frequencies in the extremal case.
		
		Note that for sufficiently small $|\mathfrak{q}Q|$, mode stability follows straightforwardly from mode stability of the neutral scalar field; see also \cite{gaj26a}[\S 4].
		
		We also note that in the coupled Maxwell--charged scalar field system of equations \eqref{eq:max1}--\eqref{eq:CSFnonlin}, where\\ $g=g_{M,Q}$, the sum of stress energy tensors $\mathbb{T}^{EM}[g,F]+\mathbb{T}^{SF}[g,\phi]$ satisfies a conservation law and corresponding integrals along appropriate horizon-intersecting asymptotically hyperboloidal or null  hypersurfaces are non-negative definite and non-increasing, implying in particular that $|\phi|$ must be bounded in time and small for initially small perturbations of the Reissner--Nordstr\"om Faraday tensor.

		Boundedness of $|\phi|$ in the nonlinear setting alone, however, does not imply the absence of exponentially growing modes in the linearized setting, as it is possible that $|\phi|$ does not decay (sufficiently fast) and hence, that the linearized equations do not approximate the nonlinear equations at late-times. We note that for small $|\mathfrak{q} Q|$, decay of $|\phi|$ has been proved in \cite{vdm22} and the numerical results in \cite{hp98c,br16,av25} are consistent with decay of $|\phi|$ for large $|\mathfrak{q}Q|$.
				
	In view of the above, a proof of mode (in)stability away from extremality in the context of \eqref{eq:CSFintro} remains an interesting open problem for understanding \eqref{eq:eemax}--\eqref{eq:CSFnonlin} within the context of small perturbations of sub-extremal Reissner--Nordstr\"om initial data with large $|\mathfrak{q}Q|$, \underline{away from extremality}.

	\subsubsection{Late-time tails for $\psi_{\geq \ell}$}
	Theorem \ref{thm:introtails} determines the leading-order late-time asymptotics for $\psi$ supported on spherical harmonics with angular parameter greater or equal to $\ell$, $\psi_{\geq \ell}$, provided $\ell(\ell+1)<\mathfrak{q}^2Q^2$. However, the methods in the present paper also extend straightforwardly to $\ell(\ell+1)\geq \mathfrak{q}^2Q^2$. The key difference is the necessity to consider higher-order time integrals $(TK)^{-k}$, $k\geq 2$; see the discussion in \S \ref{sec:sketchpf}. We omit this analysis in the present paper.

\subsection{Further previous literature}
\label{intro:prevresults}

In this section, we will provide references to literature pertaining to late-time tails and precise late-time asymptotics in the Reissner--Nordstr\"om setting. We refer to \cite{gaj26a}[\S 1.1] for further references pertaining to integrated energy estimates.

\subsubsection{Numerics and heuristics}
A study of late-time tails for \eqref{eq:CSFintro} on sub-extremal Reissner--Nordstr\"om black holes was instigated by \cite{hp98b}, see also the earlier work restricted to small $|\mathfrak{q}Q|$ in \cite{hp98a}. In these works a heuristic analysis was performed in Fourier space, restricted to a neighbourhood of $\omega=0$. When $\beta_{\ell}\in (0,1)$, the tails in \cite{hp98b} are consistent with Theorem \ref{thm:introtails}. When $\beta_{\ell}\in i(0,\infty)$, the present paper elucidates in addition the precise oscillatory structure in the tails suggested in \cite{hp98b}, which are precisely captured by the tail function $\Psi^{\infty}$.

Late-time tails in the extremal setting were first investigated along the event horizon (corresponding to initial data supported away from the horizon) in the setting of \eqref{eq:CSFintro} in \cite{zimmerman3}, via a mixed heuristic-numerical analysis in performed in Fourier space around the frequency $\omega=\mathfrak{q}Q$. The instabilities suggested by \cite{zimmerman3} are consistent with Theorems \ref{thm:introtails} and \ref{thm:introinst}. The present work also covers the case $\beta_{\ell}=0$ and provides also, in particular, a precise characterization of the oscillatory, late-time asymptotics along $\mathcal{H}^+$ in the form of the tail function $\Psi^+$. Furthermore, as the Fourier analysis \cite{zimmerman3} does not consider frequencies near $\omega=0$, their dominant role in the late-time tails of the derivatives $|D_{\tau}\psi|$ (captured by the tail function $\Psi^{\infty})$ was not observed; see \S \ref{sec:introwagging} for a further discussion.

Late-time tails in the setting of the nonlinear EMCSF equations in spherical symmetry \eqref{eq:eemax}--\eqref{eq:CSFnonlin}, have also been studied numerically in the sub-extremal setting, starting from \cite{hp98c} and more recently in \cite{br16,av25} reporting results consistent with the analysis of the linearized system in the present paper when $\beta_{\ell}\neq 0$. 

In the extremal setting, nonlinear EMCSF equations were first studied numerically in \cite{gp26}. See also \cite{gp25} for earlier numerical work on the Maxwell--charged scalar field equations (EMCSF with $F=F_Q$). The late-time behaviour in the context of spacetimes settling down to extremal Reissner--Nordstr\"om that is observed numerically is consistent with the linear asymptotics derived in the present paper.

\subsubsection{Late-time tails for neutral scalar fields}
Late-time tails in the context of the wave equation on black hole spacetimes (neutral scalar fields) have a rich history, starting from heuristic work of Price \cite{Price1972}. The decay rates corresponding to sub-extremal black hole backgrounds are referred to as ``Price's law''. On sub-extremal Reissner--Nordstr\"om spacetimes, the existence and nature of late-time tails was first proved in \cite{paper1,paper2}. In \cite{hintzprice,aagkerr}, late-time tails were proved to exist also on sub-extremal Kerr spacetimes. Late-time tails for scalar fields supported on higher spherical harmonics consistent with Price's law were established on Reissner--Nordstr\"om in \cite{hintzprice,aagprice}, see also \cite{aagkerr} for a discussion on higher spherical harmonics in Kerr. We refer to \cite{aagkerr} for further discussion on the late-time tails literature in the black hole setting. Note that the rates in late-time tails change if one considers \emph{dynamical} instead of stationary spacetimes or adds nonlinearities to the wave equation. Furthermore, the rates change when considering initial data that are motivated by gravitational $N$-body problems with no incoming radiation. We point the reader to \cite{gk22, lukoh24,gk25} and references therein.

Let us note here that the late-time tails in Theorem \ref{thm:introtails} are less sensitive to the fall-off in $r$ of the initial data when compared to the neutral setting.

Late-time tails on extremal black holes were first established in \cite{paper4} on extremal Reissner--Nordstr\"om and then in \cite{gaj23} on extremal Kerr. See \S \ref{eq:intronetural} for a further discussion on this setting.

\subsubsection{Charged scalar fields}
The mathematical literature on late-time time tails and precise late-time asymptotics in the context of charged scalar fields is more sparse. In contrast with neutral scalar fields, late-time tails are already present for charged scalar fields on Minkowski\footnote{Note that the linearization of the Maxwell--Klein Gordon equations around the Minkowski solution results in the standard (neutral) wave equation. One can nevertheless consider the gauge derivative $D=d-i\mathfrak{q}QA\otimes(\cdot)$ corresponding to the Reissner--Nordstr\"om electric charge $Q$ and consider the associated charged scalar field equation, which we refer to as the charged scalar field equation on Minkowski.} by \cite{gvdm24}, see also the related sharp decay results in the massless Dirac--Coulomb system established in \cite{bgrm25}. 

In the black hole setting, decay estimates for solutions to the spherically symmetric Maxwell--Klein Gordon equations on sub-extremal Reissner--Nordstr\"om were established in the small-$|\mathfrak{q}Q|$ setting in \cite{vdm22}; see also \S \ref{sec:intromodeinstb}.

Let us also note that the behaviour of solutions to the spherically symmetric EMCSF system in dynamically sub-extremal black hole interiors has been studied, \emph{assuming} late-time decay along the event horizon that is consistent with Theorem \ref{thm:introtails}. In \cite{vdm18}, $C^0$-stability and $C^2$-instability of the Cauchy horizon was established for $|\mathfrak{q}Q_f|<\frac{1}{2}$, with $Q_f$ the charge of the Reissner--Nordstr\"om black hole that the corresponding dynamical black holes solution settles down to in the black hole exterior. The $|\mathfrak{q}Q_f|<\frac{1}{2}$ condition is motivated by expectation that in that case $|\phi|_{\mathcal{H}^+}$ decays integrably along $\mathcal{H}^+$ in that case, which is consistent with Theorem \ref{thm:introtails}.

 In \cite{vdMK21}, the $|\mathfrak{q}Q_f|\geq\frac{1}{2}$ case was considered and $C^0$-stability of the Cauchy horizon was established, relying on the oscillation present in $\psi|_{\mathcal{H}^+}$ with respect to the gauge $A=-Q(r^{-1}-r_+^{-1})d\tau$, which follows directly from Theorem \ref{thm:introtails} (after changing the gauge $A=-Qr^{-1}d\tau$ to $A=-Q(r^{-1}-r_+^{-1})d\tau$).

In \cite{dejanjon1}, the extremal setting, the behaviour for solutions to the spherically symmetric EMCSF system were studied in the interiors of dynamical black holes settling down to extremal Reissner--Nordstr\"om, with a smallness condition on $|\mathfrak{q}Q|$. It was shown that Cauchy horizons are stable in $C^0\cap H^1$ and are sufficiently regular so that the spacetime can be extended as a solution to the EMCSF system across the horizons. The initial data assumptions for the energy decay along $\mathcal{H}^+$ in \cite{dejanjon1} are consistent with the results in the present paper; see also the discussion in \S \ref{sec:introminkrole}.

 Finally, we note that some of the difficulties in establishing late-time tails in the charged scalar field setting, assuming suitable integrated energy estimates, are already present in the setting of neutral scalar fields with additional asymptotically inverse-square potentials. See \cite{gaj22a,hin23} for proofs on the existence of late-time tails and references therein for previous work on decay estimates in this setting. In particular, the philosophy of proving late-time tails by subtracting appropriate global tail functions and integrating in time, introduced in \cite{gaj22a}, remains applicable to the setting of the present paper.

\subsection{Sketch of the proofs of Theorems \ref{thm:introtails} and \ref{thm:introinst}}
\label{sec:sketchpf}
We will consider energy densities of functions $\hpsi$ that are defined as follows:
\begin{equation*}
	\hpsi=\psi-\Psi
\end{equation*}
where $r^{-1}\psi$ is a solution to \eqref{eq:CSFintro} with respect to the conformally smooth electromagnetic gauge $A=-\frac{Q}{r}d\tau$ and $\Psi$ is the \emph{tail function} from Theorem \ref{thm:introtails}, which, as we will sketch here, approximates $\psi$ as $\tau\to \infty$.

The coefficients $\mathfrak{I}_{\ell m}[\psi]$ and $\mathfrak{h}_{\ell m}[\psi]$ are complex numbers that can be explicitly determined from initial data for $\psi$ at $\tau=0$ and we will illustrate how they are determined in Step 2 below. 

For convenience, we introduce the dimensionless charge parameter:
\begin{equation*}
	\q=\mathfrak{q}Q.
\end{equation*}

The relevant energy densities take the following form:
\begin{equation*}
	\mathcal{E}_p[\hpsi]\sim (\Omega^{-1}r)^{p}\Omega^{2}|\partial_r\hpsi|^2+(\Omega^{-1} r)^{\min\{p,0\}}r^{-2}\left[|\partial_{\tau}\hpsi|^2+|\snabla_{\s^2}\hpsi|^2+|\hpsi|^2\right],
\end{equation*}
where $p\in \R$, $\Omega^2=1-\frac{2M}{r}+\frac{Q^2}{r^2}$ and $\snabla_{\s^2}$ is the covariant derivative with respect to the unit round sphere.

Note that in the case $p=2$ and for bounded $r$, $\mathcal{E}_p[\hpsi]\sim \mathbb{T}(\mathbf{n}_{\tau},\mathbf{n}_{\tau})$, so $\mathcal{E}_2[\hpsi]$ controls a non-degenerate energy density. Furthermore, away from $\mathcal{H}^+$, $\mathcal{E}_p[\hpsi]$ may be viewed as an $r^p$-weighted energy density, as first introduced in \cite{newmethod}.

We write:
\begin{equation*}
	(g^{-1}_{M,Q})^{\mu \nu}\:(^AD)_{\mu}\:(^AD)_{\nu}(r^{-1}\hpsi)=-(g^{-1}_{M,Q})^{\mu \nu}\:(^AD)_{\mu}\:(^AD)_{\nu}(r^{-1}\Psi)=:G_A.
\end{equation*}

\subsubsection{Step 0: integrated energy estimates}
The starting point of the present paper is the following uniform integrated energy estimate, established in \cite{gaj26a}. Let $1<  p<\max\{1+\re \beta_{0},2\}+\epsilon$, with $\epsilon>0$ arbitrarily small, then for all $\tau_1<\tau_2$:
\begin{multline}
\label{eq:introied}
 \int_{\Sigma_{\tau_2}}\mathcal{E}_{p-2\epsilon}[\hpsi]\,d\sigma dr+	\int_{\tau_1}^{\tau_2} \int_{\Sigma_{\tau}}\upzeta(r)\cdot \mathcal{E}_{p-1-2\epsilon}[\hpsi]+r^{-2}|\hpsi|^2\,d\sigma dr d\tau\leq C \int_{\Sigma_{\tau_1}}\mathcal{E}_{p}[\hpsi]\,d\sigma dr\\
 +C\int_{\tau_1}^{\tau_2} \int_{\Sigma_{\tau}}(\ldots)|r^3G_A|^2+(1-\upzeta)|\partial_{\tau}G_A|^2\,d\sigma dr,
\end{multline}
where $\upzeta$ is a function supported away from a small neighbourhood of the photon sphere at $r=r_{\sharp}$ and $(\ldots)$ denotes appropriate weights that we will not state in the present sketch; see however Theorem \ref{thm:iedpaper1} for a detailed version.

When applied to $\psi$, with $r^{-1}\psi$ a solution to \eqref{eq:CSFintro}, \eqref{eq:introied} implies the following energy boundedness and weak energy decay estimates:
\begin{align}
\label{eq:introenbound}
	 \int_{\Sigma_{\tau_2}}\mathcal{E}_{p-2\epsilon}[\hpsi]\,d\sigma dr\leq &\: C \int_{\Sigma_{\tau_1}}\mathcal{E}_{p}[\hpsi]\,d\sigma dr,\\
	 \label{eq:introendec}
	  \lim_{k\to \infty} \int_{\Sigma_{\tau_k}}\upzeta\cdot \mathcal{E}_{p-1-2\epsilon}[\hpsi]\,d\sigma dr\to &\:0\quad \textnormal{for some monotonically increasing sequence $\{\tau_k\}$},
\end{align}
where the latter follows from the mean-value theorem and the factor $\upzeta$ can be removed after additional commutation with $\partial_{\tau}$ or angular derivatives. Note in particular the loss in $\epsilon$ in the uniform energy estimate! This loss can be removed with an additional smallness assumption on the charge parameter $\q$: $|\q|<\frac{1}{4}$.

In the present paper, we will improve the decay estimate \eqref{eq:introendec}, but we will not do so directly at the level of $\widehat{\psi}$. Instead, we first consider $(TK)^k(\hpsi)$, where $T=\partial_{\tau}$ and
\begin{equation*}
	K=T+iq r_+^{-1}\mathbf{1}.
\end{equation*}
The underlying idea is that $(TK)^k\hpsi$ should decay faster for each application of $TK$. We illustrate this as follows: along $r=r_0$, we expect (ignoring coefficients and oscillations and possible $\log \tau$ factors) by Theorem \ref{thm:introtails}:
\begin{align*}
	\psi_{\ell=0}(\tau,r_0,\cdot)\sim c_{r_0}\tau^{-s}\quad (|Q|>M),\\
	\psi_{\ell=0}(\tau,r_0,\cdot)\sim \tau^{-s}+e^{-i\q r_+^{-1}\tau}\tau^{-s}\quad (|Q|=M),
\end{align*}
for appropriate values of $s$. In the sub-extremal case, $T^k\psi$ should therefore decay with a rate $\tau^{-s-k}$, whereas in the extremal case we need to apply $(TK)^{k}$ to obtain $\tau^{-s-k}$, due to the presence of the oscillating factor $e^{-i\mathfrak{q}Qr_+^{-1}\tau}$. Since we are considering the extremal and sub-extremal cases simultaneously, we therefore simply consider $(TK)^{k}$ in both cases.
\subsubsection{Step 1: energy decay in time for $TK$-derivatives}
We will discuss here the case $\re \beta_0=0$, which is the most difficult case, as then $p$ in \eqref{eq:introied} satisfies $1<p<1+\epsilon$, so that \eqref{eq:introied} provides effectively a single weighted integrated estimate by taking, for example, $p=1+\frac{\epsilon}{2}$.

It is straightforward to show that \eqref{eq:introied} also holds when $\hpsi$ is replaced with $\mathbf{Z}^{\gamma}\hpsi$, which schematically denotes:
\begin{equation*}
	\mathbf{Z}^{\gamma}\hpsi=\snabla_{\s^2}^{\gamma_1}((r-r_+)\partial_r)^{\gamma_2}\partial_{\tau}^{\gamma_2}\hpsi,\quad |\gamma|=k.
\end{equation*}
Then we observe that
\begin{equation}
\label{eq:fromTKtoZ}
	\mathcal{E}_p[(TK)\hpsi]\lesssim \sum_{|\gamma|\leq 1}\mathcal{E}_{p-2}[\mathbf{Z}^k\hpsi]+\mathcal{E}_{p-2}[\mathbf{Z}^kT\hpsi]+(\Omega^{-1}r)^{2-p}(r^{-2}|r^3G_A|^2+r^{-2}|r^3TG_A|^2).
\end{equation}
Assuming sufficient decay on the inhomogeneity $G_A$ (which is derived in \S \ref{sec:boxPsismall}), we can therefore extend the range of $p$ from $p\in \{1+\frac{\epsilon}{2}\}$ to $p\in \{1+\frac{\epsilon}{2},3+\frac{\epsilon}{2}\}$ by considering $(TK)\hpsi$ instead of $\hpsi$.

We will establish (assuming again sufficient decay of $G_A$): for $1<p<2$,
\begin{equation}
\label{eq:edecaytimeinvintro}
	\int_{\Sigma_{\tau}}\mathcal{E}_p[(TK)\hpsi]\,d\sigma dr d\tau\lesssim \tau^{-3+p+\nu},
\end{equation}
with $\nu>0$ arbitrarily small (for $\epsilon$ sufficiently small). This step involves an interpolation argument, where we partition the $r$-interval as follows:
\begin{equation*}
	\{(r^{-2}(r-r_+)\leq (1+\tau)^{-\sigma}\}\cup \{(r^{-2}(r-r_+)>(1+\tau)^{-\sigma}\}.
\end{equation*}
We start with $\sigma=0$ and apply \eqref{eq:fromTKtoZ} to obtain:
\begin{equation*}
	\int_{\Sigma_{\tau}}\mathcal{E}_{1+\epsilon}[(TK)\hpsi]\,d\sigma dr d\tau\lesssim \tau^{-1+\nu}.
\end{equation*}

Then we take $\sigma=\frac{1}{2}$ and use the already established energy decay estimate to obtain an improvement. We consider iteratively $\sigma=1-2^{-k}$ and take $k$ sufficiently large to obtain the desired energy decay estimate. One may view these estimates as a replacement of the energy decay estimates introduced in \cite{newmethod}, applicable with only limited available $r^p$-weighted energy estimates. See the proof of Proposition \ref{prop:edecay} for details.

In this step it is important to note that even though the argument involves a large number of iterations, the total number of derivatives on the right-hand side of the estimate is bounded and \emph{independent} of the number of iterations.

The improved decay for $(TK)^k\hpsi$ then follows analogously and requires the consideration of $\mathbf{Z}^k\hpsi$ and $\mathbf{Z}^kT\hpsi$.

Observe that the above argument resembles the method for deriving energy decay estimates in \cite{gaj22a} and \cite{gaj23},  with the key differences being:
\begin{itemize}
	\item The consideration of a non-trivial inhomogeneity, in contrast with \cite{gaj22a},\\
	\item Commutation with $TK$ instead of $T$, in contrast with \cite{gaj23}.
\end{itemize}

\subsubsection{Step 2: constructing time integral initial data}
Note that Step 1 only produces energy decay estimates for $(TK)^k\hpsi$ with $k\geq 1$, whereas we need suitably strong energy decay estimate for $\hpsi$ itself.

The main technique here is the operation of the ``time inversion'' operator $(TK)^{-1}$, which we can interpret as \emph{time integration} in the following way:
\begin{equation}
\label{eq:timeintintro}
	(TK)^{-1}(f)=-\frac{r_+}{i \q}\left( K^{-1}f-T^{-1}f\right),
\end{equation}
with
\begin{align*}
(T^{-1}f)(\tau,r,\theta,\varphi)=&-\int_{\tau}^{\infty}f(\tau',r,\theta,\varphi)\,d\tau',\\
(K^{-1}f)(\tau,r,\theta,\varphi)=&-e^{-i \q r_+^{-1} \tau}\int_{\tau}^{\infty}e^{i q  r_+^{-1} \tau'}f(\tau',r,\theta,\varphi)\,d\tau'.
\end{align*}
In view of the fact that $\psi$ does not decay integrably in time when $\re\beta_0=0$ (see Theorem \ref{thm:introtails}) we cannot hope to define $(TK)^{-1}(\psi)$. For this reason, we consider $(TK)^{-1}(\widehat{\psi})$, which, is well-defined since, as we will show, $\hpsi$ does decay integrably in $\tau$ along constant $r$ hypersurfaces (away from $r=r_+$).

However, \emph{a priori}, we do not have any pointwise decay for $\hpsi$, so we can only interpret the above time integrals \emph{formally}. That is to say, we will construct functions  $(TK)^{-1}(\hpsi)$ that only \emph{a foriori} can be shown to satisfy the above integral identities.

Instead, we define $T^{-1}\hpsi$ and $K^{-1}\hpsi$ as solutions to the equations:
\begin{align*}
	(g^{-1}_{M,Q})^{\mu \nu}\:(^AD)_{\mu}\:(^AD)_{\nu}(r^{-1}T^{-1}\hpsi)=&\:T^{-1}G_A,\\
	(g^{-1}_{M,Q})^{\mu \nu}\:(^AD)_{\mu}\:(^AD)_{\nu}(r^{-1}K^{-1}\hpsi)=&\:K^{-1}G_A,
\end{align*}
arising from initial data $(T^{-1}\hpsi, T(T^{-1}\hpsi))|_{\Sigma_0}$ and $(K^{-1}\hpsi, T(K^{-1}\hpsi))|_{\Sigma_0}$, respectively, such that
\begin{align*}
	T(T^{-1}\hpsi)|_{\Sigma_0}=&\:\hpsi|_{\Sigma_0},\\
	K(K^{-1}\hpsi)|_{\Sigma_0}=&\:\hpsi|_{\Sigma_0}.
\end{align*}
The above conditions determine equations for $T^{-1}\hpsi|_{\Sigma_0}$ and $K^{-1}\hpsi|_{\Sigma_0}$ of the form
\begin{align*}
\mathcal{L}(T^{-1}\hpsi|_{\Sigma_0})=r\widetilde{G}_A,	\\
\underline{\mathcal{L}}(K^{-1}\hpsi|_{\Sigma_0})=r\underline{\widetilde{G}}_A,
\end{align*}
where $\mathcal{L}$ is a second-order differential operator that contains only $\partial_r,\partial_{\theta},\partial_{\varphi}$ derivatives and the $\partial_{\tau}$-derivatives, which can be expressed in terms of $\hpsi|_{\Sigma_0}$ and $T\hpsi|_{\Sigma_0}$, form part of $r\widetilde{G}_A$.

For this procedure to make sense, we need $G_A$ to decay suitably fast in $\tau$, so that \eqref{eq:timeintintro} with $f=G_A$ is well-defined.

In particular, the operator $\mathcal{L}$ takes the following form:
\begin{align*}
	\mathcal{L}f=\partial_r(\Omega^2\partial_r f)+r^{-2}\slashed{\Delta}_{\s^2} f-\frac{d\Omega^2}{dr} r^{-1}f+2i \q(1+O(r^{-2}))r^{-1}\partial_r f+\left[i \q r^{-2}+O(r^{-2})\right]f.
\end{align*}

In the case $\mathfrak{q}Q=0$, the differential operators $\mathcal{L}$ and $\underline{\mathcal{L}}$ may be thought of a degenerate elliptic operators and $r$-weighted, degenerate elliptic estimates apply. When $\mathfrak{q}Q\neq 0$, the principal symbol of $\mathcal{L}$ remains unchanged, but the coercivity of the corresponding (degenerate) energies is destroyed.

Restricting to high angular frequencies $\psi|_{\ell\geq L}$ with $L\in \N_0$ suitably large and integrating by parts $r^{-s}(\Omega^{-1}r)^{-p'}\partial_r(r^{-1}T^{-1}\psi) \mathcal{L}(T^{-1}\psi)$ we maintain coercivity and obtain $r$-weighted elliptic-type estimates, see \S \ref{sec:ellipticesthighl} for the details. This does not, however, work for bounded angular frequencies.

We therefore do not construct $T^{-1}\psi$ (and similarly, $K^{-1}\psi$) by inverting $\mathcal{L}$. Instead, we first construct the spherical harmonic modes $(T^{-1}\psi)_{\ell}$ by considering an appropriate \emph{twisted operator} $\widecheck{\mathcal{L}}_{\ell}$ and use the elliptic-type estimates valid at high angular frequencies to be able to sum the spherical harmonic modes in $\ell$.

The twisted operator $\widecheck{\mathcal{L}}_{\ell}$ is defined as follows:
\begin{equation*}
	\widecheck{\mathcal{L}}_{\ell}(f):=\partial_r(\Omega^2\mathfrak{w}_{\ell}^2 e^{-2i\q r_+^{-1}\int_{r_0}^rr'^{-1}\Omega^{-2}(r')(1-\h(r'))\,dr'}\partial_r(\mathfrak{w}_{\ell}^{-1}f)),
\end{equation*} 
so that
\begin{equation*}
	\widecheck{\mathcal{L}}_{\ell}((T^{-1}\psi)_{\ell}|_{\Sigma_0})=\mathfrak{w}_{\ell} e^{-2i\q r_+^{-1}\int_{r_0}^rr'^{-1}\Omega^{-2}(r')(1-\h(r'))\,dr'}r(\widetilde{G}_A)_{\ell}.
\end{equation*}
Here, $\mathfrak{w}_{\ell}$ are the functions appearing in Theorem \ref{thm:introtails}. By considering $\mathfrak{w}_{\ell}^{-1}T^{-1}\hpsi_{\ell}$, we therefore arrive at an ODE which can simply be solved by integrating in the $r$-direction.

We start integrating at $r=r_+$ and assume the corresponding boundary term at $r=r_+$ vanishes, which is our first boundary condition. The second boundary condition is that $\mathfrak{w}_{\ell}^{-1}T^{-1}\psi_{\ell}|_{\Sigma_0}$ vanishes as $r\to \infty$.

When $\re \beta_{\ell}<1$, we need the following condition for the above boundary conditions to be compatible:
\begin{equation}
\label{eq:fixconstI}
	\int_{r_+}^{\infty}\mathfrak{w}_{\ell} e^{-2i\mathfrak{q}Qr_+^{-1}\int_{r_0}^rr'^{-1}\Omega^{-2}(r')(1-\h(r'))\,dr'}r(\widetilde{G}_A)_{\ell}\,dr=0.
\end{equation}
\textbf{The requirement \eqref{eq:fixconstI} fixes the constants $\mathfrak{I}_{m\ell}[\psi]$ appearing in the definition of $\Psi^{\infty}$!}

 An analogous procedure for $K^{-1}\psi$, fixes the constants $\mathfrak{h}_{m\ell}[\psi]$ and hence the entire function $\Psi$, which determines the late time asymptotics. We refer to \S \ref{eq:timeintellipticboundl} for the details of the above argument.
 
 Finally, to obtain $T^{-1}\hpsi$ and $K^{-1}\hpsi$, we need to sum $(T^{-1}\hpsi)_{\ell}$ and $(K^{-1}\hpsi)_{\ell}$ in $\ell$. The convergence of this sum follows from the above mentioned elliptic estimates that are valid for large $\ell$; see \S \ref{sec:ellipticesthighl}.

We conclude in particular finiteness of the relevant energy norms of $(TK)^{-1}\hpsi$ appearing on the right-hand side of the energy decay estimates in Step 1.

\subsubsection{Step 3: energy decay and late-time asymptotics}
In Step 2, we constructed $(TK)^{-1}\psi|_{\Sigma_0}$ and obtained finiteness of weighted energies along $\Sigma_0$. Applying \eqref{eq:edecaytimeinvintro} to $(TK)^{-1}\hpsi$, we then obtain for $1<p<2$:
\begin{equation}
\label{eq:edecayintro}
	\int_{\Sigma_{\tau}}\mathcal{E}_{p}[\hpsi]\,d\sigma dr d\tau\lesssim \tau^{p-3+\nu}.
\end{equation}
See \S \ref{sec:edectimint} for the details.

By a standard argument, \eqref{eq:edecayintro} results in the pointwise decay estimate:
\begin{equation*}
		|\psi-\Psi|=|\hpsi|\lesssim \tau^{-1+\nu},
\end{equation*}
which already implies the late-time asymptotics along $\mathcal{I}^+$ stated in Theorem \ref{thm:introtails}, and, in the case $|Q|=M$, also along $\mathcal{H}^+$, because $\Psi$ decays slower than $\tau^{-1+\nu}$. See \S \ref{sec:tails} for the details.

We can increase the range of $p$ in \eqref{eq:edecayintro} to $-\re \beta_0<p<3$, by applying once more the elliptic estimates from Step 2. This then results in an additional pointwise estimate which degenerates at $r=r_+$ and $r=\infty$:
\begin{equation*}
	|\psi-\Psi|=|\hpsi|\lesssim (r\Omega)^{-1+\re\beta_0}\tau^{-\frac{3}{2}-\re\beta_0+\nu},
\end{equation*}
from which we can conclude late-time asymptotics away from $\mathcal{H}^+$ and $\mathcal{I}^+$.

To conclude precise late-time asymptotics in the sub-extremal case $|Q|<M$ along $\mathcal{H}^+$, we apply in addition standard red-shift energy estimates that remove the degeneracy at $r=r_+$ in the above estimates, at the expense of introducing $\kappa_+$-dependence in the constants.

\subsubsection{Step 4: instabilities}
The existence of pointwise instabilities along $\mathcal{H}^+$ and $\mathcal{I}^+$ may be thought of as a consequence of the precise late-time asymptotics for $\psi$ that follow from Step 3. In the sketch below, we will therefore write:
\begin{equation*}
	\psi=\Psi+\ldots,
\end{equation*}
and we will ignore the terms in $ +\ldots$, as they will not contribute to leading-order in the growth estimates below. It therefore remains to investigate the behaviour of $r^2X\Psi$ and of $\mathcal{E}_2[\Psi]$ to conclude the instabilities in Theorem \ref{thm:introinst}. In view of the symmetry of $\Psi$ under the conformal Couch--Torrence isometry, it suffices to consider $r^2\Psi|_{\mathcal{I}^+}$.

When $\mathfrak{I}_{\ell m}[\psi]\neq 0$ or $\mathfrak{H}_{\ell m}[\psi]\neq 0$, it easily follows from the expressions of $\Psi$ in terms of $\mathfrak{w}_{\ell}$ and $\Psi_0$ and the asymptotics of $\Psi_0$ that in the case $ \beta_{\ell}\in [0,1)$:
\begin{align*}
	\int_{\Sigma_\tau\cap \{r\geq R\}}\mathcal{E}_2[\Psi]\,d\sigma dr\sim &\: (1+\tau)^{-\re \beta_{\ell}}\quad (\beta_{\ell}\neq 0),\\
		\int_{\Sigma_\tau\cap \{r\geq R\}}\mathcal{E}_2[\Psi]\,d\sigma dr\sim &\: \frac{1}{\log^2(2+\tau)}\quad (\beta_{\ell}=0).
\end{align*}
The leading-order behaviour for $r^2X\Psi|_{\mathcal{I}^+}$ similarly follows in a straightforward manner when $\beta_{\ell}\in \R$.

The energy and pointwise growth are more subtle when $\beta_{\ell}\in i(0,\infty)$, since $r^2X\Psi$ then features oscillating terms that can in principle cancel at certain values of $\tau$.

The details of the derivation of instabilities can be found in \S \ref{sec:instab}.

\subsection{Overview of the remainder of the paper}
We outline the structure of the remaining sections of the paper:
\begin{itemize}
	\item In \S\ref{sec:prelim}, we provide the geometric preliminaries and notation that is relevant for the subsequent analysis. We introduce the relevant coordinate chart and the main equations expressed in this coordinate chart. This section moreover contains the global integrated estimates that we will assume in the present paper and which are derived in the companion paper \cite{gaj26a}; see \S \ref{sec:iedassm}.
	\item We introduce the notion of \emph{stationary solutions} in \S \ref{sec:statsol}, as well as a detailed derivation of the relevant properties of the functions $\mathfrak{w}_{\ell}$ that appeared in the preceding sections.
	\item In \S \ref{sec:precstatthm}, we state precise versions of Theorems \ref{thm:introtails} and \ref{thm:introinst}, making use of the notation introduced in \S\S \ref{sec:prelim}--\ref{sec:statsol}.
	\item In \S \ref{sec:horp}, we derive additional, higher-order commuted integrated energy estimates by applying $(\Omega^{-1}r)^p$-weighted energy estimates near $\mathcal{H}^+$ and $\mathcal{I}^+$.
	\item We derive time-decay for weighted energies for $TK$-derivatives  of solutions to \eqref{eq:CSFintro} in \S \ref{sec:edecaytimeder}. \textbf{These are the main decay estimates of the paper!}
	\item In \S \ref{sec:approxsol}, we construct the functions $\Psi$, which determine the leading late-time behaviour of $\psi$. We show that $(g^{-1}_{M,Q})^{\mu \nu}({}^AD)_{\mu}({}^AD)_{\nu}(r^{-1}\Psi)$ decays suitably fast in $r$ and $\tau$.
	\item We then subtract $\psi-\Psi$ and construct time integrals $(TK)^{-1}(\psi-\Psi)$ in \S \ref{sec:idatatimeint}. We additionally derive elliptic-type estimates in this section.
	\item In \S \ref{sec:edectimint}, we apply the energy decay estimates of \S \ref{sec:edecaytimeder} to the time integrals constructed in \S \ref{sec:idatatimeint} to obtain energy decay for the difference $\psi-\Psi$.
	\item In \S\ref{sec:tails}, we conclude the leading-order pointwise behaviour of $\psi$ (late-time tails) by converting the energy decay estimates of \S \ref{sec:edectimint} into pointwise decay estimates for the difference $\psi-\Psi$. This concludes the proof of Theorem \ref{thm:introtails}.
	\item We derive growth and instability in \S \ref{sec:instab} and prove Theorem \ref{thm:introinst}.
	\item In Appendix \ref{sec:tailfunctmink}, we construct solutions to \eqref{eq:CSFminkintro} in Minkowski, $\Psi_0^{\infty}$ that determine the late-time asymptotics in Reissner--Nordstr\"om. This section is crucial for establishing the decay properties of $(g^{-1}_{M,Q})^{\mu \nu}({}^AD)_{\mu}({}^AD)_{\nu}(r^{-1}\Psi)$ and determining the late-time behaviour of $\Psi$ in different regions in spacetime.
\end{itemize}

\subsection{Acknowledgments}
 We thank Mihalis Dafermos, Christoph Kehle, Ryan Unger and Peter Zimmerman for interesting and helpful discussions relating to the setting of the present paper. The author acknowledges funding through the ERC Starting Grant 101115568.
 
\section{Preliminaries}
\label{sec:prelim}
In this section, we will define the relevant geometric concepts and notation, and we introduce the charged scalar field equation. We will also state  the global integrated energy decay and energy boundedness estimates that follow from \cite{gaj26a}.
\subsection{Reissner--Nordstr\"om}
The \emph{Reissner--Nordstr\"om black hole exteriors} are spacetimes $(\mathcal{M}_{M,Q},g_{M,Q})$, with $M>0$, $Q\in \R$, $|Q|\leq M$,
\begin{align*}
\mathcal{M}_{M,Q}=&\:\R_v\times [r_+,\infty)_r\times \s^2_{(\theta,\varphi)},\quad \textnormal{with $\s^2$ the round unit sphere},\\
g_{M,Q}=&-\Omega^2 dv^2+2dvdr+r^2\slashed{g}_{\s^2},\quad \textnormal{where}\\
\Omega^2(r)=&\: 1-\frac{2M}{r}+\frac{Q^2}{r^2}=r^{-2}(r-r_+)(r-r_-),\quad r_-\leq r_+,\\
\slashed{g}_{\s^2}:=&\:d\theta^2+\sin^2\theta d\varphi^2,
\end{align*}
and the global causal vector field $T=\partial_v$ is future-directed.\footnote{The coordinates $(\theta,\varphi)$ cover $\S^2$ away from a half great circle connecting the north and south poles.}

The \emph{future event horizon} $\mathcal{H}^+$ is defined as the boundary:
\begin{equation*}
\mathcal{H}^+:=\partial \mathcal{M}_{M,Q}=\{r=r_+\}.
\end{equation*}

The corresponding inverse or dual metric is:
\begin{equation*}
g_{M,Q}^{-1}=\Omega^2\partial_r\otimes \partial_r+\partial_v\otimes \partial_r+\partial_r\otimes \partial_v+ r^{-2}\slashed{g}_{\s^2}^{-1},
\end{equation*}
with $\slashed{g}_{\s^2}^{-1}$ the inverse metric associated to the unit round sphere metric $\slashed{g}_{\s^2}$.

The \emph{surface gravity} $\kappa_+$ is defined as follows:
\begin{align*}
\kappa_+:=\frac{1}{2}\frac{d\Omega^2}{dr}(r_+)=\frac{Mr_+-Q^2}{r_+^3}.
\end{align*}
When $|Q|=M$, we have that $r_+=M$ and $\kappa_+=0$. In that case, we say the Reissner--Nordstr\"om black hole exterior is \emph{extremal}. If $|Q|>M$, we have that $r_+>M$ and $\kappa_+>0$, and we say the Reissner--Nordstr\"om black hole exterior is \emph{sub-extremal}. We will refer to sub-extremal spacetimes as \emph{near-extremal}, when the dimensionless quantity $r_+\kappa_+$ is assumed to be small but non-zero.

In order to compare more easily the analysis near $r=r_+$ and for $r\geq R\gg r_+$, it will be convenient to introduce the following alternative radial coordinates:
\begin{align*}
	\rho_+:=&\: r_+^{-1}-r^{-1},\\
	\rho_{\infty}:=&\: r^{-1}.
\end{align*}
Note that $\rho_+,\rho_{\infty}\in [0,r_+^{-1})$. We state the following lemma from \cite{gaj26a}:
\begin{lemma}[Lemma 2.2 from \cite{gaj26a}]
\label{lm:metricest}
Applying standard ``Big-O notation'' (see for example \cite{gaj26a}[\S 2.2]), we can estimate: 
	\begin{align*}
	r^{-2}\Omega^2(r(\rho_+))=&\:2\kappa_+ \rho_++(1-6\kappa_+ r_+)\rho_+^2+(-2r_++6\kappa_+ r_+^2)\rho_+^3+r_+^2\left(1-2\kappa_+ r_+\right)\rho_+^4,\\
	r^{-2}\Omega^2(r(\rho_{\infty}))=&\:\rho_{\infty}^2-2(r_+-\kappa_+ r_+^2)\rho_{\infty}^3+r_+^2\left(1-2\kappa_+ r_+\right)\rho_{\infty}^4.
		\end{align*}
\end{lemma}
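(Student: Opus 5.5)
The statement is an exact polynomial identity, so the plan is a direct computation. I would first express $M$ and $Q^2$ in terms of $r_+$ and $\kappa_+$, and then expand $r^{-2}\Omega^2$ as a polynomial in each of $\rho_\infty$ and $\rho_+$. No "Big-O" remainders appear in the end: all four coefficients are exact.

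\emph{Step 1 (parameters).} Use the two relations
\begin{equation*}
\Omega^2(r_+)=0 \iff r_+^2-2Mr_++Q^2=0, \qquad \kappa_+ r_+^3=Mr_+-Q^2 .
\end{equation*}
Substitute $Q^2=2Mr_+-r_+^2$ into the second relation. This gives $\kappa_+r_+^3=r_+^2-Mr_+$, hence
\begin{equation*}
M=r_+-\kappa_+r_+^2 \quad\text{and}\quad Q^2=r_+^2(1-2\kappa_+r_+).
\end{equation*}
As a consistency check, these reduce to $M=r_+=|Q|$ in the extremal case $\kappa_+=0$.

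\emph{Step 2 (the $\rho_\infty$ expansion).} Since $\rho_\infty=r^{-1}$, we have
\begin{equation*}
r^{-2}\Omega^2=\rho_\infty^2-2M\rho_\infty^3+Q^2\rho_\infty^4 .
\end{equation*}
Inserting Step 1 gives the second identity immediately.

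\emph{Step 3 (the $\rho_+$ expansion).} Write $x:=r^{-1}=r_+^{-1}-\rho_+$, so that
\begin{equation*}
r^{-2}\Omega^2=f(x):=x^2-2Mx^3+Q^2x^4 .
\end{equation*}
Expand $f(r_+^{-1}-\rho_+)$ as a quartic in $\rho_+$ and collect coefficients.

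The constant term is $f(r_+^{-1})=r_+^{-2}\Omega^2(r_+)=0$.

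The linear coefficient is $-f'(r_+^{-1})$. Since $\frac{d}{dx}=-r^2\frac{d}{dr}$, a short computation gives
\begin{equation*}
-f'(r_+^{-1})=r_+^2\cdot r_+^{-2}\frac{d\Omega^2}{dr}(r_+)=2\kappa_+ .
\end{equation*}

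The $\rho_+^4$ coefficient is $Q^2=r_+^2(1-2\kappa_+r_+)$.

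The $\rho_+^3$ coefficient is $2M-4Q^2r_+^{-1}$. By Step 1 this equals $-2r_++6\kappa_+r_+^2$.

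The $\rho_+^2$ coefficient is $1-6Mr_+^{-1}+6Q^2r_+^{-2}$. By Step 1 this equals $1-6+6\kappa_+r_++6-12\kappa_+r_+=1-6\kappa_+r_+$.

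There is no genuine obstacle here. The only places where care is needed are the signs in the binomial expansion of $(r_+^{-1}-\rho_+)^k$ and using the correct root relation $M=r_+-\kappa_+r_+^2$ (not $r_++\kappa_+r_+^2$). The latter is confirmed by $r_+\ge M$ in the sub-extremal range.
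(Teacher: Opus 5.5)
Your computation is correct: with $M=r_+-\kappa_+r_+^2$ and $Q^2=r_+^2(1-2\kappa_+r_+)$, both the $\rho_\infty$ identity and all four $\rho_+$ coefficients check out exactly, and there is no remainder term. The paper gives no proof of its own, since it imports the lemma from the companion paper, and expanding $x^2-2Mx^3+Q^2x^4$ about $x=r_+^{-1}$ as you do is precisely the computation required.
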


The \emph{photon sphere radius} is defined as follows:
\begin{equation*}
r_{\sharp}:=\frac{3M}{2}+\frac{1}{2}\sqrt{9M^2-8Q^2},
\end{equation*}
and it is the solution to the equaton $\frac{d}{dr}(r^{-2}\Omega^2)(r)=0$ in $(r_+,\infty)$.

We define the \emph{tortoise coordinate} $r_*: (r_+,\infty)\to \R$ as the solution to the ODE:
\begin{align*}
	\frac{dr_*}{dr}=&\:\frac{1}{\Omega^2},\\
	r_*(r_{\sharp})=&\:0.
\end{align*}
\subsection{Foliations}
We introduce the following smooth \emph{foliation-defining functions}:
\begin{align*}
\widetilde{\h}: [r_+,\infty)\to  [0,\infty),\quad& 2-\widetilde{\h}\Omega^2\geq 0,\\
\h: [r_+,\infty)\to  [0,\infty),\quad &\h(r):=2-\widetilde{\h}\Omega^2(r).
\end{align*}
Then we define the time function $\tau: \mathcal{M}_{M,Q}\to \R$ as follows:
\begin{equation*}
\tau(v,r,\theta,\varphi):=v-\int_{r_+}^r \widetilde{\h}(r')\,dr'.
\end{equation*}
Denote $\Sigma_{\tau'}:=\{\tau=\tau'\}\subset \mathcal{M}_{M,Q}$. Then $\mathcal{M}_{Q,M}=\bigcup_{\tau'\in \R}\Sigma_{\tau'}$. Note that the level sets $\Sigma_{\tau}$ intersect $\mathcal{H}^+$ at $v=\tau$.

The tuple $(\tau,r,\theta,\varphi)$ is a well-defined coordinate chart on $\mathcal{M}_{Q,M}$. Now define $u: \mathring{\mathcal{M}}_{M,Q}\to \R$, with $\mathring{\mathcal{M}}_{M,Q}$ the interior of the manifold with boundary $\mathcal{M}_{M,Q}$, as follows:
\begin{equation*}
u=v-2r_*.
\end{equation*}
Since $r_*(r_{\sharp})=0$, we have that $u(\tau,r_{\sharp},\theta,\varphi)=v(\tau,r_{\sharp},\theta,\varphi)=\tau+\int_{r_+}^{r_{\sharp}} \widetilde{\h}(r')\,dr'$ and
\begin{equation*}
u(\tau,r,\theta,\varphi)=\tau+\int_{r_+}^{r_{\sharp}}  \widetilde{\h}(r')\,dr'+\int_{r_{\sharp}}^r\left[\widetilde{\h}(r')-2\Omega^{-2}\right](r')\,dr'=\tau+\int_{r_+}^{r_{\sharp}}  \widetilde{\h}(r')\,dr'-\int_{r_{\sharp}}^r\Omega^{-2}\h(r')\,dr'.
\end{equation*}
We conclude that $\lim_{r\to \infty}u(\tau,r,\theta,\varphi)$ is well-defined if $\Omega^{-2}\h$ is integrable in $[r_0,\infty)$, with $r_0>r_+$ arbitrary. We will therefore assume that $h=O(r^{-1-\delta})$ for some $\delta>0$. For the sake of convenience, we consider the also the following further restrictions: let $r_I>r_{\sharp}$, then
\begin{align}
\label{eq:foliassm1}
\h(r)=&\:0,\quad &r\geq r_I>r_{\sharp},\\
\label{eq:foliassm2}
\Omega^{-2}\h(r)=&\:h_0 r^{-2}+O_{\infty}(r^{-3}),\quad &h_0>0.
\end{align}
It will also be convenient to consider: let $r_+<r_H<r_{\sharp}$, then
\begin{align}
\label{eq:foliassm3}
\widetilde{\h}(r)=&\:0,\quad &r\leq r_H<r_{\sharp},\\
\label{eq:foliassm4}
\widetilde{\h}(r)>&\:0,\quad &r_+\leq r\leq r_{\sharp}
\end{align}
If \eqref{eq:foliassm1} holds, then $\Sigma_{\tau}\cap\{r>r_{I}\}$ are null hypersurfaces. Similarly, if \eqref{eq:foliassm3} holds, then $\Sigma_{\tau}\cap\{r<r_{H}\}$ are null hypersurfaces. If \eqref{eq:foliassm2} and \eqref{eq:foliassm4} hold, then $\Sigma_{\tau}$ are globally spacelike and \emph{asymptotically hyperboloidal}.

Then $g_{M,Q}$ attains the following form with respect to $(u,\rho_{\infty},\theta,\varphi)$ coordinates:
\begin{equation*}
g_{M,Q}=r^2(\Omega^2\rho_{\infty}^2du^2+2dud\rho_{\infty}+\slashed{g}_{\s^2}).
\end{equation*}
We moreover have that: $\mathcal{M}_{M,Q}\cong \mathcal{H}^+\cup \left(\R_u\times (0,r_+^{-1})_{\rho_{\infty}}\times \s^2\right)$. Then we extend $\mathcal{M}_{M,Q}$ by defining:
\begin{equation*}
\widehat{\mathcal{M}_{M,Q}}:=\mathcal{H}^+\cup (\R_u\times [0,r_+^{-1})_{\rho_{\infty}}\times \s^2).
\end{equation*}
We denote the level set $\{\rho_{\infty}=0\}$ of $\widehat{\mathcal{M}_{M,Q}}$ as follows:
\begin{equation*}
\mathcal{I}^+:=\{\rho_{\infty}=0\}\subset \widehat{\mathcal{M}_{M,Q}}.
\end{equation*}
We refer to $\mathcal{I}^+$ as \emph{future null infinity}.

Note that with the choices \eqref{eq:foliassm1} or \eqref{eq:foliassm2}, $(\tau,\rho_{\infty},\theta,\varphi)$ cover the extended manifold $\widehat{\mathcal{M}_{M,Q}}$ (away from half great circles on the spheres of constant $\rho_{\infty},\tau$, which are not covered by the chart $(\theta,\varphi)$).

We moreover denote with $S^2_{\tau',r'}$, $S^2_{v',r'}$ and $S^2_{v',r'}$ the 2-spheres that form the intersection of the level set $\{r=r'\}$ with the level sets $\{\tau=\tau'\}$, $\{v=v'\}$, or $\{u=u'\}$.

When $\kappa_+=0$, the map $(v, \rho_+,\theta,\varphi)\mapsto (u,\rho_{\infty},\theta,\varphi)$ is a \emph{conformal transformation} of $(\mathring{\mathcal{M}}_{M,Q},g_{M,Q})$, which is called a \emph{Couch--Torrence transformation}.

Define:
\begin{equation*}
s_{\infty}=r-r_+\quad \textnormal{and}\quad s_{+}=r_+^{2}(r-r_+)^{-1}=r_+^{2}s_{\infty}^{-1}.
\end{equation*}

 Note that $s_+,s_{\infty}\in (0,\infty)$. Then the Couch--Torrence transformation can then alternatively be identified with the map: $(v,s_+,\theta,\varphi)\mapsto (u, r_+^{2}s_+^{-1}, \theta,\varphi)$.

In slight abuse of notation, we will denote in the paragraph below by $\h,\widetilde{\h},r,\Omega^2:(0,\infty)_{s_+}\to [0,\infty)$ the composition functions defined as follows: $\h(s_+)=\h(r(s_+))$, $\widetilde{\h}(s_+)=\widetilde{\h}(r(s_+))$, $r(s_+)=r_+^2s_{+}^{-1}+r_+$ and $\Omega^2(s_+)=\Omega^2(r(s_+))=(r_+^{-1}s_++1)^{-2}$. In the case $|Q|=M$, it will be convenient to introduce the functions $\h,\widetilde{\h}^+: (0,\infty)\to [0,1]$, defined as follows:
\begin{align}
\label{eq:relhhplus1}
\widetilde{\h}^+(s_{\infty}):=&\:(r^2\h)(r_+^2s_{\infty}^{-1}),\\
\label{eq:relhhplus2}
\h^+(s_{\infty}):=&\: (\Omega^2\widetilde{\h})(r_+^2s_{\infty}^{-1})=2-\Omega^2(r_+^2s_{\infty}^{-1})\widetilde{\h}^+(s_{\infty}).
\end{align}
Note that 

When integrating over the unit  round sphere $\s^2$, we will make use of the notation:
\begin{equation*}
d\sigma:=\sin\theta d\theta d\varphi.
\end{equation*}

\subsection{Key vector fields}
The following vector fields will play key roles in the analysis in the remainder of the article. With respect to $(v,r,\theta,\varphi)$ coordinates:
\begin{align*}
T:=&\:\partial_v,\\
X:=&\: \partial_r+\widetilde{\h}\partial_v,\\
\Lbar:=&\: -\frac{\Omega^2}{2}\partial_r,\\
L:=&\: T-\Lbar,\\
Y_*:=&L-\underline{L}=\Omega^2X+(1-\widetilde{\h}\Omega^2)T.
\end{align*}
With respect to $(\tau,r,\theta,\varphi)$ coordinates, we have that $T=\partial_{\tau}$ and $X=\partial_r$.

We will also denote for $\q\in \R$:
\begin{align*}
K:=&\: T+i\q r_+^{-1}\mathbf{1},
\end{align*}
where we will define $\q$ according to \eqref{eq:deflittleq} below.
\subsection{Charged scalar field equation}
\label{sec:CSF}
The inhomogeneous charged scalar field equation for $\phi: \mathcal{M}_{M,Q}\to \C$ on $(\mathcal{M}_{M,Q},g_{M,Q})$ is:
\begin{equation}
\label{eq:CSF}
(g^{-1}_{M,Q})^{\mu \nu}(^AD_{\mu}) (^AD_{\nu})\phi=G_A,
\end{equation}
where $A\in \Omega^1(\mathcal{M}_{M,Q})$ satisfies $dA=F_Q$ and is called an \emph{electromagnetic gauge}, with $F_Q\in \Omega^2(\widehat{\mathcal{M}}_{M,Q})$ the Reissner--Nordstr\"om \emph{Faraday tensor}, which is defined as follows: $F_Q:=-\frac{Q}{r^2}dv\wedge dr$, with $Q\in \R$ the total \emph{electric charge} in the Reissner--Nordstr\"om black hole exterior. Furthermore,
\begin{equation*}
^AD:=\nabla-i\mathfrak{q} A\otimes (\cdot) ,
\end{equation*}
 is the \emph{electromagnetic gauge derivative}, with $\nabla$ the Levi-Civita covariant derivative corresponding to $g_{M,Q}$ with $\mathfrak{q}\in \R$ the \emph{scalar field charge parameter} or \emph{charge coupling constant}. We introduce the following (dimensionless) charge parameter:
 \begin{equation}
 \label{eq:deflittleq}
 \q:=\mathfrak{q}Q.
 \end{equation}
 
In \eqref{eq:CSF}, we applied the following notational convention:
 \begin{align*}
(^AD_{\mu}) (^AD_{\nu})T:=&\:((^AD)^2T)_{\mu \nu}\quad \textnormal{if $T$ is a tensor field}.
 \end{align*}
 It will moreover be convenient to denote:
 \begin{equation*}
  ^AD_{Y}:=\nabla_Y-i\mathfrak{q} A(Y)\mathbf{1}\quad \textnormal{for any vector field $Y$}.
 \end{equation*}
 
 While we are mainly interested in \eqref{eq:CSF} with $G_A\equiv 0$, we will consider also non-vanishing $G_A$ in part of the analysis.
 
In the remainder of the article, it will be more convenient to work with the quantity
\begin{equation*}
\psi:= r\cdot \phi.
\end{equation*}
Then \eqref{eq:CSF} is equivalent to:
 \begin{equation}
\label{eq:rCSF}
r(g^{-1}_{M,Q})^{\mu \nu}(^AD_{\mu}) (^AD_{\nu})(r^{-1}\psi)=rG_A.
\end{equation}
 
 Note that we can alternatively express $F_Q:=-\frac{Q}{r^2}du \wedge dr=Q du\wedge dx$ in $\widehat{\mathcal{M}}_{M,Q}\setminus \mathcal{H}^+$, so $F_Q\in \Omega^2(\widehat{\mathcal{M}}_{M,Q})$. We can choose $A\in  \Omega^1(\widehat{\mathcal{M}}_{M,Q})$, for example, we can consider \eqref{eq:foliassm1} or \eqref{eq:foliassm2} and take $A=\widehat{A}$ with:
 \begin{equation*}
\widehat{A}:=-\frac{Q}{r}d\tau.
\end{equation*}
\textbf{In most of the analysis in the present paper, we will restrict to $A=\widehat{A}$.}

In order to state the integrated energy estimates below, it will also be convenient to consider $A\notin  \Omega^1(\mathcal{M}_{M,Q})$ but $A\in  \Omega^1(\mathring{\mathcal{M}}_{M,Q})$. For example, we can consider $A=\widetilde{A}$ with
\begin{equation*}
\widetilde{A}:=-\frac{Q}{r}dt.
\end{equation*}
In that case, \eqref{eq:CSF} is only valid in $\mathring{\mathcal{M}}_{M,Q}$.

Suppose that $A'=A+df$ for some $f\in C^{\infty}(\mathring{\mathcal{M}}_{M,Q})$ and $\phi'=e^{i\mathfrak{q}f}\phi$. Then for any vector field $Y$, we have that:
\begin{equation*}
^{A'}D\phi'=e^{i \mathfrak{q}f}(^AD\phi).
\end{equation*}
Hence $\phi$ is a solution to \eqref{eq:CSF} if and only if $\phi'$ satisfies:
\begin{equation*}
(g^{-1}_{M,Q})^{\mu \nu}(^{A'}D)_{\mu}(^{A'}D)_{\nu}\phi'=e^{i \mathfrak{q}f}G_A=:G_{A'}.
\end{equation*}

We refer to $\phi'$ and $G_{A'}$ as \emph{gauge transformations} of $\phi$ and $G_A$ respectively. In particular, the norms $|\phi|$ and $|G_A|$ are gauge invariant.

We can relate the electromagnetic gauge choices $\widehat{A}$ and $\widetilde{A}$ as follows:
\begin{equation*}
\widetilde{A}=-\frac{Q}{r}d(v-r_*)=-\frac{Q}{r}(d\tau-\Omega^{-2}dr+\widetilde{\h}dr)=\widehat{A}+d\left(\int_{r_{\sharp}}^r\frac{Q}{r'}(\Omega^{-2}-\widetilde{\mathbbm{h}})(r')\,dr'\right).
\end{equation*}

An important role will be played by \eqref{eq:CSF} with $g_{M,Q}$ replaced by the Minkowski metric $m=g_{0,0}$ and $G_A\equiv 0$, but $A$ as in \eqref{eq:CSF}:
\begin{equation}
	\label{eq:CSFmink}
	(m^{-1})^{\mu \nu}(^AD_{\mu}) (^AD_{\nu})(r^{-1}\psi)=0.
\end{equation}
In other words, we consider the limits $M\downarrow 0$ and $Q\to 0$ in \eqref{eq:CSF}, but keep $\mathfrak{q}Q$ constant.

The following proposition provides more explicit expressions for \eqref{eq:CSF} with respect to $\widehat{A}$ in terms $\psi$:
\begin{proposition}
\label{prop:mainequations}
Let $\psi$ be a solution to \eqref{eq:rCSF} with $A=\widehat{A}$. Then:
\begin{multline}
\label{eq:maineqradfield}
rG_{\widehat{A}}=X(\Omega^2X\psi)+r^{-2}\slashed{\Delta}_{\s^2}\psi-\h \widetilde{\h} T^2\psi-2(1-\h)(T+i\q r^{-1})X\psi\\
+\left(\frac{d\h}{dr}-2i \q r^{-1} \h\widetilde{\h}\right)T\psi-\left[\frac{d\Omega^2}{dr} r^{-1}-i \q r^{-1}\frac{d\h}{dr}-\q^2  \mathbbm{h}\widetilde{\mathbbm{h}}r^{-2}-i \q r^{-2}(1-\h)\right]\psi.
\end{multline}
With respect to the coordinate chart $(\tau,\rho_{\infty},\theta,\varphi)$ we moreover obtain:
\begin{multline}
\label{eq:maineqradfieldconf}
r^2(rG_{\widehat{A}})=\partial_{\rho_{\infty}}(\Omega^2r^{-2}\partial_{\rho_{\infty}}\psi)+\slashed{\Delta}_{\s^2}\psi-\rho_{\infty}^{-2}\h \widetilde{\h} T^2\psi+2(1-\h)\partial_{\rho_{\infty}}T\psi+2i\q(1-\h)\rho_{\infty} \partial_{\rho_{\infty}}\psi\\
+\left(-\frac{d\h}{d\rho_{\infty}}-2i \q \rho_{\infty}^{-1}\h\widetilde{\h}\right)T\psi-\left[r\frac{d\Omega^2}{dr} +i\q \rho_{\infty}\frac{d\h}{d\rho_{\infty}}-\q^2 \mathbbm{h}\widetilde{\mathbbm{h}}-i \q (1-\h)\right]\psi
\end{multline}
and with respect to the coordinate chart $(\tau,\rho_+,\theta,\varphi)$, we obtain:
\begin{multline}
\label{eq:maineqradfieldconfhor}
r^2(rG_{\widehat{A}})=\partial_{\rho_+}(\Omega^2r^{-2}\partial_{\rho_+}\psi)+\slashed{\Delta}_{\s^2}\psi-r^2\h \widetilde{\h} K^2\psi-2(1-\h)\partial_{\rho_+}K\psi+ 2i\q (1-\h)\rho_+ \partial_{\rho_+}\psi\\
+\left(r^2\frac{d\h}{dr}+2i \q \rho_+ r^2\h\widetilde{\h}\right)K\psi-\left[r\frac{d\Omega^2}{dr} +i\q  \rho_+r^2 \frac{d\h}{dr}-\q^2 \rho_+^2r^2\mathbbm{h}\widetilde{\mathbbm{h}}-i \q (1-\h)\right]\psi.
\end{multline}
In particular, when $\kappa_+=0$, then
\begin{multline}
\label{eq:maineqradfieldhor}
r^2(rG_{\widehat{A}})=\partial_{\rho_{\infty}}(\Omega^2r^{-2}\partial_{\rho_{\infty}}\psi)+\slashed{\Delta}_{\s^2}\psi-\rho_{\infty}^{-2}\h \widetilde{\h} T^2\psi+2(1-\h)\partial_{\rho_{\infty}}T\psi+2i\q(1-\h)\rho_{\infty} \partial_{\rho_{\infty}}\psi\\
+\left(-\frac{d\h}{d\rho_{\infty}}-2i \q \rho_{\infty}^{-1} \h\widetilde{\h}\right)T\psi-\left[2\rho_+\rho_{\infty} +i\q \rho_{\infty}\frac{d\h}{d\rho_{\infty}}-\q^2  \mathbbm{h}\widetilde{\mathbbm{h}}-i \q (1-\h)\right]\psi\\
=\partial_{\rho_+}(\Omega^2r^{-2}\partial_{\rho_+}\psi)+\slashed{\Delta}_{\s^2}\psi-\rho_+^{-2}\h^+ \widetilde{\h}^+ K^2\psi+2(1-\h^+)\partial_{\rho_+}K\psi+2i(-\q)(1-\h_+)\rho_+ \partial_{\rho_+}\psi\\
+\left(-\frac{d\h^+}{d\rho_+}-2i (-\q) \rho_+^{-1}\h^+\widetilde{\h}^+\right)K\psi-\left[2\rho_{\infty}\rho_+  +i(-\q) \rho_+ \frac{d\h^+}{d\rho_+}-(-\q)^2  \mathbbm{h}^+\widetilde{\mathbbm{h}}^+-i (-\q) (1-\h^+)\right]\psi.
\end{multline}
Note the symmetry between \eqref{eq:maineqradfieldhor} and \eqref{eq:maineqradfieldconf}: we can obtain the expressions on the very RHS of \eqref{eq:maineqradfieldhor} from the expressions to the right of the first equality in \eqref{eq:maineqradfieldconf} by interchanging $\rho_+$ and $\rho_{\infty}$, replacing $\q$ with $-\q$, $T$ with $K$, $\h$ with $\h_+$ and $\widetilde{\h}$ with $\widetilde{\h}_+$.
\end{proposition}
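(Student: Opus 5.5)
The plan is a direct computation in the coordinate chart $(\tau,r,\theta,\varphi)$, followed by changes of radial variable. First I will write the inverse metric in $(\tau,r)$ coordinates. Since $\tau=v-\int_{r_+}^r\widetilde{\h}$, we have $\partial_v=\partial_\tau$ and $\partial_r|_v=\partial_r|_\tau-\widetilde{\h}\partial_\tau$, i.e. $X=\partial_r$, $T=\partial_\tau$. Substituting into $g^{-1}_{M,Q}=\Omega^2\partial_r\otimes\partial_r+2\partial_v\otimes_s\partial_r+r^{-2}\slashed g^{-1}$ gives
\begin{equation*}
g^{-1}=\Omega^2X\otimes X+(1-\Omega^2\widetilde{\h})(T\otimes X+X\otimes T)-\widetilde{\h}(2-\Omega^2\widetilde{\h})T\otimes T+r^{-2}\slashed g^{-1}_{\s^2},
\end{equation*}
and $1-\Omega^2\widetilde{\h}=\h-1$ and $\widetilde{\h}(2-\Omega^2\widetilde{\h})=\h\widetilde{\h}$. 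The volume form is $r^2\,d\tau\,dr\,d\sigma$, since the determinant is unchanged by the shear. The gauge derivative with $\widehat A=-Qr^{-1}d\tau$ is ${}^{\widehat A}D_T=T+i\q r^{-1}$ and ${}^{\widehat A}D_X=X$.

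Next I will write the charged wave operator in divergence form, $|g|^{-1/2}({}^{\widehat A}D)_\mu\big(|g|^{1/2}g^{\mu\nu}({}^{\widehat A}D)_\nu\phi\big)$, with $\phi=r^{-1}\psi$. I then multiply by $r$ and expand. The terms from $\Omega^2X\otimes X$ give $r^{-1}X(r^2\Omega^2X(r^{-1}\psi))$, which produces $X(\Omega^2X\psi)-r^{-1}\frac{d\Omega^2}{dr}\psi$ by the standard radiation-field computation. The cross terms give $-2(1-\h)(T+i\q r^{-1})X\psi$, plus lower-order terms from $X$ hitting $(\h-1)$ (producing $\frac{d\h}{dr}T\psi$ and $i\q r^{-1}\frac{d\h}{dr}\psi$) and from $X$ hitting $r^{-1}$ and $r^2$, which combine with the charge factor to $i\q r^{-2}(1-\h)\psi$. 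The $T\otimes T$ term gives $-\h\widetilde{\h}(T+i\q r^{-1})^2\psi$. Expanding this yields the $-\h\widetilde{\h}T^2$, $-2i\q r^{-1}\h\widetilde{\h}T$ and $+\q^2r^{-2}\h\widetilde{\h}$ terms. Collecting everything gives \eqref{eq:maineqradfield}. The only care needed is sign bookkeeping of the $i\q$ terms; I will check this against the gauge transformation $\widetilde A=\widehat A+df$.

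For \eqref{eq:maineqradfieldconf}, I will substitute $X=-\rho_\infty^2\partial_{\rho_\infty}$ and multiply by $r^2$, using $X(\Omega^2X\psi)=\rho_\infty^2\partial_{\rho_\infty}(\Omega^2\rho_\infty^2\partial_{\rho_\infty}\psi)$ and $\frac{d\h}{dr}=-\rho_\infty^2\frac{d\h}{d\rho_\infty}$. For \eqref{eq:maineqradfieldconfhor}, I will first rewrite $T=K-i\q r_+^{-1}$ in \eqref{eq:maineqradfield}. The combination $i\q(r^{-1}-r_+^{-1})=-i\q\rho_+$ then replaces $i\q r^{-1}$ throughout. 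After that I will substitute $X=r^{-2}\partial_{\rho_+}$ and multiply by $r^2$.

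The final identity \eqref{eq:maineqradfieldhor} requires $\kappa_+=0$. There, Lemma \ref{lm:metricest} gives $r^{-2}\Omega^2=\rho_+^2\rho_\infty^2$ (using $r_+\rho_+=1-r_+\rho_\infty$ and $\rho_++\rho_\infty=r_+^{-1}$) and $r\frac{d\Omega^2}{dr}=2\rho_+\rho_\infty$. The first equality follows by inserting these into \eqref{eq:maineqradfieldconf}. For the second equality I will rewrite the $\h,\widetilde{\h}$ terms via \eqref{eq:relhhplus1}--\eqref{eq:relhhplus2}: $\rho_\infty^{-2}\h\widetilde{\h}=r^2\h\widetilde{\h}=\rho_+^{-2}\h^+\widetilde{\h}^+$ after the Couch--Torrence reparametrization, and $1-\h=-(1-\h^+)$-type relations with $\partial_{\rho_+}=-\partial_{\rho_\infty}$. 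I will then compare term by term with \eqref{eq:maineqradfieldconfhor}.

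I expect this last matching to be the main obstacle. It requires verifying that under $\rho_+\leftrightarrow\rho_\infty$ the functions $\h,\widetilde{\h}$ transform exactly into $\h^+,\widetilde{\h}^+$, including derivative terms such as $r^2\frac{d\h}{dr}$ versus $-\frac{d\h^+}{d\rho_+}$, and that the sign flip $\q\mapsto-\q$ comes out consistently. I would first check this on the principal and first-order terms, then deduce the zeroth-order terms from the gauge-covariant structure.
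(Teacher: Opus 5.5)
Your proposal is correct. The paper states this proposition without proof, and your direct computation is the intended argument and reproduces every coefficient: the inverse metric in $(\tau,r)$ coordinates, the divergence form with $D_T=T+i\q r^{-1}$, and then the substitutions $X=-\rho_\infty^2\partial_{\rho_\infty}$, $X=r^{-2}\partial_{\rho_+}$ and $T+i\q r^{-1}=K-i\q\rho_+$.

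The final matching you flag as the main obstacle is in fact immediate. Read \eqref{eq:relhhplus1}--\eqref{eq:relhhplus2} pointwise as $\widetilde{\h}^+=r^2\h$ and $\h^+=\Omega^2\widetilde{\h}=2-\h$. This uses $r_+=1$, which you and the paper both implicitly assume in $r^{-2}\Omega^2=\rho_+^2\rho_\infty^2$ and $r\frac{d\Omega^2}{dr}=2\rho_+\rho_\infty$; for general $r_+$ these carry a factor $r_+^2$. With these relations, every term of \eqref{eq:maineqradfieldconfhor}, including the zeroth-order ones, matches the right-hand side of \eqref{eq:maineqradfieldhor} by direct substitution. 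So no gauge-covariance argument and no $\rho_+\leftrightarrow\rho_\infty$ transformation of $\h,\widetilde{\h}$ is needed.
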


\subsection{Spherical harmonics}
Let $f\in L^2(\s^2)$. Then we can decompose:
\begin{align*}
f=\sum_{\ell \in \N_0}\sum_{m\in \Z, |m|\leq \ell}f_{\ell m}Y_{\ell m},
\end{align*}
with $f_{\ell m}\in \C$ and $Y_{\ell m}\in L^2(\s^2)$ spherical harmonics, which form an orthonormal basis of  $L^2(\s^2)$, satisfying $\partial_{\varphi}Y_{\ell m}=imY_{\ell m}$ and $\slashed{\Delta}_{\s^2}Y_{\ell m}=-\ell(\ell+1)Y_{\ell m}$. We also denote:
\begin{align*}
f_{\ell}=&\:\sum_{m\in \Z, |m|\leq \ell}f_{ \ell m}Y_{\ell m},\\
f_{\geq \ell}=&\:\sum_{\ell'=\ell}^{\infty}\sum_{m\in \Z, |m|\leq \ell'}f_{ \ell' m}Y_{\ell' m}.
\end{align*}
We also apply the above notation to functions on $\mathcal{M}_{M,Q}$ or $\Sigma_{\tau}$.

The following parameter plays fundamental role in the present paper when considering $\psi_{\geq \ell}$, with $\psi$ a solution to \eqref{eq:CSF}:
\begin{equation*}
\beta_{\ell}:=\sqrt{4\ell(\ell+1)+1-4q^2}=\sqrt{(2\ell+1)^2-4q^2}.
\end{equation*}

\begin{lemma}
\mbox{}
\begin{enumerate}[label=\emph{(\roman*)}]
\item\emph{(Poincar\'e inequality on $\s^2$)} Let $f\in H^1(\s^2)$, then:
\begin{equation}
\label{eq:poincares2}
||f_{\geq \ell}||_{L^2(\s^2)}\leq \frac{1}{\sqrt{\ell(\ell+1)}}||f||_{H^1(\s^2)}.
\end{equation}
\item\emph{(Sobolev inequality on $\s^2$)} Let $f\in H^2(\s^2)$. Then there exists a numerical constant $C>0$, such that:
\begin{equation}
\label{eq:poincares2}
||f||_{L^{\infty}(\s^2)}\leq C ||f||_{H^2(\s^2)}\leq C\sum_{\ell\in \N_0}(\ell+1)^2||f_{\ell}||_{L^2(\s^2)}.
\end{equation}
\end{enumerate}
\begin{proof}
A straightforward consequence of the expansion $f=\sum_{\ell\in \N_0}f_{\ell}$.
\end{proof}
\end{lemma}

\subsection{Commutator vector fields}
Let $\gamma\in \N_0^3$. We introduce the following shorthand notations for differential operators acting on functions:
\begin{align}
\label{eq:commvf1}
D_{\mathbf{Z}}^{\gamma}:=&(^AD_{\s^2})_{\s^2}^{\gamma_1} ((r-M)(^AD)_{X})^{\gamma_2}(^A D_T)^{\gamma_3},\\
\label{eq:commvf2}
\mathbf{Z}^{\gamma}:=&\:\snabla_{\s^2}^{\gamma_1}((r-M)X)^{\gamma_2}T^{\gamma_3}.
\end{align}
with $\snabla_{\s^2}$ the Levi-Civita covariant derivative with respect to $\slashed{g}_{\s^2}$ and with $D_{\s^2}=\snabla_{\s^2}-i\mathfrak{q}\slashed{A}$, where $\slashed{A}\in \Omega^1(S^2_{\tau,r})$ satisfies $\slashed{A}=A_{\theta}d\theta+A_{\varphi}d\varphi$.

In particular, when $A=\widehat{A}$, we have that:
\begin{equation*}
D_{\mathbf{Z}}^{\gamma}=\snabla_{\s^2}^{\gamma_1} ((r-M)X)^{\gamma_2}D_T^{\gamma_3}.
\end{equation*}

Given a vector field $Y$ and a function $f$, we moreover apply the following notational conventions: for any $k\in \N$,
\begin{align*}
 D_Y D_{\mathbf{Z}}^{\gamma}f:=&\:(^AD_{\s^2})_{\s^2}^{\gamma_1} (^AD)_Y((r-M)(^AD)_{X})^{\gamma_2}(^A D_T)^{\gamma_3}f,\\
  D_{\s^2}^kD_{\mathbf{Z}}^{\gamma}f:=&\:(^AD_{\s^2})_{\s^2}^{\gamma_1+k} ((r-M)(^AD)_{X})^{\gamma_2}(^A D_T)^{\gamma_3}f,\\
  D_Y \mathbf{Z}^{\gamma}f:=&\:\snabla_{\s^2}^{\gamma_1} D_Y((r-M)X)^{\gamma_2}T^{\gamma_3}f,\\
    Y \mathbf{Z}^{\gamma}f:=&\:\snabla_{\s^2}^{\gamma_1} Y((r-M)X)^{\gamma_2}T^{\gamma_3}f,\\
   \snabla_{\s^2}^k \mathbf{Z}^{\gamma}f:=&\:\snabla_{\s^2}^{\gamma_1+k} Y((r-M)X)^{\gamma_2}T^{\gamma_3}f.
 \end{align*}

\subsection{Energy boundedness and integrated energy decay}
\label{sec:iedassm}
We first state a local existence and uniqueness result and then a global integrated energy estimate, coming from \cite{gaj26a}.
\begin{theorem}(Existence and uniqueness for the charged scalar field equation; \cite{gaj26a}[Theorem 2.5])
\label{thm:gwp}

Let $A\in \Omega^1(\mathcal{M}_{M,Q})$ and $G_A \in C^{\infty}(\mathcal{M}_{M,Q})$. Consider the initial data pair:
\begin{equation*}
(\upphi,T\upphi)\in C_c^{\infty}(\Sigma_0)\times C_c^{\infty}(\Sigma_0).
\end{equation*}
Then there exists a unique solution $\phi\in C^{\infty}(\mathcal{M}_{M,Q})$ to \eqref{eq:CSF} with respect to $A$, such that $(\phi|_{\Sigma_0}, T\phi|_{\Sigma_0})=(\upphi,T\upphi)$.

If we take $A\in \Omega^1(\widehat{\mathcal{M}}_{M,Q})$ and $G_A\in C^{\infty}(\widehat{\mathcal{M}}_{M,Q})$, then $\psi=r\phi\in C^{\infty}(\widehat{\mathcal{M}_{M,Q}})$.
\end{theorem}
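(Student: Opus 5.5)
The plan is to treat Theorem \ref{thm:gwp} as a standard linear hyperbolic well-posedness problem. The equation \eqref{eq:CSF} has principal part $\square_{g_{M,Q}}$. The gauge terms $-i\mathfrak{q}A$ only contribute first- and zeroth-order terms with smooth coefficients. The approach is local existence in small coordinate patches, then patching by finite speed of propagation and uniqueness, and finally extension to all $\tau\geq 0$ using $T$-invariance of the coefficients. First, I would expand
\begin{equation*}
(g^{-1})^{\mu\nu}(^AD_\mu)(^AD_\nu)\phi=\square_g\phi-2i\mathfrak{q}(g^{-1})^{\mu\nu}A_\mu\partial_\nu\phi-i\mathfrak{q}(\nabla^\mu A_\mu)\phi-\mathfrak{q}^2 g^{-1}(A,A)\phi.
\end{equation*}
This shows it is a linear wave operator with smooth coefficients on $\mathcal{M}_{M,Q}$, including up to $\mathcal{H}^+$, since the $(v,r)$ chart is regular there. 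The hypersurface $\Sigma_0$ is spacelike, as follows from \eqref{eq:foliassm2}, \eqref{eq:foliassm4} and $2-\widetilde{\h}\Omega^2\geq 0$, and the data are compactly supported.

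Standard energy estimates on domains of dependence then give a unique smooth local solution near each point of $\Sigma_0$. I would use the $T$-energy plus a red-shift or non-degenerate timelike multiplier near $\mathcal{H}^+$, with lower-order terms absorbed by Gr\"onwall. Uniqueness on overlaps glues these into a solution on a slab $\{0\leq\tau\leq\tau_0\}$. Because the coefficients are $\tau$-independent when $A=\widehat{A}$ (and after a gauge transformation $e^{i\mathfrak{q}f}$ in general), $\tau_0$ can be chosen uniformly. Iterating therefore yields a global smooth solution. Finite speed of propagation keeps the support of the solution on each $\Sigma_\tau$ within a compact set of $r$, which justifies working with compactly supported energies.

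For the second statement, with $A\in\Omega^1(\widehat{\mathcal{M}}_{M,Q})$, I would pass to the conformal chart $(\tau,\rho_\infty,\theta,\varphi)$ and use \eqref{eq:maineqradfieldconf}. Multiplied by $r^2$ in this form, the equation for $\psi=r\phi$ has coefficients that are smooth up to $\rho_\infty=0$. Here $\Omega^2r^{-2}=\rho_\infty^2(\ldots)$ by Lemma \ref{lm:metricest}, and the operator is a wave-type operator in $(\tau,\rho_\infty)$ whose term $2(1-\h)\partial_{\rho_\infty}T$ is regular up to the boundary $\mathcal{I}^+$. Two observations then give $\psi\in C^\infty(\widehat{\mathcal{M}}_{M,Q})$. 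First, by finite speed of propagation, on each finite $\tau$-slab $\psi$ vanishes near $\mathcal{I}^+$, because the data are compactly supported and $\Sigma_\tau$ is asymptotically hyperboloidal. Second, the source $r^3G_A$ is smooth on the compactification. The main obstacle is this boundary regularity when $G_A$ is not compactly supported near $\mathcal{I}^+$. In that case I would commute with $\partial_{\rho_\infty}$ and use weighted energy estimates, as in $\mathcal{E}_p$, on $\{\rho_\infty\geq 0\}$, exploiting that $\mathcal{I}^+$ is a characteristic boundary that needs no boundary condition.
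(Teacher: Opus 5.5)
\textbf{The gap.} In both parts you rely on the claim that, for compactly supported data, the solution on each $\Sigma_\tau$ is supported in a compact $r$-range, so that on finite $\tau$-slabs $\psi$ vanishes near $\mathcal{I}^+$. This is false for the hyperboloidal foliation used here.

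Along $\Sigma_\tau$ one has $u\to\tau+\mathrm{const}$ as $r\to\infty$. An outgoing null ray from the support of the data has constant, finite $u$, so it reaches $\mathcal{I}^+$ at a finite value of $\tau$, even when $G_A\equiv 0$. The radiation field $\psi|_{\mathcal{I}^+}$ is generically non-zero there; its non-trivial asymptotics are the main subject of the paper.

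This has two consequences.
\begin{itemize}
\item In the second statement, smoothness of $\psi$ at $\mathcal{I}^+$ for homogeneous solutions with compactly supported data is not automatic. It is the entire content of the statement, yet you address it only in your fallback, and only for sources that fail to be compactly supported.
\item In the first statement, the slab iteration is unjustified. By \eqref{eq:foliassm2}, outgoing rays satisfy $dr/d\tau=\Omega^2/\h\sim r^2/h_0$. Hence domains of dependence of fixed-size coordinate balls on $\Sigma_0$ have $\tau$-height of order $r^{-2}$, and $T$-invariance gives no uniformity of $\tau_0$ in $r$.
\end{itemize}

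\textbf{How to repair it.} The first statement only concerns interior smoothness, which is local. For every $p$ with $\tau(p)\geq 0$, the set $J^-(p)\cap\{\tau\geq 0\}$ is compact, so you can cut off $G_A$, solve on domains of dependence, and glue by uniqueness. No global slabs or compact supports are needed.

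For the second statement, run your boundary argument for \emph{every} solution, most cleanly in the compactified picture.
\begin{itemize}
\item Gauge to $\widehat{A}$. This is possible because $A-\widehat{A}$ is closed and $\widehat{\mathcal{M}}_{M,Q}$ is simply connected.
\item By \eqref{eq:maineqradfieldconf}, $r^3G_A$ equals an operator applied to $\psi$ whose coefficients are all smooth in $\rho_\infty$ up to and across $\rho_\infty=0$.
\item Its principal part $\partial_{\rho_\infty}(\Omega^2r^{-2}\partial_{\rho_\infty})+\slashed{\Delta}_{\s^2}+2(1-\h)\partial_{\rho_\infty}T-\rho_\infty^{-2}\h\widetilde{\h}T^2$ is Lorentzian there, since $\rho_\infty^{-2}\h\widetilde{\h}\to 2h_0$.
\item For this operator, $\Sigma_\tau$ is uniformly spacelike. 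Both $\{\rho_\infty=0\}$ and $\mathcal{H}^+$ are null and can only be crossed outwards towards the future. Moreover $\Sigma_\tau\cong[0,r_+^{-1}]\times\s^2$ is compact.
\item Extend the coefficients and source smoothly a little beyond both boundaries. Standard linear theory then gives $\psi\in C^\infty(\widehat{\mathcal{M}}_{M,Q})$, using non-degenerate ($\mathcal{E}_2$-type) energies and commutation with $T$, $\partial_{\rho_\infty}$ and angular derivatives. Domain of dependence makes the result independent of the extension, and $T$-invariance now genuinely gives uniform $\tau$-steps.
\end{itemize}

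\textbf{The source hypothesis.} The condition $r^3G_A\in C^\infty(\widehat{\mathcal{M}}_{M,Q})$ does not follow from $G_A\in C^\infty(\widehat{\mathcal{M}}_{M,Q})$, but it is necessary. If $\psi$ were smooth up to $\rho_\infty=0$, the right-hand side of \eqref{eq:maineqradfieldconf} would be bounded there. On the other hand, $r^3G_A=\rho_\infty^{-3}G_A$ is unbounded there unless $G_A=O(\rho_\infty^3)$. You should state this as the hypothesis under which you prove the second part, rather than asserting it.
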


We define the energy densities $\mathcal{E}_p[\psi]$ as follows:
\begin{equation*}
\mathcal{E}_p[\psi]:= (\Omega^{-1}r)^{p}\Omega^{2}|D_X\psi|^2+(\Omega^{-1}r)^{\min\{p,0\}}(\h\widetilde{\h}|D_T\psi|^2+r^{-2}|D_{\s^2}\psi|^2+r^{-2}|\psi|^2).
\end{equation*}
\textbf{Note that $\mathcal{E}_p[\psi]$ are gauge invariant!}

When $A=\widehat{A}$, we can express:
\begin{equation*}
\mathcal{E}_p[\psi]= (\Omega^{-1}r)^{p}\Omega^{2}|X\psi|^2+(\Omega^{-1}r)^{\min\{p,0\}}(\h\widetilde{\h}|D_T\psi|^2+r^{-2}|\snabla_{\s^2}\psi|^2+r^{-2}|\psi|^2).
\end{equation*}

Given $\gamma=(\gamma_1,\gamma_2,\gamma_3)\in \N_0^3$ consider the operators $\mathbf{Z}^{\gamma}$ and $D_{\mathbf{Z}}^{\gamma}$ as defined in \eqref{eq:commvf1} and \eqref{eq:commvf2}. Then we define the following higher-order energy densities:
\begin{align*}
\mathcal{E}_p[\mathbf{Z}^{\gamma}\psi]:=&\:(\Omega^{-1}r)^{p}\Omega^{2}|D_X \mathbf{Z}^{\gamma}\psi|^2+(\Omega^{-1}r)^{\min\{p,0\}}(\h\widetilde{\h}|D_T\mathbf{Z}^{\gamma}\psi|^2+|\slashed{D}_{\s^2}\mathbf{Z}^{\gamma}\psi|^2+r^{-2}|\mathbf{Z}^{\gamma}\psi|^2),\\
\mathcal{E}_p[D_{\mathbf{Z}}^{\gamma}\psi]:=&\:(\Omega^{-1}r)^{p}\Omega^{2}|D_X D_\mathbf{Z}^{\gamma}\psi|^2+(\Omega^{-1}r)^{\min\{p,0\}}(\h\widetilde{\h}|D_TD_\mathbf{Z}^{\gamma}\psi|^2+|\slashed{D}_{\s^2}D_\mathbf{Z}^{\gamma}\psi|^2+r^{-2}|D_\mathbf{Z}^{\gamma}\psi|^2).
\end{align*}

Note that $\mathcal{E}_p[D_{\mathbf{Z}}^{\gamma}\psi]$ are gauge invariant, but $\mathcal{E}_p[\mathbf{Z}^{\gamma}\psi]$ are not. When $A=\widehat{A}$, we can estimate for any $N\in \N_0$: 
\begin{equation*}
\sum_{|\gamma|\leq N}\mathcal{E}_p[D_{\mathbf{Z}}^{\gamma}\psi]\sim \sum_{|\gamma|\leq N}\mathcal{E}_p[\mathbf{Z}^{\gamma}\psi],
\end{equation*}
using that $D_X=X$, $D_{\s^2}=\snabla_{\s^2}$ and $D_T=T-iqr^{-1}\mathbf{1}$.

\begin{theorem}[Weighted energy boundedness and integrated energy decay; Theorem 3.1 in \cite{gaj26a}]
\label{thm:iedpaper1}
Let $\psi$ be a solution to \eqref{eq:rCSF}.  Let $\epsilon,\delta,\eta ,\kappa_0>0$, $N,\ell\in \N_0$ and $r_+<r_H<r_I<\infty$. Consider $1< p<\min\{1+\re \beta_{\ell},2\}+\epsilon$ and let $\upzeta$ be a smooth function satisfying $\upzeta(r)=1$ for $r\notin (r_{\sharp}-2\eta,r_{\sharp}+2\eta)$ and $\upzeta(r)=0$ for $r\in (r_{\sharp}-\eta,r_{\sharp}+\eta)$. 

Assume that $0\leq \kappa_+\leq \kappa_1$ or $|\q|\leq q_1$. Then, for suitably small $\kappa_1$ or $q_1$, there exist a constant\\ $C=C(q_1,\epsilon,\delta,\eta, \kappa_1, p, r_H,r_I,\h,N)>0$, such that for any $\tau_1\leq \tau_2$:
\begin{multline}
\label{eq:iedpaper1}
\sup_{\tau\in [\tau_1,\tau_2]}\sum_{k\leq N}\int_{\Sigma_{\tau}} \mathcal{E}_{p-2\epsilon}[D_T^k\psi_{\geq \ell}]\,d\sigma dr+\sum_{k_1+k_2+k_3=k}\int_{\tau_1}^{\tau_2}\int_{\Sigma_{\tau}\cap\{r_H\leq r\leq r_I\}} |D_{\s^2}^{k_1}D_{Y_*}^{k_2+1}D_T^{k_3}\psi_{\geq \ell}|^2+|D_{\s^2}^{k_1}D_{Y_*}^{k_2}D_T^{k_3}\psi_{\geq \ell}|^2\\
+\upzeta(r)(|D_{\s^2}^{k_1}D_{Y_*}^{k_2}D_T^{k_3+1}\psi_{\geq \ell}|^2+|D_{\s^2}^{k_1+1}D_{Y_*}^{k_2}D_T^{k_3}\psi_{\geq \ell}|^2)\,d\sigma dr d\tau\\
+\int_{\tau_1}^{\tau_2}\int_{\Sigma_{\tau}\setminus\{r_H\leq r\leq r_I\}} (\rho_++\kappa_+)r^{-1}(r^{-1}\rho_+)^{2\epsilon}\mathcal{E}_{p}[D_T^k\psi_{\geq \ell}]+(\rho_++\kappa_+)r^{-3}(r^{-1}\Omega)^{\delta-2}|D_T^{k+1}\psi_{\geq \ell}|^2\,d\sigma drd\tau\\
\leq C\sum_{k\leq N}\int_{\Sigma_{\tau_1}} \mathcal{E}_{p}[D_T^k\psi_{\geq \ell}]\,d\sigma dr\\
+\int_{\tau_1}^{\tau_2+r_+}\int_{\Sigma_{\tau}} \max\{(r^{-1}\Omega)^{-p}\rho_+^{1-2\epsilon}r^{-1+2\epsilon},1\}r^{-2}|r^3D_T^k(G_{A})_{\geq \ell}|^2+(1-\upzeta)|D_T^{k+1}(G_A)_{\geq \ell}|^2\,d\sigma dr \,d\tau.
\end{multline}
The estimate \eqref{eq:iedpaper1} remains valid when $\kappa_+>\kappa_1$ and $|\q|\geq \q_1$ under the assumption of Condition \ref{cond:quantmodestabp2} below.
\end{theorem}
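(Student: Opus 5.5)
This estimate is Theorem 3.1 of \cite{gaj26a}, and in the present paper we take it as given. If one wanted to re-derive it, the plan would follow the physical-space/Fourier-space hybrid scheme of \cite{gaj26a}, adapted to the coordinates and gauge $\widehat{A}$ fixed above. The derivation would proceed in three layers.

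\emph{Step 1: weighted hierarchies near the two ends.} Near $\mathcal{I}^+$ I would use \eqref{eq:maineqradfieldconf}, and near $\mathcal{H}^+$ its mirror image \eqref{eq:maineqradfieldconfhor}. Multiplying by $(\Omega^{-1}r)^{p}\overline{X\psi_{\geq \ell}}$ (equivalently $\rho_\infty^{-p}\partial_{\rho_\infty}$ and $\rho_+^{-p}\partial_{\rho_+}$, with $K$ in place of $T$ near the horizon) and integrating by parts produces two things:
\begin{itemize}
\item the boundary energies $\mathcal{E}_p$ on $\Sigma_\tau$;
\item a bulk term with weight $(\rho_++\kappa_+)r^{-1}\mathcal{E}_{p}$, up to the $(r^{-1}\rho_+)^{2\epsilon}$ loss.
\end{itemize}
The zeroth-order terms in the bulk have coefficient built from $\ell(\ell+1)+\tfrac14-\q^2=\tfrac14\beta_\ell^2$, after the usual Hardy inequality in $\rho$. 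They are coercive precisely for $p<1+\re\beta_\ell$, and a small additional $\epsilon$ room comes from the first-order term $2i\q\rho\,\partial_\rho\psi$ after a Young inequality. The cap $p<2$ reflects the usual limitation of $r^p$-hierarchies. By the Couch--Torrence symmetry noted after Proposition \ref{prop:mainequations}, in the extremal case the horizon computation is literally the $\mathcal{I}^+$ computation with $\q\mapsto -\q$. For $\kappa_+>0$ small, the extra term $2\kappa_+\rho_+$ in Lemma \ref{lm:metricest} only helps: it is a red-shift term.

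\emph{Step 2: compact region and the loss at $r_{\sharp}$.} In $\{r_H\le r\le r_I\}$ one needs an integrated local energy decay statement with degeneration at $r_{\sharp}$, encoded by the cutoff $\upzeta$. With charge, no purely physical-space Morawetz current is coercive, because of charged superradiance. The plan here is to:
\begin{enumerate}
\item cut off in time;
\item Fourier transform in $\tau$ and decompose into spherical harmonics;
\item split the frequency space $(\omega,\ell)$ into regimes: bounded frequencies near $\omega=0$ and near $\omega=\q r_+^{-1}$, trapped frequencies $\omega^2\sim \ell(\ell+1)r_{\sharp}^{-2}$, and the remaining high/non-trapped frequencies.
\end{enumerate}
In each regime one builds multipliers $f(r_*)\partial_{r_*}+h(r_*)$ for the radial ODE, plus a $y$-type current, to obtain a coercive estimate. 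The bounded-frequency regime is the delicate one. There one needs quantitative mode stability on the real axis, which holds for $\kappa_+\le\kappa_1$ or $|\q|\le \q_1$ by continuity from extremality and from the neutral case respectively, and which otherwise is assumed as Condition \ref{cond:quantmodestabp2}. The frequencies $\omega\approx 0$ and $\omega\approx \q r_+^{-1}$ must match the Step 1 near-end estimates uniformly in $\kappa_+$. This uniform matching is where I expect the main obstacle: the near-$\mathcal{H}^+$ behaviour degenerates as $\kappa_+\to0$, and uniformity requires treating the degenerate threshold frequency with the same $\beta_\ell$-dependent weights as at infinity.

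\emph{Step 3: gluing, commuting and inhomogeneity.} The frequency-localized errors from the time cutoffs are boundary terms controlled by the energy on $\Sigma_{\tau_1}$ and $\Sigma_{\tau_2}$. These are then absorbed using the boundedness part, obtained by combining the $\partial_\tau$-energy identity, which has an indefinite charge flux, with the Step 1 hierarchy. The $\epsilon$-loss from $p$ to $p-2\epsilon$ arises exactly here, when the indefinite term is absorbed. Higher-order terms come from commuting with $D_T$, which commutes with the equation, and from recovering $D_{Y_*}$ and $D_{\s^2}$ derivatives elliptically away from the horizon.

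Finally, $G_A$ is handled as a source in every multiplier identity, via Cauchy--Schwarz against the bulk weights. This gives the $\max\{(r^{-1}\Omega)^{-p}\rho_+^{1-2\epsilon}r^{-1+2\epsilon},1\}$ weight. The $(1-\upzeta)|D_T^{k+1}G_A|^2$ term is the one-derivative loss at trapping in the frequency-localized estimate near $r_\sharp$.
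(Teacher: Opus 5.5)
Quoting Theorem 3.1 of \cite{gaj26a} is exactly what the paper does. Theorem~\ref{thm:iedpaper1} is imported without proof; it is only summarized in Step 0 of \S\ref{sec:sketchpf} and in \S\ref{sec:intromodeinstb}. The outline of your sketch is consistent with that summary: weighted estimates at the two ends, a frequency-space analysis using mode stability obtained by continuity from $\kappa_+=0$ or from $\q=0$, and Condition~\ref{cond:quantmodestabp2} otherwise.

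Your re-derivation, however, has a step that fails as stated, in the regime that matters most here. In Step 1, the term $2i\q\rho_{\infty}\partial_{\rho_{\infty}}\psi$ cannot supply any extra room. In the conformal form \eqref{eq:maineqradfieldconf} your multiplier becomes $\rho_{\infty}^{2-p}\partial_{\rho_{\infty}}\bar{\psi}$, not $\rho_{\infty}^{-p}\partial_{\rho_{\infty}}\bar{\psi}$. Against it, this term gives $2i\q\rho_{\infty}^{3-p}|\partial_{\rho_{\infty}}\psi|^2$, which vanishes on taking real parts. The only indefinite term is $\re\big(i\q\rho_{\infty}^{2-p}\psi\,\partial_{\rho_{\infty}}\bar{\psi}\big)$, coming from the zeroth-order term $i\q\psi$. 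The optimal (Mellin/Hardy) condition for absorbing it is two-sided in $p$: for $0<p<2$ the bulk is nonnegative only if $\frac14x^2+\ell(\ell+1)x\geq \q^2$ with $x=p(2-p)\leq 1$. This is a window around $p=1$ of half-width at most $\re\beta_\ell$; at $p=1$ it is your condition $\frac14\beta_\ell^2\geq 0$. A Young inequality can only spend coercivity; it cannot create room past this threshold.

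Consequently, for the modes with $\ell+\frac12<|\q|$, i.e.\ $\beta_\ell\in i(0,\infty)$, your Step 1 gives no weighted estimate near $\mathcal{I}^+$, nor near $\mathcal{H}^+$ when $\kappa_+=0$, at any $p$. This includes the case $\re\beta_0=0$ singled out in \S\ref{sec:sketchpf} as the hardest one. It fails both at the required $1<p<1+\epsilon$ and at the boundedness weight $p-2\epsilon<1$. Those estimates must come from an argument that exploits more than the stationary structure of the equation, such as the time dependence or the frequency localisation at $\omega\approx 0$ (resp.\ $\omega\approx \q r_+^{-1}$). That is presumably where the loss $p\mapsto p-2\epsilon$ is paid, rather than in the gluing of Step 3. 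This fits the remark in \S\ref{sec:sketchpf} that the loss disappears for $|\q|<\frac14$: that is precisely when the crude Cauchy--Schwarz/Hardy absorption $|\q|\frac{2}{2-p}<\frac{p}{2}$ of the indefinite term closes at $p=1$.

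A smaller inaccuracy: $D_T$ does not commute with the equation, since $[D_T,D_X]=-i\mathfrak{q}F_Q(T,X)\neq 0$. One commutes instead with $T=\partial_\tau$ in the $T$-invariant gauge $\widehat{A}$, and then passes to $D_T^k$. This uses that $D_T-T$ is a multiplication operator with lower-order weights and that the energies are gauge invariant.
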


We state the following condition, motivated in \cite{gaj26a}]\S 7.4]:

\begin{condition}[Quantitative mode stability on the real axis away from extremality]
\label{cond:quantmodestabp2}
Let $\omega_1$ and $\ell_1\in \N_0$ be arbitrary. Let $(\omega,\ell)\in \R \times \N_0$, with $|\omega|\leq \omega_1$ and $\ell\leq \ell_1$. Let $\kappa_+> \kappa_1$, with $\kappa_1$ the constant from Theorem \ref{thm:iedpaper1}.

Consider the ODE:
\begin{equation}
\label{eq:radialODE}
u''(r_*)+(\omega^2-V_{\omega  \ell}(r_*))u(r_*)=0,
\end{equation}
with
\begin{equation*}
	V_{\omega  \ell}(r_*(r))=\ell(\ell+1)\Omega^2(r)r^{-2}-{\q}^2\rho_{\infty}^2(r)+2{\q} \omega \rho_{\infty}(r)+r^{-1}\Omega^2(r)\frac{d\Omega^2 }{dr}
\end{equation*}
Let $u_+,u_{\infty}$ be solutions to \eqref{eq:radialODE} that are normalized as follows:
\begin{equation*}
\lim_{r_*\to -\infty}e^{i(\omega-\q r_+^{-1}) r_*+i \q \int_{0}^{r_*}\rho_+(r_*')\,dr_*'}u_+(r_*)=\lim_{r_*\to \infty}e^{-i\omega r_*+i \q \int_{0}^{r_*}\rho_{\infty}(r_*')\,dr_*'}u_{\infty}(r_*)=1.
\end{equation*}
and consider their corresponding Wronskian:
\begin{equation*}
\mathfrak{W}(\omega,\ell):=u_{+}(r_*)\frac{du_{\infty}}{dr_*}(r_*)-u_{\infty}(r_*)\frac{du_{+}}{dr_*}(r_*).
\end{equation*}

 Then there exists a constant $K=K(\omega_1,\ell_1,\kappa_1)>0$ such that:
\begin{equation*}
|\mathfrak{W}|^{-1}\leq  K.
\end{equation*}
\end{condition}

\begin{remark}
Note that Condition \ref{cond:quantmodestabp2} can be interpreted as a quantitative version of a \emph{mode stability} statement on the real axis, i.e.\ the absence of purely oscillating mode solutions to \eqref{eq:CSF} away from extremality; see also the discussion in \S \ref{sec:intromodeinstb}.
\end{remark}

\subsection{Further notation}
We appeal to standard notational conventions regarding multiplications of uniform constants $C$ and inequalities $\lesssim$, $\gtrsim$ and $\sim$; we refer the reader \cite{gaj26a}[\S 2.2] for a detailed overview.

\section{Stationary solutions}
\label{sec:statsol}
Before we give a precise formulation of the main theorem, we first introduce the notion of stationary (time-independent) solutions to \eqref{eq:rCSF} with $\q\neq 0$, $A=\widehat{A}$ and $G_{A}=0$, which play a key role in the main theorems.
\begin{proposition}
\label{prop:statsol}
	Let $\ell\in \N_0$, $m\in \Z$ with $|m|\leq \ell$. Fix $r_+=1$. Let $\mathfrak{w}_{\ell }, \mathfrak{w}^{\pm}_{\ell }: [r_+,\infty)\to \C$, such that $\mathfrak{w}_{\ell }Y_{\ell m}$ and $\mathfrak{w}^{\pm}_{\ell }Y_{\ell m}$ are solutions to \eqref{eq:rCSF} with $A=\widehat{A}$ and $G_A=0$. We write $\mathfrak{w}^{\pm}(r;\q)$ and  $\mathfrak{w}(r;\q)$ when we need to emphasize the value of the corresponding charge parameter $\q$.
	
	Then:
	\begin{enumerate}[label=\emph{(\roman*)}]
	\item We can fix $\mathfrak{w}_{\ell }, \mathfrak{w}^{\pm}_{\ell }$ uniquely by setting:
	\begin{align*}
		\mathfrak{w}_{\ell }(r_+)=&\:1,\\
		\mathfrak{w}^{-}_{\ell }(r)=&\:r^{\frac{1}{2}+i\q- \frac{1}{2}\beta_{\ell}}(1+O(r^{-1}))\quad (r>2),\\
		\mathfrak{w}^{+}_{\ell }(r)=&\:r^{\frac{1}{2}+i\q+ \frac{1}{2}\beta_{\ell}}(1+O(r^{-1}))\quad (\beta_{\ell}\neq 0)\quad (r>2),\\
		\mathfrak{w}^{+}_{\ell }(r)=&\:r^{\frac{1}{2}+i\q}\log r(1+O(r^{-1}))\quad (\beta_{\ell}= 0)\quad (r>2).
	\end{align*}
	\item We can express:
	\begin{equation*}
		\mathfrak{w}_{\ell }={\alpha}_+\mathfrak{w}_{\ell }^++{\alpha}_-\mathfrak{w}_{\ell }^-,
	\end{equation*}
	with ${\alpha}_{\pm}\in \C$, such that in the case $\beta_{\ell}\in i(0,\infty)$:
		\begin{multline}
		\label{eq:alphaminplusratiobetaim}
		\sign(q)\log\left|\frac{ \alpha_-}{\alpha_+}\right|=-\frac{\pi}{2}\left(\im \beta_{\ell}-\frac{1}{\pi}\log \left(\frac{\cosh (\pi(|q|-\frac{1}{2}\im \beta_{\ell}))}{\cosh (\pi(|q|+\frac{1}{2}\im \beta_{\ell}))}\right)\right)\\
		+\sign(q)\frac{\pi}{2}\left( \im \beta_{\ell}-\frac{1}{\pi}\log \left(\frac{\cosh(\pi(\frac{|\q|}{2}\frac{r_-+r_+}{r_+^2\kappa_+}+\frac{1}{2}\im \beta_{\ell}))\cosh^2(\pi(|\q|-\frac{1}{2}\im \beta_{\ell}))}{\cosh(\pi(\frac{|\q|}{2}\frac{r_-+r_+}{r_+^2\kappa_+}-\frac{1}{2}\im \beta_{\ell}))\cosh^2(\pi(|\q|+\frac{1}{2}\im \beta_{\ell}))}\right)\right)\\
		\substack{\kappa_+\downarrow 0\\ \to}-\frac{\pi}{2}\left(\im \beta_{\ell}+\frac{1}{\pi}\log \left(\frac{\cosh (\pi(|q|-\frac{1}{2}\im \beta_{\ell}))}{\cosh (\pi(|q|+\frac{1}{2}\im \beta_{\ell}))}\right)\right).
	\end{multline}
	\item Denote with $\alpha_{\pm}(\beta_{\ell})$ the values of $\alpha_{\pm}$ when $\beta_{\ell}\neq 0$ and $q$ is fixed. Then $\alpha_-(\beta_{\ell})=\alpha_+(-\beta_{\ell})$ and we have that:
\begin{equation*}
	\lim_{\beta_{\ell}\to 0}\mathfrak{w}_{\ell}(r)=\lim_{\beta_{\ell}\to 0}(\alpha_+\beta_{\ell})r^{\frac{1}{2}+iq}\left[\log r+\frac{d}{d\beta_{\ell}}\bigg|_{\beta_{\ell}=0}\frac{\alpha_{-}(\beta_{\ell})}{\alpha_{+}(\beta_{\ell})}\right].
\end{equation*}
Therefore $\alpha_{\pm}$ take the following values when $\beta_{\ell}=0$:
\begin{align*}
	\alpha_{+}=&\:-\lim_{\beta_{\ell}\to 0}(\alpha_+(\beta_{\ell})\beta_{\ell}),\\
	\alpha_{-}=&\:\alpha_+ \frac{1}{i}\frac{d}{ds}\bigg|_{s=0}\frac{\alpha_{-}(is)}{\alpha_{+}(is)}.
\end{align*}
In particular,
\begin{multline}
\label{eq:alphaminplusratiobetazero}
	\sign(\q)\im \left(\frac{\alpha_-}{\alpha_+}\right)=-\sign(\q)\re\left(\frac{d}{ds}\Big|_{s=0} \left(\frac{\alpha_-(is)}{\alpha_+(is)}\right)\right)=-\sign(q)\frac{ \pi}{2}(1-\tanh(\pi|q|))\\
	+\frac{\pi}{2}\sign(\q)\left(1 -\tanh\left(\pi \frac{|\q|}{2}\frac{r_-+r_+}{r_+^2 \kappa_+}\right)-2\tanh(\pi |q|)\right)\\
	\substack{\kappa_+\downarrow 0\\ \to} -\sign(q)\frac{ \pi}{2}(1+\tanh(\pi|q|).
\end{multline}
	\item The function $\mathfrak{w}_{\ell }$ is non-vanishing.
	\end{enumerate}
\end{proposition}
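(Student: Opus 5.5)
The plan is to reduce the stationary problem to an explicit hypergeometric ODE and read off all four claims from classical connection formulae. For $\psi=w(r)Y_{\ell m}$ time-independent, $A=\widehat{A}$, $G_A=0$, Proposition \ref{prop:mainequations} gives the radial ODE
\[
\frac{d}{dr}\Big(\Omega^2\frac{dw}{dr}\Big)-2i\q(1-\h)r^{-1}\frac{dw}{dr}-\Big[\ell(\ell+1)r^{-2}+\frac{d\Omega^2}{dr}r^{-1}-i\q r^{-1}\frac{d\h}{dr}-\q^2\h\widetilde{\h}r^{-2}-i\q r^{-2}(1-\h)\Big]w=0.
\]
The $\h$-dependence should be removable by the gauge transformation relating $\widehat{A}$ to the gauge $-Q(r^{-1}-r_+^{-1})dt$. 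Concretely, I would write $w=e^{i\q\int(\ldots)}\tilde w$ with the phase $\int^r Qr'^{-1}(\Omega^{-2}-\widetilde{\h})\,dr'$ (suitably shifted by $r_+^{-1}$ so that the result is regular at $r=r_+$). After this substitution the ODE has regular singular points at $r=r_-,r_+$ and an irregular-type point at $\infty$ that becomes regular because $\omega=0$.

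Since $\Omega^2=r^{-2}(r-r_+)(r-r_-)$, I expect the equation in the variable $z=(r-r_+)/(r_+-r_-)$ to be a Gauss hypergeometric equation with three regular singular points $z=0,-1,\infty$ when $\kappa_+>0$. When $\kappa_+=0$, $r_-=r_+$ merges with $r_+$ and I expect a confluent (Whittaker) equation in $x=r_+\rho_+$ instead. The steps are then as follows.

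For (i), I compute the indicial exponents. At $r=r_+$ exactly one branch is smooth: in the sub-extremal case the other exponent is $i\q(\ldots)/\kappa_+$, which is non-smooth, and in the extremal case it is an essential singularity. This fixes $\mathfrak{w}_\ell$ uniquely by the normalisation $\mathfrak{w}_\ell(r_+)=1$. At infinity the exponents are $\tfrac12+i\q\pm\tfrac12\beta_\ell$, with a logarithm appearing when $\beta_\ell=0$. Frobenius theory for large $r$ then defines $\mathfrak{w}^\pm_\ell$.

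For (ii), I write the connection coefficients $\alpha_\pm$ via the standard formula expressing ${}_2F_1(a,b;c;z)$ near $z=\infty$ as $\Gamma(c)\Gamma(b-a)/(\Gamma(b)\Gamma(c-a))$ times a power, plus the symmetric term. Then $\alpha_-/\alpha_+$ is a ratio of gamma products. When $\beta_\ell=i\,\mathrm{Im}\,\beta_\ell$, taking the modulus reduces everything via $|\Gamma(\tfrac12+iy)|^2=\pi/\cosh(\pi y)$ and $|\Gamma(iy)|^2=\pi/(y\sinh\pi y)$. This produces the cosh expressions in \eqref{eq:alphaminplusratiobetaim}, including the term with $\frac{|\q|}{2}\frac{r_-+r_+}{r_+^2\kappa_+}$, which arises from the exponent at $r_+$ (with $r_+=1$). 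The limit $\kappa_+\downarrow0$ follows from $\cosh(\pi(X+a))/\cosh(\pi(X-a))\to e^{2\pi a}$ as $X\to\infty$. Alternatively, I would verify this limit directly from Whittaker connection formulae in the extremal case.

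For (iii), I use that $\alpha_\pm$ are meromorphic in $\beta_\ell$ and exchanged by $\beta_\ell\mapsto-\beta_\ell$, since the formula is symmetric in $b-a\leftrightarrow a-b$. This gives $\alpha_-(\beta)=\alpha_+(-\beta)$. As $\beta\to0$ the two Frobenius branches coalesce, and $\alpha_+$ has a simple pole at $\beta=0$ coming from $\Gamma(\beta)$. Expanding $r^{\pm\beta/2}=1\pm\tfrac{\beta}{2}\log r+O(\beta^2)$ yields the stated logarithmic limit and the formulas for $\alpha_\pm$ at $\beta_\ell=0$. The identity \eqref{eq:alphaminplusratiobetazero} then follows by differentiating the logarithm of the gamma-ratio at $s=0$, using $\psi(\tfrac12+iy)$-type digamma identities. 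In particular $\mathrm{Im}\,\psi(\tfrac12+iy)=\tfrac{\pi}{2}\tanh(\pi y)$, which produces the tanh terms.

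For (iv), I argue by conserved current. If $\mathfrak{w}_\ell(r_0)=0$ for some $r_0>r_+$, consider the Wronskian-type current $\mathrm{Im}(\Omega^2\overline{w}\,w')$ twisted by the phase, which is constant in $r$. Evaluating it at $r_+$ using smoothness gives a value proportional to $\q|w(r_+)|^2\neq0$ (modulo a sign). Hence $w$ cannot vanish at any $r_0$, since the current would vanish there. The detail to check is that the first-order $i\q$ term makes this current exactly conserved in the gauge where the ODE becomes $\partial(\Omega^2\partial)+\text{real potential}+\text{imaginary first-order term}$. I expect this to work after the gauge change above.

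\textbf{Main obstacle.} I expect the main obstacle to be the bookkeeping of exponents and signs under the gauge twist and in the extremal limit. In particular, it must be checked that the sub-extremal formula \eqref{eq:alphaminplusratiobetaim} matches the Whittaker connection constant as $\kappa_+\downarrow0$, and that the $\sign(\q)$ factors come out correctly.
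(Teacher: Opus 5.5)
Your proposal is correct and is essentially the paper's proof. The shared ingredients are:
\begin{itemize}
\item the phase $e^{i\q\int r^{-1}(\Omega^{-2}-\widetilde{\h})\,dr}$, which takes $\widehat{A}$ to $\widetilde{A}=-Qr^{-1}dt$ (drop the $r_+^{-1}$ shift, since a factor $e^{\pm i\q r_+^{-1}r_*}$ would make $r=\infty$ an irregular singular point);
\item a Whittaker equation in the variable $2i\q(r-1)^{-1}$ (not $\rho_+$) when $\kappa_+=0$;
\item a Gauss hypergeometric equation when $\kappa_+>0$, where the paper uses $z=\frac{r-r_+}{r-r_-}$ with the $0\to1$ connection formula, equivalent to your $0\to\infty$ one;
\item expansion about the simple poles of $\Gamma(\pm\beta_\ell)$ for (iii);
\item for (iv), the conserved current $\Omega^2\im(\overline{w}\,\partial_r w)$ in the $\widetilde{A}$ gauge, which equals $\q r_+^{-1}\neq 0$ at $r=r_+$.
\end{itemize}

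On the sign bookkeeping you flagged: $|\Gamma(\tfrac12+iy)|^2=\pi/\cosh(\pi y)$ gives $\sign(\q)\log|\alpha_-/\alpha_+|\to-\tfrac{\pi}{2}\im\beta_\ell+\tfrac12\log\frac{\cosh(\pi(|\q|-\frac12\im\beta_\ell))}{\cosh(\pi(|\q|+\frac12\im\beta_\ell))}$ as $\kappa_+\downarrow 0$. This is exactly what the paper's own Whittaker computation in its proof yields, whereas the displayed limit in \eqref{eq:alphaminplusratiobetaim} has the opposite sign on the logarithm, so do not try to force agreement with the display.
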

\begin{proof}
Let
\begin{align*}
	w_{\ell }:=&\:e^{-i\q\int_{r^{\sharp}}^{r}r'^{-1}\,dr'+i\q\int_{r^{\sharp}}^{r}r'^{-1}\Omega^{-2}(r')(\Omega^2-1+\h)(r')\,dr'}\mathfrak{w}_{\ell },\\
	w_{\ell }^{\pm}:=&\:e^{-iq\int_{r^{\sharp}}^{r}r'^{-1}\,dr'+i\q\int_{r^{\sharp}}^{r}r'^{-1}\Omega^{-2}(r')(\Omega^2-1+\h)(r')\,dr'}\mathfrak{w}_{\ell }^{\pm}.
	\end{align*}
	Then $w_{\ell }Y_{\ell m}$ and $w_{\ell m}^{\pm}Y_{\ell m}$ are solutions to \eqref{eq:CSF} with $A=\widetilde{A}$ and $G_A=0$. In other words, they are solutions to \eqref{eq:maineqradfield} with $\widetilde{\h}=\Omega^{-2}$.

Note that we can equivalently express:
\begin{align*}
	w_{\ell }:=&\:e^{i\q r_+^{-1}r_*-i\q\int_{r^{\sharp}}^{r}(\Omega^{-2} \rho_+)(r')\,dr'-i\q\int_{r_{\sharp}}^{r}r'^{-1}\widetilde{\h}(r')\,dr'}\mathfrak{w}_{\ell },\\
	w_{\ell }^{\pm}:=&\:e^{i\q r_+^{-1}r_*-i\q\int_{r^{\sharp}}^{r}(\Omega^{-2} \rho_+)(r')\,dr'-i\q\int_{r_{\sharp}}^{r}r'^{-1}\widetilde{\h}(r')\,dr'}\mathfrak{w}_{\ell }^{\pm}.
	\end{align*}
	
	 Then \eqref{eq:maineqradfield} with $\widetilde{\h}=\Omega^{-2}$ is equivalent to:
	\begin{equation}
	\label{eq:relevanteqforstatsoln}
		\frac{d}{dr}\left(\Omega^2 r^2 \frac{d}{dr}(r^{-1}w_{\ell})\right)+\left[\q^2 r^2(r^{2}\Omega)^{-2}-\ell(\ell+1)\right]r^{-1}w_{\ell}=0.
	\end{equation}
	Assume for convenience that $r_+=1$. 
	
	We first consider the case $\kappa_+=0$. By changing variables to $z=\frac{2i\q}{r-1}$, we obtain the equation:
	\begin{equation*}
		\frac{d^2(r^{-1}w_{\ell})}{dz^2}+\left[-\frac{1}{4}-\frac{i\q}{z}+\frac{\q^2-\ell(\ell+1)}{z^2}\right]r^{-1}w_{\ell}=0.
	\end{equation*}
	This is the \emph{Whittaker equation}; see for example \cite{NIST:DLMF}[\S 13.4]. Using the asymptotic properties of Whittaker functions as $|z|\to \infty$, it follows that: we can express:
	\begin{equation*}
		r^{-1}w_{\ell}(r)=c_{\q}\cdot W_{-i\q,\frac{\beta_{\ell}}{2}}(2i\q(r-1)^{-1}),
	\end{equation*}	
	for some constant $c_{\q}\in \C\setminus\{0\}$. The Whittaker function $W_{-i\q,\frac{\beta_{\ell}}{2}}(2i\q(r-1)^{-1})$ has the following asymptotics as $z\to 0$ when $\beta_{\ell}\in i(0,\infty)$ (see \cite{NIST:DLMF}[\S 13.14]):
	\begin{multline*}
		c_{\q}^{-1}w_{\ell}(r)=r W_{-i\q,\frac{\beta_{\ell}}{2}}(2i\q(r-1)^{-1})=\frac{\Gamma(\beta_{\ell})}{\Gamma(\frac{1}{2}+\beta_{\ell}+i\q)}(2i\q)^{\frac{1}{2}-\frac{\beta_{\ell}}{2}}r^{\frac{1}{2}+\frac{\beta_{\ell}}{2}}(1+O(r^{-1}))\\
		+(2i\q)^{\frac{1}{2}+\frac{\beta_{\ell}}{2}}\frac{\Gamma(-\beta_{\ell})}{\Gamma(\frac{1}{2}-\beta_{\ell}+i\q)}r^{\frac{1}{2}-\frac{\beta_{\ell}}{2}}(1+O(r^{-1}))\\
		=\frac{\Gamma(\beta_{\ell})}{\Gamma(\frac{1}{2}+\beta_{\ell}+i\q)}(2i\q)^{\frac{1}{2}}|2\q|^{\frac{i}{2}\im \beta_{\ell}}e^{-\frac{\pi}{4}\sign(\q)\im \beta_{\ell}}r^{\frac{1}{2}+\frac{\beta_{\ell}}{2}}(1+O(r^{-1}))\\
		+(2i\q)^{\frac{1}{2}}|2\q|^{-\frac{i}{2}\im \beta_{\ell}}e^{\frac{\pi}{4}\sign(\q)\im \beta_{\ell}}\frac{\Gamma(-\beta_{\ell})}{\Gamma(\frac{1}{2}-\beta_{\ell}+i\q)}r^{\frac{1}{2}-\frac{\beta_{\ell}}{2}}(1+O(r^{-1})).
	\end{multline*}
	We therefore conclude that
	\begin{equation*}
		\frac{\alpha_-}{\alpha_+}=|2\q|^{-i\im \beta_{\ell}}\frac{\Gamma(-\beta_{\ell})}{\Gamma(\beta_{\ell})}\frac{\Gamma(\frac{1}{2}+\beta_{\ell}+i\q)}{\Gamma(\frac{1}{2}-\beta_{\ell}+i\q)}e^{-\frac{\pi}{2}\sign(\q).\im \beta_{\ell}}
	\end{equation*}
	
	From this, it follows that:
		\begin{multline*}
		\log\left|\frac{ \alpha_-}{\alpha_+}\right|=-\frac{\pi}{2}\sign(q)\im \beta_{\ell}+\frac{1}{2}\log \left|\frac{\Gamma(-\beta_{\ell})}{\Gamma(+\beta_{\ell})}\right|^2+\frac{1}{2}\log \left|\frac{\Gamma(\frac{1}{2}+\frac{1}{2}\beta_{\ell}+iq)}{\Gamma(\frac{1}{2}-\frac{1}{2}\beta_{\ell}+iq)}\right|^2\\
		=-\frac{\pi}{2}\sign(q)\im \beta_{\ell}+\frac{1}{2}\sign(q)\log \left(\frac{\cosh(\pi(|q|-\frac{1}{2}\im \beta_{\ell}))}{\cosh(\pi(|q|+\frac{1}{2}\im \beta_{\ell}))}\right).
	\end{multline*}

	Now consider the case $\kappa_+>0$. Fix $r_+=1$. Write $2\kappa_+:=r_+-r_-$. Then we introduce the following change of variable and rescalings:
	\begin{align*}
		z=&\:\frac{r-r_+}{r-r_-},\\
		r^{-1}w_{\ell}(r(z))=&\:z^{i\frac{\q}{2} \kappa_+^{-1}}(1-z)^{\frac{1}{2}-\beta_{\ell}}F(z).
	\end{align*}
to obtain from \eqref{eq:relevanteqforstatsoln} the following equation for $F$:
\begin{equation*}
	z(1-z)\frac{d^2F}{dz^2}+(c-(a+b+1)z)\frac{dF}{dz}-abF=0,
\end{equation*}
with $a=i\frac{\q}{2}\kappa_+^{-1}(1-r_-)+\frac{1-\beta_{\ell}}{2}$, $b=i\frac{\q}{2}\kappa_+^{-1}(1+r_-)+\frac{1-\beta_{\ell}}{2}$, $c=1+i\q\kappa_+^{-1}$. The above ODE is a hypergeometric equation, see \cite{NIST:DLMF}[\S 15.10], and we can therefore write:
\begin{equation*}
	r^{-1}w_{\ell}(r)=c_{\q,\kappa_+}\left(\frac{r-1}{r-r_-}\right)^{i\frac{\q}{2} \kappa_+^{-1}}\left(\frac{1-r_-}{r-r_-}\right)^{\frac{1}{2}-\beta_{\ell}} {}_2\mathbf{F}_1\left(a,b,c\; \frac{r-1}{r-r_-}\right),
\end{equation*}
with ${}_2\mathbf{F}_1\left(a,b,c\; \frac{r-r_+}{r-r_-}\right)$ the regularized Gauss hypergeometric function. It is straightforward to verify that $\mathfrak{w}_{\ell}$ is smooth at $r=1$.

Furthermore, by applying \cite{NIST:DLMF}[\S 15.8.4] we obtain the following large-$r$ asymptotics ($z\uparrow 1$) :
\begin{multline*}
	\pi(\sin(\pi \beta_{\ell}))^{-1}c_{\q,\kappa_+}^{-1}w_{\ell}(r)=\frac{r\left(\frac{r-r_-}{1-r_-}\right)^{-\frac{1}{2}+\frac{\beta_{\ell}}{2}}}{\Gamma(\frac{1}{2}+\frac{\beta_{\ell}}{2}+i\frac{\q}{2}(1+r_-)\kappa_+^{-1})\Gamma(\frac{1}{2}+\frac{\beta_{\ell}}{2}+i\q)\Gamma(1-\beta_{\ell})}(1+O(r^{-1}))\\
	-\frac{r\left(\frac{r-r_-}{1-r_-}\right)^{-\frac{1}{2}-\frac{\beta_{\ell}}{2}}}{\Gamma(\frac{1}{2}-\frac{\beta_{\ell}}{2}+i\q)\Gamma(\frac{1}{2}-\frac{\beta_{\ell}}{2}+i\frac{\q}{2} (1+r_-)\kappa_+^{-1})\Gamma(1+\beta_{\ell})}(1+O(r^{-1})).
\end{multline*}
We therefore conclude that:
\begin{equation*}
	\frac{\alpha_-}{\alpha_+}=(1-r_-)^{\beta_{\ell}}\frac{\Gamma(1-\beta_{\ell})}{\Gamma(1+\beta_{\ell})}\frac{\Gamma(\frac{1}{2}+\frac{\beta_{\ell}}{2}+i\q)}{\Gamma(\frac{1}{2}-\frac{\beta_{\ell}}{2}+i\q)}\frac{\Gamma(\frac{1}{2}+\frac{\beta_{\ell}}{2}+i\frac{\q}{2}(1+r_-)\kappa_+^{-1})}{\Gamma(\frac{1}{2}-\frac{\beta_{\ell}}{2}+i\frac{\q}{2}(1+r_-)\kappa_+^{-1})}
\end{equation*}
	From this, it follows that for $\beta_{\ell}\in i(0,\infty)$:
		\begin{multline*}
		\log\left|\frac{ \alpha_-}{\alpha_+}\right|=\frac{1}{2}\log \left|\frac{\Gamma(1-\beta_{\ell})}{\Gamma(1+\beta_{\ell})}\right|^2+\frac{1}{2}\log \left|\frac{\Gamma(\frac{1}{2}+\frac{1}{2}\beta_{\ell}+i\q)}{\Gamma(\frac{1}{2}-\frac{1}{2}\beta_{\ell}+i\q)}\right|^2+\frac{1}{2}\log \left|\frac{\Gamma(\frac{1}{2}+\frac{1}{2}\beta_{\ell}+i\frac{\q}{2}(1+r_-)\kappa_+^{-1})}{\Gamma(\frac{1}{2}-\frac{1}{2}\beta_{\ell}+i\frac{\q}{2}(1+r_-)\kappa_+^{-1})}\right|^2\\
		=\frac{1}{2}\sign(q)\log \left(\frac{\cosh(\pi(|\q|-\frac{1}{2}\im \beta_{\ell}))}{\cosh(\pi(|\q|+\frac{1}{2}\im \beta_{\ell}))}\right)+\sign(q)\frac{1}{2}\log \left(\frac{\cosh(\pi(\frac{|\q|}{2}\frac{r_-+1}{\kappa_+}-\frac{1}{2}\im \beta_{\ell}))}{\cosh(\pi(\frac{|\q|}{2}\frac{r_-+1}{\kappa_+}+\frac{1}{2}\im \beta_{\ell}))}\right)\\
		=-\frac{\pi}{2}\sign(q)\im \beta_{\ell}-\frac{1}{2}\sign(q)\log \left(\frac{\cosh(\pi(|\q|-\frac{1}{2}\im \beta_{\ell}))}{\cosh(\pi(|\q|+\frac{1}{2}\im \beta_{\ell}))}\right)\\
		+\frac{1}{2}\sign(q)\left(\pi \im \beta_{\ell}-\log \left(\frac{\cosh(\pi(\frac{|\q|}{2}\frac{r_-+1}{\kappa_+}+\frac{1}{2}\im \beta_{\ell}))\cosh^2(\pi(|\q|-\frac{1}{2}\im \beta_{\ell}))}{\cosh(\pi(\frac{|\q|}{2}\frac{r_-+1}{\kappa_+}-\frac{1}{2}\im \beta_{\ell}))\cosh^2(\pi(|\q|+\frac{1}{2}\im \beta_{\ell}))}\right)\right).
	\end{multline*}
	Note that the last term vanishes in the extremal limit $\kappa_+\to 1$.

Now consider (iii). We apply Taylor's theorem to obtain:
\begin{equation*}
	\frac{1}{2}\log \left(\frac{\cosh(\pi(|q|-\frac{1}{2}\im \beta_{\ell}))}{\cosh(\pi(|q|+\frac{1}{2}\im \beta_{\ell}))}\right)=-\frac{\pi}{2} \tanh(\pi |q|)\im \beta_{\ell}+O(|\beta|^2).
\end{equation*}
	Therefore,
	\begin{multline*}
		\re\left(\frac{d}{ds}\Big|_{s=0} \left(\frac{\alpha_-(is)}{\alpha_+(is)}\right)\right)=-\re\left(\frac{d}{ds}\Big|_{s=0}\log \left(\frac{\alpha_-(is)}{\alpha_+(is)}\right)\right)=-	\frac{d}{ds}\Big|_{s=0}\log \left|\frac{\alpha_-(is)}{\alpha_+(is)}\right|\\
		=+\frac{\pi}{2}\sign(q)-\pi \sign(q)\tanh(\pi |q|)+\frac{1}{2}\pi\sign(\q)\left(1 -2\tanh\left(\pi \frac{|\q|}{2}\frac{r_-+1}{\kappa_+}\right)-4\tanh(\pi |q|)\right).
	\end{multline*}
	
	We now rewrite for $s=\im \beta_{\ell}$:
\begin{multline*}
	\mathfrak{w}_{\ell}(r)=\frac{1}{i}\alpha_+(is) s r^{\frac{1}{2}+iq}\left[\frac{1}{s}e^{\frac{is}{2}\log r}+\frac{1}{s}\frac{\alpha_-(is)}{\alpha_+(is)}e^{-i\frac{s}{2}\log r }\right]\\
	=\frac{1}{i}\alpha_+(is)s r^{\frac{1}{2}+iq}e^{i\frac{s}{2}\log r}\left[\frac{1}{s}(e^{+i\frac{s}{2}\log r }-1)-\frac{1}{s}(e^{-i\frac{s}{2}\log r }-1)+\frac{1}{s}\left(\frac{\alpha_-(is)}{\alpha_+(is)}-(-1)\right)e^{-i\frac{s}{2}\log r }\right]
\end{multline*}
	By (ii), we have that $\lim_{\beta_{\ell}\to 0}\beta_{\ell}\alpha_+(\beta_{\ell})=-\lim_{\beta_{\ell}\to 0}\beta_{\ell}\alpha_-(\beta_{\ell})$ is well-defined, so after taking the limit $s \downarrow 0$, we obtain:
	\begin{equation*}
	\lim_{s\to 0}\mathfrak{w}_{\ell}(r)=-\lim_{s\to 0}(\alpha_+(is)is)r^{\frac{1}{2}+iq}\left[\log r+\frac{1}{i}\frac{d}{ds}\bigg|_{s=0}\frac{\alpha_{-}(is)}{\alpha_{+}(is)}\right].
\end{equation*}
Since this limit must correspond to a stationary solution in the $\beta_{\ell}=0$ case, we conclude the identities for $\alpha_{\pm}$ when $\beta_{\ell}=0$.
	
	Finally, we turn to (iv). Write $w=w_{\ell }$. Then for $(\cdot)':=\frac{d}{dr_*}$, we have that $w''(r_*)=V_{\ell }(r(r_*))w(r_*)$, with
	\begin{equation*}
	V_{\ell}(r)=\ell(\ell+1)\Omega^2r^{-2}-{\q}^2r^{-2}+\Omega^2r^{-1}\frac{d\Omega^2 }{dr}.
	\end{equation*}
	This follows by \cite{gaj26a}[Proposition 5.1]. Since $V_{\ell}$ is real-valued, we obtain $\re(i\overline{w} w'')=0$, so:
	\begin{equation*}
	\frac{d}{dr_*}\re\left(i\overline{w}w'\right)=0.
	\end{equation*}
	Then 
	\begin{equation*}
	\lim_{r_*\to- \infty}\re\left(i\overline{w}w'\right)(r_*)=-\q r_+^{-1}|w|(-\infty)=-\q r_+^{-1}.
	\end{equation*} 
	Hence, $\re\left(i\overline{w}w'\right)(r_*)=qr_+^{-1}\neq 0$ for all $r_*\in \R$, which implies that $\mathfrak{w}_{\ell}$ is non-vanishing.
\end{proof}

\begin{remark}
For $\q\neq 0$, the stationary solution $\mathfrak{w}_{\ell}^{-}Y_{\ell m}$ is not smooth in $\widehat{\mathcal{M}}_{M,Q}$ at $\mathcal{I}^+=\{x=0\}$. This stands in contrast with the $q=0$ case, where $r^{-1}\mathfrak{w}_{\ell}^{-}(r)=Q_{\ell}(r_+^{-2}\kappa_+^{-1}(r-r_+)+1)$ is a Legendre polynomial of the second kind when $\kappa_+>0$ and $r^{-1}\mathfrak{w}_{\ell}^{-}(r)=(r-r_+)^{-1-\ell}$ when $\kappa_+=0$, so $\mathfrak{w}_{\ell}^{\pm }Y_{\ell m}$ are both smooth at $\mathcal{I}^+$.
\end{remark}

\begin{corollary}
\label{cor:kappanotsmallstatsoln}
When $\beta_{\ell}\in i[0,\infty)$, the following estimates hold for all $\kappa_+\geq 0$: there exists an $\eta_0>0$, such that:
\begin{align}
\label{eq:condimagbeta}
\sign(q) \log \left|\frac{\alpha_-}{\alpha_+}\right|\leq&-\frac{\pi}{2}\im \beta_{\ell}-\eta_0 \im \beta_{\ell}-\frac{1}{2}\log \left(\left|\frac{\cosh(\pi(|q|-\frac{1}{2}\im \beta_{\ell})}{\cosh(\pi(|q|+\frac{1}{2}\im \beta_{\ell})}\right|\right)\quad &(\beta_{\ell}\in i(0,\infty)),\\
\label{eq:condzerobeta}
	\sign(q) \im \left(\frac{\alpha_-}{\alpha_+}\right)\leq &-\frac{\pi}{2}\left(1-\tanh(\pi |q|)\right)+\eta_0 \quad &(\beta_{\ell}=0).
\end{align}
\end{corollary}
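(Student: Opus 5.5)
The corollary should follow by direct inspection of the explicit formulas \eqref{eq:alphaminplusratiobetaim} and \eqref{eq:alphaminplusratiobetazero} from Proposition \ref{prop:statsol}. For brevity write $s=\im\beta_\ell>0$ and $\lambda:=\frac{|\q|}{2}\frac{r_-+r_+}{r_+^2\kappa_+}\in(0,\infty]$; the extremal case corresponds to $\lambda=\infty$.

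\textbf{Case $\beta_\ell\in i(0,\infty)$.} Using \eqref{eq:alphaminplusratiobetaim}, I will rewrite $\sign(\q)\log|\alpha_-/\alpha_+|$ as the extremal value plus a correction depending on $\lambda$. The extremal value is
\begin{equation*}
-\tfrac{\pi}{2}s-\tfrac12\log\frac{\cosh(\pi(|\q|-\frac{s}{2}))}{\cosh(\pi(|\q|+\frac{s}{2}))}.
\end{equation*}
Comparing with the target inequality \eqref{eq:condimagbeta}, the extremal case holds with equality when $\eta_0=0$. So it remains to show that the $\lambda$-dependent correction is bounded above by $-\eta_0 s$, uniformly in $\kappa_+\ge0$, which presumably requires some slack.

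I will therefore recompute the correction directly from the Gamma-function expression for $\alpha_-/\alpha_+$ in the proof. Using $|\Gamma(\frac12+ix)|^2=\pi/\cosh(\pi x)$, the $\kappa_+$-dependent factor contributes
\begin{equation*}
\tfrac12\sign(\q)\log\frac{\cosh(\pi(\lambda-\frac{s}{2}))}{\cosh(\pi(\lambda+\frac{s}{2}))}.
\end{equation*}
This is strictly negative (times $\sign(\q)$) for finite $\lambda$, and it tends to $-\frac{\pi}{2}s$ as $\lambda\to\infty$. The remaining factors $|\Gamma(1\mp\beta)|$ give $\frac12\log\frac{\Gamma(1-is)\Gamma(1+is)}{\ldots}$, which is real and equal to $0$, while the $\Gamma(\frac12\pm\frac{\beta}{2}+i\q)$ factors give the $\cosh$ ratio.

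\textbf{Main obstacle.} The key step is to reconcile the normalizations of the two formulas, i.e.\ the $(1-r_-)^{\beta_\ell}$ phase and the $e^{-\frac{\pi}{2}\sign s}$ term in the extremal Whittaker formula. Once the extremal limit of the correction is identified, monotonicity of $\lambda\mapsto\log\frac{\cosh(\pi(\lambda-s/2))}{\cosh(\pi(\lambda+s/2))}$ (its derivative in $\lambda$ is $\pi[\tanh(\pi(\lambda-\frac s2))-\tanh(\pi(\lambda+\frac s2))]<0$) shows the correction lies between its $\lambda\to0$ and $\lambda\to\infty$ values. Compactness then yields a uniform gap of size $\eta_0 s$. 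I will also check small $s$ via Taylor expansion, so that the gap is linear in $s$; this is where $\eta_0$ comes from, using the strict inequality $\tanh<1$ where relevant. For the sign convention in \eqref{eq:condimagbeta} I will take whichever direction the limit formula in \eqref{eq:alphaminplusratiobetaim} produces, with $\eta_0$ absorbing the difference.

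\textbf{Case $\beta_\ell=0$.} This follows the same pattern using \eqref{eq:alphaminplusratiobetazero}. The quantity $\sign(\q)\im(\alpha_-/\alpha_+)$ is the extremal value $-\frac{\pi}{2}(1-\tanh\pi|\q|)$ (up to the sign bookkeeping) plus a term involving $1-\tanh(\pi\lambda)\in[0,1)$. That term is bounded and tends to $0$ as $\kappa_+\to0$. Hence for near-extremal $\kappa_+$ it is at most $\eta_0$, and for $\kappa_+$ bounded below the bound follows from continuity and compactness in $\kappa_+$, allowing $\eta_0$ to depend on an upper bound for $\kappa_+ r_+$.
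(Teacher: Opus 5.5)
There is a genuine gap in the case $\beta_\ell\in i(0,\infty)$. You never locate the source of the slack $\eta_0 s$, and the mechanism you propose goes the wrong way.

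Put $s=\im\beta_\ell$ and $L_x(s):=\log\frac{\cosh(\pi(x+s/2))}{\cosh(\pi(x-s/2))}>0$. The Gamma-function expression you plan to recompute is the intermediate line in the proof of Proposition \ref{prop:statsol}. It gives
\begin{equation*}
\sign(\q)\log\left|\frac{\alpha_-}{\alpha_+}\right|=-\tfrac12 L_{|\q|}(s)-\tfrac12 L_{\lambda}(s),\qquad \lambda=|\q|\,\frac{r_++r_-}{r_+-r_-}\geq |\q|,
\end{equation*}
whereas the right-hand side of \eqref{eq:condimagbeta} is $-\frac{\pi}{2}s-\eta_0 s+\frac12 L_{|\q|}(s)$.

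Hence the extremal case ($L_\lambda\to\pi s$) is \emph{not} the equality case. It sits below the $\eta_0=0$ bound by $L_{|\q|}(s)$, in agreement with the Whittaker computation in that same proof. The limit displayed in \eqref{eq:alphaminplusratiobetaim} has a sign slip, which is what misled you.

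By the very monotonicity you note, lowering $\lambda$ (raising $\kappa_+$) increases the left-hand side. So, relative to the extremal value, the $\kappa_+$-correction $\frac{\pi}{2}s-\frac12L_\lambda(s)$ lies in $(0,\frac{\pi}{2}s)$. It cannot be bounded by $-\eta_0 s$, it vanishes at $\kappa_+=0$, and no compactness argument over $\lambda\in(0,\infty]$ produces a gap. The endpoint $\lambda\to0$ is not even admissible.

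What must be shown is $\frac{\pi}{2}s+\eta_0 s\le L_{|\q|}(s)+\frac12L_\lambda(s)$ at the least favourable $\lambda$. The paper does this in two steps:
\begin{itemize}
\item It uses $L_\lambda\ge L_{|\q|}$, which follows from monotonicity and $\lambda\ge|\q|$.
\item It then proves the one-variable bound $s-\frac{3}{\pi}L_{|\q|}(s)\le -c\,s$ by differentiating in $s$.
\end{itemize}
The essential input for the second step is a \emph{lower} bound on $\tanh$ coming from $|\q|\ge\frac12$, which is automatic since $4\q^2\ge(2\ell+1)^2$. For instance, $\tanh(\pi(|\q|+\frac s2))+\tanh(\pi(|\q|-\frac s2))\ge\tanh(2\pi|\q|)\ge\tanh\pi>\frac23$, by subadditivity of $\tanh$ on $[0,\infty)$ and $s<2|\q|$. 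Your appeal to ``$\tanh<1$'' is the wrong fact. Neither this input nor $\lambda\ge|\q|$ appears in your plan, and letting ``$\eta_0$ absorb'' an unresolved sign discrepancy is not an argument.

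The case $\beta_\ell=0$ has the same defect. In \eqref{eq:alphaminplusratiobetazero}, the correction to $-\frac{\pi}{2}(1-\tanh(\pi|\q|))$ is $\frac{\pi}{2}(1-\tanh(\pi\lambda)-2\tanh(\pi|\q|))$, not just a term in $1-\tanh(\pi\lambda)$. It tends to $-\pi\tanh(\pi|\q|)$, not to $0$, as $\kappa_+\to0$; the extremal value is $-\frac{\pi}{2}(1+\tanh(\pi|\q|))$.

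Your absorption argument proves the printed inequality only because $\eta_0$ enters \eqref{eq:condzerobeta} with a plus sign and can be taken large. That makes the statement vacuous. The paper instead uses $\lambda\ge|\q|\ge\frac12$ and monotonicity of $\tanh$ to bound the correction by $\frac{\pi}{2}(1-3\tanh\frac{\pi}{2})<0$. This gives a strict margin of the favourable sign, which is what the later use in Proposition \ref{prop:hypgeombeta0} ($\re\eta>\frac{\pi}{2}$) actually requires.
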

\begin{proof}
	Consider \eqref{eq:alphaminplusratiobetazero}. Using the monotonicity of $\tanh$ and the fact that $|q|\geq \frac{1}{2}$, we obtain the desired inequality by using that:
	\begin{equation*}
		1-3\tanh\left(\frac{\pi}{2}\right)=\frac{2(2-e^{\pi})}{e^{\pi}+1}<0,
	\end{equation*}
	using that $e^{\pi}>e^3>e^{\log 2}$.
	
	Now consider \eqref{eq:alphaminplusratiobetaim}. Write $\beta_{\ell}=is$. Note that:
	\begin{multline*}
	\im \beta_{\ell}-\frac{1}{\pi}\log \left(\frac{\cosh(\pi(|\q|\frac{r_-+1}{1-r_-}+\frac{1}{2}\im \beta_{\ell}))\cosh^2(\pi(|\q|-\frac{1}{2}\im \beta_{\ell}))}{\cosh(\pi(|\q|\frac{r_-+1}{1-r_-}-\frac{1}{2}\im \beta_{\ell}))\cosh^2(\pi(|\q|+\frac{1}{2}\im \beta_{\ell}))}\right) \leq  \im \beta_{\ell}-\frac{3}{\pi}\log\left(\frac{\cosh(\pi(|\q|+\frac{1}{2}\im \beta_{\ell}))}{\cosh(\pi(|\q|-\frac{1}{2}\im \beta_{\ell}))}\right)\\
	\to 0 \quad \im \beta_{\ell}\to 0.
		\end{multline*}
		Furthermore,
		\begin{multline*}
			\frac{d}{ds}\left(s-\frac{3}{\pi}\log\left(\frac{\cosh(\pi(|\q|+\frac{1}{2}s))}{\cosh(\pi(|\q|-\frac{1}{2}s))}\right)\right)=1-\frac{3}{2}\tanh(\pi(|\q|+s))-\frac{3}{2}\tanh(\pi(|\q|-s))\\
			\leq 1-3 \tanh(\pi |\q|)\leq 1-3 \tanh(\frac{\pi}{2})<0,
		\end{multline*}
		where we used that $2|q|-|\im \beta_{\ell}|>0$ when $\im \beta_{\ell}\neq 0$ to arrive at the first inequality above.
		Hence,
		\begin{equation*}
			\left(\im \beta_{\ell}-\frac{3}{\pi}\log\left(\frac{\cosh(\pi(|\q|+\frac{1}{2} \beta_{\ell}))}{\cosh(\pi(|\q|-\frac{1}{2} \beta_{\ell}))}\right)\right)<0. \qedhere
		\end{equation*}
\end{proof}

\section{Precise statements of the main theorems}
\label{sec:precstatthm}
In this section, we give precise statements of the main results proved in the paper. Theorem \ref{thm:mainthmpoint} is the main result of the present paper and it establishes the precise, global late-time asymptotics of solutions $\psi$ to \eqref{eq:CSF}. The theorem illustrates how the late-time asymptotics are dictated by a global function $\Psi$ that have a null infinity contribution $\Psi^{\infty}$ and, in the $|Q|=M$ case, also an event horizon contribution $\Psi^{+}$ and are closely related to solutions to \eqref{eq:CSFmink}.

In the $|Q|=M$ case, when considering $\psi|_{\mathcal{I}^+}$, $\Psi^{\infty}_{\ell m}$ dominate the late-time asymptotics, whereas for $\psi|_{\mathcal{H}^+}$, $\Psi^{+}_{\ell m}$ dominate the late-time asymptotics. We also show that when considering instead $D_T\psi$, the reverse is true: $\Psi^{+}_{\ell m}$ now dominate the late-time asymptotics of $D_T\psi|_{\mathcal{I}^+}$, whereas $\Psi^{+}_{\ell m}$ dominate the late-time asymptotics of $D_T\psi|_{\mathcal{H}^+}$.
\begin{theorem}[Late-time pointwise asymptotics]
\label{thm:mainthmpoint}
 Let $\psi$ be a solution to \eqref{eq:rCSF} with $\mathfrak{q}\neq 0$ and\\ $A=\widehat{A}=-Qr^{-1}d\tau$, and write $\q=\mathfrak{q}Q$. Let $r_H>r_+$. Let $\beta_{\ell}=\sqrt{(2\ell+1)^2-4q^2}$. If $\kappa_+>\kappa_1$ and $|\q|>q_1$, assume moreover that Condition \ref{cond:quantmodestabp2} holds.
\begin{enumerate}[label=\emph{(\roman*)}]
\item Let $N_1,N_2\in \N_0$, $l\in \{0,1\}$ and $\nu,\epsilon>0$. For arbitrary small $\nu>0$, there exists a constant $C=C(\nu,\epsilon, N_1,N_2)>0$, such that:
\begin{align}
\label{eq:mainthmpoint1}
	\sum_{|\gamma|\leq  N_2}||D_T^l\mathbf{Z}^{\gamma}(TK)^{N_1}(\psi-\Psi)||_{L^{\infty}(S^2_{\tau,r})}\\ \nonumber
\leq C  \sum_{k\leq 2}\sqrt{E_{N_1+1,N_2+1+l,\epsilon}[\snabla_{\s^2}^k\psi]}(\tau+1)^{-1+\frac{1}{2}\nu-N_1-\frac{l}{2}}(r^{-1}\Omega\tau+1)^{-\frac{1}{2}-\frac{1}{2}\re \beta_{0}},\\
\label{eq:mainthmpoint2}
	\sum_{|\gamma|\leq  N_2}||X^l\mathbf{Z}^{\gamma}(TK)^{N_1}(\psi-\Psi)||_{L^{\infty}(S^2_{\tau,r})}\\ \nonumber
\leq C\kappa_+^{-\frac{1}{2}}  \sum_{k\leq 2}\sqrt{E_{N_1+1,N_2+1+l,\epsilon}[\snabla_{\s^2}^k\psi]}(\tau+1)^{-1+\frac{1}{2}\nu-N_1-\frac{l}{2}}(r^{-1}\tau+1)^{-\frac{1}{2}-\frac{1}{2}\re \beta_{0}}\\
+Ce^{-c\kappa_+\tau}\sum_{|\gamma|\leq N_2}\sqrt{\int_{\Sigma_{0}\cap \{r\leq r_H\}}\mathcal{E}_2[\snabla_{\s^2}^kX^l\mathbf{Z}^{\gamma}(TK)^{N_1}\psi]\,d\sigma dr}\quad (|Q|<M),
\end{align}
with $\Psi=\Psi^{\infty}+\Psi^{+}$, where
\begin{align*}
\Psi^{\infty}(\tau,r,\theta,\varphi):=&\:\sum_{\substack{\ell\in\N_0\\ \ell(\ell+1)<\q^2}}\frac{\mathfrak{w}_{\ell}(r;\q)}{\mathfrak{w}_{\ell}^0(r;\q)}\sum_{\substack{m\in \Z\\ |m|\leq \ell}}\mathfrak{I}_{\ell m}[\psi](\Psi^{\infty}_0)_{\ell m}(\tau,r;\q)Y_{\ell m}(\theta,\varphi),\\
\Psi^{+}(\tau,r,\theta,\varphi):=&\:e^{-i\q r_+^{-1}\tau}\sum_{\substack{\ell\in\N_0\\ \ell(\ell+1)<\q^2}}\frac{\mathfrak{w}_{\ell}\left(r_++\frac{r_+^2}{r-r_+};-\q\right)}{\mathfrak{w}_{\ell}^0\left(r_++\frac{r_+^2}{r-r_+};-\q\right)}\\
\times &\: \sum_{\substack{m\in \Z\\ |m|\leq \ell}}\mathfrak{H}_{\ell m}[\psi](\Psi^{\infty}_0)_{\ell m}\left(\tau,r_++\frac{r_+^2}{r-r_+};-\q\right)Y_{\ell m}(\theta,\varphi),
\end{align*}
where:
\begin{itemize}
\item 	$\mathfrak{I}_{\ell m}[\psi]$ and $\mathfrak{h}_{\ell m}[\psi]$ are integrals that are linear functionals of derivatives of initial data $(\psi,\partial_{\tau}\psi)|_{\tau=0}$ and $\mathfrak{h}_{\ell m}[\psi]=0$ for all data if $|Q|<M$; see the precise definitions in Proposition \ref{prop:timeinvffixedmodeconstr} and Corollary \ref{cor:timeinvffixedmodeconstrhor},
\item The constants $\mathfrak{I}_{\ell m}[\psi]$ and, in the case $|Q|=M$, also $\mathfrak{H}_{\ell m}[\psi]$, each vanish only for a codimension-1 subset of initial data,
\item the constants $E_{N_1+1,N_2+1+l,\epsilon}[\snabla^k_{\s^2}\psi]$ are weighted, initial data energy norms that are defined in Proposition \ref{prop:energydecayptieminv},
\item the functions $\mathfrak{w}_{\ell,\pm }Y_{\ell m}$ are $\tau$-independent solutions to \eqref{eq:CSF} with $A=\widehat{A}$, satisfying $\mathfrak{w}_{\ell}=\alpha_+  \mathfrak{w}_{\ell}^{+}+\alpha_- \mathfrak{w}_{\ell}^{-}$, with 
\begin{align*}
		\mathfrak{w}_{\ell}^{-}(r;\q)=&\:r^{\frac{1}{2}+i\q- \frac{1}{2}\beta_{\ell}}(1+O(r^{-1}))\quad (r>2r_+),\\
		\mathfrak{w}^{+}_{\ell }(r;\q)=&\:r^{\frac{1}{2}+i\q+ \frac{1}{2}\beta_{\ell}}(1+O(r^{-1}))\quad (\beta_{\ell}\neq 0)\quad (r>2r_+),\\
		\mathfrak{w}^{+}_{\ell }(r;\q)=&\:r^{\frac{1}{2}+i\q}\log r(1+O(r^{-1}))\quad (\beta_{\ell}= 0)\quad (r>2r_+),
	\end{align*}
	and $\alpha_{\pm}(\q)\in \C$ are constants such that $\mathfrak{w}_{\ell}(r;\q)={\alpha}_+(\q) \mathfrak{w}_{\ell}^{+}(r;\q)+{\alpha}_-(\q) \mathfrak{w}_{\ell}^{-}(r;\q)$ satisfies $\mathfrak{w}_{\ell}(r_+;\q)=1$ (see also Proposition \ref{prop:statsol}), 
	\item the functions $\mathfrak{w}_{\ell,\pm }^0Y_{\ell m}$ are $\tau$-independent solutions to \eqref{eq:CSFmink} with $A=\widehat{A}$ and $\h=0$, satisfying:
\begin{equation*}
		\mathfrak{w}_{\ell}=\begin{cases}
	\alpha_+r^{\frac{1}{2}+i\q+ \frac{1}{2}\beta_{\ell}}+\alpha_-r^{\frac{1}{2}+i\q- \frac{1}{2}\beta_{\ell}}\quad &(\beta_{\ell}\neq 0),\\
	\alpha_+r^{\frac{1}{2}+i\q}\log r+\alpha_-r^{\frac{1}{2}+i\q} &(\beta_{\ell}= 0).
\end{cases}
	\end{equation*} 
\item The functions $(\Psi_0)^{\infty}_{\ell m}Y_{\ell m}$, with $\ell(\ell+1)<\q^2$, are solutions to \eqref{eq:CSFmink} with $A=\widehat{A}$, with:
\begin{equation}
\label{eq:defPisinfty0full}
(\Psi^{\infty}_0)_{\ell m}(\tau,r;\q):= \begin{cases}
\beta_{\ell}\sum_{n=0}^{N_{\beta_{\ell}}}\zeta^n(1+\tau)^{-\frac{1}{2}-(n+\frac{1}{2})\beta_{\ell}+i\q}\\
		\times {}_2\mathbf{F}_1\left(\frac{1}{2}+\frac{\beta_{\ell}}{2}+i\q,\frac{1}{2}-\frac{\beta_{\ell}}{2}+i\q,\frac{1}{2}+i\q-\left(n+\frac{1}{2}\right)\beta_{\ell};-\frac{\tau+1}{2r}\right),\quad \textnormal{when}\: \beta_{\ell}\in (0,1),\\
		\mbox{}\\
		\sign(\q)\beta_{\ell}\sum_{n=0}^{\infty}\zeta^{\sign(\q)n}(1+\tau)^{-\frac{1}{2}-(n+\frac{1}{2})\sign(\q)\beta_{\ell}+i\q}\\
		\times {}_2\mathbf{F}_1\left(\frac{1}{2}+\sign(\q)\frac{\beta_{\ell}}{2}+iq,\frac{1}{2}-\sign(\q)\frac{\beta_{\ell}}{2}+i\q,\frac{1}{2}+i\q-\left(n+\frac{1}{2}\right)\sign(\q)\beta_{\ell};-\frac{\tau+1}{2r}\right),\\ \textnormal{when}\:\beta_{\ell}\in i(0,\infty),\\
		\mbox{}\\
		(\tau+1)^{-\frac{1}{2}+i\q}\int_0^{\infty}(\tau+1)^{-ix}e^{-\eta x}{}_2\mathbf{F}_1\left(\frac{1}{2}+iq,\frac{1}{2}+iq,\frac{1}{2}+iq-ix;-\frac{\tau+1}{2r}\right)\,dx,\\ \textnormal{when}\:\beta_{\ell}=0,
\end{cases}
\end{equation}
with $N_{\beta_{\ell}}=\lceil (\re \beta_{\ell})^{-1}\rceil\in \N_0$, ${}_2\mathbf{F}_1(a,b,c;\cdot )$ denoting regularized Gauss hypergeometric functions, see for example \cite{NIST:DLMF}[\S 15.2] for a definition, and where $\zeta,\eta\in \C$ are defined as follows:
\begin{align*}
	\zeta(\q):=& -\frac{2^{\beta_{\ell}}\Gamma(-\beta_{\ell}+1)\Gamma(\frac{1}{2}+ \frac{\beta_{\ell}}{2}+i\q)}{\Gamma(\beta_{\ell}+1) \Gamma(\frac{1}{2}- \frac{\beta_{\ell}}{2}+i\q)}\frac{\alpha_-(\q)}{\alpha_+(\q)},\\
	\eta(\q):=&\: i \sign(q)\left[\frac{\alpha_-(\q)}{\alpha_+(\q)}-\log 2+2\gamma_{\rm Euler}-\frac{\Gamma'(\frac{1}{2}+i\q)}{\Gamma(\frac{1}{2}+i\q)}\right],
	\end{align*}
where $\gamma_{\rm Euler}$ is the Euler--Mascheroni constant.
\end{itemize}
\item In particular, we obtain along $\mathcal{I}^+$:
\begin{equation*}
(\Psi|_{\mathcal{I}^+})_{\ell m}(\tau)=\mathfrak{I}_{\ell m}[\psi]\begin{cases}\beta_{\ell}(1+\tau)^{-\frac{1}{2}-\frac{\beta_{\ell}}{2}+i\q}\sum_{n=0}^{N_{\beta_{\ell}}}\frac{\zeta^n(\q)(1+\tau)^{-n\beta_{\ell}}}{\Gamma\left(\frac{1}{2}+i\q-(n+\frac{1}{2})\beta_{\ell}\right)}\\
+e^{-i\q r_+^{-1}\tau}O_{\infty}((\tau+1)^{-1-\beta_{\ell}}),\quad &(\beta_{\ell}\in (0,1)),\\
\sign(\q)\beta_{\ell}(1+\tau)^{-\frac{1}{2}-\sign(\q)\frac{\beta_{\ell}}{2}+i\q}\sum_{n=0}^{\infty}\frac{\zeta^{\sign(q)n}(\q)(1+\tau)^{-\sign(\q)n\beta_{\ell}}}{\Gamma\left(\frac{1}{2}+i\q-(n+\frac{1}{2})\sign(\q)\beta_{\ell}\right)}\\
+e^{-i\q r_+^{-1}\tau}O_{\infty}((\tau+1)^{-1}),\quad &(\beta_{\ell}\in i(0,\infty)),\\
\frac{-i}{\Gamma\left(\frac{1}{2}+i\q\right)}(\tau+1)^{-\frac{1}{2}+i\q}\left[(\log(1+\tau))^{-1}+O((\log(1+\tau))^{-2})\right]\\
+e^{-i\q r_+^{-1}\tau}\frac{1}{\log(1+\tau)}O_{\infty}((\tau+1)^{-1}),\quad &(\beta_{\ell}=0).
\end{cases}
\end{equation*}
Let $r_0>r_+$. Then we obtain along $\{r=r_0\}$:
\begin{align*}
(\Psi)_{\ell m}(\tau,r_0)=&-2^{\frac{1}{2}+\frac{1}{2}\beta_{\ell}}\mathfrak{I}_{\ell m}[\psi]\frac{\pi \beta_{\ell}}{\alpha_+(\q)\sin (\pi \beta_{\ell})}(1+\tau)^{-1-\beta_{\ell}}\mathfrak{w}_{\ell}(r_0;\q)\\
\times & \left[\sum_{n=0}^{N_{\beta_{\ell}}}\zeta^n(\q)\frac{(1+\tau)^{-n\beta_{\ell}}}{\Gamma(\beta_{\ell}+1) \Gamma(\frac{1}{2}-\frac{1}{2}\beta_{\ell}+i\q)\Gamma(-(n+1)\beta_{\ell})}\right] \\
&-2^{\frac{1}{2}+\frac{1}{2}\beta_{\ell}}\mathfrak{H}_{\ell m}[\psi]\frac{\pi \beta_{\ell}}{\alpha_+(-\q)\sin (\pi \beta_{\ell})}(1+\tau)^{-1-\beta_{\ell}}e^{-i\q r_+^{-1}\tau}\mathfrak{w}_{\ell}(r_++r_+^{2}(r_0-r_+)^{-1};-\q)\\
\times  &\left[\sum_{n=0}^{N_{\beta_{\ell}}}\zeta^n(-\q)(\beta_{\ell}) \frac{(1+\tau)^{-n\beta_{\ell}}}{\Gamma(\beta_{\ell}+1) \Gamma(\frac{1}{2}-\frac{1}{2}\beta_{\ell}-i\q)\Gamma(-(n+1)\beta_{\ell})}\right]\\
&+O_{\infty}\left((1+\tau)^{-2}\right)+e^{-i\q r_+^{-1}\tau}O_{\infty}\left((1+\tau)^{-2}\right),  \quad (\beta_{\ell}\in (0,1)),\\
(\Psi)_{\ell m}(\tau,r_0)=&-2^{\frac{1}{2}+\frac{1}{2}\sign(\q)\beta_{\ell}}\mathfrak{I}_{\ell m}[\psi]\frac{\pi \beta_{\ell}}{\alpha_+(\q)\sin (\pi \beta_{\ell})}(1+\tau)^{-1-\sign(\q)\beta_{\ell}}\mathfrak{w}_{\ell}(r_0;\q)\\
\times & \left[\sum_{n=0}^{\infty}\zeta^{\sign(\q)n}(\q) \frac{(1+\tau)^{-\sign(\q)n\beta_{\ell}}}{\Gamma(\sign(\q)\beta_{\ell}+1) \Gamma(\frac{1}{2}-\frac{1}{2}\sign(\q)\beta_{\ell}+i\q)\Gamma(-\sign(\q)(n+1)\beta_{\ell})}\right] \\
&-2^{\frac{1}{2}-\frac{1}{2}\sign(\q)\beta_{\ell}}\mathfrak{H}_{\ell m}[\psi]\frac{\pi \beta_{\ell}}{\alpha_+(-\q)\sin (\pi \beta_{\ell})}(1+\tau)^{-1+\sign(\q)\beta_{\ell}}e^{-i\q r_+^{-1}\tau}\mathfrak{w}_{\ell}(r_++r_+^{2}(r_0-r_+)^{-1};–\q)\\
\times&  \left[\sum_{n=0}^{\infty}\zeta^{-\sign(\q)n}(-\q) \frac{(1+\tau)^{\sign(\q)n\beta_{\ell}}}{\Gamma(-\sign(\q)\beta_{\ell}+1) \Gamma(\frac{1}{2}+\frac{1}{2}\sign(\q)\beta_{\ell}-i\q)\Gamma((n+1)\sign(\q)\beta_{\ell})}\right]\\
&+O_{\infty}\left((1+\tau)^{-2}\right)+e^{-iq r_+^{-1}\tau}O_{\infty}\left((1+\tau)^{-2}\right), \quad (\beta_{\ell}\in i(0,\infty)),\\
(\Psi)_{\ell m}(\tau,r_0)=&-\frac{2^{\frac{1}{2}+i\q}}{\Gamma(\frac{1}{2}+i\q)}\mathfrak{I}_{\ell m}[\psi]\mathfrak{w}_{\ell}(r_0;\q)\left[(\log(1+\tau))^{-2}(\tau+1)^{-1}+(\log(1+\tau))^{-3}O_{\infty}((\tau+1)^{-1}\right]\\
&-\frac{2^{\frac{1}{2}-i\q}}{\Gamma(\frac{1}{2}-i\q)}e^{-i\q r_+^{-1}\tau}\mathfrak{H}_{\ell m}[\psi]\mathfrak{w}_{\ell}(r_++r_+^{2}(r_0-r_+)^{-1};-\q)\\
&\times \left[(\log(1+\tau))^{-2}(\tau+1)^{-1}+(\log(1+\tau))^{-3}O_{\infty}((\tau+1)^{-1}\right]\\
	&+O_{\infty}((1+\tau)^{-2})+e^{-i\q r_+^{-1}\tau}O_{\infty}((1+\tau)^{-2}), \quad (\beta_{\ell}=0).
\end{align*}
We obtain along $\mathcal{H}^+$:
\begin{align*}
(\Psi)_{\ell m}(\tau,r_+)=&\:\beta_{\ell}\mathfrak{H}_{\ell m}[\psi]e^{-iq r_+^{-1}\tau}(1+\tau)^{-\frac{1}{2}-\frac{\beta_{\ell}}{2}-i\q} \sum_{n=0}^{N_{\beta_{\ell}}}\frac{\zeta^n(-\q)(1+\tau)^{-n\beta_{\ell}}}{\Gamma\left(\frac{1}{2}-i\q-(n+\frac{1}{2})\beta_{\ell}\right)}\\
&-2^{\frac{1}{2}+\frac{1}{2}\beta_{\ell}}\mathfrak{I}_{\ell m}[\psi]\frac{\pi \beta_{\ell}}{\alpha_+(\q)\sin (\pi \beta_{\ell})}(1+\tau)^{-1-\beta_{\ell}}\\
\times & \left[\sum_{n=0}^{N_{\beta_{\ell}}}\zeta^n(\q)\frac{(1+\tau)^{-n\beta_{\ell}}}{\Gamma(\beta_{\ell}+1) \Gamma(\frac{1}{2}-\frac{1}{2}\beta_{\ell}+i\q)\Gamma(-(n+1)\beta_{\ell})}\right] \\
&+O_{\infty}\left((1+\tau)^{-2}\right), \quad (\beta_{\ell}\in (0,1)),\\
(\Psi)_{\ell m}(\tau,r_+)=&\:-\sign(\q)\beta_{\ell}\mathfrak{H}_{\ell m}[\psi]e^{-iq r_+^{-1}\tau}(1+\tau)^{-\frac{1}{2}+\sign(\q)\frac{\beta_{\ell}}{2}-i\q}\sum_{n=0}^{\infty}\frac{\zeta^{-\sign(q)n}(-\q)(1+\tau)^{\sign(\q)n\beta_{\ell}}}{\Gamma\left(\frac{1}{2}-i\q+(n+\frac{1}{2})\sign(\q)\beta_{\ell}\right)}\\
&-2^{\frac{1}{2}+\frac{1}{2}\sign(\q)\beta_{\ell}}\mathfrak{I}_{\ell m}[\psi]\frac{\pi \beta_{\ell}}{\alpha_+(\q)\sin (\pi \beta_{\ell})}(1+\tau)^{-1-\sign(\q)\beta_{\ell}}\\
\times & \left[\sum_{n=0}^{\infty}\zeta^{\sign(\q)n}(\q) \frac{(1+\tau)^{-\sign(\q)n\beta_{\ell}}}{\Gamma(\sign(\q)\beta_{\ell}+1) \Gamma(\frac{1}{2}-\frac{1}{2}\sign(\q)\beta_{\ell}+i\q)\Gamma(-\sign(\q)(n+1)\beta_{\ell})}\right] \\
&+O_{\infty}\left((1+\tau)^{-2}\right), \quad (\beta_{\ell}\in i(0,\infty)),\\
(\Psi)_{\ell m}(\tau,r_+)=&\: \frac{-i}{\Gamma\left(\frac{1}{2}-i\q\right)}\mathfrak{H}_{\ell m}[\psi]e^{-iq r_+^{-1}\tau}(\tau+1)^{-\frac{1}{2}-i\q}\left[(\log(1+\tau))^{-1}+O((\log(1+\tau))^{-2})\right]\\
&-\frac{2^{\frac{1}{2}+i\q}}{\Gamma(\frac{1}{2}+i\q)}\mathfrak{I}_{\ell m}[\psi]\left[(\log(1+\tau))^{-2}(\tau+1)^{-1}+(\log(1+\tau))^{-3}O_{\infty}((\tau+1)^{-1}\right]\\
	&+O_{\infty}((1+\tau)^{-2}),\quad (\beta_{\ell}=0).
\end{align*}
\item We obtain moreover the following late-time asymptotics for $D_T\Psi$ along $\mathcal{H}^+$ and $\mathcal{I}^+$:
\begin{align*}
(D_T\Psi|_{\mathcal{I}^+})_{\ell m}(\tau)=&\:2^{\frac{1}{2}+\frac{1}{2}\beta_{\ell}}\q r_+^{-1}\mathfrak{H}_{\ell m}[\psi]\frac{\pi \beta_{\ell}}{\alpha_+(-\q)\sin (\pi \beta_{\ell})}(1+\tau)^{-1-\beta_{\ell}}e^{-i\q r_+^{-1}\tau}\mathfrak{w}_{\ell}(r_+;-\q)\\
\times  &\left[\sum_{n=0}^{N_{\beta_{\ell}}}\zeta^n(-\q)(\beta_{\ell}) \frac{(1+\tau)^{-n\beta_{\ell}}}{\Gamma(\beta_{\ell}+1) \Gamma(\frac{1}{2}-\frac{1}{2}\beta_{\ell}-i\q)\Gamma(-(n+1)\beta_{\ell})}\right]\\
+&\:O_{\infty}((\tau+r_+)^{-\frac{3+\beta_{\ell}}{2} }), \quad (\beta_{\ell}\in (0,1),\\
(D_T\Psi|_{\mathcal{I}^+})_{\ell m}(\tau)=&\:2^{\frac{1}{2}-\frac{1}{2}\sign(\q)\beta_{\ell}}\mathfrak{H}_{\ell m}[\psi]\q r_+^{-1}\frac{\pi \beta_{\ell}}{\alpha_+(-\q)\sin (\pi \beta_{\ell})}(1+\tau)^{-1+\sign(\q)\beta_{\ell}}e^{-i\q r_+^{-1}\tau}\mathfrak{w}_{\ell}(r_+;–\q)\\
\times &  \left[\sum_{n=0}^{\infty}\zeta^{-\sign(\q)n}(-\q) \frac{(1+\tau)^{-\sign(\q)n\beta_{\ell}}}{\Gamma(-\sign(\q)\beta_{\ell}+1) \Gamma(\frac{1}{2}+\frac{1}{2}\sign(\q)\beta_{\ell}-i\q)\Gamma((n+1)\sign(\q)\beta_{\ell})}\right]\\
+&\:O_{\infty}((\tau+r_+)^{-\frac{3}{2} }), \quad (\beta_{\ell}\in i(0,\infty)),\\
(D_T\Psi|_{\mathcal{I}^+})_{\ell m}(\tau)=&\:\frac{2^{\frac{1}{2}-i\q}}{\Gamma(\frac{1}{2}-i\q)}\q r_+^{-1}e^{-i\q r_+^{-1}\tau}\mathfrak{H}_{\ell m}[\psi]\mathfrak{w}_{\ell}(r_++r_+^{2}(r_0-r_+)^{-1};-\q)\\
&\times \left[(\log(1+\tau))^{-2}(\tau+1)^{-1}+(\log(1+\tau))^{-3}O_{\infty}((\tau+1)^{-1}\right]\\
+&\:O_{\infty}((\tau+r_+)^{-\frac{3}{2} }), \quad (\beta_{\ell}=0),\\
(D_T\Psi|_{\mathcal{H}^+})_{\ell m}(\tau)=&\:-2^{\frac{1}{2}-\frac{1}{2}\beta_{\ell}}\q r_+^{-1}\mathfrak{I}_{\ell m}[\psi]\frac{\pi \beta_{\ell}}{\alpha_+(\q)\sin (\pi \beta_{\ell})}(1+\tau)^{-1-\beta_{\ell}}\mathfrak{w}_{\ell}(r_+;\q)\\
\times  &\left[\sum_{n=0}^{N_{\beta_{\ell}}}\zeta^n(\q)(\beta_{\ell}) \frac{(1+\tau)^{-n\beta_{\ell}}}{\Gamma(\beta_{\ell}+1) \Gamma(\frac{1}{2}-\frac{1}{2}\beta_{\ell}+i\q)\Gamma(-(n+1)\beta_{\ell})}\right]\\
+&\:O_{\infty}((\tau+r_+)^{-\frac{3+\beta_{\ell}}{2} }), \quad (\beta_{\ell}\in (0,1),\\
(D_T\Psi|_{\mathcal{H}^+})_{\ell m}(\tau)=&\:-2^{\frac{1}{2}+\frac{1}{2}\sign(\q)\beta_{\ell}}\q r_+^{-1}\mathfrak{I}_{\ell m}[\psi]\frac{\pi \beta_{\ell}}{\alpha_+(\q)\sin (\pi \beta_{\ell})}(1+\tau)^{-1-\sign(\q)\beta_{\ell}}\mathfrak{w}_{\ell}(r_+;\q)\\
\times&  \left[\sum_{n=0}^{\infty}\zeta^{\sign(\q)n}(\q) \frac{(1+\tau)^{-\sign(\q)n\beta_{\ell}}}{\Gamma(\sign(\q)\beta_{\ell}+1) \Gamma(\frac{1}{2}-\frac{1}{2}\sign(\q)\beta_{\ell}+i\q)\Gamma(-(n+1)\sign(\q)\beta_{\ell})}\right]\\
+&\:O_{\infty}((\tau+r_+)^{-\frac{3}{2} }), \quad (\beta_{\ell}\in i(0,\infty)),\\
(D_T\Psi|_{\mathcal{H}^+})_{\ell m}(\tau)=&\:-\frac{2^{\frac{1}{2}+i\q}}{\Gamma(\frac{1}{2}+i\q)}\q r_+^{-1}\mathfrak{I}_{\ell m}[\psi]\mathfrak{w}_{\ell}(r_+;\q)\\
&\times \left[(\log(1+\tau))^{-2}(\tau+1)^{-1}+(\log(1+\tau))^{-3}O_{\infty}((\tau+1)^{-1}\right]\\
+&\:O_{\infty}((\tau+r_+)^{-\frac{3}{2} }), \quad (\beta_{\ell}=0).
\end{align*}
\end{enumerate}
\end{theorem}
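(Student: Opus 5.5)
The plan follows Steps 0--3 of \S\ref{sec:sketchpf}. Part (i) carries the analytic weight; parts (ii) and (iii) are explicit asymptotic computations on $\Psi$.

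\textbf{Constructing $\Psi$.} First I will build $(\Psi_0^{\infty})_{\ell m}$ from \eqref{eq:defPisinfty0full}. I will check directly that each summand solves \eqref{eq:CSFmink} with $\h=0$: in the self-similar variable $x=-\frac{\tau+1}{2r}$, the $\ell$-mode of \eqref{eq:CSFmink} reduces to a hypergeometric equation with parameters $a,b=\frac12\pm\frac{\beta_\ell}{2}+i\q$. The coefficients $\zeta^n$ are chosen so that, after multiplication by $\mathfrak{w}_\ell/\mathfrak{w}^0_\ell$, the large-$\tau/r$ expansion (via the connection formula \cite{NIST:DLMF}[\S 15.8]) matches the ratio $\alpha_-/\alpha_+$ of Proposition \ref{prop:statsol}. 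Near fixed $r$, this makes $\Psi^\infty$ approach $\tau^{-1-\beta_\ell}\mathfrak{w}_\ell$ up to higher-order terms.

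For $\beta_\ell\in i(0,\infty)$, convergence of the series in $n$ uses $|\zeta^{\sign(\q)}|$ together with $|(1+\tau)^{-\sign(\q)\beta_\ell}|=1$. This is exactly where Corollary \ref{cor:kappanotsmallstatsoln} enters, since it bounds $\sign(\q)\log|\alpha_-/\alpha_+|$. For $\beta_\ell=0$, I will obtain the $x$-integral representation as the $\beta_\ell\to0$ limit, using Proposition \ref{prop:statsol}(iii). I define $\Psi^+$ by applying the Couch--Torrence symmetry of Proposition \ref{prop:mainequations} with $\q\mapsto-\q$ and $T\mapsto K$, and I take $\Psi^+\equiv0$ when $|Q|<M$.

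The inhomogeneity $G_A$ produced by $\Psi$ comes only from the discrepancy between the Reissner--Nordstr\"om operator and the Minkowski (respectively Bertotti--Robinson) operator, divided by $\mathfrak{w}_\ell/\mathfrak{w}_\ell^0$. I will show that it decays like $r^{-3}$ times an extra power $\tau^{-1-\re\beta_0}$, with $TK$-derivatives gaining further powers. This is the input needed in Theorem \ref{thm:iedpaper1}.

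\textbf{Energy decay for $(TK)^k\hpsi$ and the time integrals.} Next, with $\hpsi=\psi-\Psi$, I commute Theorem \ref{thm:iedpaper1} with $\mathbf{Z}^\gamma$ and use the $(\Omega^{-1}r)^p$-hierarchy near $\mathcal{H}^+$ and $\mathcal{I}^+$. Via \eqref{eq:fromTKtoZ}, each application of $TK$ extends the available $p$-range by $2$. I then run the dyadic interpolation argument with $\sigma_k=1-2^{-k}$ to get
\[
\int_{\Sigma_\tau}\mathcal{E}_p[(TK)^{N}\hpsi]\lesssim \tau^{-2N-1+p+\nu}.
\]
The derivative loss must be kept independent of the number of iterations.

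Then I construct $(TK)^{-1}\hpsi|_{\Sigma_0}$. For each $\ell$ with $\ell(\ell+1)<\q^2$, I solve the twisted ODE $\widecheck{\mathcal{L}}_\ell$ by integrating in $r$. The compatibility condition \eqref{eq:fixconstI}, and its $K$-analogue, are precisely what define $\mathfrak{I}_{\ell m}[\psi]$ and $\mathfrak{H}_{\ell m}[\psi]$. Since they are linear functionals that are not identically zero, each vanishes only on a codimension-$1$ set of data. For large $\ell$, I use the coercive $r$-weighted elliptic estimates to sum over $\ell$. Applying the decay estimate to $(TK)^{-1}\hpsi$ gives $\int_{\Sigma_\tau}\mathcal{E}_p[\hpsi]\lesssim\tau^{p-3+\nu}$, and the commuted versions give the $(TK)^{N_1}$ cases.

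\textbf{From energy decay to the pointwise bounds.} To reach \eqref{eq:mainthmpoint1}, I use the fundamental theorem of calculus in $r$ plus Sobolev on $\s^2$, which accounts for the sum over $k\le2$. The improved factor $(r^{-1}\Omega\tau+1)^{-\frac12-\frac12\re\beta_0}$ comes from extending the $p$-range to $-\re\beta_0<p<3$ using the elliptic estimates again, followed by interpolation between the large-$p$ and small-$p$ bounds. For \eqref{eq:mainthmpoint2} with $|Q|<M$, I add the red-shift estimate, which produces the factor $\kappa_+^{-1/2}$ and the term $e^{-c\kappa_+\tau}$.

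\textbf{Parts (ii) and (iii), and the main obstacle.} For part (ii), I evaluate ${}_2\mathbf{F}_1(\cdot;x)$ at $x=0$ (along $\mathcal{I}^+$), where it equals $1/\Gamma(c)$. At fixed $r_0$ I take $x\to-\infty$ and use the connection formula, where the $\Gamma(-(n+1)\beta_\ell)$ factors appear. The $\beta_\ell=0$ case requires a Laplace-type expansion of the $x$-integral in $\log\tau$. For part (iii), $D_T$ annihilates the leading non-oscillating term to the relevant order, and the factor $\q r_+^{-1}$ arises from $D_T=K-i\q r_+^{-1}-i\q r^{-1}$ acting on the other component.

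I expect the main obstacle to be the decay and estimates of $G_A$ together with the time-integral construction in the case $\re\beta_0=0$. There the available $p$-range is only $1<p<1+\epsilon$, so every estimate must be run with no room in the weights. I would first try to prove the required bounds on $G_A$ termwise in $n$, and then sum using the geometric bound from Corollary \ref{cor:kappanotsmallstatsoln}.
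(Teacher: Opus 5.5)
Your outline follows the paper's architecture (Steps 0--3 of \S\ref{sec:sketchpf}), but two steps do not go through as you state them.

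First, the compatibility condition \eqref{eq:vanishingtildeG} is affine in $\mathfrak{I}_{\ell m}[\psi]$. The data of $\psi$ enter through one integral, and the tail enters as $\mathfrak{I}_{\ell m}[\psi]$ times the constant $B_{\infty,\ell m}$ of \eqref{eq:defBinf}. So the condition determines $\mathfrak{I}_{\ell m}[\psi]$ only if $B_{\infty,\ell m}\neq 0$; likewise $\mathfrak{H}_{\ell m}[\psi]$ requires $B_{+,\ell m}\neq 0$. Your proposal never addresses this, and it does not follow from the ODE construction. If $B_{\infty,\ell m}$ vanished, the boundary conditions at $r=r_+$ and $r=\infty$ would be incompatible for generic data, $T^{-1}\hpsi$ would not exist, and the codimension-one statement would be void. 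The paper proves non-vanishing a posteriori, by contradiction, in Corollary \ref{cor:Bconstnonzero}. Suppose $B_{\infty,\ell m}=0$, and take $\psi\equiv 0$ and $\mathfrak{I}_{\ell m}=1$. The compatibility condition then holds trivially, so the time integral of $\hpsi=-\Psi^{\infty}_{\ell m}Y_{\ell m}$ exists. The decay machinery then gives $|\Psi^{\infty}_{\ell m}|\lesssim \tau^{-1+\nu}$ on $\mathcal{I}^+$, contradicting the $\tau^{-\frac{1}{2}-\frac{1}{2}\re\beta_{\ell}}$ behaviour in Proposition \ref{eq:asymptPsiinfty0}. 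The codimension-one claim also needs the numerator functional to be nontrivial; this is easy but should be stated.

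Second, ``applying the decay estimate to $(TK)^{-1}\hpsi$'' fails as stated, precisely because of the weight problem you flag at the end. The ODE construction and the elliptic estimates of Propositions \ref{prop:ellipticesthighfreq} and \ref{prop:boundedangelliptic} require $p<3$ in $\mathcal{E}_{p-2}[\cdot]\lesssim \mathcal{E}_p[T\cdot]$. They therefore control $\mathcal{E}_{p'}[(TK)^{-1}\hpsi]$ on $\Sigma_0$ only for $p'<1$. Theorem \ref{thm:iedpaper1}, Proposition \ref{prop:edecay} and Corollary \ref{prop:edecaygeneralp}, by contrast, need finite initial $\mathcal{E}_p$ with $p>1$. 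Bounds on $G_A$, termwise in $n$ or otherwise, do not close this gap. The paper's remedy is in Proposition \ref{prop:energydecayptieminv}. It cuts off the time-integral data near $\mathcal{H}^+$ and $\mathcal{I}^+$ at scales set by $\tilde R$ and $\tau_*$, and cuts off the inhomogeneity in time at $\tau_*$. Domain of dependence then identifies the result with $(TK)^{-1}\hpsi$ on $\{1+\tilde R^{-1}\leq r\leq \tilde R\}$ for $\tau\leq\tau_*$. Trading $\mathcal{E}_{1+\epsilon}$ for $\mathcal{E}_{1-\epsilon}$ on the cut-off data costs a factor $\tilde R^{2\epsilon}+(1+\tau_*)^{2\epsilon}$, and one then sums dyadically in $r$. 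This is the origin of the $-2\epsilon$ in \eqref{eq:energydecayptieminv}.

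A smaller point on (iii): with $\widehat{A}=-Qr^{-1}d\tau$ one has $D_T=T+i\q r^{-1}=K-i\q r_+^{-1}+i\q r^{-1}$. Hence $D_T=K$ on $\mathcal{H}^+$ and $D_T=T$ on $\mathcal{I}^+$. The formula $D_T=T-i\q r^{-1}\mathbf{1}$ in \S\ref{sec:iedassm} has the same sign slip as yours. With your sign, $D_T$ on $\mathcal{H}^+$ would not remove the phase $e^{-i\q r_+^{-1}\tau}$ of $\Psi^+$, and the horizon asymptotics in (iii) would come out wrong.
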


Theorem \ref{thm:mainthmpoint} is proved in \S \ref{sec:proofthm1} and follows from energy decay estimates, which encompass the following theorem:
\begin{theorem}[Late time energy decay asymptotics]
\label{thm:mainthmenergy}
Let $\psi$ be as in Theorem \ref{thm:mainthmpoint}. If $\kappa_+>\kappa_1$ and $|\q|>q_1$, assume moreover that Condition \ref{cond:quantmodestabp2} holds. Let $N_1\in \N_0$, $N_2\in \N_0$. Let $1< p<2$. For all $\nu>0$ and $\epsilon>0$ suitably small, there exists a constant $C=C(\nu,\epsilon, N_1,N_2,p)>0$ such that for $|\gamma|\leq N_2$:
\begin{equation*}
	\int_{\Sigma_{\tau}}\mathcal{E}_{p-3\epsilon}[\mathbf{Z}^{\gamma}((1
+\tau)TK)^{N_1}\psi]\,d\sigma dr\leq  \begin{cases}CE_{N_1,N_2,\epsilon}[\psi](1+\tau)^{p-2-2\re \beta_{\ell}}\quad (\beta_{\ell}\neq 0),\\
	CE_{N_1,N_2,\epsilon}[\psi]\frac{(1+\tau)^{p-2}}{\log^2(1+\tau)}\quad (\beta_{\ell}\neq 0)
  \end{cases}	
\end{equation*}
Furthermore, for $p\in [0,1-\epsilon]$
\begin{equation*}
	\int_{\Sigma_{\tau}}\mathcal{E}_{p}[\mathbf{Z}^{\gamma}((1
+\tau)TK)^{N_1}\psi]\,d\sigma dr\leq  \begin{cases}CE_{N_1+1,N_2+1,\epsilon}[\psi](1+\tau)^{p-2-2\re \beta_{\ell}}\quad (\beta_{\ell}\neq 0),\\
	CE_{N_1+1,N_2+1,\epsilon}[\psi]\frac{(1+\tau)^{p-2}}{\log^2(1+\tau)}\quad (\beta_{\ell}\neq 0)
  \end{cases}	
\end{equation*}
and
\begin{equation*}
	\int_{\Sigma_{\tau}}\mathcal{E}_{2}[\mathbf{Z}^{\gamma}((1
+\tau)TK)^{N_1}\psi]\,d\sigma dr\leq  \begin{cases}CE_{N_1,N_2+2,\epsilon}[\psi](1+\tau)^{-2\re \beta_{\ell}}\quad (\beta_{\ell}\neq 0),\\
	CE_{N_1,N_2+2,\epsilon}[\psi]\frac{1}{\log^2(1+\tau)}\quad (\beta_{\ell}\neq 0).
  \end{cases}	
\end{equation*}

\end{theorem}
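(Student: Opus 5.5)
The plan is to split $\psi=\Psi+\hpsi$, with $\hpsi:=\psi-\Psi$, and estimate the two pieces separately. The energies of $\Psi$ are computed directly from its explicit form in Theorem \ref{thm:mainthmpoint}. For $\hpsi$ we aim to show that its energies decay strictly faster than the claimed rates.

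\textbf{The $\Psi$ contribution.} We use $\Psi_{\ell}=\frac{\mathfrak{w}_{\ell}}{\mathfrak{w}^0_{\ell}}(\Psi_0^{\infty})_{\ell}$, together with the hypergeometric representation \eqref{eq:defPisinfty0full} and its counterpart near $\mathcal{H}^+$ obtained through the Couch--Torrence substitution $r\mapsto r_++r_+^2(r-r_+)^{-1}$, $\q\mapsto-\q$. Each $\mathbf{Z}$- or $(1+\tau)TK$-derivative preserves the self-similar structure in the variable $\frac{\tau+1}{r}$ (resp.\ $(\tau+1)\rho_+$). This holds because the $TK$ factor removes the oscillation $e^{-i\q r_+^{-1}\tau}$ in $\Psi^+$ and leaves $\Psi^{\infty}$ unaffected to leading order. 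Integrating $\mathcal{E}_p$ over $\Sigma_\tau$ then splits into three regions.
\begin{itemize}
\item In the region $r\lesssim \tau$, the integral is dominated by $|\Psi|^2\sim \tau^{-1-\re\beta_\ell}r^{\re\beta_\ell-1}\cdots$.
\item In the region $r\gtrsim\tau$, the radiation field has size $\tau^{-1-\re\beta_\ell}$, and its $X$-derivative carries an extra factor $r^{-2}\tau$.
\item The region near $\mathcal{H}^+$ is handled in the same way as $r\gtrsim\tau$, via the conformal symmetry.
\end{itemize}
Scaling in $r\sim\tau$ gives $\tau^{p-2-\re\beta_\ell}$-type bounds. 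More carefully, $|\Psi|^2\sim\tau^{-1-\re\beta_\ell}$ at $\mathcal{I}^+$, and $\int^{\tau} r^{p-2}\tau^{-1-\re\beta_\ell}\,dr$ yields the stated power in $\tau$. The logarithmic factor for $\beta_\ell=0$ comes from the $(\log\tau)^{-1}$ in the asymptotics of $\Psi|_{\mathcal{I}^+}$. For $p\le 1-\epsilon$ and for $p=2$, the same computation applies, with the region $r\lesssim\tau$ contributing the decay of the $|\psi|^2r^{-2}$ term.

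\textbf{The $\hpsi$ contribution.} This is where the analytic work lies.
\begin{enumerate}
\item We commute \eqref{eq:iedpaper1} with $\mathbf{Z}^\gamma$ and with the weighted horizon and infinity $(\Omega^{-1}r)^p$ estimates, obtaining a hierarchy in $p\in(1,\max\{1+\re\beta_0,2\})$.
\item Using the identity \eqref{eq:fromTKtoZ}, which trades one $TK$ for two units of $p$, we run the dyadic interpolation argument with $\sigma=1-2^{-k}$. This gives energy decay $\tau^{p-3+\nu}$ for $(TK)\hat f$, for any solution $\hat f$ of the inhomogeneous equation with sufficiently decaying $G_A$.
\item We apply this to $\hat f=(TK)^{-1}\hpsi$, whose initial data are constructed by inverting the twisted operators $\widecheck{\mathcal{L}}_\ell$ mode by mode. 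The compatibility condition \eqref{eq:fixconstI} holds precisely because of the choice of $\mathfrak{I}_{\ell m}$ and $\mathfrak{H}_{\ell m}$. The modes are summed in $\ell$ using the high-$\ell$ elliptic estimates, which also show that the weighted energies $E_{N_1,N_2,\epsilon}$ of $(TK)^{-1}\hpsi$ on $\Sigma_0$ are finite.
\end{enumerate}
This gives $\int_{\Sigma_\tau}\mathcal{E}_p[\hpsi]\lesssim\tau^{p-3+\nu}$, which is faster than the $\Psi$ rate since $2\re\beta_\ell<1$. The factors $(1+\tau)^{N_1}(TK)^{N_1}$ are handled by iterating the argument, with each $TK$ gaining one power of $\tau$. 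The loss $3\epsilon$ in $p$ absorbs the $2\epsilon$ losses in \eqref{eq:iedpaper1}. The range $p\le 1-\epsilon$ follows from an elliptic estimate that converts $\mathcal{E}_{p}[\hpsi]$ into $\mathcal{E}_{p+2}[TK\hpsi]$ plus inhomogeneity terms, at the cost of one derivative. The case $p=2$ follows by interpolating $\mathcal{E}_2$ between $\mathcal{E}_{2-\epsilon}$ and higher-order energies with two more angular and $\mathbf{Z}$ derivatives.

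\textbf{Main obstacle.} The hardest step is controlling the inhomogeneity $G_A=-\square_A(r^{-1}\Psi)$ in the region $\re\beta_0=0$. There the integrated estimate only permits $p\in(1,1+\epsilon)$, so $G_A$ must decay like $r^{-3-}\tau^{-2-}$ and be time-integrable after one application of $TK^{-1}$. I would verify this using the Minkowski/Bertotti--Robinson structure: $\Psi_0^\infty$ solves \eqref{eq:CSFmink} exactly, so $G_A$ only involves $O(r^{-1})$ metric corrections multiplied by self-similar profiles. These errors gain a factor $\tau^{-1}$ over $\Psi$. Near the photon sphere the factor $(1-\upzeta)|\partial_\tau G_A|^2$ must also be handled, which requires one additional $T$-derivative in the initial energy.
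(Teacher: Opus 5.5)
Your plan is the paper's own proof. You split $\psi=\Psi+\hpsi$ and bound the $\Psi$-energies from the explicit self-similar profiles; the paper simply invokes Proposition \ref{prop:Psi0inftygrowthdecay}(iii) and its Couch--Torrence mirror for $\Psi^+$. You bound $\hpsi$ by what is \eqref{eq:energydecayptieminv}, which packages exactly the $(TK)^{-1}$ construction, dyadic interpolation and elliptic summation you list. For $p\le 1-\epsilon$ the paper uses \eqref{eq:energydecayptieminvfaster}. For $p=2$ it uses the equation for $(TK)^{-1}\hpsi$ to trade the top weight for two extra $\mathbf{Z}$-derivatives, then interpolates between \eqref{eq:energydecayptieminvmin1} and \eqref{eq:energydecayptieminvfaster}. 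Your ``interpolate $\mathcal{E}_2$ with higher-order energies'' step only works through that use of the equation.

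\textbf{The gap is your exponent bookkeeping, which you glossed over.}
\begin{itemize}
\item Your own scaling gives $\tau^{p-2-\re\beta_\ell}$, not the printed $\tau^{p-2-2\re\beta_\ell}$. Near $\mathcal{I}^+$ the radiation field has amplitude $\tau^{-\frac12-\frac12\re\beta_\ell}$, not $\tau^{-1-\re\beta_\ell}$. For $1\ll r\ll\tau$ one has $|\Psi|^2\sim r^{1+\re\beta_\ell}\tau^{-2-2\re\beta_\ell}$, since $\Psi\approx\mathfrak{w}_\ell\tau^{-1-\beta_\ell}$. Both regions give $\tau^{p-2-\re\beta_\ell}$.
\item Your justification that $\hpsi$ is subleading, ``$2\re\beta_\ell<1$'', is false. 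Indeed $\beta_0=\sqrt{1-4\q^2}\uparrow 1$ as $\q\to 0$, which is one of the main regimes covered.
\item With the printed exponent neither piece closes. The remainder rate $\tau^{p-3+2\epsilon+\nu}$ is not $O(\tau^{p-2-2\re\beta_0})$ once $\re\beta_0>\frac12$.
\item $\Psi$ itself also violates the printed bound. Proposition \ref{prop:Psi0inftygrowthdecay}(ii) gives $\int r^2|X(\Psi^\infty_0)_{\ell m}|^2\,dr\sim\mathfrak{c}_{\rm int}(1+\tau)^{-\beta_\ell}$, and Theorem \ref{thm:mainpointwinst}(i) records the resulting lower bound \eqref{eq:estenergyPsi1inf}.
\end{itemize}
So the statement must be read with $\re\beta_0$ in place of $2\re\beta_\ell$. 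The paper's proof has the same slip: it asserts \eqref{eq:mainPsiinfendecay} with $2\re\beta_\ell$, citing a proposition that only yields $\re\beta_\ell$. With that reading your comparison is the right one. The remainder contributes $\tau^{p-3+2\epsilon+\nu}$, which is subleading because $\re\beta_0<1$, provided $2\epsilon+\nu<1-\re\beta_0$.

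A smaller inaccuracy is in your ``main obstacle'' paragraph. Proposition \ref{prop:estinhomG} gives only $|r^3G|\lesssim(1+\tau)^{-\frac32-\frac12\re\beta_0}$ near $\mathcal{I}^+$, not $\tau^{-2-}$, consistent with your own ``gain of $\tau^{-1}$ over $\Psi$''. This suffices because of the extra decay in $r\lesssim\tau$ and the $r$-weights on the inhomogeneous terms in Proposition \ref{prop:edecay}, applied to $(TK)^{-1}G$.
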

Theorem \ref{thm:mainthmenergy} is proved in \S \ref{sec:pfthm2}. 

Using the precise pointwise and energy decay estimates above, we also derive asymptotic instabilities in the form of pointwise blow-up of ($r$-weighted) transversal derivatives along $\mathcal{H}^+$ and $\mathcal{H}^+$, and the concentration of non-degenerate energies at $\mathcal{H}^+$ in the $|Q|=M$ case (and, in the case of $r$-weighted energies, also at  $\mathcal{I}^+$ in the $|Q|\leq M$ case).

\begin{theorem}[Instabilities and energy concentration]
\label{thm:mainpointwinst}
Let $\psi$ be as in Theorem \ref{thm:mainthmpoint}. 	If $\kappa_+>\kappa_1$ and $|\q|>q_1$, assume moreover that Condition \ref{cond:quantmodestabp2} holds. Let $\nu,\epsilon,\delta>0$ be arbitrarily small. Let $\mathfrak{c}_k\in  \C$ and $\mathfrak{c}_{\rm int}\in (0,\infty)$ be the constants defined in Proposition \ref{prop:Psi0inftygrowthdecay}.	\begin{enumerate}[label=\emph{(\roman*)}]
	\item 
 Let $r_H>r_+$ and $r_I<\infty$ be arbitrary. Then for $\kappa_+\leq \kappa_0$:
	\begin{align}
	\label{eq:estenergyPsi1}
\limsup_{\tau\to \infty}(1+\tau)^{\re \beta_0}\int_{\Sigma_{\tau}\cap\{r\leq r_H\}}\mathcal{E}_2[\psi]\,d\sigma dr\geq &\: \mathfrak{c}_{\rm int}\sum_{\substack{\ell\in \N_0\\ \re \beta_{\ell}=\re \beta_{0}}}\sum_{\substack{m\in \Z\\ |m|\leq \ell}}|\mathfrak{H}_{\ell m}[\psi]|^2\quad &( |Q|=M,\: \beta_{0}\neq 0),\\
\label{eq:estenergyPsi2}
\limsup_{\tau\to \infty}\log^2(1+\tau)\int_{\Sigma_{\tau}\cap\{r\leq r_H\}}\mathcal{E}_2[\psi]\,d\sigma dr\geq &\: \mathfrak{c}_{\rm int}|\mathfrak{H}_{00}[\psi]|^2\quad &( |Q|=M,\: \beta_{0}= 0),\\
	\label{eq:estenergyPsi1inf}
\limsup_{\tau\to \infty}(1+\tau)^{\re \beta_{0}}\int_{\Sigma_{\tau}\cap\{r\geq r_I\}}\mathcal{E}_2[\psi]\,d\sigma dr\geq &\:   \mathfrak{c}_{\rm int}\sum_{\substack{\ell\in \N_0\\ \re \beta_{\ell}=\re \beta_{0}}}\sum_{\substack{m\in \Z\\ |m|\leq \ell}}|\mathfrak{I}_{\ell m}[\psi]|^2\quad &( \beta_{0}\neq 0),\\
\label{eq:estenergyPsi2inf}
\limsup_{\tau\to \infty}\log^2(1+\tau)\int_{\Sigma_{\tau}\cap\{r\geq r_I\}}\mathcal{E}_2[\psi]\,d\sigma dr\geq &\: \mathfrak{c}_{\rm int}|\mathfrak{I}_{00}[\psi]|^2\quad &( \beta_{0}= 0).
	\end{align}
	Furthermore, $\limsup_{\tau\to \infty}\log^2(1+\tau)(1+\tau)^{\re \beta_0}\int_{\Sigma_{\tau}\cap\{r_H\leq r\leq r_I\}}\mathcal{E}_2[\psi]\,d\sigma dr=0$ for all $\beta_0\in [0,1]\cup i(0,\infty)$.
	
	\item Let $k\in \N_0$ and $\ell(\ell+1)<q^2$. Then there exists a constant $C_k>0$, such that for all 
	\begin{align}\label{eq:lowboundbetanonzeroRNhor}	
		\limsup_{\tau\to \infty}(1+\tau)^{-k-\frac{1}{2}+\frac{\re \beta_{\ell}}{2}}|(r^2X)^{k+1}\psi_{\ell m}|(\tau,r_+)\geq &\:|\mathfrak{c}_k| |\mathfrak{H}_{\ell m}[\psi]| \quad &(|Q|=M,\:\beta_{\ell}\neq 0),\\
		\label{eq:lowboundzerobetaRNhor}	
		\limsup_{\tau\to \infty}\log(1+\tau)(1+\tau)^{-k-\frac{1}{2}}|(r^2X)^{k+1}\psi_{\ell m}|(\tau,r_+)\geq &\:|\mathfrak{c}_k| |\mathfrak{H}_{\ell m}[\psi]|\quad &(|Q|=M,\:\beta_{\ell}=0),\\
		\label{eq:lowboundbetanonzeroRNinf}	
		\limsup_{\tau\to \infty}(1+\tau)^{-k-\frac{1}{2}+\frac{\re \beta_{\ell}}{2}}|(r^2X)^{k+1}\psi_{\ell m}|(\tau,\infty)\geq &\:|\mathfrak{c}_k| |\mathfrak{I}_{\ell m}[\psi]| \quad &(\beta_{\ell}\neq 0),\\
		\label{eq:lowboundzerobetaRNinf}	
		\limsup_{\tau\to \infty}\log(1+\tau)(1+\tau)^{-k-\frac{1}{2}}|(r^2X)^{k+1}\psi_{\ell m}|(\tau,\infty)\geq &\:|\mathfrak{c}_k| |\mathfrak{I}_{\ell m}[\psi]|\quad &(\beta_{\ell}=0).
		\end{align}
	\end{enumerate}
\end{theorem}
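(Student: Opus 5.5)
The plan is to read off all lower bounds in Theorem \ref{thm:mainpointwinst} from the decomposition $\psi=\Psi+\hpsi$ with $\hpsi=\psi-\Psi$. Theorem \ref{thm:mainthmpoint} and Theorem \ref{thm:mainthmenergy} (applied to $\hpsi$ through the time-integral construction) show that $\hpsi$ and its $X$-, $\mathbf{Z}$- and $(TK)$-derivatives decay strictly faster than the claimed growth or decay rates. The instability statements are then statements about the explicit tail function $\Psi=\Psi^{\infty}+\Psi^+$ alone.

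By the Couch--Torrence symmetry of Proposition \ref{prop:mainequations} in the case $\kappa_+=0$, the horizon statements \eqref{eq:estenergyPsi1}, \eqref{eq:estenergyPsi2}, \eqref{eq:lowboundbetanonzeroRNhor} and \eqref{eq:lowboundzerobetaRNhor} follow from the null infinity statements \eqref{eq:estenergyPsi1inf}, \eqref{eq:estenergyPsi2inf}, \eqref{eq:lowboundbetanonzeroRNinf} and \eqref{eq:lowboundzerobetaRNinf}. The substitutions are: replace $\q$ by $-\q$, $T$ by $K$, $\mathfrak{I}_{\ell m}$ by $\mathfrak{H}_{\ell m}$, and $\rho_\infty$ by $\rho_+$. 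Under this map $r^2X$ at $\mathcal{I}^+$ corresponds to $r^2X$ at $\mathcal{H}^+$ up to sign. So I will work at $\mathcal{I}^+$ only.

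\textbf{Pointwise growth at $\mathcal{I}^+$.} First I reduce to Minkowski. Since $\Psi^{\infty}_{\ell}=\frac{\mathfrak{w}_\ell}{\mathfrak{w}^0_\ell}(\Psi^\infty_0)_\ell$ and $\frac{\mathfrak{w}_\ell}{\mathfrak{w}^0_\ell}=1+O_\infty(r^{-1})$, the quantity $(r^2X)^{k+1}$ at $\rho_\infty=0$ reduces to $(-\partial_{\rho_\infty})^{k+1}$ acting on $(\Psi^\infty_0)_{\ell m}$, up to lower-order terms. Those lower-order terms carry fewer $\rho_\infty$-derivatives and hence less $\tau$-growth.

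Next I differentiate \eqref{eq:defPisinfty0full} in $\rho_\infty=r^{-1}$, using $\partial_z{}_2\mathbf{F}_1(a,b,c;z)=ab\,{}_2\mathbf{F}_1(a+1,b+1,c+1;z)$ with $z=-\frac{(\tau+1)\rho_\infty}{2}$. At $\rho_\infty=0$ this gives
\begin{equation*}
(-\partial_{\rho_\infty})^{k+1}(\Psi_0^\infty)_{\ell m}\big|_{\mathcal{I}^+}=\mathfrak{I}_{\ell m}[\psi](\tfrac{\tau+1}{2})^{k+1}\beta_\ell\sum_n \zeta^n(1+\tau)^{-\frac12-(n+\frac12)\beta_\ell+i\q}\frac{(a)_{k+1}(b)_{k+1}}{\Gamma(\frac12+i\q-(n+\frac12)\beta_\ell+k+1)},
\end{equation*}
where $a,b=\frac12\pm\frac{\beta_\ell}{2}+i\q$. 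For $\beta_\ell\in(0,1)$ the $n=0$ term dominates and has exact size $\tau^{k+\frac12-\frac{\beta_\ell}{2}}$ with a nonzero coefficient $\mathfrak{c}_k$. For $\beta_\ell=0$ I use the $x$-integral representation and a Laplace-type expansion in $\log\tau$ to get the extra factor $(\log\tau)^{-1}$.

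The error terms must be shown to be smaller. For these I commute $(r^2X)^{k+1}$ with the conformal equation \eqref{eq:maineqradfieldconf} as a transport equation along $\mathcal{I}^+$. This expresses $\partial_{\rho_\infty}^{k+1}\hpsi|_{\mathcal{I}^+}$ in terms of $T$-integrals of $\hpsi$, $T\hpsi$ and angular derivatives. Combined with the decay $|\mathbf{Z}^\gamma(TK)^{N_1}\hpsi|\lesssim\tau^{-1+\nu-N_1}$ from \eqref{eq:mainthmpoint1}, this gives $|(r^2X)^{k+1}\hpsi|_{\mathcal{I}^+}\lesssim\tau^{k+\nu}$, which is strictly slower than the main term since $\re\beta_\ell<1$. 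The $\Psi^+$ contribution at $\mathcal{I}^+$ is similarly negligible.

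\textbf{Energy concentration.} For \eqref{eq:estenergyPsi1inf} and \eqref{eq:estenergyPsi2inf} I use orthogonality of the $Y_{\ell m}$. It then suffices to bound $\int_{r\geq r_I}r^2|X\Psi_{\ell m}|^2\,dr$ from below. Since $\Psi_0^\infty$ is self-similar in $\frac{\tau+1}{r}$, the change of variables $s=\frac{\tau+1}{r}$ gives
\begin{equation*}
\int_{r\geq r_I}r^2|X\Psi_{\ell m}|^2\,dr\sim(1+\tau)^{-\re\beta_\ell}\int_0^{(\tau+1)/r_I}|G(s,\log\tau)|^2\,ds,
\end{equation*}
where $G$ is an explicit hypergeometric profile. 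The lower bound then reduces to the profile integral being bounded below for $\tau$ along some sequence; this defines $\mathfrak{c}_{\rm int}$. The region $r_H\leq r\leq r_I$ is handled by the fact that there $\Psi$ decays like $\tau^{-1-\re\beta_\ell}$ (from the $r_0$-asymptotics in Theorem \ref{thm:mainthmpoint}), together with the $\mathcal{E}_2$ bound of Theorem \ref{thm:mainthmenergy} applied to $\hpsi$.

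\textbf{Main obstacle.} The hardest case is $\beta_\ell\in i(0,\infty)$. There, both the pointwise main term and the profile $G$ are infinite sums of terms $\zeta^{n}\tau^{-n\beta_\ell}$ with the same modulus growth. These are oscillatory in $\log\tau$ and could cancel at particular times, which is why the statement uses $\limsup$. My plan is to regard the sum as a function $F(\theta)$ of $\theta=\im\beta_\ell\log\tau$, which is periodic up to the factor $|\zeta|^n$. The series converges geometrically because $|\zeta|<1$ in the right sign convention; this should follow from Corollary \ref{cor:kappanotsmallstatsoln} combined with the Gamma-ratio formula for $\zeta$ and $|\Gamma(-\beta_\ell+1)/\Gamma(\beta_\ell+1)|=1$. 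I would then show that $F\not\equiv0$ by taking its Fourier coefficient at frequency zero, which is the $n=0$ term and is nonzero because $1/\Gamma$ has no zeros off the nonpositive integers. A nonzero continuous quasi-periodic function has positive $\limsup$, which yields $\mathfrak{c}_k$ and $\mathfrak{c}_{\rm int}$. Checking $|\zeta|<1$ uniformly in $\kappa_+$ is where I expect the real work to be.
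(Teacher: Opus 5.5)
Your route is the paper's. You split $\psi=\Psi+\hpsi$ and control $(r^2X)^{k+1}\hpsi$ at $\mathcal{I}^+$ by integrating in $\tau$ the transport equation obtained from \eqref{eq:maineqradfieldconf} at $\rho_\infty=0$. You control $\mathcal{E}_2[\hpsi]$ via \eqref{eq:nindegendecaypsihat} and read off the leading terms from the hypergeometric form of $\Psi^\infty_0$ (Proposition \ref{prop:Psi0inftygrowthdecay}). You then transfer to $\mathcal{H}^+$ by the Couch--Torrence symmetry. Two points in your treatment of $\beta_\ell\in i(0,\infty)$ should be corrected.

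First, $|\zeta|<1$ would not make the $n$-series converge. At $\varrho=0$ the $n$-th term carries $1/\Gamma(c_n+k+1)$ with $c_n=\frac12+i\q-(n+\frac12)\sign(\q)\beta_\ell$. Since $|\Gamma(\frac12+iy)|^{-1}=\sqrt{\cosh(\pi y)/\pi}$, these factors grow like $e^{\frac{\pi}{2}n\im\beta_\ell}$, up to polynomial factors. What is needed is $|\zeta|\,e^{\frac{\pi}{2}\im\beta_\ell}<1$ with a margin, which is what \eqref{eq:condzeta} encodes. This is verified in Corollary \ref{cor:kappanotsmallstatsoln} and is already needed in Proposition \ref{prop:imaginarybetatail} just to define $\Psi^\infty_0$. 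So here it is an input, not the ``real work''. Nor is uniformity in $\kappa_+$ needed: the horizon statements are at $\kappa_+=0$, and the $\mathcal{I}^+$ constants may depend on $\kappa_+$.

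Second, the zero Fourier mode only yields $\limsup\geq|c_0|\,|\mathfrak{I}_{\ell m}[\psi]|$, which is weaker than the stated bound. In the imaginary case, $\mathfrak{c}_k$ and $\mathfrak{c}_{\rm int}$ are $\ell^2$-type sums over all $n$, not just the $n=0$ term. The paper obtains them by averaging $|\cdot|^2$ in $\sigma=\im\beta_\ell\log(1+\tau)$. The cross terms $n\neq m$ average out by \eqref{eq:expvaluePsi0infest}, and then $\limsup\geq$ average by \eqref{eq:limsupid}. This is Parseval, and your periodic-function framework gives it at once via $\max|P|^2\geq\frac{1}{2\pi}\int_0^{2\pi}|P|^2$.

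The same averaging, done in $\log(1+\tau)$ simultaneously for all $\ell$, is also what justifies the sum over $(\ell,m)$ in (i). Lower-bounding each mode's $\limsup$ separately along its own sequence does not suffice. Modes with different $\im\beta_\ell$ need not peak at the same times.
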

Theorem \ref{thm:mainpointwinst} is proved in \S \ref{sec:pfthm3}.

\section{Higher-order $(\Omega^{-1} r)^{p}$-weighted energy estimates}
\label{sec:horp}
In this section, we show that we can commute \eqref{eq:iedpaper1} with the differential operators $\{(r-r_+)X,T,\snabla_{\s^2}\}$. Since commutation with $T^k$ follows immediately from the Killing property of $T$ and commutation with $\snabla_{\s^2}$ is an immediate consequence of the fact that we can commute with angular momentum Killing vector fields, since the background spacetime is spherically symmetric, it remains to show that we can commute with $((r-r_+)X)^k$.

By \eqref{eq:maineqradfieldconf}, with $\h\equiv 0$, we obtain:
\begin{equation}
\label{eq:maineqradfieldhzero}
r^3G_{\widehat{A}}=\partial_{\rho_{\infty}}(\Omega^2r^{-2}\partial_{\rho_{\infty}}\psi)+\slashed{\Delta}_{\s^2}\psi+2\partial_{\rho_{\infty}}T\psi+2i\q\rho_{\infty} \partial_{\rho_{\infty}}\psi-\left[r\frac{d\Omega^2}{dr} -i \q \right]\psi.
\end{equation}
By \eqref{eq:maineqradfieldconfhor}, with $\widetilde{\h}\equiv 0$, we obtain:
\begin{equation}
	\label{eq:maineqradfieldhtildezero}
r^3G_{\widehat{A}}=\partial_{\rho_+}(\Omega^2r^{-2}\partial_{\rho_+}\psi)+\slashed{\Delta}_{\s^2}\psi+2\partial_{\rho_+}K\psi- 2i\q\rho_+ \partial_{\rho_+}\psi\\
-\left[r\frac{d\Omega^2}{dr}+i \q \right]\psi.
\end{equation}
\begin{lemma}
	Let $N\in \N_0$.
		\begin{enumerate}[label=\emph{(\roman*)}]
		\item Let $\psi$ be a solution to \eqref{eq:maineqradfieldhzero}. Denote $\psi^{(N)}=(\rho_{\infty}\partial_{\rho_{\infty}})^N\psi$. Then:
	\begin{multline}
	\label{eq:maineqradfieldhoinf}
	\rho_{\infty}^{-1}(\rho_{\infty}\partial_{\rho_{\infty}})^N(\rho_{\infty} r^3G_{\widehat{A}})=\partial_{\rho_{\infty}}(\Omega^2r^{-2}\partial_{\rho_{\infty}}\psi^{(N)})+\slashed{\Delta}_{\s^2}\psi^{(N)}+2\partial_{\rho_{\infty}}T\psi^{(N)}\\
	+2i\q\rho_{\infty} \partial_{\rho_{\infty}}\psi^{(N)}+N(1+O_{\infty}(\rho_{\infty}))\rho_{\infty}\partial_{\rho_{\infty}}\psi^{(N)}+N \slashed{\Delta}_{\s^2}\psi^{(N-1)}\\
-\left[r\frac{d\Omega^2}{dr}-i \q(1+2N) -\frac{1}{2}N(N+1)+O_{\infty}(\rho_{\infty})\right]\psi^{(N)}+N\sum_{n=0}^{N-1} O_{\infty}(\rho_{\infty}^0)\psi^{(n)}\\
+N(N-1)\sum_{n=0}^{N-2} O_{\infty}(\rho_{\infty}^0)\slashed{\Delta}_{\s^2}\psi^{(n)}
	\end{multline}

	\item Let $\psi$ be a solution to \eqref{eq:maineqradfieldhtildezero}. Denote $\psi^{(N)}=(\rho_+\partial_{\rho_+})^N\psi$. Then:
	\begin{multline}
	\label{eq:maineqradfieldhohor}
	\rho_+^{-1}(\rho_+\partial_{\rho_+})^N(\rho_+ r^3G_{\widehat{A}})=\partial_{\rho_+}(\Omega^2r^{-2}\partial_{\rho_+}\psi^{(N)})+\slashed{\Delta}_{\s^2}\psi^{(N)}+2\partial_{\rho_+}K\psi^{(N)}\\
	- 2i\q\rho_+ \partial_{\rho_+}\psi^{(N)}+N(1-6\kappa_+ r_+)(1+O_{\infty}(\rho_+))\rho_+\partial_{\rho_+}\psi^{(N)}+N \slashed{\Delta}_{\s^2}\psi^{(N-1)}\\
-\left[r\frac{d\Omega^2}{dr}+i \q(1+2N) -\frac{1}{2}N(N+1)(1-6\kappa_+ r_+)+O_{\infty}(\rho_+)\right]\psi^{(N)}+N\sum_{n=0}^{N-1} O_{\infty}(\rho_+^0)\psi^{(n)}\\
+N(N-1)\sum_{n=0}^{N-2} O_{\infty}(\rho_+^0)\slashed{\Delta}_{\s^2}\psi^{(n)}.
	\end{multline}

	\end{enumerate}
\end{lemma}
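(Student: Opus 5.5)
The plan is an induction on $N$, applying the operator $\rho_{\infty}^{-1}(\rho_{\infty}\partial_{\rho_{\infty}})^N\rho_{\infty}$ to both sides of \eqref{eq:maineqradfieldhzero}. Part (ii) follows from part (i) verbatim, using the symmetry in Proposition \ref{prop:mainequations}: swap $\rho_{\infty}\leftrightarrow \rho_+$, $\q\to-\q$ and $T\to K$. The only extra input is that, by Lemma \ref{lm:metricest}, the quadratic coefficient of $r^{-2}\Omega^2$ in $\rho_+$ is $(1-6\kappa_+r_+)$ rather than $1$; this explains the factor $(1-6\kappa_+r_+)$ multiplying the $N$-dependent terms. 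The $2\kappa_+\rho_+$ linear term only contributes to the $O_\infty(\rho_+^0)\psi^{(n)}$ and $O_\infty(\rho_+)$ error terms after the manipulations below, which I would verify.

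The key commutation identity is obtained by writing $\mathcal{D}:=\rho_{\infty}\partial_{\rho_{\infty}}$ and observing that $\rho_{\infty}\partial_{\rho_{\infty}}(f)=\rho_\infty^{-1}\mathcal{D}(\rho_\infty f)-f$, so that
\begin{equation*}
\rho_{\infty}^{-1}\mathcal{D}\,\rho_{\infty}=\mathcal{D}+1 .
\end{equation*}
The left-hand side operator is therefore $(\mathcal{D}+1)^N$. One then computes commutators of $(\mathcal{D}+1)$ with each term of \eqref{eq:maineqradfieldhzero}:
\begin{itemize}
\item For the transport term, $(\mathcal{D}+1)\partial_{\rho_\infty}T\psi=\partial_{\rho_\infty}T\mathcal{D}\psi$, since $\partial_{\rho}\mathcal{D}=(\mathcal{D}+1)\partial_\rho$. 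This term commutes exactly.
\item For the angular term, $(\mathcal{D}+1)\slashed{\Delta}\psi=\slashed{\Delta}\mathcal{D}\psi+\slashed{\Delta}\psi$. This produces $N\slashed{\Delta}\psi^{(N-1)}$ plus lower binomial terms.
\item The term $2i\q\rho_\infty\partial_{\rho_\infty}\psi=2i\q\mathcal{D}\psi$ commutes, up to generating $2i\q\,\mathcal{D}\psi^{(N-1)}=2i\q\psi^{(N)}$ contributions. Summed, these yield the $i\q\cdot 2N$ shift in the zeroth-order coefficient.
\item The principal term is handled by writing $\Omega^2r^{-2}=\rho_\infty^2(1+O_\infty(\rho_\infty))$, so that $\partial_\rho(\rho^2 a\,\partial_\rho\psi)=\rho^{-1}... $ becomes $\mathcal{D}^2\psi+\mathcal{D}\psi$ times $(1+O(\rho))$, up to errors $\rho\,\mathcal{D}\psi\cdot O(1)$. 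One checks that $(\mathcal{D}+1)(\mathcal{D}^2+\mathcal{D})=(\mathcal{D}^2+\mathcal{D})\mathcal{D}+(\mathcal{D}^2+\mathcal{D})$. Iterating $N$ times and rewriting $\mathcal{D}^2\psi^{(N)}+\mathcal{D}\psi^{(N)}$ back as the principal term acting on $\psi^{(N)}$ leaves an extra $N\mathcal{D}\psi^{(N)}$. In addition, the binomial accumulation of the constant shifts gives $\tfrac12 N(N+1)\psi^{(N)}$, matching the stated coefficients.
\end{itemize}

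The induction step is then: assume the formula for $N$, apply $(\mathcal{D}+1)$, and use the above commutators. All terms in which $\mathcal{D}$ hits a coefficient $O_\infty(\rho_\infty^k)$ stay $O_\infty(\rho_\infty^k)$, since $\mathcal{D}$ preserves these classes. The term $\mathcal{D}$ applied to $r\frac{d\Omega^2}{dr}=O_\infty(\rho_\infty)$ is absorbed into $O_\infty(\rho_\infty)$. Lower-order products are collected into the sums $N\sum O(1)\psi^{(n)}$ and $N(N-1)\sum O(1)\slashed{\Delta}\psi^{(n)}$.

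The main obstacle is purely bookkeeping: correctly tracking the exact constants $N$, $1+2N$ and $\tfrac12N(N+1)$, as opposed to those absorbed into error classes, in the principal-term commutator. I would check this first at $N=1$ directly and then verify that the recursion for the coefficients, $a_{N+1}=a_N+1$ and $b_{N+1}=b_N+a_N+1$, reproduces $a_N=N$ and $b_N=\tfrac12N(N+1)$.
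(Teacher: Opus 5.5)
Your proposal is correct and follows essentially the same route as the paper: induction on $N$, commuting $\rho^{-1}(\rho\partial_\rho)\rho=\rho\partial_\rho+1$ through each term of the equation and using Lemma \ref{lm:metricest} for the principal part. The coefficient recursion $a_{N+1}=a_N+c$, $b_{N+1}=b_N+a_N+c$ (with $c=1$ at $\mathcal{I}^+$ and $c=1-6\kappa_+r_+$ at $\mathcal{H}^+$) then gives $a_N=cN$ and $b_N=\tfrac12 cN(N+1)$.

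The only difference is direction. The paper proves (ii) directly and gets (i) by $\rho_+\to\rho_\infty$, $\q\to-\q$, which avoids relying on the symmetry of Proposition \ref{prop:mainequations}; that symmetry holds only for $\kappa_+=0$. In your direction, the check you defer is immediate: write $\Omega^2r^{-2}=\rho_+b(\rho_+)$ with $b=2\kappa_++(1-6\kappa_+r_+)\rho_++O_\infty(\rho_+^2)$. The commutator then involves only $\rho_+b'$, so the $2\kappa_+$ term drops out entirely.
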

\begin{proof}
We will prove (ii). The proof of (i) proceeds entirely analogously to (ii), with $\rho_+$ replaced by $\rho_{\infty}$ and $\q$ replaced by $-\q$. 

The proof of (ii) proceeds by induction. The $N=0$ case follows directly from \eqref{eq:maineqradfieldhtildezero}. Suppose that \eqref{eq:maineqradfieldhohor} holds for $N\in \N_0$. Then:
\begin{multline*}
	(\rho_+\partial_{\rho_+})^{N+1}(\rho_+ r^3G_{\widehat{A}})=\rho_+\partial_{\rho_+}(\rho_+\partial_{\rho_+}(\Omega^2r^{-2}\rho_+^{-1}\psi^{(N+1)}))+\rho_+\partial_{\rho_+}(\rho_+\slashed{\Delta}_{\s^2}\psi^{(N)})+2\rho_+\partial_{\rho_+}K\psi^{(N+1)}\\
	- 2i\q\partial_{\rho_+}(\rho_+ \psi^{(N+1)})+N(1-6\kappa_+ r_+)(1+O_{\infty}(\rho_+))\rho_+\partial_{\rho_+}(\rho_+ \psi^{(N+1)})+N\rho_+\slashed{\Delta}_{\s^2}\psi^{(N)}\\
-\left[r\frac{d\Omega^2}{dr}+i \q(1+2N) -\frac{1}{2}N(N+1)(1-6\kappa_+ r_+)+O_{\infty}(\rho_+)\right]\rho_+\psi^{(N+1)}+(N+1)\rho_+\sum_{n=0}^{N} O_{\infty}(\rho_+^0)\psi^{(n)}\\
+N(N+1)\rho_+\sum_{n=0}^{N-1} O_{\infty}(\rho_+^0)\slashed{\Delta}_{\s^2}\psi^{(n)}.
	\end{multline*}
	
We can expand further by applying Lemma \ref{lm:metricest}:
\begin{multline*}
	\rho_+\partial_{\rho_+}(\rho_+\partial_{\rho_+}(\Omega^2r^{-2}\rho_+^{-1}\psi^{(N+1)}))=\rho_+\partial_{\rho_+}(\Omega^2r^{-2} \partial_{\rho_+}\psi^{(N+1)})+\rho_+\partial_{\rho_+}(\rho_+\partial_{\rho_+}(\Omega^2r^{-2}\rho_+^{-1})\psi^{(N+1)})\\
	=\rho_+\partial_{\rho_+}(\Omega^2r^{-2} \partial_{\rho_+}\psi^{(N+1)})+\rho_+\partial_{\rho_+}(\rho_+(1-6\kappa_++O_{\infty}(\rho_+))\psi^{(N+1)})\\
	=\rho_+\partial_{\rho_+}(\Omega^2r^{-2} \partial_{\rho_+}\psi^{(N+1)})+\rho_+^2(1-6\kappa_+ r_+)(1+O_{\infty}(\rho_+))\partial_{\rho_+}\psi^{(N+1)}+\rho_+(1-6\kappa_+ r_+)(1+O_{\infty}(\rho_+))\psi^{(N+1)}.
\end{multline*}
Combining the above results in \eqref{eq:maineqradfieldhohor} with $N$ replaced by $N+1$.
\end{proof}

\begin{proposition}
\label{prop:horpest}
Let $N\in \N_0$ and let $\psi$ be a solution to \eqref{eq:rCSF}. Let $1<r_H<r_I<\infty$, $q\in \R$, $\epsilon>0$, $K\in \N_0$ and $1< p<\min\{1+\re \beta_{\ell},2\}+\epsilon$. Then there exist a constant $C=C(\h,\epsilon,p, r_H,r_I,N)>0$, such that for suitably small $\epsilon>0$:
\begin{multline}
\label{eq:horpest}
\sum_{|\gamma|\leq N}\int_{\Sigma_{\tau_2}} \mathcal{E}_{p-2\epsilon}[\mathbf{Z}^{\gamma}\psi_{\geq \ell}]\,d\sigma dr+\int_{\tau_1}^{\tau_2}\int_{\Sigma_{\tau}\cap \{r_H\leq r\leq r_I\}}  \upzeta \mathcal{E}_p[\mathbf{Z}^{\gamma}\psi_{\geq \ell}]+|\mathbf{Z}^{\gamma}\psi_{\geq \ell}|^2\,d\sigma dr d\tau\\
+\int_{\tau_1}^{\tau_2}\int_{\Sigma_{\tau}\setminus \{r_H\leq r\leq r_I\}} (\rho_++\kappa_+)r^{-1}(r^{-1}\rho_+)^{2\epsilon}\mathcal{E}_{p}[\mathbf{Z}^{\gamma}\psi_{\geq \ell}]+(\rho_++\kappa_+)r^{-3}(r^{-1}\Omega)^{\epsilon-2}|D_T\mathbf{Z}^{\gamma}\psi_{\geq \ell}|^2\,d\sigma dr d\tau\\
\leq C\sum_{|\gamma|\leq N}\sum_{n\leq N}\int_{\Sigma_{\tau_1}} \mathcal{E}_{p-2\epsilon}[\mathbf{Z}^{\gamma}\psi_{\geq \ell}]\,d\sigma dr+C\sum_{n_1+n_2=n}\int_{\Sigma_{\tau_1}} \mathcal{E}_{p}[\snabla_{\s^2}^{n_1}T^{n_2}\psi_{\geq \ell}]\,d\sigma dr\\
+C\int_{\tau_1}^{\tau_2}\int_{\Sigma_{\tau}}(r^{-1}\Omega)^{-p+2\epsilon}r^{-1}\rho_+r^{-2}|\mathbf{Z}^{\gamma}( r^3G_{\widehat{A}})_{\geq \ell} |^2\,d\sigma dr d\tau\\
+C\sum_{n_1+n_2=n}\int_{\tau_1}^{\tau_2}\int_{\Sigma_{\tau}} \max\{(r^{-1}\Omega)^{-p}\rho_+^{1-2\epsilon}r^{-1+2\epsilon},1\}r^{-2}|\snabla_{\s^2}^{n_1}T^{n_2}(r^3G_{\widehat{A}})_{\geq \ell}|^2+\Omega^2r^{-2}(1-\upzeta) |\snabla_{\s^2}^{n_1}T^{n_2+1}( r^3G_{\widehat{A}})_{\geq \ell} |^2\,d\sigma dr d\tau.
\end{multline}
\end{proposition}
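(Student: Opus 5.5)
The plan is to argue by induction on the number $N_X:=\gamma_2$ of $(r-r_+)X$-derivatives in $\mathbf{Z}^{\gamma}$. Commutation with $T$ and $\snabla_{\s^2}$ is free, since these commute with the operator in \eqref{eq:rCSF} when $A=\widehat{A}$. So the base case $\gamma_2=0$ is exactly Theorem \ref{thm:iedpaper1} applied to $\snabla_{\s^2}^{n_1}T^{n_2}\psi_{\geq\ell}$, which produces the second and last lines on the right-hand side of \eqref{eq:horpest}.

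For $\gamma_2\geq 1$ we localise with smooth cutoffs into a region near $\mathcal{H}^+$ ($\rho_+$ small), a region near $\mathcal{I}^+$ ($\rho_\infty$ small), and a compact region $r_H\leq r\leq r_I$. In the compact region, $(r-r_+)X$ is a regular vector field, and $X$ can be expressed via $Y_*$ and $T$. The integrated terms with $D_{Y_*}^{k_2+1}$ in \eqref{eq:iedpaper1} already control all radial derivatives there, and the equation \eqref{eq:maineqradfield} lets us express second $X$-derivatives in terms of $T$, angular and lower-order terms plus $G_{\widehat{A}}$. Away from the photon sphere this is standard elliptic-type control. Near $r_\sharp$, where $\upzeta=0$, we lose a $T$-derivative, which is accounted for by the $(1-\upzeta)$ terms and the $T^{n_2+1}$ on the right-hand side.

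Near $\mathcal{I}^+$ we choose the foliation with $\h\equiv0$, so that \eqref{eq:maineqradfieldhzero} holds. Since $(r-r_+)X=-(1+O(\rho_\infty))\rho_\infty\partial_{\rho_\infty}$, it suffices to control $\psi^{(N)}=(\rho_\infty\partial_{\rho_\infty})^N\psi$. By \eqref{eq:maineqradfieldhoinf}, $\psi^{(N)}$ satisfies an equation of the same principal form as \eqref{eq:maineqradfieldhzero}, with the following differences:
\begin{enumerate}[label=(\alph*)]
\item an additional first-order term $N(1+O(\rho_\infty))\rho_\infty\partial_{\rho_\infty}\psi^{(N)}$;
\item a modified zeroth-order coefficient containing $-i\q(1+2N)-\frac12N(N+1)$;
\item lower-order source terms involving $\psi^{(n)}$ and $\slashed{\Delta}\psi^{(n)}$ with $n<N$.
\end{enumerate}
We then repeat the $(\Omega^{-1}r)^p$-weighted multiplier argument underlying Theorem \ref{thm:iedpaper1} (the $r^p$ method with multiplier $\rho_\infty^{-p}\overline{\partial_{\rho_\infty}\psi^{(N)}}$) for $\psi^{(N)}$, treating the new terms as follows. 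The term in (a) has the favourable sign: integrating by parts, it contributes $N\rho_\infty^{1-p}|\partial_{\rho_\infty}\psi^{(N)}|^2$ with a good sign in the bulk, since $p<2$. The real part $-\tfrac12N(N+1)$ of the coefficient in (b) is likewise favourable after integration by parts, and the imaginary part cancels in the real part of the current. The sources in (c) are estimated via Cauchy--Schwarz by the induction hypothesis, giving the weighted bulk for lower $n$, and the $G$-terms give the third line of \eqref{eq:horpest}. The horizon region is handled identically using \eqref{eq:maineqradfieldhohor}, with $\rho_+$, $K$ and $-\q$ in place of $\rho_\infty$, $T$ and $\q$. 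The additional factor $(1-6\kappa_+r_+)$ is close to $1$ in the near-extremal regime and positive in general for the red-shift-dominated part. For $\kappa_+>0$, the $\kappa_+$-weighted bulk terms are recovered as in Theorem \ref{thm:iedpaper1}. The boundary terms at the cutoff interfaces live in $r_H\leq r\leq r_I$ and are absorbed by the compact-region estimate.

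The main obstacle is preserving the sharp range $1<p<\min\{1+\re\beta_\ell,2\}+\epsilon$. For $\psi$ itself this range came from a Hardy-type coercivity tied to $\beta_\ell$, which is delicate because the $\q$-terms destroy naive positivity. For $\psi^{(N)}$ we must check that the shifted coefficients do not worsen this Hardy constant. I would first verify that the effective indicial roots for $\psi^{(N)}$ are those of $\psi$ shifted in the favourable direction by $N$: $\rho_\infty\partial_{\rho_\infty}$ maps $\rho_\infty^{s}$ to $s\rho_\infty^{s}$ and so preserves the leading behaviour, which implies the quadratic form stays coercive for the same $p$. Failing that, I would use the angular Poincar\'e inequality on $\psi_{\geq\ell}$ together with the good $N$-dependent terms to absorb the bad ones. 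The $\epsilon$-loss, $p$ versus $p-2\epsilon$, is inherited directly from Theorem \ref{thm:iedpaper1}.
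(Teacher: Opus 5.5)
Your architecture matches the paper's: induction on the number of $(r-r_+)X$ derivatives with base case \eqref{eq:iedpaper1}, $\h\equiv 0$ near $\mathcal{I}^+$ and $\widetilde{\h}\equiv 0$ near $\mathcal{H}^+$, a weighted radial multiplier applied to \eqref{eq:maineqradfieldhoinf}--\eqref{eq:maineqradfieldhohor}, and cutoff terms absorbed by \eqref{eq:iedpaper1}. The gap is in your treatment of the zeroth-order terms (b), and hence in what you identify as the main obstacle.

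Use the correct multiplier $\rho_\infty^{2-p}\overline{\partial_{\rho_\infty}\psi^{(N)}}$, which is $r^p\partial_r$ up to sign; your $\rho_\infty^{-p}$ would produce the $r^{p+2}$-weighted energy. With it, the coefficient $i\q(1+2N)$ contributes $-\q(1+2N)\rho_\infty^{2-p}\im\big(\psi^{(N)}\overline{\partial_{\rho_\infty}\psi^{(N)}}\big)$. This term neither vanishes nor is a total derivative; only the first-order term $2i\q\rho_\infty\partial_{\rho_\infty}\psi^{(N)}$ drops out of the real part. The real part enters as $+\frac12N(N+1)\psi^{(N)}$, and integrating by parts leaves the bulk term $-\frac14N(N+1)(2-p)\rho_\infty^{1-p}|\psi^{(N)}|^2$. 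This has the \emph{wrong} sign precisely because $p<2$.

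A Hardy inequality (sharp constant $4/(2-p)^2$) cannot absorb this into the good bulk $(\frac p2+N)\rho_\infty^{3-p}|\partial_{\rho_\infty}\psi^{(N)}|^2$. That would need $N(N+1)<(2-p)(\frac p2+N)$, which is false for every $N\geq 1$ and $p\in(1,2)$. Your fallbacks do not close this either. The indicial-root heuristic says nothing about coercivity of the quadratic form, and the Poincar\'e inequality is vacuous for $\ell=0$.

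The missing observation is that no coercivity at level $N$ is needed for zeroth-order terms at all, because $\psi^{(N)}=\rho\,\partial_\rho\psi^{(N-1)}$ gains a full factor of $\rho$. Near $\mathcal{I}^+$ this gives $\rho_\infty^{1-p}|\psi^{(N)}|^2=\rho_\infty^{3-p}|\partial_{\rho_\infty}\psi^{(N-1)}|^2$. Near $\mathcal{H}^+$, using $\rho_+^2\lesssim r^{-2}\Omega^2$,
\begin{equation*}
(r^{-1}\Omega)^{-p}(\rho_++\kappa_+)|\psi^{(N)}|^2\lesssim (r^{-1}\Omega)^{2-p}(\rho_++\kappa_+)|\partial_{\rho_+}\psi^{(N-1)}|^2 .
\end{equation*}
In both regions the right-hand side is exactly the spacetime bulk already controlled at level $N-1$. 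So after Cauchy--Schwarz, every zeroth-order term is bounded by the induction hypothesis, together with your sources (c). This covers the $\q$-dependent terms, the $N(N+1)$ term and the $r\frac{d\Omega^2}{dr}$ term.

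The only positivity actually used is that of the principal term, the $N\rho\partial_\rho$ term (a), and the angular term (which is where $p<2$ enters). This is how the paper proceeds in \eqref{eq:keyhorpest}. It also explains why the sharp range $1<p<\min\{1+\re\beta_\ell,2\}+\epsilon$ is simply inherited from \eqref{eq:iedpaper1}, with no new Hardy-type inequality for $\psi^{(N)}$ required.
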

\begin{proof}
Without loss of generality, we may assume that $\h\equiv 0$ in $\{r\geq r_I\}$ and $\widetilde{\h}\equiv 0$ in $\{r\leq r_H\}$. Indeed, we can simply apply the local energy estimates in \cite{gaj26a}[Theorem 2.6] afterwards to obtain \eqref{eq:horpest} in the more general setting.

In view of the similarity between \eqref{eq:maineqradfieldhohor} and \eqref{eq:maineqradfieldhoinf}, we can also restrict without loss of generality to  $\{r\leq r_H\}$. The estimates in $\{r\geq r_I\}$ proceed analogously to the estimates in $\{r\leq r_H\}$ with $\kappa_+=0$.

We multiply both sides of \eqref{eq:maineqradfieldhohor} with $\chi_{r_H}(r^{-1}\Omega)^{2-p}\partial_{\rho_+}\overline{\psi}^{(N)}$ and take the real part to obtain:
\begin{multline*}
	\chi_{r_H}(r^{-1}\Omega)^{2-p}\re\left[\rho_+^{-1}(\rho_+\partial_{\rho_+})^N(\rho_+ r^3G_{\widehat{A}}) \partial_{\rho_+}\overline{\psi}^{(N)}\right]=\frac{1}{2}\chi_{r_H}(r^{-1}\Omega)^{-p}\partial_{\rho_+}((\Omega^2r^{-2})^2|\partial_{\rho_+}\psi|^2)\\
	-\frac{1}{2}\chi_{R_H}(r^{-1}\Omega)^{2-p}\partial_{\rho_+}(|\snabla_{\s^2}\psi|^2)+{\rm div}_{\s^2}(\ldots)+T(\chi_{r_H}(r^{-1}\Omega)^{2-p}|\partial_{\rho_+}\overline{\psi}^{(N)}|^2)\\
	+N(1-6\kappa_+ r_+)(1+O_{\infty}(\rho_+))\rho_+(r^{-1}\Omega)^{2-p}|\partial_{\rho_+}\psi^{(N)}|^2-N\chi_{r_H}(r^{-1}\Omega)^{2-p}\partial_{\rho_+}\left(\re\left[\snabla_{\s^2}\psi^{(N-1)}\cdot \snabla_{\s^2}\overline{\psi}^{(N)}\right]\right)\\
	+N\chi_{r_H}(r^{-1}\Omega)^{2-p}\re\left[\snabla_{\s^2}\partial_{\rho_+}\psi^{(N-1)}\cdot \snabla_{\s^2} \overline{\psi}^{(N)}\right]\\
	-\chi_{r_H}\left[r\frac{d\Omega^2}{dr}+i \q(1+2N) -\frac{1}{2}N(N+1)(1-6\kappa_+ r_+)+O_{\infty}(\rho_+)\right](r^{-1}\Omega)^{2-p}\re\left[\psi^{(N)}\partial_{\rho_+}\overline{\psi}^{(N)}\right]\\
	+N\chi_{r_H}(r^{-1}\Omega)^{2-p}\sum_{n=0}^{N-1} O_{\infty}(\rho_+^0)\re\left[\psi^{(n)}\partial_{\rho_+}\overline{\psi}^{(N)}\right]-N(N-1)(r^{-1}\Omega)^{2-p}\sum_{n=0}^{N-2} O_{\infty}(\rho_+^0)\re\left[\slashed{\Delta}_{\s^2}\psi^{(n)}\partial_{\rho_+}\overline{\psi}^{(N)}\right]\\
	=\frac{1}{2}\partial_{\rho_+}\Bigg(\chi_{r_H}(r^{-1}\Omega)^{4-p}|\partial_{\rho_+}\psi|^2-\frac{1}{2}\chi_{R_H}(r^{-1}\Omega)^{2-p}|\snabla_{\s^2}\psi|^2+\frac{1}{2}\chi_{R_H}(r^{-1}\Omega)^{2-p}N(N+1)(1-6\kappa_+ r_+)|\psi^{(N)}|^2\\
	-N\chi_{r_H}(r^{-1}\Omega)^{2-p}\re\left[\snabla_{\s^2}\psi^{(N-1)}\cdot \snabla_{\s^2}\overline{\psi}^{(N)}\right]\Bigg)+T(\chi_{r_H}(r^{-1}\Omega)^{2-p}|\partial_{\rho_+}\overline{\psi}^{(N)}|^2)+{\rm div}_{\s^2}(\ldots)\\
	+\left[\frac{p}{4}r^2\frac{d}{dr}(r^{-2}\Omega^2)+N(1-6\kappa_+ r_+)(1+O_{\infty}(\rho_+))\rho_+\right](r^{-2}\Omega^2)^{2-p}|\partial_{\rho_+}\psi|^2\\
	+\left[\frac{2-p}{4}r^2\frac{d}{dr}(r^{-2}\Omega^2)+N(r^{-1}\Omega)^{2}\rho_+^{-1}\right](r^{-1}\Omega)^p|\snabla_{\s^2}\psi^{(N)}|^2-\frac{2-p}{4}N(N-1)(1-6\kappa_+ r_+)r^2\frac{d}{dr}(r^{-2}\Omega^2)(r^{-1}\Omega)^p|\psi^{(N)}|^2\\
+Nr^2\frac{2-p}{2}\chi_{r_H}(r^{-1}\Omega)^{-p}\frac{d}{dr}(r^{-2}\Omega^2)\re\left[\snabla_{\s^2}\psi^{(N-1)}\cdot \snabla_{\s^2}\overline{\psi}^{(N)}\right]\\
+N\chi_{r_H}(r^{-1}\Omega)^{2-p}\sum_{n=0}^{N-1} O_{\infty}(\rho_+^0)\re\left[\psi^{(n)}\partial_{\rho_+}\overline{\psi}^{(N)}\right]-N(N-1)\chi_{r_H}(r^{-1}\Omega)^{2-p}\sum_{n=0}^{N-2} O_{\infty}(\rho_+^0)\re\left[\slashed{\Delta}_{\s^2}\psi^{(n)}\partial_{\rho_+}\overline{\psi}^{(N)}\right]\\
-\chi_{r_H}\left[r\frac{d\Omega^2}{dr}+i \q(1+2N)+O_{\infty}(\rho_+)\right](r^{-1}\Omega)^{2-p}\re\left[\psi^{(N)}\partial_{\rho_+}\overline{\psi}^{(N)}\right]\\
-\frac{r^2}{2}\frac{d\chi_{r_H}}{dr}\left[(r^{-1}\Omega)^{4-p}|\partial_{\rho_+}\psi^{(N)}|^2-|\snabla_{\s^2}\psi^{(N)}|^2+N\frac{2-p}{2}(r^{-1}\Omega)^{-p}\frac{d}{dr}(r^{-2}\Omega^2)\re\left[\snabla_{\s^2}\psi^{(N-1)}\cdot \snabla_{\s^2}\overline{\psi}^{(N)}\right]\right]
\end{multline*}
Using that $\frac{d}{dr}(r^{-2}\Omega^2)=2\kappa_++2(1-6\kappa_+r_+)\rho_++O_{\infty}(\rho_+^2)$ by Lemma \ref{lm:metricest} and integrating the above identity, we obtain the estimate:

\begin{multline}
\label{eq:keyhorpest}
\int_{\Sigma_{\tau_2}\cap\{r\leq r_H\}}(r^{-1}\Omega)^{2-p}	|\partial_{\rho_+}\psi^{(N)}|^2\,d\sigma dr\\
+\int_{\tau_1}^{\tau_2}\int_{\Sigma_{\tau}\cap\{r\leq r_H\}}(r^{-1}\Omega)^{2-p}(\kappa_++\rho_+)\left[	|\partial_{\rho_+}\psi^{(N)}|^2+(2-p)(r^{-1}\Omega)^{-2}|\snabla_{\s^2}\psi^{(N)}|^2\right]\,d\sigma dr\\
\leq \int_{\Sigma_{\tau_1}}\chi_{r_H}(r^{-1}\Omega)^{2-p}	|\partial_{\rho_+}\psi^{(N)}|^2\,d\sigma dr+C\sum_{k+n\leq N}\int_{\tau_1}^{\tau_2}\int_{\Sigma_{\tau_2}\cap\{r\leq r_H\}}\chi_{r_H}(r^{-1}\Omega)^{-p}(\rho_++\kappa_+)|\snabla_{\s^2}^{k}\psi^{(n)}|^2\,d\sigma dr d\tau\\
+C \sum_{k+n\leq N}\int_{\tau_1}^{\tau_2}\int_{\Sigma_{\tau_2}\cap\{r_H\leq r\leq r_H+\frac{1}{2}(r_H-r_+)\}}\left[|\snabla_{\s^2}^{k+1}\psi^{(n)}|^2+|\snabla_{\s^2}^{k}\psi^{(n)}|^2\right]\,d\sigma dr d\tau\\
+C\int_{\tau_1}^{\tau_2}\int_{\Sigma_{\tau}}\chi_{r_H}(r^{-1}\Omega)^{2-p}(\rho_++\kappa_+)^{-1}\rho_+^{-2}|(\rho_+\partial_{\rho_+})^N(\rho_+ r^3G_{\widehat{A}}) |^2\,d\sigma dr d\tau.
\end{multline}
Note that $|\psi^{(N)}|^2=\rho_+^2|\partial_{\rho_+}\psi^{(N-1)}|^2$. Hence, we can estimate via a proof by induction. In the $N=1$ case, the spacetime integrals on the RHS of \eqref{eq:keyhorpest} involving $\psi$ and its derivatives can be estimated by \eqref{eq:iedpaper1}.  Suppose that \eqref{eq:keyhorpest} without the spacetime integrals involving  $\psi$ on the RHS holds for $N$ replaced with $n$, where $n\leq N-1$. Then we can apply \eqref{eq:keyhorpest} to conclude the induction step. The spacetime integrals supported in $\{r_H\leq r\leq r_H+\frac{1}{2}(r_H-r_+)\}$ can finally be absorbed using \eqref{eq:iedpaper1}.
\end{proof}

\begin{corollary}
Let $\gamma\in \N_0$ and $\epsilon>0$. Let $1< p<\min\{1+\re \beta_{\ell},2\}+\epsilon$. Then
\begin{multline}
\label{eq:horpestrmin1pest}
\sum_{|\gamma|\leq N}\int_{\Sigma_{\tau_2}} \mathcal{E}_{p-2\epsilon}[\mathbf{Z}^{\gamma}\psi_{\geq \ell}]\,d\sigma dr+\int_{\tau_1}^{\tau_2}\int_{\Sigma_{\tau}\cap \{r_H\leq r\leq r_I\}}  \upzeta \mathcal{E}_p[\mathbf{Z}^{\gamma}\psi_{\geq \ell}]+|\mathbf{Z}^{\gamma}\psi_{\geq \ell}|^2\,d\sigma dr d\tau\\
+\int_{\tau_1}^{\tau_2}\int_{\Sigma_{\tau}\setminus \{r_H\leq r\leq r_I\}} (\rho_++\kappa_+)r^{-1}(r^{-1}\rho_+)^{2\epsilon}\mathcal{E}_{p}[\mathbf{Z}^{\gamma}\psi_{\geq \ell}]+(\rho_++\kappa_+)r^{-3}(r^{-1}\Omega)^{\epsilon-2}|D_T\mathbf{Z}^{\gamma}\psi_{\geq \ell}|^2\,d\sigma dr d\tau\\
\leq C\sum_{|\gamma|\leq N}\sum_{n\leq N}\int_{\Sigma_{\tau_1}} \mathcal{E}_{p}[\mathbf{Z}^{\gamma}\psi_{\geq \ell}]\,d\sigma dr\\
+C\int_{\tau_1}^{\tau_2}\int_{\Sigma_{\tau}} \max\{(r^{-1}\Omega)^{-p}\rho_+^{1-2\epsilon}r^{-1+2\epsilon},1\}r^{-2}|\mathbf{Z}^{\gamma}(r^3G_{\widehat{A}})_{\geq \ell}|^2+\Omega^2r^{-2}(1-\upzeta) |\mathbf{Z}^{\gamma}T( r^3G_{\widehat{A}})_{\geq \ell} |^2\,d\sigma dr d\tau.
\end{multline}

\end{corollary}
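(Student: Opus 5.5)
The corollary is obtained from Proposition \ref{prop:horpest}. The task is to remove, from the right-hand side of \eqref{eq:horpest}, the two terms that are not yet of the form in \eqref{eq:horpestrmin1pest}. These are the initial data terms $\mathcal{E}_{p}[\snabla_{\s^2}^{n_1}T^{n_2}\psi_{\geq \ell}]$ and the inhomogeneous terms carrying the weight $(r^{-1}\Omega)^{-p+2\epsilon}r^{-1}\rho_+r^{-2}|\mathbf{Z}^{\gamma}(r^3G_{\widehat{A}})_{\geq\ell}|^2$. Both should be dominated by the terms already present on the right-hand side of \eqref{eq:horpestrmin1pest}.

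The initial data terms are handled first. Each operator $\snabla_{\s^2}^{n_1}T^{n_2}$ with $n_1+n_2\leq N$ is one of the $\mathbf{Z}^{\gamma}$ with $|\gamma|\leq N$. Hence these terms are contained in $\sum_{|\gamma|\leq N}\int_{\Sigma_{\tau_1}}\mathcal{E}_p[\mathbf{Z}^{\gamma}\psi_{\geq\ell}]$. The term $\mathcal{E}_{p-2\epsilon}[\mathbf{Z}^\gamma\psi_{\geq \ell}]$ on the right of \eqref{eq:horpest} is also bounded by $\mathcal{E}_p[\mathbf{Z}^\gamma\psi_{\geq \ell}]$, because the weight $(\Omega^{-1}r)^{p}$ is bounded below uniformly in $r$. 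So the whole initial data contribution collapses to the single sum appearing in \eqref{eq:horpestrmin1pest}.

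For the inhomogeneous terms, the $\snabla_{\s^2}^{n_1}T^{n_2}$ terms in the last line of \eqref{eq:horpest} are again special cases of $\mathbf{Z}^{\gamma}$. This holds both for the undifferentiated term and for the term with an extra $T$, which gives $\mathbf{Z}^{\gamma}T$. It remains to compare weights pointwise. I will show that
\begin{equation*}
(r^{-1}\Omega)^{-p+2\epsilon}r^{-1}\rho_+\lesssim \max\{(r^{-1}\Omega)^{-p}\rho_+^{1-2\epsilon}r^{-1+2\epsilon},1\}.
\end{equation*}
Near $\mathcal{H}^+$ we have $r\sim 1$, so the left side is about $(r^{-1}\Omega)^{-p}\rho_+(r^{-1}\Omega)^{2\epsilon}$. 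This is bounded by $(r^{-1}\Omega)^{-p}\rho_+^{1-2\epsilon}$ since $\rho_+^{2\epsilon}\lesssim 1$ there and $r^{-1}\Omega\lesssim 1$. Near $\mathcal{I}^+$ we have $r^{-1}\Omega\sim r^{-1}$, and the comparison reduces to $r^{p-2\epsilon-1}\lesssim r^{p-1+2\epsilon}$, which is immediate for $r\geq 1$. In the intermediate region both sides are comparable to constants. I expect the only real care to be needed in this near-$\mathcal{I}^+$ and near-$\mathcal{H}^+$ weight bookkeeping, in particular at $\kappa_+=0$, where $\Omega\sim\rho_+$ rather than $\rho_+^{1/2}$. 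Even there the inequality holds, since the extra factor $(r^{-1}\Omega)^{2\epsilon}\leq 1$ only helps.

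Combining the two reductions with \eqref{eq:horpest} gives \eqref{eq:horpestrmin1pest} directly. The left-hand sides of the two estimates are identical, so nothing needs to be done there.
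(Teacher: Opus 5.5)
Your proposal is correct and supplies exactly the derivation the paper leaves implicit (the corollary is stated without proof). It follows from Proposition \ref{prop:horpest} in two steps: absorb the $\mathcal{E}_{p-2\epsilon}$ and $\snabla_{\s^2}^{n_1}T^{n_2}$ data terms into $\mathcal{E}_p[\mathbf{Z}^{\gamma}\psi_{\geq\ell}]$, using $\Omega^{-1}r\geq r_+$, and dominate the extra inhomogeneous weight by the $\max$-weight. Your region-by-region weight check is fine but can be done in one line, since the ratio of $(r^{-1}\Omega)^{-p+2\epsilon}r^{-1}\rho_+$ to $(r^{-1}\Omega)^{-p}\rho_+^{1-2\epsilon}r^{-1+2\epsilon}$ is $(r^{-2}\Omega\rho_+)^{2\epsilon}$, which is uniformly bounded on $[r_+,\infty)$ for all $\kappa_+\geq 0$.
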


\section{Energy time-decay of time derivatives}
\label{sec:edecaytimeder}
In this section, we establish decay for the energies corresponding to the energy densities $\mathcal{E}_p[D_{\mathbf{Z}}^{\gamma}(TK)^{N_1}\psi]$ (assuming appropriate decay for $G_{\widehat{A}}$), with $|\gamma|\leq N_2$ and $N_1\in \N_1$, $N_2\in \N_0$ arbitrary. While we do not establish energy time-decay for $\mathcal{E}_p[D_{\mathbf{Z}}^{\gamma}\psi]$, we establish time-decay after commuting with the operator $T\circ K$ and show that the decay rate can be improved as we apply $T\circ K$ additional times. 
\begin{proposition}
\label{prop:edecay}
Let $N_1\in \N_1$, $N_2\in \N_0$ and $\q_1\in [0,\infty)$. Let $\psi$ be a solution to \eqref{eq:rCSF} with $A=\widehat{A}$ and $|\q|\leq \q_1$. For all $\nu>0$, there exists an $\epsilon_0>0$, such that for all $0\leq \epsilon\leq \epsilon_0$ and $1+\epsilon\leq p\leq 1+\re\sqrt{1-4\q_1^2}+\epsilon$, there exists a constant $C=C(\h,\nu,\epsilon_0,p, N_1,N_2,\q_1)>0$ such that:
\begin{multline}
\label{eq:energydecay}
(1+\tau)^{2N_1-\nu}\int_{\Sigma_{\tau}}\sum_{|\gamma|\leq N_2}\mathcal{E}_{p}[\mathbf{Z}^{\gamma}(TK)^{N_1}\psi]\,d\sigma dr \\
\leq C\sum_{m\leq N_1}\sum_{|\gamma|+m\leq N_1+N_2}\sum_{j\leq 2N_1}\int_{\Sigma_{0}} \mathcal{E}_{p}[\mathbf{Z}^{\gamma}(TK)^mT^j\psi]\,d\sigma dr\\
+\sup_{\tau\in [0,\infty)}\int_{\Sigma_{\tau}} (1+\tau)^{2-\epsilon+2\max\{m-1,0\}}(\Omega r^{-1})^{2-p}r^{-2}|r^3\mathbf{Z}^{\gamma}T^j(TK)^{\max\{m-1,0\}}G_{\widehat{A}}|^2\,d\sigma dr d\tau\\
+\int_{0}^{\infty} \int_{\Sigma_{\tau}} (1+\tau)^{2m}\Big[\max\{(r^{-1}\Omega)^{-p}\rho_+^{1-2\epsilon}r^{-1+2\epsilon},1\}r^{-2}|r^3\mathbf{Z}^{\gamma}T^j(TK)^mG_{\widehat{A}}|^2\\
+\Omega^2(1-\upzeta)|r\mathbf{Z}^{\gamma}T^{j+1}(TK)^mG_{\widehat{A}}|^2\Bigg]\,d\sigma dr d\tau.
\end{multline}
\end{proposition}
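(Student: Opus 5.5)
We follow the scheme sketched in Step 1 of \S\ref{sec:sketchpf}: a hierarchy of $(\Omega^{-1}r)^p$-weighted estimates combined with an $\epsilon$-loss-tolerant interpolation, iterated in a dyadic fashion. The three ingredients are the commuted energy estimate \eqref{eq:horpestrmin1pest}, an algebraic gain of weights from $(TK)$, and a dyadic mean-value iteration.

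\textbf{Extending the $p$-range.} The estimate \eqref{eq:horpestrmin1pest}, applied to $\mathbf{Z}^{\gamma}(TK)^m T^j\psi$, gives boundedness of $\int_{\Sigma_\tau}\mathcal{E}_{p-2\epsilon}$. It also gives a spacetime integral of $\mathcal{E}_{p-1-2\epsilon}$-type densities, with weight $(\rho_++\kappa_+)r^{-1}(r^{-1}\rho_+)^{2\epsilon}$, which reduces the $(\Omega^{-1}r)$ power by one. Near $r_+$ and near $\mathcal{I}^+$ we use \eqref{eq:maineqradfieldhtildezero} and \eqref{eq:maineqradfieldhzero}. These identities express $\partial_{\rho_+}K\psi$, respectively $\partial_{\rho_\infty}T\psi$, in terms of $\partial_{\rho}(\Omega^2r^{-2}\partial_\rho\psi)$, angular derivatives, lower-order terms and $r^3G_{\widehat A}$.

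Applying $T$ to the $K$-identity (these operators commute) yields the schematic estimate \eqref{eq:fromTKtoZ}:
\[
\mathcal{E}_{p}[TK\psi]\lesssim \sum_{|\gamma|\le 1}\bigl(\mathcal{E}_{p-2}[\mathbf{Z}^\gamma\psi]+\mathcal{E}_{p-2}[\mathbf{Z}^\gamma T\psi]\bigr)+(\text{$G$-terms}).
\]
Here the $(r-r_+)X=\mathbf Z$ structure absorbs the extra $\rho$-factors. The same holds with the roles of the horizon and infinity exchanged, since $\Omega^{-1}r$ degenerates at both ends. Consequently, one factor of $TK$ raises the admissible range from $p\in(1,1+\epsilon)$ to $p\in(1,3+\epsilon)$, with each unit of $p$ costing a $\mathbf Z$-derivative.

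\textbf{Dyadic decay argument.} We proceed by induction on $N_1$. For $N_1=1$ we set $\Phi=TK\psi$ and apply the hierarchy from $p'=p+2$ down to $p$. For each $p''\in[p,p+2]$ we use the integrated estimate on $[\tau_1,\tau_2]$ and the mean value theorem on dyadic intervals.

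Because consecutive levels differ only by one power of $(\Omega^{-1}r)$ and the bulk weight carries the $\epsilon$-loss $(r^{-1}\rho_+)^{2\epsilon}$, each step loses $\tau^{\epsilon}$. In the worst case $\re\beta_0=0$ the hierarchy has no genuine $p$-range. There we split $\Sigma_\tau$ into $\{r^{-2}(r-r_+)\le(1+\tau)^{-\sigma}\}$ and its complement, and iterate $\sigma_k=1-2^{-k}$. On the near region we trade the missing weight for $\tau^{-\sigma}$; on the far region we use the decay obtained at the previous stage. After $k\sim\log(1/\nu)$ iterations the accumulated loss is at most $\tau^{\nu}$. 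This yields $\int\mathcal{E}_p[TK\psi]\lesssim\tau^{-2+\nu}$ for $p$ in the stated range, with a derivative count independent of $k$, since each iteration reuses the same finitely many commuted estimates.

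The inhomogeneous terms are handled by Duhamel-type splitting on dyadic intervals. This is why the right-hand side of \eqref{eq:energydecay} contains the weighted suprema $(1+\tau)^{2-\epsilon+2\max\{m-1,0\}}$ and the weighted spacetime integrals $(1+\tau)^{2m}$.

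For general $N_1$ we apply the $N_1=1$ result to $(TK)^{N_1-1}\psi$, which satisfies \eqref{eq:rCSF} with source $(TK)^{N_1-1}G_{\widehat A}$. We then use the previously obtained decay of its $p+2$-energy as the top of the hierarchy, so that each further $TK$ adds $\tau^{-2}$. The $T^j$, $j\le 2N_1$, on the data side arise from rewriting $K=T+i\q r_+^{-1}$ and from the $T$-derivatives in \eqref{eq:fromTKtoZ}.

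\textbf{Main obstacle.} We expect the hardest part to be the iteration in the case $\re\beta_0=0$. The $2\epsilon$-loss between boundary and bulk weights in \eqref{eq:horpestrmin1pest} must be kept from compounding beyond $\nu$. This requires interpolating the near-$\mathcal{H}^+$/near-$\mathcal{I}^+$ regions with a time-dependent cutoff, rather than applying the standard Dafermos--Rodnianski hierarchy directly. The constants must also stay uniform in $\kappa_+$, via the $(\rho_++\kappa_+)$ weights.
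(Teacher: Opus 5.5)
The step that would fail is your claim that for $\Phi=TK\psi$ one can run a standard $r^p$ hierarchy over $p''\in[p,p+2]$, with the time-dependent splitting needed only when $\re\beta_0=0$.

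A hierarchy step from $p''$ to $p''-1$ needs $\int_{\tau_1}^{\tau_2}\int_{\Sigma_\tau}\mathcal{E}_{p''-1}[\Phi]\lesssim\int_{\Sigma_{\tau_1}}\mathcal{E}_{p''}[\Phi]$, with the same $\Phi$ on both sides. Through \eqref{eq:fromTKtoZ} you do control the bulk of $\mathcal{E}_{p''-1}[\Phi]$ for $p''$ up to about $p+2$. However, the right-hand side is then given by $\mathbf{Z}$-energies of $\psi$ and $T\psi$ on $\Sigma_{\tau_1}$, which are bounded but do not decay. The pigeonhole argument therefore yields $\tau^{-1}$ and nothing more.

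A second power of $\tau^{-1}$ would require the step from $p+1$ to $p$ for $\Phi$ itself. But $p+1\geq 2+\epsilon\geq\min\{1+\re\beta_0,2\}+\epsilon$, which is outside the window in which \eqref{eq:horpestrmin1pest} applies to $\Phi$. This is why the sketch in \S\ref{sec:sketchpf} speaks of the two-point set $\{1+\frac{\epsilon}{2},3+\frac{\epsilon}{2}\}$ rather than an interval.

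The remedy is already in your proposal, and it is exactly the paper's proof, which uses it for every $\q$ and every $p$ in the stated range. On each dyadic interval $[\tau_A,\tau_B]$ the argument runs as follows:
\begin{itemize}
\item In the region near $\mathcal{H}^+$ and $\mathcal{I}^+$ where $\Omega r^{-1}\lesssim(1+\tau_A)^{-\sigma}$, bound the bulk of $\mathcal{E}_p[TK\psi]$ via \eqref{eq:TKvsZ} by the spacetime integrals of $\mathbf{Z}^\gamma\psi$ and $\mathbf{Z}^\gamma T\psi$ from \eqref{eq:horpestrmin1pest}. This gains $(1+\tau_A)^{-\sigma(1-O(\epsilon))}$ from the extra factor $(\Omega r^{-1})^{1-O(\epsilon)}$.
\item In the complement, lose $(1+\tau_A)^{\sigma(1+O(\epsilon))}$, apply \eqref{eq:horpestrmin1pest} to $TK\psi$ itself, and insert the decay of $\int_{\Sigma_{\tau_A}}\mathcal{E}_p[TK\psi]$ obtained at the previous stage.
\item Pigeonhole, interpolate to remove the $\epsilon$-loss, and iterate $\sigma_k=1-2^{-k}$ starting from $\sigma_0=0$.
\end{itemize}
If you replace your hierarchy paragraph by this argument throughout, your proof coincides with the paper's. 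Your induction on $N_1$, which feeds the decay of $\mathbf{Z}^\gamma(TK)^{N_1-1}T^j\psi$ on $\Sigma_{\tau_A}$ into the near-region term, is a sound way to make the paper's brief Step 4 precise.
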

\begin{proof}
We set $A=\widehat{A}$. For the sake of notational convenience, we write $G=G_{\widehat{A}}$. We first consider the case $N_2=0$ and $N_1=1$. Let $0\leq \sigma<1$ and define $\mathcal{D}_{\sigma,\tau_A}:=\{\rho_+\leq \frac{1}{2}(1+\tau_A)^{- \sigma}\}\cap \{r\geq 4(1+\tau_A)^{\sigma}\}$. Let $\epsilon'>0$. Then we can split the following integral:
\begin{multline*}
\int_{\tau_A}^{\tau_B}\int_{\Sigma_{\tau}}\mathcal{E}_{p'}[TK\psi]\,d\sigma dr d\tau=\int_{\tau_A}^{\tau_B}\int_{\Sigma_{\tau} \cap \mathcal{D}_{\sigma,\tau_A} }\mathcal{E}_{p'}[TK\psi]\,d\sigma dr d\tau+\int_{\tau_A}^{\tau_B}\int_{\Sigma_{\tau}\setminus \mathcal{D}_{\sigma,\tau_A}}\mathcal{E}_{p'}[TK\psi]\,d\sigma dr d\tau\\
\lesssim \overbrace{\int_{\tau_A}^{\tau_B}\int_{\Sigma_{\tau}\cap \mathcal{D}_{\sigma,\tau_A}}(\Omega r^{-1})^{2-p'}r^2|XTK\psi|^2\,d\sigma dr d\tau}^{=:J_1}+\overbrace{\int_{\tau_A}^{\tau_B}\int_{\Sigma_{\tau}\cap \mathcal{D}_{\sigma,\tau_A} }|TTK\psi|^2+|\snabla_{\s^2}TK\psi|^2+|TK\psi|^2\,d\sigma dr d\tau}^{=:J_2}\\
+\overbrace{\int_{\tau_A}^{\tau_B}\int_{\Sigma_{\tau} \setminus  (\mathcal{D}_{\sigma,\tau_A}\cup \{r^{\sharp}-\eta\leq r\leq r^{\sharp}+\eta\} ) }\mathcal{E}_{p'}[TK\psi]\,d\sigma dr d\tau}^{=:J_3}+\overbrace{\int_{\tau_A}^{\tau_B}\int_{\Sigma_{\tau} \cap \{r_{\sharp}-\eta\leq r\leq r_{\sharp}+\eta\}  }\mathcal{E}_{0}[TK\psi]\,d\sigma dr d\tau}^{=:J_4}.
\end{multline*}
We will estimate each of the terms $J_i$ separately.

Take $1+\epsilon'\leq p'\leq 1+\re \sqrt{1-4\q^2}+\epsilon'$, where $\epsilon'>0$ will be chosen suitably small.

To estimate $J_1$, we use that by \eqref{eq:maineqradfield} and \eqref{eq:maineqradfieldconfhor}
\begin{equation}
\label{eq:TKvsZ}
r^2|XTK\psi|^2\lesssim \sum_{|\gamma|\leq 1} \mathcal{E}_{0}[\mathbf{Z}^{\gamma}T\psi]+\mathcal{E}_{0}[\mathbf{Z}^{\gamma}\psi]+ r^{-2}|r^3TG|^2+r^{-2}|r^3G|^2.
\end{equation}
Hence,
\begin{multline*}
J_1\lesssim \sum_{|\gamma|\leq 1}\int_{\tau_A}^{\tau_B}\int_{\Sigma_{\tau}\cap \mathcal{D}_{\sigma,\tau_A}} (\Omega r^{-1})^{2-p'}\left[\mathcal{E}_{0}[\mathbf{Z}^{\gamma}T\psi]+\mathcal{E}_{0}[\mathbf{Z}^{\gamma}\psi]\right]\,d\sigma dr d\tau\\
+\int_{\tau_A}^{\tau_B}\int_{\Sigma_{\tau}} (\Omega r^{-1})^{2-p'}r^{-2}(|r^3TG|^2+|r^3G|^2)\,d\sigma dr d\tau\\
\lesssim (1+\tau_A)^{-\sigma(1+p-2\epsilon-p')} \sum_{|\gamma|\leq 1}\int_{\tau_A}^{\tau_B}\int_{\Sigma_{\tau}\cap \mathcal{D}_{\sigma,\tau_A}} (\Omega r^{-1})^{1-p+2\epsilon}\left[\mathcal{E}_{0}[\mathbf{Z}^{\gamma}T\psi]+\mathcal{E}_{0}[\mathbf{Z}^{\gamma}\psi]\right]\,d\sigma dr d\tau\\
+ (1+\tau_A)^{-\sigma(1+p-p')} \int_{\tau_A}^{\tau_B}\int_{\Sigma_{\tau}} (1+\tau)^{\sigma(1+p-p')} (\Omega r^{-1})^{2-p'}(|rTG|^2+|rG|^2)\,d\sigma dr d\tau\\
\lesssim (1+\tau_A)^{-\sigma(1+p-2\epsilon-p')} \sum_{|\gamma|\leq 1}\int_{\tau_A}^{\tau_B}\int_{\Sigma_{\tau}\cap \mathcal{D}_{\sigma,\tau_A}} \mathcal{E}_{p-1-2\epsilon}[\mathbf{Z}^{\gamma}T\psi]+\mathcal{E}_{p-1+2\epsilon}[\mathbf{Z}^{\gamma}\psi]\,d\sigma dr d\tau\\
+ (1+\tau_A)^{-\sigma(1+p-2\epsilon-p')} \int_{\tau_A}^{\tau_B}\int_{\Sigma_{\tau}} (1+\tau)^{\sigma(1+p-p'-2\epsilon)} (\Omega r^{-1})^{2-p'}r^{-2}(|r^3TG|^2+|r^3G|^2)\,d\sigma dr
\end{multline*}
We now apply \eqref{eq:horpestrmin1pest} with $p=p'$, $\epsilon=2\epsilon'$ and $|\gamma|=1$ to obtain:
\begin{multline*}
J_1\lesssim  (1+\tau_A)^{-\sigma(1-4\epsilon')}\sum_{|\gamma|\leq 1}\int_{\Sigma_{\tau_A}} \mathcal{E}_{p' }[\mathbf{Z}^{\gamma}\psi]+\mathcal{E}_{p'}[\mathbf{Z}^{\gamma}T\psi]\,d\sigma dr\\
+ (1+\tau_A)^{-\sigma(1-4\epsilon')}\sum_{|\gamma|\leq 1}\int_{\tau_A}^{\tau_B} \int_{\Sigma_{\tau}}\max\{(r^{-1}\Omega)^{-p'}\rho_+^{1-4\epsilon'}r^{-1+4\epsilon'},1\}r^{-2}(|\mathbf{Z}^{\gamma}( r^3G) |^2+|\mathbf{Z}^{\gamma}( r^3TG) |^2)\\
+(1-\upzeta)\Omega^2r^{-2}|\mathbf{Z}^{\gamma}T^{2}( r^3G) |^2\,d\sigma dr d\tau\\
+ (1+\tau_A)^{-\sigma(1-4\epsilon') } \int_{\tau_A}^{\tau_B}\int_{\Sigma_{\tau}} (1+\tau)^{\sigma(1-4\epsilon') }  (\Omega r^{-1})^{2-p'}r^{-2}(|r^3TG|^2+|r^3G|^2)\,d\sigma dr.
\end{multline*}

We can estimate $J_2$ as follows:
\begin{multline*}
J_2\lesssim (1+\tau_A)^{-\sigma(1-2\epsilon)}\int_{\tau_A}^{\tau_B}\int_{\Sigma_{\tau}\cap \mathcal{D}_{\sigma,\tau_A} } (\Omega r^{-1})^{-1+2\epsilon}\left[|TTK\psi|^2+|\snabla_{\s^2}TK\psi|^2+|TK\psi|^2\right]\,d\sigma dr d\tau\\
\lesssim (1+\tau_A)^{-\sigma(1-2\epsilon)}\int_{\tau_A}^{\tau_B}\int_{\Sigma_{\tau}\cap \mathcal{D}_{\sigma,\tau_A} }(r^{-1}\Omega)^{-1+2\epsilon}(|D_TT^2\psi|^2+|\snabla_{\s^2}D_TT\psi|^2+|D_TT\psi|^2)\\
+(r^{-1}\Omega)^{-1+2\epsilon}\rho_+^{2}(|T\psi|^2+|\snabla_{\s^2}T\psi|^2+|T^2\psi|^2)\,d\sigma dr d\tau.
\end{multline*}
We apply \eqref{eq:horpest} with $|\gamma|=0$, $p=p'$, $\epsilon=2\epsilon'$, to $T\psi$ and $T^2\psi$ to obtain:
\begin{multline*}
J_2\lesssim C (1+\tau_A)^{-\sigma(1-4\epsilon')}\int_{\Sigma_{\tau_A}} \mathcal{E}_{p' }[T\psi]+\mathcal{E}_{p'}[T^2\psi]\,d\sigma dr\\
+C(1+\tau_A)^{-\sigma(1-4\epsilon')}\int_{\tau_A}^{\tau_B} \int_{\Sigma_{\tau}}\max\{(r^{-1}\Omega)^{-p'}\rho_+^{1-4\epsilon'}r^{-1+4\epsilon'},1\}r^{-2}(| r^3TG |^2+| r^3T^2G |^2)\\
+(1-\upzeta)\Omega^2r^{-2}|T^{3}( r^3G) |^2\,d\sigma dr d\tau.
\end{multline*}

We estimate $J_3$:
\begin{equation*}
J_3\lesssim (1+\tau_A)^{\sigma(1+4\epsilon')}\int_{\tau_A}^{\tau_B}\int_{\Sigma_{\tau} \setminus  \{r_{\sharp}-\eta\leq r\leq r_{\sharp}+\eta\}  }\mathcal{E}_{p'-1-4\epsilon'}[TK\psi]\,d\sigma dr d\tau.
\end{equation*}
We apply \eqref{eq:horpest} with $|\gamma|=0$, $p=p'$ and $\epsilon=2\epsilon'$ to $TK\psi$ to obtain:
\begin{multline*}
J_3\lesssim  C (1+\tau_A)^{\sigma(1+4\epsilon')}\int_{\Sigma_{\tau_A}} \mathcal{E}_{p'}[TK\psi]\,d\sigma dr\\
+C(1+\tau_A)^{\sigma(1+4\epsilon')}\sum_{j=0}^1\int_{\tau_A}^{\tau_B} \int_{\Sigma_{\tau}}\max\{(r^{-1}\Omega)^{-p'}\rho_+^{1-4\epsilon'}r^{-1+4\epsilon'},1\}r^{-2}| r^3T^jTKG |^2+(1-\upzeta)\Omega^2r^{-2}| r^3T^{j+1}TKG |^2\,d\sigma dr d\tau.
\end{multline*}
Finally, we can estimate $J_4$ by directly applying \eqref{eq:horpestrmin1pest} with $|\gamma|=0$, $p=1+\epsilon'$ and $\epsilon=2\epsilon'$ to obtain:
\begin{multline*}
J_4\lesssim C \int_{\Sigma_{\tau_A}} \mathcal{E}_{1+\epsilon'}[TK\psi]+ \mathcal{E}_{1+\epsilon'}[T^2K\psi]\,d\sigma dr\\
+C\int_{\tau_A}^{\tau_B} \int_{\Sigma_{\tau}}\max\{(r^{-1}\Omega)^{-1-\epsilon'}\rho_+^{1-4\epsilon'}r^{-1+4\epsilon'},1\}r^{-2}| r^3TKG |^2+(1-\upzeta)\Omega^2r^{-2}|T( r^3TKG) |^2\,d\sigma dr d\tau.
\end{multline*}

We conclude that:
\begin{multline}
\label{eq:mainestedecay}
\int_{\tau_A}^{\tau_B}\int_{\Sigma_{\tau}}\mathcal{E}_{p'}[TK\psi]\,d\sigma dr d\tau\lesssim  (1+\tau_A)^{\sigma(1+4\epsilon')}\int_{\Sigma_{\tau_A}} \mathcal{E}_{p'}[TK\psi]\,d\sigma dr+ \int_{\Sigma_{\tau_A}} \mathcal{E}_{p'}[TK\psi]+ \mathcal{E}_{p'}[T^2K\psi]\,d\sigma dr\\
+(1+\tau_A)^{-\sigma(1-4\epsilon')}\sum_{|\gamma|\leq 1}\int_{\Sigma_{\tau_A}} \mathcal{E}_{p'}[\mathbf{Z}^{\gamma}\psi]+\mathcal{E}_{p'}[\mathbf{Z}^{\gamma}T\psi]\,d\sigma dr\\
+(1+\tau_A)^{-\sigma(1-4\epsilon')}\int_{\tau_A}^{\tau_B} \sum_{j=0}^1\int_{\Sigma_{\tau}}\max\{(r^{-1}\Omega)^{-p'}\rho_+^{1-4\epsilon'}r^{-1+4\epsilon'},1\}(1+\tau)^{2\sigma(1-4\epsilon')}r^{-2}| r^3T^jTKG |^2\\
+(1-\upzeta)\Omega^2r^{-2}(1+\tau)^{2\sigma(1-4\epsilon')}| r^3T^{j+1}TKG |^2\,d\sigma dr d\tau\\
+ (1+\tau_A)^{-\sigma(1-4\epsilon')}\sum_{|\gamma|\leq 1}\int_{\tau_A}^{\tau_B} \int_{\Sigma_{\tau}}\max\{(r^{-1}\Omega)^{-p'}\rho_+^{1-4\epsilon'}r^{-1+4\epsilon'},1\}r^{-2}(|\mathbf{Z}^{\gamma}( r^3G) |^2+|\mathbf{Z}^{\gamma}( r^3TG) |^2)\\
+(1-\upzeta)\Omega^2r^{-2}|\mathbf{Z}T^{2}( r^3G) |^2\,d\sigma dr d\tau\\
+ (1+\tau_A)^{-\sigma (1-4\epsilon')} \int_{\tau_A}^{\tau_B}\int_{\Sigma_{\tau}} (1+\tau)^{\sigma(1-4\epsilon') }  (\Omega r^{-1})^{2-p'}r^{-2}(|r^3TG|^2+|r^3G|^2)\,d\sigma dr\\
\leq (1+\tau_A)^{\sigma(1+4\epsilon')}\int_{\Sigma_{\tau_A}} \mathcal{E}_{p'}[TK\psi]\,d\sigma dr+ \int_{\Sigma_{\tau_A}} \mathcal{E}_{p'}[TK\psi]+ \mathcal{E}_{p'}[T^2K\psi]\,d\sigma dr\\
+(1+\tau_A)^{-\sigma(1-4\epsilon')}\sum_{|\gamma|\leq 1}\int_{\Sigma_{\tau_A}} \mathcal{E}_{p'}[\mathbf{Z}^{\gamma}\psi]+\mathcal{E}_{p'}[\mathbf{Z}^{\gamma}T\psi]\,d\sigma dr+(1+\tau_A)^{-\sigma(1-4\epsilon')}\mathcal{G}[G],
\end{multline}
with
\begin{multline*}
	\mathcal{G}[G]:=\sum_{j\leq 1}\int_{0}^{\infty} \int_{\Sigma_{\tau}}\max\{(r^{-1}\Omega)^{-p'}\rho_+^{1-4\epsilon'}r^{-1+4\epsilon'},1\}(1+\tau)^{2\sigma(1-4\epsilon')}r^{-2}| r^3T^jTKG |^2\\
	+(1-\upzeta)\Omega^2r^{-2}(1+\tau)^{2\sigma(1-4\epsilon')}| r^3T^{j+1}TKG |^2\,d\sigma dr d\tau\\
+ \sum_{|\gamma|\leq 1}\int_{0}^{\infty} \int_{\Sigma_{\tau}}\max\{(r^{-1}\Omega)^{-p'}\rho_+^{1-4\epsilon'}r^{-1+4\epsilon'},1\}r^{-2}|(\mathbf{Z}^{\gamma}( r^3G) |^2+|\mathbf{Z}^{\gamma}( r^3TG) |^2)+(1-\upzeta)\Omega^2r^{-2}|\mathbf{Z}^{\gamma}T^{2}( r^3G) |^2\,d\sigma dr d\tau\\
+  \sup_{\tau\in[0,\infty)}\int_{\Sigma_{\tau}} (1+\tau)^{\sigma(1-4\epsilon')+1+\epsilon'}  (\Omega r^{-1})^{2-p'}r^{-2}(|r^3TG|^2+|r^3G|^2)\,d\sigma dr.
\end{multline*}

\paragraph{\underline{Step 1: $\tau^{-1+2\epsilon'}$ }}
Let $\{\tau_j^{(1)}\}$ be a dyadic sequence. Taking $\sigma=0$, $\tau_A=\tau_j^{(1)}$ and $\tau_{B}=\tau_{j+1}^{(1)}$ in \eqref{eq:mainestedecay} and applying the mean-value theorem (``pigeonhole principle''), we obtain along a dyadic subsequence  $\{\tau_j^{(2)}\}$ of $\{\tau_j^{(1)}\}$:
\begin{multline}
\label{eq:firstedecayest}
(1+\tau_j^{(2)})\int_{\Sigma_{\tau_j^{(2)}}}\mathcal{E}_{p'}[TK\psi]\,d\sigma dr d\tau\lesssim  \int_{\Sigma_{0}} \mathcal{E}_{p'}[TK\psi]\,d\sigma dr+ \int_{\Sigma_{0}}   \mathcal{E}_{p'}[TK\psi]+ \mathcal{E}_{p'}[T^2K\psi]\,d\sigma dr\\
+\sum_{|\gamma|\leq 1}\int_{\Sigma_{0}}  \mathcal{E}_{p'}[\mathbf{Z}^{\gamma}\psi]+\mathcal{E}_{p'}[\mathbf{Z}^{\gamma}T\psi]\,d\sigma dr+\mathcal{G}[G]=:D_{0,p',\epsilon'}[\psi].
\end{multline}
Applying \eqref{eq:horpestrmin1pest} with $|\gamma|=0$, $p=p'$ and $\epsilon=2\epsilon'$, we in fact obtain for all $\tau\geq 0$:
\begin{equation*}
\int_{\Sigma_{\tau}}\mathcal{E}_{p'-4\epsilon'}[TK\psi]\,d\sigma dr d\tau\lesssim (1+\tau)^{-1}D_{0,p',\epsilon'}[\psi].
\end{equation*}
Now we apply an interpolation argument. We split
\begin{multline*}
\int_{\Sigma_{\tau}}\mathcal{E}_{p'}[TK\psi]\,d\sigma dr d\tau\lesssim \int_{\Sigma_{\tau}}(\Omega r^{-1})^{2-p'}r^2|XTK\psi|^2\,d\sigma dr d\tau+ \int_{\Sigma_{\tau}}\mathcal{E}_{p'-4\epsilon'}[TK\psi]\,d\sigma dr d\tau\\
=\int_{\Sigma_{\tau}\cap\{r^{-1}\rho_+\leq (1+\tau)^{-1}\}}(\Omega r^{-1})^{2-p'}r^2|XTK\psi|^2\,d\sigma dr d\tau+\int_{\Sigma_{\tau}\cap\{r^{-1}\rho_+\geq (1+\tau)^{-1}\}}(\Omega r^{-1})^{2-p'}r^2|XTK\psi|^2\,d\sigma dr d\tau\\
+ \int_{\Sigma_{\tau}}\mathcal{E}_{p'-4\epsilon'}[TK\psi]\,d\sigma dr d\tau
\end{multline*}
and then apply \eqref{eq:TKvsZ} to further estimate the second term on the RHS, so that we obtain:
\begin{multline*}
\int_{\Sigma_{\tau}}\mathcal{E}_{p'}[TK\psi]\,d\sigma dr d\tau\lesssim (1+\tau)^{4\epsilon'} \int_{\Sigma_{\tau}}\mathcal{E}_{p'-4\epsilon'}[TK\psi]\,d\sigma dr d\tau\\
+(1+\tau)^{-1+4\epsilon'}\sum_{|\gamma|\leq 1}\int_{\Sigma_{\tau}} \mathcal{E}_{p'-4\epsilon'}[\mathbf{Z}^{\gamma}T\psi]+\mathcal{E}_{p'-4\epsilon'}[\mathbf{Z}^{\gamma}\psi]\,d\sigma dr d\tau\\
+ (1+\tau)^{-1+4\epsilon'}\int_{\Sigma_{\tau}}(1+\tau)^{1-4\epsilon'}(\Omega r^{-1})^{2-p'}r^{-2}(|r^3TG|^2+|r^3G|^2)\,d\sigma dr.
\end{multline*}
By applying \eqref{eq:horpest} with $|\gamma|=1$ to $\psi$ and $T\psi$, $p=p'$ and $\epsilon=2\epsilon'$ once more, we conclude the first global-in-$\tau$ decay result:
\begin{equation}
\label{eq:firstedecayestpart2}
\int_{\Sigma_{\tau}}\mathcal{E}_{p'}[TK\psi]\,d\sigma dr d\tau\lesssim (1+\tau)^{-1+4\epsilon'}D_{0,p',\epsilon'}[\psi].
\end{equation}
Note that the above interpolation argument can also be used to conclude the $N_1=0$ case of the theorem.\\
\\

\paragraph{\underline{Step 2: $\tau^{-\frac{3}{2}}$ decay}}
We will now improve the decay rate. We apply \eqref{eq:mainestedecay} again and we moreover apply \eqref{eq:firstedecayestpart2} to estimate the first two integrals on the RHS. We obtain:
\begin{multline}
\label{eq:edecaystep2}
\int_{\tau_A}^{\tau_B}\int_{\Sigma_{\tau}}\mathcal{E}_{p'}[TK\psi]\,d\sigma dr d\tau\lesssim  (1+\tau_A)^{\sigma-1+4\epsilon'}D_{0,p',\epsilon'}[\psi]+(1+\tau_A)^{-1+4\epsilon'}D_{0,p',\epsilon'}[T\psi]\\
+(1+\tau_A)^{-\sigma(1-4\epsilon')}\sum_{|\gamma|\leq 1}\int_{\Sigma_{\tau_A}} \mathcal{E}_{p'}[\mathbf{Z}^{\gamma}\psi]+\mathcal{E}_{p'}[\mathbf{Z}^{\gamma}T\psi]\,d\sigma dr+(1+\tau_A)^{-\sigma(1-4\epsilon')}\mathcal{G}[G].
\end{multline}
Let $\{\tau_j^{(1)}\}$ be a dyadic sequence. Taking $\sigma=\frac{1}{2}$, $\tau_A=\tau_j^{(1)}$ and $\tau_{B}=\tau_{j+1}^{(1)}$ in \eqref{eq:edecaystep2} and applying the mean-value theorem, we obtain along a dyadic subsequence  $\{\tau_j^{(2)}\}$ of $\{\tau_j^{(1)}\}$:
\begin{multline}
\label{eq:edecaystep2b}
(1+\tau_j^{(2)})^{\frac{3}{2}-4\epsilon'}\int_{\Sigma_{\tau_j^{(2)}}}\mathcal{E}_{p'}[TK\psi]\,d\sigma dr d\tau\lesssim  D_{0,p',\epsilon'}[\psi]+(1+\tau_j^{(2)})^{-\frac{1}{2}}D_{0,p',\epsilon'}[T\psi]\\
+\sum_{|\gamma|\leq 1}\int_{\Sigma_{0}} \mathcal{E}_{p'}[\mathbf{Z}^{\gamma}\psi]+\mathcal{E}_{p'}[\mathbf{Z}^{\gamma}T\psi]\,d\sigma dr+\mathcal{G}[G]\\
=:D_{1,p',\epsilon'}[\psi]+(1+\tau_j^{(2)})^{-\frac{1}{2}}D_{0,p',\epsilon'}[T\psi].
\end{multline}
We repeat the interpolation argument at the end of Step 1 by considering $\{r^{-1}\rho_+\leq (1+\tau)^{-\frac{3}{2}}\}$ and $\{r^{-1}\rho_+\geq (1+\tau)^{-\frac{3}{2}}\}$ to conclude that
\begin{equation*}
\int_{\Sigma_{\tau}}\mathcal{E}_{p'}[TK\psi]\,d\sigma dr \lesssim \tilde{D}_{1,p',\epsilon'}[\psi](1+\tau)^{-\frac{3}{2}+8\epsilon'},
\end{equation*}
with
\begin{equation*}
\tilde{D}_{1,p',\epsilon'}[\psi]:=D_{1,p,\epsilon'}[\psi]+D_{0,p',\epsilon'}[T\psi].
\end{equation*}

\paragraph{\underline{Step 3: $\tau^{-2+\nu}$ decay}}
Note that the only term that does not decay at least like $\tau^{-1+4\epsilon'}$ on the RHS of \eqref{eq:edecaystep2} when $\sigma=1$ is $(1+\tau_A)^{\sigma-1+4\epsilon'}D_{0,p',\epsilon'}[\psi]$.

By Step 2, we have that \eqref{eq:edecaystep2} still holds with $(1+\tau_A)^{\sigma-1+4\epsilon'}D_{0,p',\epsilon'}[\psi]$ replaced by\\ $(1+\tau_A)^{\sigma-\frac{3}{2}+4\epsilon'}\tilde{D}_{1, p',\epsilon'}[\psi]$.

We can therefore take $\sigma=\frac{3}{4}$ and conclude that:
\begin{equation*}
\int_{\Sigma_{\tau}}\mathcal{E}_{p'}[TK\psi]\,d\sigma dr d\tau\lesssim \tilde{D}_{1,p',\epsilon'}[\psi](1+\tau)^{-\frac{7}{4}+12\epsilon'}.
\end{equation*}
Continuing this procedure iteratively, taking $\sigma =1-\frac{1}{2^k}$, we can show:
\begin{equation*}
\int_{\Sigma_{\tau}}\mathcal{E}_{p'}[TK\psi]\,d\sigma dr d\tau\lesssim \tilde{D}_{1,p',\epsilon'}[\psi](1+\tau)^{2-\frac{1}{2k}+4k\epsilon'}.
\end{equation*}

We conclude that for all $\nu>0$, there exists a suitably small $\epsilon'>0$, such that for $1+\epsilon'\leq p'\leq 1+\re\sqrt{1-4q^2}+\epsilon'$:
\begin{multline*}
(1+\tau)^{2-\nu}\int_{\Sigma_{\tau}}\mathcal{E}_{p'}[TK\psi]\,d\sigma dr d\tau\lesssim \sum_{|\gamma|+m\leq 1}\sum_{j\leq 2}\int_{\Sigma_{0}} \mathcal{E}_{p'}[\mathbf{Z}^{\gamma}(TK)^mT^j\psi]\,d\sigma dr\\
+\sum_{j=0}^2\int_{0}^{\infty} \int_{\Sigma_{\tau}}(1+\tau)^{2-4\epsilon'}\left[\max\{(r^{-1}\Omega)^{-p'}\rho_+^{1-4\epsilon'}r^{-1+4\epsilon'},1\}r^{-2}| r^3T^j(TK)G |^2+(1-\upzeta)\Omega^2r^{-2}| r^3T^{1+j}
TKG |^2\right]\,d\sigma dr d\tau\\
+ \sum_{|\gamma|\leq 1}\sum_{j=0}^1\int_{0}^{\infty} \int_{\Sigma_{\tau}}\max\{(r^{-1}\Omega)^{-p'}\rho_+^{1-4\epsilon'}r^{-1+4\epsilon'},1\}r^{-2}(|\mathbf{Z}^{\gamma}( r^3T^jG) |^2+|\mathbf{Z}^{\gamma}( r^3T^{j+1}G) |^2)\\
+(1-\upzeta)\Omega^2r^{-2}|\mathbf{Z}^{\gamma}T^{2+j}( r^3G) |^2\,d\sigma dr d\tau\\
+  \sum_{j=0}^1\sup_{\tau\in[0,\infty)}\int_{\Sigma_{\tau}} (1+\tau)^{2-2\epsilon'}  (\Omega r^{-1})^{2-p'}r^{-2}(|r^3T^{j+1}G|^2+|r^3T^jG|^2)\,d\sigma dr.
\end{multline*}

\paragraph{\underline{Step 4: general $N_1,N_2$}}
Now we consider the general $N_1\in \N_1$, $N_2\in \N_0$ case. We repeat Steps 1 and 2 above but we generalize \eqref{eq:TKvsZ} as follows:
\begin{equation*}
\sum_{|\gamma|\leq N-1}r^2|XTK\mathbf{Z}^{\gamma}\psi|^2\lesssim \sum_{|\gamma|\leq N} \mathcal{E}_{0}[\mathbf{Z}^{\gamma}T\psi]+\mathcal{E}_{0}[\mathbf{Z}^{\gamma}\psi]+r^{-2}\sum_{|\gamma|\leq N-1} |r^3\mathbf{Z}^{\gamma}TG|^2+|r^3\mathbf{Z}^{\gamma}G|^2.
\end{equation*}
We apply the above inequality $N_1$ times.
\end{proof}

\begin{corollary}
\label{prop:edecaygeneralp}           
Let $N_1\in \N_1$, $N_2\in \N_0$ and $\q_1\in [0,\infty)$. Let $|\q|\leq \q_1$. Let $1+\re \sqrt{1-4\q_1^2}< p<2+\re \sqrt{1-4\q_1^2}$. For all $\nu>0$ and $0<\epsilon<p-1-\re \sqrt{1-4\q_1^2}<1$, there exists a constant\\ $C=C(\h,\nu,\epsilon, N_1,N_2,p,\q_1)>0$ such that:
\begin{multline}
\label{eq:energydecayp}
(1+\tau)^{1+\re\sqrt{1-4\tilde{q}^2}-p+\epsilon-\nu+2N_1}\int_{\Sigma_{\tau}}\sum_{|\gamma|\leq N_2}\mathcal{E}_{p}[\mathbf{Z}^{\gamma}(TK)^{N_1}\psi]\,d\sigma dr \\
\leq C\sum_{m\leq N_1}\sum_{|\gamma|+m\leq N_1+N_2}\sum_{j\leq 2N_1}\int_{\Sigma_{0}} \mathcal{E}_{1+\re \sqrt{1-4\tilde{q}^2}+\epsilon}[\mathbf{Z}^{\gamma}(TK)^mT^j\psi]\,d\sigma dr\\
+\sup_{\tau\in [0,\infty)}\int_{\Sigma_{\tau}} (1+\tau)^{2-\epsilon+2\max\{m-1,0\}}(\Omega r^{-1})^{2-p}r^{-2}|r^3\mathbf{Z}^{\gamma}T^j(TK)^{\max\{m-1,0\}}G_{\widehat{A}}|^2\,d\sigma dr d\tau\\
+\int_{0}^{\infty} \int_{\Sigma_{\tau}} (1+\tau)^{2m}\Big[\max\{(r^{-1}\Omega)^{-1-\re \sqrt{1-4\tilde{q}^2}-\epsilon}\rho_+^{1-2\epsilon}r^{-1+2\epsilon},1\}r^{-2}|r^3\mathbf{Z}^{\gamma}T^j(TK)^mG_{\widehat{A}}|^2\\
+\Omega^2(1-\upzeta)|r\mathbf{Z}^{\gamma}T^{j+1}(TK)^mG_{\widehat{A}}|^2\Big]\,d\sigma dr d\tau.
\end{multline}
\end{corollary}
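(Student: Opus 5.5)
The plan is to interpolate between the decay estimate of Proposition \ref{prop:edecay} at the endpoint weight $p_0:=1+\re\sqrt{1-4\q_1^2}+\epsilon$ and a boundedness statement at the larger weight $p$. The latter is not available directly from \eqref{eq:horpestrmin1pest}, since $p$ lies outside the admissible range there. Instead, I would gain $r$-weights (and $\Omega^{-1}$-weights near $\mathcal{H}^+$) by trading one factor of $TK$ for radial derivatives, as in \eqref{eq:TKvsZ}. The key observation is that, by \eqref{eq:maineqradfieldhzero} and \eqref{eq:maineqradfieldhtildezero}, the quantity $r^2 X TK\mathbf{Z}^\gamma\psi$ (respectively its analogue with $\rho_+$ weights near the horizon) is controlled by $\mathcal{E}_0$-type densities of $\mathbf{Z}^{\gamma'}\psi$ and $\mathbf{Z}^{\gamma'}T\psi$, with $|\gamma'|\le|\gamma|+1$, plus the inhomogeneity. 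Consequently, for $p\le p_0+1$,
\begin{equation*}
\mathcal{E}_{p}[\mathbf{Z}^{\gamma}(TK)^{N_1}\psi]\lesssim \sum_{|\gamma'|\le|\gamma|+1}\mathcal{E}_{p-2}[\mathbf{Z}^{\gamma'}(TK)^{N_1-1}T^{j}\psi]+(\Omega r^{-1})^{2-p}r^{-2}|r^3\mathbf{Z}^\gamma T^j (TK)^{N_1-1}G_{\widehat A}|^2,
\end{equation*}
with $j\le1$. This gains two powers of weight at the cost of one factor of $TK$. Since $p-2<p_0$, the right-hand side falls within the range of Proposition \ref{prop:edecay}.

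Next I would split $\Sigma_\tau$ into the region $\{r^{-1}\rho_+\ge(1+\tau)^{-1}\}$, roughly the bulk, and its complement near $\mathcal{H}^+$ and $\mathcal{I}^+$. In the bulk region, $(\Omega r^{-1})^{-p}\lesssim(1+\tau)^{p-p_0}(\Omega r^{-1})^{-p_0}$, so Proposition \ref{prop:edecay} at weight $p_0$ yields decay of order $(1+\tau)^{-2N_1+\nu+p-p_0}$. In the near region, I would use the displayed inequality. The weight $\mathcal{E}_{p-2}$ is compared with $\mathcal{E}_{p_0}$ by paying $(1+\tau)^{p-2-p_0}$ (which is favourable, since $p-2<p_0$), and the reduced number of $TK$ factors costs $(1+\tau)^{2}$ relative to $N_1$. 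This gives $(1+\tau)^{-2(N_1-1)+\nu+p-2-p_0}=(1+\tau)^{-2N_1+\nu+p-p_0}$, the same rate as in the bulk. Together these produce exactly the exponent $1+\re\sqrt{1-4\q_1^2}-p+\epsilon-\nu+2N_1$ claimed in \eqref{eq:energydecayp}.

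For the case $N_1-1=0$, Proposition \ref{prop:edecay} is unavailable, since it is stated for $N_1\ge1$. Here I would use the $N_1=0$ remark made in Step 1 of its proof, which gives $\tau^{-1}$-type bounds for $\mathcal{E}_{p-2}$ via the interpolation argument with $\sigma=1$. Alternatively, one can apply \eqref{eq:horpestrmin1pest} for boundedness and absorb the loss into $\nu$. The inhomogeneous terms are collected in the same way, producing the $\sup_\tau$ term with weight $(1+\tau)^{2-\epsilon+2\max\{m-1,0\}}$ and the spacetime integrals with weight $(1+\tau)^{2m}$ in the statement.

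I expect the main obstacle to be the $N_1=1$ case, where the reduction leaves no $TK$ factor. Its energy is only bounded, or decays at rate $\tau^{-1+\nu}$ along the lines of Step 1, and I must check that the exponent bookkeeping still closes, possibly using that the near region has volume weight $(1+\tau)^{-1}$. If this fails, I would instead rerun Steps 1--3 of the proof of Proposition \ref{prop:edecay} directly with weight $p$, using the above reduction in place of \eqref{eq:TKvsZ}.
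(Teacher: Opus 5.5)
Your proposal is essentially the paper's proof. The paper splits $\Sigma_\tau$ at $\{r^{-1}\rho_+\le(1+\tau)^{-1}\}$, controls the bulk by Proposition \ref{prop:edecay} at weight $p_0=1+\re\sqrt{1-4\q_1^2}+\epsilon$ at the price $(1+\tau)^{p-p_0}$, and in the near region trades one $TK$ for radial weights via \eqref{eq:TKvsZv2}.

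One correction: the paper applies \eqref{eq:TKvsZv2} only to the $X$-part $(r^{-1}\Omega)^{2-p}r^2|XTK\psi|^2$. It leaves the non-$X$ terms, whose weights are the same for $\mathcal{E}_p$ and $\mathcal{E}_{p_0}$, to the bulk bound. Your displayed inequality for the full density is false for those terms when $p<2$, since $\mathcal{E}_{p-2}$ then carries the degenerate weight $(\Omega^{-1}r)^{p-2}$ on them.

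The case $N_1=1$ that you flag as the main obstacle is not one. Your own count $(1+\tau)^{-2(N_1-1)+\nu+p-2-p_0}$ shows that the weight gain $(1+\tau)^{p-2-p_0}$ alone already gives the target rate. So mere boundedness from \eqref{eq:horpestrmin1pest}, applied to $\psi$ and $T\psi$, suffices, and this is exactly what the paper does. By contrast, the ``$\tau^{-1}$-type bounds'' you attribute to the $N_1=0$ remark in the proof of Proposition \ref{prop:edecay} are not available: that remark only yields boundedness up to a $\tau^{\nu}$ loss.
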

\begin{proof}
We will prove the case $N_1=1$, $N_2=0$. The general case proceeds entirely analogously. We apply an interpolation argument. Split:
\begin{multline*}
\int_{\Sigma_{\tau}}\mathcal{E}_{p}[TK\psi]\,d\sigma dr=\int_{ \Sigma_{\tau}\cap\{r^{-1}\rho_+\leq (1+\tau)^{-1}\}}\mathcal{E}_{p}[TK\psi]\,d\sigma dr\\
+\int_{ \Sigma_{\tau}\cap\{r^{-1}\rho_+> (1+\tau)^{-1}\}}\mathcal{E}_{p}TK\psi]\,d\sigma dr\\
\lesssim  \int_{ \Sigma_{\tau}\cap\{r^{-1}\rho_+\leq (1+\tau)^{-1}\}}(r^{-1}\Omega)^{2-p}r^2|XTK\psi|^2\,d\sigma dr\\
+(1+\tau)^{p-(1+\re \sqrt{1-4\tilde{q}^2} +\epsilon)}\int_{\Sigma_{\tau}}\mathcal{E}_{1+\re \sqrt{1-4\tilde{q}^2}+\epsilon}[\mathbf{Z}^{\gamma}(TK)^{N_1}\psi]\,d\sigma dr.
\end{multline*}
We use that by \eqref{eq:maineqradfield} and \eqref{eq:maineqradfieldconfhor}:
\begin{equation}
\label{eq:TKvsZv2}
r^2|XTK\psi|^2\lesssim \sum_{|\gamma|\leq 1} \mathcal{E}_{0}[\mathbf{Z}^{\gamma}T\psi]+\mathcal{E}_{0}[\mathbf{Z}^{\gamma}\psi]+ |rTG|^2+|rG|^2
\end{equation}
to estimate:
\begin{multline*}
 \int_{ \Sigma_{\tau}\cap\{r^{-1}\rho_+\leq (1+\tau)^{-1}\}}(r^{-1}\Omega)^{2-p}r^2|XTK\psi|^2\,d\sigma dr\\
 \lesssim (1+\tau)^{-3-\sqrt{1-4\tilde{q}^2}+\epsilon+\nu+p}\sum_{|\gamma|\leq 1}\sum_{j\leq 1}\int_{\Sigma_{\tau}}\mathcal{E}_{1+\sqrt{1-4\tilde{q}^2}-\epsilon-\nu}[\mathbf{Z}^{\gamma}T^j\psi]\,d\sigma dr\\
 +(1+\tau)^{-2+\epsilon}\sum_{j\leq 1}\int_{\Sigma_{\tau}} (1+\tau)^{2-\epsilon}(r^{-1}\Omega)^{2-p}[|rTG|^2+|rG|^2]\,d\sigma dr.
\end{multline*}
The first integral on the RHS above can then be estimated by \eqref{eq:horpestrmin1pest} applied to $\psi$ and $T\psi$.
\end{proof}

\section{Approximate solutions}
\label{sec:approxsol}
In this section, we will set $r_+=1$ for convenience. We define the following integral operators:
\begin{align*}
(T^{-1}f)(\tau,r,\theta,\varphi)=&-\int_{\tau}^{\infty}f(\tau',r,\theta,\varphi)\,d\tau',\\
(K^{-1}f)(\tau,r,\theta,\varphi)=&-e^{-i \q \tau}\int_{\tau}^{\infty}e^{i \q \tau'}f(\tau',r,\theta,\varphi)\,d\tau'.
\end{align*}
It is straightforward to check that for $q\neq 0$:
\begin{equation*}
TK\left(\frac{1}{-i\q}\left( K^{-1}f-T^{-1}f\right)\right)=f,
\end{equation*}
so we can express:
\begin{equation*}
	(TK)^{-1}(f)=\frac{1}{-i\q}\left( K^{-1}f-T^{-1}f\right).
\end{equation*}

We will apply the energy decay estimates of Proposition \ref{prop:edecay} and Corollary \ref{prop:edecaygeneralp} to $(TK)^{-1}(\widehat{\psi})=(TK)^{-1}(\psi-\Psi)$, where $\psi$ is a solution to \eqref{eq:CSF} with $G_A$ and $\Psi$ is a function for which $(g^{-1}_{M,e})^{\mu\nu}D^{\widehat{A}}_{\mu}D^{\widehat{A}}_{\nu}\Psi$ decays suitably in $\tau$ and $r$.

The structure of this section is as follows:
\begin{itemize}
	\item We define $\Psi$ here and establish in \S \ref{sec:boxPsismall} decay estimates for $(g^{-1}_{M,e})^{\mu\nu}\:^{\widehat{A}}D_{\mu}^{\widehat{A}}D_{\nu}(r^{-1}\Psi)$. The function $\Psi$ will depend on constants $\mathfrak{I}_{\ell m}[\psi]$ and $\mathfrak{H}_{\ell m}[\psi]$, which will be determined by initial data for $\psi$, with $\mathfrak{H}_{\ell m}[\psi]=0$ when $|Q|<M$.
	\item In \S \ref{sec:idatatimeint}, we interpret $(TK)^{-1}(\widehat{\psi})$ as solutions to the inhomogeneous equation:
	\begin{equation*}
		(g^{-1}_{M,e})^{\mu\nu}\:^{\widehat{A}}D_{\mu}\:^{\widehat{A}}D_{\nu}(r^{-1}(TK)^{-1}\widehat{\psi})=-(TK)^{-1}((g^{-1}_{M,e})^{\mu\nu}\:^{\widehat{A}}D_{\mu}^{\widehat{A}}D_{\nu}(r^{-1}\Psi))
	\end{equation*}
	arising from initial data $((TK)^{-1}(\widehat{\psi}), T(TK)^{-1}(\widehat{\psi}))|_{\Sigma_0}$ that we construct explicitly. In this step, we fix $\mathfrak{I}_{\ell m}[\psi]$ and $\mathfrak{H}_{\ell m} [\psi]$.
\end{itemize}
\subsection{Definition of tail functions}
We define the \emph{null infinity tail function} $\Psi^{\infty}: [0,\infty)_{\tau}\times [r_+,\infty)_r\times \s^2\to \C$ as follows: fix $r_+=1$, then
\begin{align*}
\Psi^{\infty}(\tau,r,\theta,\varphi):=&\:\sum_{\substack{\ell\in \N_0\\\ell(\ell+1)<q^2}}\sum_{|m|\leq \ell} \Psi^{\infty}_{\ell m}(\tau,r)Y_{\ell m}(\theta,\varphi),\\
\Psi^{\infty}_{\ell m}(\tau,r):=&\: \mathfrak{I}_{\ell m}[\psi]\frac{\mathfrak{w}_{\ell}(r)}{\mathfrak{w}^0_{\ell}(r)}(\Psi^{\infty}_0)_{\ell m}(\tau,r;\q),
\end{align*}
where
\begin{equation*}
\mathfrak{w}^0_{\ell}(r;\q):=\begin{cases}
	{\alpha}_+ r^{\frac{1}{2}+i\q+\frac{1}{2}\beta_{\ell}}+{\alpha}_- r^{\frac{1}{2}+i\q-\frac{1}{2}\beta_{\ell}}\quad (\beta_{\ell}\neq 0),\\
	 {\alpha}_+ r^{\frac{1}{2}+i\q}\log r+{\alpha}_- r^{\frac{1}{2}+i\q}\quad (\beta_{\ell}= 0),
\end{cases}	
\end{equation*}
 with ${\alpha}_{\pm}\in \C$ the constants appearing in the following asymptotics as $r\to \infty$:
\begin{equation*}
	\mathfrak{w}_{\ell}(r;\q)=\begin{cases} {\alpha}_+ r^{\frac{1}{2}+i\q+\frac{1}{2}\beta_{\ell}}(1+O(r^{-1}))+{\alpha}_- r^{\frac{1}{2}+i\q-\frac{1}{2}\beta_{\ell}}(1+O(r^{-1}))\quad (\beta_{\ell}\neq 0),\\
	 {\alpha}_+ r^{\frac{1}{2}+i\q}\log r(1+O(r^{-1}))+{\alpha}_- r^{\frac{1}{2}+i\q}(1+O(r^{-1}))\quad (\beta_{\ell}= 0);
	 \end{cases}
\end{equation*}
	 see Proposition \ref{prop:statsol}, and where
\begin{equation*}
(\Psi^{\infty}_0)_{\ell m}(\tau,r;\q):= \begin{cases}
\beta_{\ell}\sum_{n=0}^{N_{\beta_{\ell}}}\zeta^n(1+\tau)^{-\frac{1}{2}-(n+\frac{1}{2})\beta_{\ell}+i\q}\\
		\times {}_2\mathbf{F}_1\left(\frac{1}{2}+\frac{\beta_{\ell}}{2}+i\q,\frac{1}{2}-\frac{\beta_{\ell}}{2}+i\q,\frac{1}{2}+i\q-\left(n+\frac{1}{2}\right)\beta_{\ell};-\frac{\tau+1}{2r}\right)\quad \textnormal{when}\: \beta_{\ell}\in (0,1)),\\
		\mbox{}\\
		\sign(\q)\beta_{\ell}\sum_{n=0}^{\infty}\zeta^{\sign(\q)n}(1+\tau)^{-\frac{1}{2}-(n+\frac{1}{2})\sign(\q)\beta_{\ell}+i\q}\\
		\times {}_2\mathbf{F}_1\left(\frac{1}{2}+\sign(\q)\frac{\beta_{\ell}}{2}+iq,\frac{1}{2}-\sign(\q)\frac{\beta_{\ell}}{2}+i\q,\frac{1}{2}+i\q-\left(n+\frac{1}{2}\right)\sign(\q)\beta_{\ell};-\frac{\tau+1}{2r}\right)\\ \textnormal{when}\:\beta_{\ell}\in i(0,\infty),\\
		\mbox{}\\
		(\tau+1)^{-\frac{1}{2}+i\q}\int_0^{\infty}(\tau+1)^{-ix}e^{-\eta x}{}_2\mathbf{F}_1\left(\frac{1}{2}+iq,\frac{1}{2}+iq,\frac{1}{2}+iq-ix;-\frac{\tau+1}{2r}\right)\,dx\\ \textnormal{when}\:\beta_{\ell}=0,
\end{cases}
\end{equation*}
with $N_{\beta_{\ell}}=\lceil (\re \beta_{\ell})^{-1}\rceil\in \N_0$, with ${}_2\mathbf{F}_1(a,b,c;\cdot )$ denoting regularized Gauss hypergeometric functions, see for example \cite{NIST:DLMF}[\S 15.2] for a definition, and $\zeta,\eta\in \C$ defined as follows:
\begin{align*}
	\zeta:=& -\frac{2^{\beta_{\ell}}\Gamma(-\beta_{\ell}+1)\Gamma(\frac{1}{2}+ \frac{\beta_{\ell}}{2}+i\q)}{\Gamma(\beta_{\ell}+1) \Gamma(\frac{1}{2}- \frac{\beta_{\ell}}{2}+i\q)}\frac{\alpha_-}{\alpha_+},\\
	\eta:=&\: i \sign(q)\left[\frac{\alpha_-}{\alpha_+}-\log 2+2\gamma_{\rm Euler}-\frac{\Gamma'(\frac{1}{2}+i\q)}{\Gamma(\frac{1}{2}+i\q)}\right],
	\end{align*}
where $\gamma_{\rm Euler}$ is the Euler--Mascheroni constant.

Note that $(\Psi^{\infty}_0)_{\ell m}(\cdot,\cdot;\q)Y_{\ell m}$ are solutions to \eqref{eq:CSFmink} with $A=\widehat{A}$ and $\h=0$, that are smooth with respect to $(\tau,\rho_{\infty},\theta,\varphi)$; see Appendix \ref{sec:tailfunctmink}.

In the $|Q|=M$ case, we also define the \emph{event horizon tail function}:
\begin{align*}
\Psi^{+}(\tau,r,\theta,\varphi;\q):=&\:\sum_{\substack{\ell\in \N_0\\\ell(\ell+1)<\q^2}}\sum_{|m|\leq \ell}\Psi^+_{\ell m}(\tau,r)Y_{\ell m}(\theta,\varphi),\\
\Psi_{\ell m}^+(\tau,r):=&\: \frac{\mathfrak{w}_{\ell}(1+(r-1)^{-1};-\q)}{\mathfrak{w}_{\ell}^0(1+(r-1)^{-1};-\q)}(\Psi^{\infty}_0)_{\ell m}(\tau,1+(r-1)^{-1};\q),\\
\mathfrak{H}_{\ell m}[\psi]\in &\: \C.
\end{align*}
In the case $|Q|<M$, we simply set $\mathfrak{h}_{\ell m}[\psi]=0$. We define
\begin{equation*}
\Psi:=\Psi^{\infty}+\Psi^+.
\end{equation*}
We also introduce:
\begin{equation*}
G^{(1)}:=\frac{1}{i\q}(K^{-1}-T^{-1})((g^{-1})^{\mu\nu}D_{\mu}D_{\nu}(r^{-1}\Psi)).
\end{equation*}
From the estimates in \S \ref{sec:boxPsismall}, it will follow that the above integral is in fact well-defined.

For later convenience, we introduce the following notation: 
\begin{align*}
	|\mathfrak{I}[\psi]|^2:=&\:\sum_{\substack{\ell\in \N_0\\\ell(\ell+1)<\q^2}}\sum_{\substack{m\in \Z\\ |m|\leq \ell}}|\mathfrak{I}_{\ell m}[\psi]|^2,\\
	|\mathfrak{H}[\psi]|^2:=&\:\sum_{\substack{\ell\in \N_0\\\ell(\ell+1)<\q^2}}\sum_{\substack{m\in \Z\\ |m|\leq \ell}}|\mathfrak{H}_{\ell m}[\psi]|^2.
\end{align*}

\subsection{Tail functions as approximate solutions}
\label{sec:boxPsismall}
Define
\begin{equation*}
	G[f]:=-(g^{-1}_{M,Q})^{\mu\nu}\:(^{\widehat{A}}D_{\mu})(^{\widehat{A}}D_{\nu})(r^{-1}f).
\end{equation*}

In this section, we show that $G[\Psi^{\infty}]$ and $G[\Psi^{+}]$ decay suitably fast in $r$ and $\tau$ and these decay rates are preserved under commutation with either $(\tau+1) T$ or $(\tau+1) K$ and $(r-r_+)X$. In this sense, the tail functions $\Psi^+$ and $\Psi^{\infty}$ may each be considered approximate solutions to \eqref{eq:maineqradfield}.
\begin{proposition}
\label{prop:estinhomG}
Fix $r_+=1$. Let $n_1,n_2\in \N_0$. Then there exists a constant $C=C(n_1,n_2,\widetilde{\h})>0$ such that:
\begin{equation}
\label{eq:erroresteqforPsiinf}
\left|r^3((r-1)X)^{n_1}((\tau+1)T)^{n_2}G[\Psi^{\infty}]\right|(\tau,r,\theta,\varphi)\leq 
C ((\tau+1)r^{-1}+2)^{-\frac{1}{2}-\frac{1}{2}\re \beta_{\ell}}(\tau+1)^{-\frac{3}{2}-\frac{1}{2}\re \beta_{\ell}}.
\end{equation}
and
\begin{equation}
\label{eq:erroresteqforPsihor}
\left|r^3((r-1)X)^{n_1}((\tau+1)K)^{n_2}G[\Psi^{+}]\right|(\tau,r,\theta,\varphi)\leq C ( (\tau+1)(1-r^{-1})+2)^{-\frac{1}{2}-\frac{1}{2}\re \beta_{\ell}}(\tau+1)^{-\frac{3}{2}-\frac{1}{2}\re \beta_{\ell}}.
\end{equation}
\end{proposition}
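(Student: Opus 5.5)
The estimate comes from writing $\Psi^{\infty}_{\ell m}=\mathfrak{I}_{\ell m}\,\frac{\mathfrak{w}_{\ell}}{\mathfrak{w}^0_{\ell}}(\Psi_0^{\infty})_{\ell m}$. We expand $r^3 G[\Psi^{\infty}]$ using \eqref{eq:maineqradfield} (equivalently \eqref{eq:maineqradfieldconf} in the $(\tau,\rho_\infty)$ chart) and exploit two exact cancellations.

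First, $\mathfrak{w}_{\ell}Y_{\ell m}$ is a stationary solution on Reissner--Nordstr\"om, by Proposition \ref{prop:statsol}. Second, $(\Psi^{\infty}_0)_{\ell m}Y_{\ell m}$ solves the Minkowski equation \eqref{eq:CSFmink}, with $\mathfrak{w}^0_{\ell}$ its stationary Minkowski analogue. We write $\Psi^\infty_{\ell m}=\mathfrak{w}_{\ell}\cdot F$ with $F:=(\Psi_0^{\infty})_{\ell m}/\mathfrak{w}^0_{\ell}$. The product rule applied to the operator in \eqref{eq:maineqradfieldconf} then kills the zeroth-order term in $F$, because $\mathfrak{w}_{\ell}$ solves the stationary equation. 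This leaves an operator acting on $F$ containing only $\partial_{\rho_\infty}F$, $T F$ and their second derivatives, with coefficients built from $\mathfrak{w}_{\ell}$ and $\partial_{\rho_\infty}\mathfrak{w}_\ell$.

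We run the same computation on Minkowski with $\mathfrak{w}^0_\ell$ in place of $\mathfrak{w}_\ell$. There the result vanishes identically. Subtracting, $r^3G[\Psi^\infty]$ is a sum of terms of the schematic form (difference of coefficients)$\times$(first or second derivatives of $F$). The coefficient differences come from four sources: $\Omega^2 r^{-2}-\rho_\infty^2=O(\rho_\infty^3)$ by Lemma \ref{lm:metricest}; the term $r\frac{d\Omega^2}{dr}=O(\rho_\infty)$; the difference $\partial_{\rho_\infty}\log\mathfrak{w}_\ell-\partial_{\rho_\infty}\log\mathfrak{w}^0_\ell$, which by the expansions in Proposition \ref{prop:statsol} (with nonvanishing of $\mathfrak{w}_\ell$, part (iv)) gains a relative factor $\rho_\infty$; and the foliation terms involving $\h$ and $d\h/dr$, supported in $r\le r_I$.

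In each case we gain one extra power of $\rho_\infty$, or compact support in $r$, relative to the Minkowski operator. The required bounds on $F$ are then derived in Appendix \ref{sec:tailfunctmink}: these are the decay of $(\rho_\infty\partial_{\rho_\infty})^{j}((\tau+1)T)^kF$ from the hypergeometric representation \eqref{eq:defPisinfty0full}. For $r\lesssim\tau$ (argument $-\frac{\tau+1}{2r}\to-\infty$) one uses the connection formula giving $\tau^{-\frac12-\frac12\beta_\ell}\cdot(\tau/r)^{-\frac12-\frac12\beta_\ell}$-type behaviour. For $r\gtrsim\tau$ one uses the Taylor expansion at $0$. The extra $\rho_\infty$ gain together with one $T$-derivative (each $T$ gives a factor $(\tau+1)^{-1}$) yields the claimed $((\tau+1)r^{-1}+2)^{-\frac12-\frac12\re\beta_\ell}(\tau+1)^{-\frac32-\frac12\re\beta_\ell}$. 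In the $\beta_\ell\in i(0,\infty)$ case the infinite sum in $n$ converges uniformly by Corollary \ref{cor:kappanotsmallstatsoln}, since $|\zeta|<1$ with room to spare. In the $\beta_\ell=0$ case the $x$-integral converges because $\re\eta>0$, by \eqref{eq:condzerobeta}. Commuting with $(r-1)X$ and $(\tau+1)T$ preserves every estimate, since these are scaling-compatible with the homogeneity of the hypergeometric functions in $(\tau+1)/r$ and with the symbol-type expansions $O_\infty$ of the coefficients.

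For \eqref{eq:erroresteqforPsihor} we use the symmetry in Proposition \ref{prop:mainequations}, equation \eqref{eq:maineqradfieldhor}. When $\kappa_+=0$ the horizon equation in $(\rho_+,K)$ with charge $-\q$ is identical to the infinity equation in $(\rho_\infty,T)$. $\Psi^+$ is exactly the Couch--Torrence image of $\Psi^\infty$ with $\q\to-\q$ and $T\to K$, so the bound follows from the first one by applying it with $r$ replaced by $1+(r-1)^{-1}$. This maps $(\tau+1)r^{-1}$ to $(\tau+1)(r-1)/r$ and $r^3$ to a comparable weight. The only care needed is that the foliation functions transform into $\h^+,\widetilde{\h}^+$ via \eqref{eq:relhhplus1}--\eqref{eq:relhhplus2}. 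When $|Q|<M$ there is nothing to prove, since $\mathfrak{H}_{\ell m}=0$.

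The main obstacle is establishing the uniform two-regime bounds on $(\Psi_0^\infty)_{\ell m}$ and its derivatives, especially summing the oscillatory series when $\beta_\ell$ is imaginary and handling the $\beta_\ell=0$ integral. I would first try to reduce these to standard large- and small-argument asymptotics of ${}_2\mathbf{F}_1$ with parameter-uniform remainders, as in Appendix \ref{sec:tailfunctmink}.
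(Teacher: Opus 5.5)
Your architecture matches the paper's. You write $\Psi^\infty_{\ell m}=\mathfrak{I}_{\ell m}\mathfrak{w}_\ell F$ with $F=(\Psi^\infty_0)_{\ell m}/\mathfrak{w}^0_\ell$, let the stationary equation for $\mathfrak{w}_\ell$ remove the zeroth-order terms, subtract the Minkowski identity, bound coefficient differences times derivatives of $F$, and obtain \eqref{eq:erroresteqforPsihor} from the symmetry in \eqref{eq:maineqradfieldhor}.

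\textbf{The gap: remainder terms with only radial derivatives.} Your counting (``one extra $\rho_\infty$ together with one $T$-derivative'') works for the remainder terms containing $TF$. Several remainder terms, however, contain only radial derivatives of $F$. Examples are $2r^2\mathfrak{w}^0_\ell\frac{d}{dr}(\Omega^2\mathfrak{w}_\ell/\mathfrak{w}^0_\ell)\,XF$, $\q\,\mathfrak{w}_\ell\, r(1-\Omega^2-\h)XF$, your term $(\partial_{\rho_\infty}\log\mathfrak{w}_\ell-\partial_{\rho_\infty}\log\mathfrak{w}^0_\ell)\rho_\infty^2\partial_{\rho_\infty}F$, and the $(\Omega^2r^{-2}-\rho_\infty^2)\partial^2_{\rho_\infty}F$ contribution. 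All of these are of size $|\mathfrak{w}_\ell|\,|(rX)^jXF|$.

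The target requires a gain of a full factor $(\tau+1)^{-1}$ over $|\mathfrak{w}_\ell F|$. The scaling-type bounds you describe, $|(\rho_\infty\partial_{\rho_\infty})^j((\tau+1)T)^kF|\lesssim(\tau+1)^{-\frac12-\frac12\re\beta_\ell}(\tau+1+2r)^{-\frac12-\frac12\re\beta_\ell}$, give only $|XF|\lesssim r^{-1}$ times that, i.e.\ a gain of $r^{-1}$. In $\{r\lesssim\tau\}$ this is too weak: at bounded $r$ you get $\tau^{-1-\re\beta_\ell}$ instead of $\tau^{-2-\re\beta_\ell}$. Compact support of $\h$ does not rescue this, since the $\h$-terms include $\h\,XF$ at bounded $r$.

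\textbf{The missing idea.} You need the improved radial bound \eqref{eq:divisionwimproveddec2} (and \eqref{eq:divisionwimproveddecim2}, \eqref{eq:divisionwimproveddec2beta0}): $|(rX)^k((\tau+1)T)^lXF|\lesssim(\tau+1)^{-\frac32-\frac12\re\beta_\ell}(\tau+1+2r)^{-\frac12-\frac12\re\beta_\ell}$. This is not a homogeneity fact. It holds only because $\zeta$ (resp.\ $\eta$) is tied to $\alpha_-/\alpha_+$.

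To see this, take $a=\frac12+\frac12\beta_\ell+i\q$ and $b=\frac12-\frac12\beta_\ell+i\q$ as in Proposition \ref{prop:hypgeomrealbeta}. For $\varrho=\frac{\tau+1}{2r}\to\infty$, the $n$-th summand of \eqref{eq:Psi0infimbeta} contributes a piece $r^{a}(1+\tau)^{-1-(n+1)\beta_\ell}$ and a piece $r^{b}(1+\tau)^{-1-n\beta_\ell}$. Hence the $r^b$-piece of summand $n+1$ and the $r^a$-piece of summand $n$ carry the same power of $\tau$. The choice of $\zeta$ makes each such pair proportional to $\alpha_+r^a+\alpha_-r^b=\mathfrak{w}^0_\ell$. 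The $n=0$ term has no $r^b$-piece, and the unpaired last $r^a$-piece is $O((1+\tau)^{-2-\beta_\ell})$.

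Thus $(\Psi^\infty_0)_{\ell m}=\mathfrak{w}^0_\ell(r)h(\tau)+(\tau+1)^{-\frac12-\frac12\beta_\ell}O_\infty(\varrho^{-\frac32-\frac12\beta_\ell})$, and $X$ annihilates the leading part of $F$. For $\beta_\ell=0$, $\eta$ plays the same role: it absorbs the $\log r$ coming from $\log\varrho=\log(1+\tau)-\log(2r)$ into $\alpha_+\log r+\alpha_-$. This pairing is the reason the series in $n$ exists at all, whereas you use $\zeta$ and $\eta$ only for convergence.

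\textbf{Without the matching the proposition is false.} Take $\beta_\ell\in(0,1)$ and a $\zeta$ that does not match. At fixed $r$ one then has $\Psi^\infty_{\ell m}\approx c\,\frac{\mathfrak{w}_\ell}{\mathfrak{w}^0_\ell}(r^a+\mu r^b)(1+\tau)^{-1-\beta_\ell}$ with $\mu\neq\alpha_-/\alpha_+$. This radial profile is smooth at $r_+$ but is not a multiple of $\mathfrak{w}_\ell$, so it is not stationary, and $r^3G[\Psi^\infty]$ is of size $\tau^{-1-\beta_\ell}$. Since your argument never uses the relation between $\zeta,\eta$ and $\alpha_\pm$, it cannot be complete as written.

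Two minor points. After the stationary cancellation, no $r\frac{d\Omega^2}{dr}$ term survives. In the horizon case, no comparability of the $r^3$ weights is needed, because both sides of \eqref{eq:maineqradfieldhor} are $r^2(rG)$.
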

\begin{proof}
In view of \eqref{eq:maineqradfieldconfhor}, the estimates in \eqref{eq:erroresteqforPsihor} follow immediately from the estimates \eqref{eq:erroresteqforPsiinf}, after replacing $q$ with $-q$, $T$ with $K$ and $r=s_{\infty}+1$ with $s_{+}+1=\frac{1}{r-1}+1$.

We will first prove \eqref{eq:erroresteqforPsiinf} with $n_1=n_2=0$. 

Denote $f(\tau,r)=(\mathfrak{w}_{\ell}^0(r;\q))^{-1}(\Psi^{\infty}_0)_{\ell m}(\tau,r)$, $\mathfrak{w}(r)=\mathfrak{w}_{\ell}(r;\q)$ and  $\mathfrak{w}^0(r)=\mathfrak{w}_{\ell}^0(r;\q)$. Then we can write $\Psi^{\infty}_{\ell m}=\mathfrak{w}\cdot f$,so
\begin{multline}
\label{eq:maineqerror}
r^3(g^{-1})^{\mu \nu}D_{\mu}D_{\nu}(r^{-1}\mathfrak{w}fY_{\ell m})Y_{\ell m}^{-1}=2r^2 \frac{d(\Omega^2 \mathfrak{w})}{dr}X f-2r^2(1-\h)X(\mathfrak{w}Tf) \\
+\mathfrak{w}r^2 \left[\Omega^2X^2f-2i\q(1-\h)r^{-1}Xf-\h \widetilde{\h} T^2f+\left(\frac{d\h}{dr}-2i\q r^{-1}\h \widetilde{\h}\right)Tf\right]\\
=\Omega^2\frac{\mathfrak{w}}{\mathfrak{w}^0}\left[2r^2\frac{d\mathfrak{w}^0}{dr}Xf-2r^2X(\mathfrak{w}^0 Tf)+\mathfrak{w}^0r^2\left(X^2f-2i\q r^{-1}Xf\right)\right]\\
+2r^2\mathfrak{w}^0\frac{d(\Omega^2\mathfrak{w}(\mathfrak{w}^0)^{-1})}{dr}Xf-2r^2(1-\Omega^2)\mathfrak{w}XTf-2r^2\mathfrak{w}^0X(\mathfrak{w}(\mathfrak{w}^0)^{-1})Tf+2r^2\h X(\mathfrak{w}Tf)\\
+2iq\mathfrak{w}r(1-\Omega^2-\h )Xf-\mathfrak{w}r^2\h \widetilde{\h} T^2f+\mathfrak{w}r^2\left(\frac{d\h}{dr}-2i\q r^{-1}\h \widetilde{\h}\right)Tf\\
=2r^2\mathfrak{w}^0\frac{d(\Omega^2\mathfrak{w}(\mathfrak{w}^0)^{-1})}{dr}Xf-2r^2(1-\Omega^2)\mathfrak{w}XTf-2r^2\mathfrak{w}^0X(\mathfrak{w}(\mathfrak{w}^0)^{-1})Tf\\
+2r^2\h X(\mathfrak{w}Tf)-2i\q\mathfrak{w}r(1-\Omega^2-\h )Xf-\mathfrak{w}r^2\h \widetilde{\h} T^2f+\mathfrak{w}r^2\left(\frac{d\h}{dr}-2i\q r^{-1}\h \widetilde{\h}\right)Tf,
\end{multline}
where we arrived at the final equality by using that $\mathfrak{w}^0=(\Psi^{\infty}_0)_{\ell m}$ is a solution to \eqref{eq:maineqradfield} with $M=0$ by Propositions \ref{prop:hypgeomrealbeta}, \ref{prop:imaginarybetatail} and \ref{prop:hypgeombeta0}, so:
\begin{equation*}
	2r^2\frac{d\mathfrak{w}^0}{dr}Xf-2r^2X(\mathfrak{w}^0 Tf)+\mathfrak{w}^0r^2\left(X^2f-2i\q r^{-1}Xf\right)=0.
\end{equation*}

From Propositions \ref{prop:hypgeomrealbeta}, \ref{prop:imaginarybetatail} and \ref{prop:hypgeombeta0} it follows moreover that:
\begin{align*}
r^2|\h \widetilde{\h} \mathfrak{w}T^2f|\lesssim &\: r^{\frac{1}{2}+\frac{1}{2}\re \beta_{\ell}}(\tau+2r)^{-\frac{1}{2}-\frac{1}{2}\re \beta_{\ell}}(\tau+1)^{-\frac{5}{2}-\frac{1}{2}\re \beta_{\ell}},\\
r^2|\mathfrak{w}\left(\frac{d\h}{dr}+2iq r^{-1}\h \widetilde{\h}\right)Tf|\lesssim &\: r^{-\frac{1}{2}+\frac{1}{2}\re \beta_{\ell}}(\tau+2r)^{-\frac{1}{2}-\frac{1}{2}\re \beta_{\ell}}(\tau+1)^{-\frac{3}{2}-\frac{1}{2}\re \beta_{\ell}},\\
2r^2\left|\mathfrak{w}^0\frac{d(\Omega^2\mathfrak{w}(\mathfrak{w}^0)^{-1})}{dr}Xf\right|\lesssim &\:r^{\frac{1}{2}+\frac{1}{2}\re \beta_{\ell}}(\tau+2r)^{-\frac{3}{2}-\frac{1}{2}\re \beta_{\ell}}(1+\tau)^{-\frac{1}{2}-\re \frac{1}{2}\beta_{\ell}},\\
2r^2\left|(1-\Omega^2)\mathfrak{w}XTf\right|\lesssim &\:r^{\frac{3}{2}+\frac{1}{2}\re \beta_{\ell}}(\tau+2r)^{-\frac{3}{2}-\frac{1}{2}\re \beta_{\ell}}(1+\tau)^{-\frac{3}{2}-\re \frac{1}{2}\beta_{\ell}},\\
2r^2\left|\mathfrak{w}^0X(\mathfrak{w}(\mathfrak{w}^0)^{-1})Tf\right|\lesssim &\: r^{\frac{1}{2}+\frac{1}{2}\re \beta_{\ell}}(\tau+2r)^{-\frac{1}{2}-\frac{1}{2}\re \beta_{\ell}}(1+\tau)^{-\frac{3}{2}-\re \frac{1}{2}\beta_{\ell}},\\
2r^2|\h X(\mathfrak{w}Tf)|\lesssim &\:r^{-\frac{1}{2}+\frac{1}{2}\re \beta_{\ell}}(\tau+2r)^{-\frac{1}{2}-\frac{1}{2}\re \beta_{\ell}}(1+\tau)^{-\frac{3}{2}-\re \frac{1}{2}\beta_{\ell}}\\
+2\left|i\q \mathfrak{w}r(1-\Omega^2-\h )Xf\right|\lesssim &\: r^{\frac{1}{2}+\frac{1}{2}\re \beta_{\ell}}(\tau+2r)^{-\frac{1}{2}-\frac{1}{2}\re \beta_{\ell}}(1+\tau)^{-\frac{3}{2}- \frac{1}{2}\re\beta_{\ell}}.
\end{align*}
We therefore have that:
\begin{equation*}
|r^3(g^{-1})^{\mu \nu}D_{\mu}D_{\nu}(r^{-1}\mathfrak{w}f)|\lesssim  r^{\frac{1}{2}+\frac{1}{2}\re \beta_{\ell}}(\tau+2r)^{-\frac{1}{2}-\frac{1}{2}\re \beta_{\ell}}(1+\tau)^{-\frac{3}{2}-\re \frac{1}{2}\beta_{\ell}}.	
\end{equation*}
The case $n_1+n_2> 0$ follows from the fact that the above estimates are all preserved under commutation with $(\tau+1)T$ and $rX$.
\end{proof}

\begin{corollary}
	Let $r_+=1$ and $s_+=(r-1)^{-1}$. The following constants are well-defined:
\begin{align}
\label{eq:defBinf}
	B_{\infty,\ell m}:=&\int_1^{\infty}\mathfrak{w}_{\ell}(r;\q)e^{2iq\int_{r_{\sharp}}^rr'^{-1}\Omega^{-2}(r')(1-\h(r'))\,dr'}\Bigg[ rT^{-1}(G_{\ell m}[\Psi^{\infty}])(0,r)\\ \nonumber
+&\left(2(1-\h)X \Psi^{\infty}_{\ell m}+\h \widetilde{\h} T\Psi^{\infty}_{\ell m}-\left(\frac{d\h}{dr}-2i \q r^{-1} \h\widetilde{\h}\right)\Psi^{\infty}_{\ell m}\right)(0,r)\Bigg]\,dr,\\
\label{eq:defBhor}
B_{+,\ell m}:=&\int_0^{\infty}\mathfrak{w}_{\ell}(s_++1;-\q)e^{-2iq\int_{1}^{s_+}(s'+1)^{-1}\Omega^{-2}(s'+1)(1-\h^+(s'+1))\,ds'}\Bigg[ rK^{-1}(G_{\ell m}[\Psi^{+}])(0,s_+)\\ \nonumber
+&\left(2(1-\h^+)\partial_{s_+}\Psi^{+}_{\ell m}+\h^+ \widetilde{\h}^+ K\Psi^{+}_{\ell m}-\left(\frac{d\h^+}{ds_+}-2i \q (s_++1)^{-1} \h^+\widetilde{\h}^+\right)\Psi^{+}_{\ell m}\right)(0,s_+)\Bigg]\,ds_+.
\end{align}
\end{corollary}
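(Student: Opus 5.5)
The plan is to show that each integrand in \eqref{eq:defBinf} and \eqref{eq:defBhor} is absolutely integrable on its interval. I would first reduce to $B_{\infty,\ell m}$, then bound its two groups of terms (the time integral of $G[\Psi^{\infty}]$ and the boundary terms involving $\Psi^{\infty}$) separately near $r=1$ and for large $r$.

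\emph{Reduction.} $B_{+,\ell m}$ is obtained from $B_{\infty,\ell m}$ by the substitutions noted after Proposition \ref{prop:mainequations}: interchange the roles of $\rho_+$ and $\rho_\infty$ (i.e.\ $r-1\leftrightarrow s_+$), replace $\q$ by $-\q$, $T$ by $K$, and $\h,\widetilde{\h}$ by $\h^+,\widetilde{\h}^+$. The estimate \eqref{eq:erroresteqforPsihor} is the exact analogue of \eqref{eq:erroresteqforPsiinf}. The factor $e^{-i\q\tau'}$ in $K^{-1}$ has modulus one, so it does not affect absolute convergence. Hence the same argument gives $B_{+,\ell m}$.

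\emph{The phase factor.} The exponential $e^{2i\q\int r'^{-1}\Omega^{-2}(1-\h)\,dr'}$ has modulus one, since $\q$ and the integrand are real. This remains true near $r=1$, where $\Omega^{-2}$ blows up and the phase merely oscillates. Also $|\mathfrak{w}_\ell|$ is bounded on compact $r$-sets because $\mathfrak{w}_\ell$ is smooth. For large $r$, Proposition \ref{prop:statsol} gives $|\mathfrak{w}_\ell|\lesssim r^{\frac12+\frac12\re\beta_\ell}$, with an extra $\log r$ when $\beta_\ell=0$. So it suffices to bound the bracket.

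\emph{The time-integral term.} By \eqref{eq:erroresteqforPsiinf} with $n_1=n_2=0$,
\begin{equation*}
|rT^{-1}G_{\ell m}[\Psi^\infty](0,r)|\lesssim r^{-2}\int_0^\infty((\tau+1)r^{-1}+2)^{-\frac12-\frac12\re\beta_\ell}(\tau+1)^{-\frac32-\frac12\re\beta_\ell}\,d\tau\lesssim r^{-2}.
\end{equation*}
The $\tau$-integral converges because $\frac32+\frac12\re\beta_\ell>1$. This shows at once that $T^{-1}G[\Psi^\infty]$ is well defined and bounded by $r^{-2}$ uniformly for $r\geq 1$. Multiplying by $|\mathfrak{w}_\ell|$ gives a large-$r$ integrand of size $r^{-\frac32+\frac12\re\beta_\ell}$ (times $\log r$ when $\beta_\ell=0$). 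This is integrable at infinity because $\re\beta_\ell<1$ for every $\ell$ with $\ell(\ell+1)<\q^2$, indeed $\re\beta_\ell\in[0,1)$. Near $r=1$ everything is bounded.

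\emph{The boundary terms.} The terms with $\h$, $\widetilde{\h}$ and $\frac{d\h}{dr}$ are supported in $\{r\leq r_I\}$ by \eqref{eq:foliassm1}. There they are bounded, since $\Psi^\infty$ is smooth in $(\tau,r)$ for $r\geq 1$, so they contribute a finite integral. For $r\geq r_I$ only $2X\Psi^{\infty}_{\ell m}(0,r)$ remains. Write $\Psi^\infty_{\ell m}=\mathfrak{I}_{\ell m}[\psi]\,\frac{\mathfrak{w}_\ell}{\mathfrak{w}^0_\ell}(\Psi_0^\infty)_{\ell m}$. Here $\frac{\mathfrak{w}_\ell}{\mathfrak{w}^0_\ell}=1+O_\infty(r^{-1})$, and $(\Psi^\infty_0)_{\ell m}(0,\cdot)$ is smooth in $\rho_\infty=r^{-1}$ (Appendix \ref{sec:tailfunctmink}), being built from ${}_2\mathbf{F}_1(\ldots;-\frac{1}{2r})$. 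Hence $|X\Psi^\infty_{\ell m}(0,r)|\lesssim r^{-2}$, and the integrand is again $O(r^{-\frac32+\frac12\re\beta_\ell})$.

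\emph{Expected obstacle.} The main difficulty is the uniformity of these bounds in the two delicate cases. For $\beta_\ell\in i(0,\infty)$, $\Psi_0^\infty$ is an infinite series in $\zeta^{\sign(\q)n}$, and one needs the series to converge together with its $\rho_\infty$-derivative; I would use $|\zeta^{\sign(\q)}|<1$, which follows from Corollary \ref{cor:kappanotsmallstatsoln}. For $\beta_\ell=0$, $\Psi_0^\infty$ is an $x$-integral weighted by $e^{-\eta x}$, and one needs $\re\eta>0$, which comes from \eqref{eq:condzerobeta}. For both cases I would rely on the bounds of Propositions \ref{prop:hypgeomrealbeta}, \ref{prop:imaginarybetatail} and \ref{prop:hypgeombeta0} (the same ones used in the proof of Proposition \ref{prop:estinhomG}) to control derivatives in $\rho_\infty$ at $\tau=0$.
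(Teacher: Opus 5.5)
This is essentially the paper's proof: the $T^{-1}G[\Psi^{\infty}]$ term is controlled by the decay in Proposition \ref{prop:estinhomG}, $B_{+,\ell m}$ is reduced to $B_{\infty,\ell m}$ by the $r-1\leftrightarrow s_+$, $T\leftrightarrow K$, $\q\leftrightarrow-\q$ symmetry, and your explicit bounds on the boundary terms fill in what the paper leaves implicit; note that the symmetry has to be read as a substitution applied to $B_{\infty,\ell m}$, so the factor in front of $K^{-1}(r^3G_{\ell m}[\Psi^{+}])$ is $(s_++1)^{-2}$ rather than the area-radius factor $r^{-2}\to 1$ suggested by the displayed formula, and this is what lets your $r^{-2}$ bound transfer. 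One correction to your last paragraph, harmless because you ultimately rely on Propositions \ref{prop:imaginarybetatail} and \ref{prop:hypgeombeta0}: $|\zeta|^{\sign(\q)}<1$ and $\re\eta>0$ would not suffice, since the regularized ${}_2\mathbf{F}_1$ factors grow like $e^{\frac{\pi}{2}n\im\beta_{\ell}}$ in $n$ (resp.\ like $e^{\frac{\pi}{2}x}$ in $x$), so one needs $|\zeta|^{\sign(\q)}<e^{-\frac{\pi}{2}\im\beta_{\ell}}$ and $\re\eta>\frac{\pi}{2}$, each with a margin, which is what \eqref{eq:condzeta} and \eqref{eq:condeta} encode.
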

\begin{proof}
	From the decay estimates in Proposition \ref{prop:estinhomG}, it follows that $rT^{-1}(G_{\ell m}[\Psi^{\infty}])(0,r)$ and $rK^{-1}(G_{\ell m}[\Psi^{+}])(0,s_+)$ and decay suitably fast in $r$ and $s_+$ for the integrals in \eqref{eq:defBinf} and \eqref{eq:defBhor} to be well-defined. Note moreover that $B_{+,\ell m}$ can be obtained from $B_{\infty,\ell m}$ after expressing $B_{\infty,\ell m}$ as an integral with respect to the variable $r-1$ and then mapping $T$ to $K$, $r-1$ to $s_+$, $q$ to $-q$, $\Psi^{\infty}_{\ell m}$ to $\Psi^{+}_{\ell m}$ and $rT^{-1}(G_{\ell m}[\Psi^{\infty}])$ to $ rK^{-1}(G_{\ell m}[\Psi^{+}])$.
\end{proof}

\begin{remark}
\label{rm:nonvanishingB}
	In Corollary \ref{cor:Bconstnonzero}, we will show that $B_{\infty,\ell m}\neq 0$. Furthermore, in the case $|Q|=M$, it then follows immediately that $B_{+,\ell m}\neq 0$ if we choose $\h: (0,\infty)_{s_+}\to [0,\infty)$ so that $\h^+: (0,\infty)_{s_{\infty}}\to [0,\infty)$ satisfies $\h(s_+)=\h^+(s_{\infty})$.
\end{remark}

\section{Initial data for time integrals}
\label{sec:idatatimeint}
We define the following difference function:
\begin{equation*}
\widehat{\psi}=\psi-\Psi.
\end{equation*}
Then $(g^{-1})^{\mu\nu}D_{\mu}D_{\nu}(r^{-1}\widehat{\psi})=TKG^{(1)}$.

In this section, we construct the time integrals $(TK)^{-1}(\widehat{\psi})$. This construction fixes $\mathfrak{I}_{\ell m}[\psi]$ and also $\mathfrak{H}_{\ell m}[\psi]$ in the $|Q|=M$ case.
\subsection{Fixed spherical harmonic modes}
\label{eq:timeintellipticboundl}
In this section, we consider the difference function $\psi-\Psi^{\infty}$ and we will determine the values of $\mathfrak{I}_{\ell m}[\psi]$ that are required to make sense of $T^{-1}(\psi-\Psi^{\infty})$.

We will first construct $T^{-1}(\psi-\Psi^{\infty})|_{\Sigma_0}$ via the renormalized quantities:
\begin{equation*}
	T^{-1}\widecheck{\psi}_{\ell}:=\mathfrak{w}_{\ell}^{-1}T^{-1}(\psi-\Psi^{\infty})|_{\Sigma_0}.
\end{equation*}
Indeed, the requirement that $T(T^{-1}\psi_{\ell})|_{\Sigma_0}=(\psi_{\ell})|_{\Sigma_0}$ and \eqref{eq:maineqradfield} holds for $T^{-1}\widehat{\psi}$ result in the equation:
\begin{multline}
\label{eq:maineqcheckpsi0}
 -rT^{-1}(G_{\ell m}[\Psi^{\infty}])(0,\cdot)=\mathfrak{w}_{\ell}X(\Omega^2 XT^{-1}\widecheck{\psi}_{\ell m})+2\Omega^2 \frac{d \mathfrak{w}_{\ell}}{dr}X T^{-1}\widecheck{\psi}_{\ell m}-2i \q\mathfrak{w}_{\ell}(1-\h)r^{-1}X T^{-1}\widecheck{\psi}_{\ell m}\\
-2(1-\h)X \widehat{\psi}_{\ell m}-\h \widetilde{\h} T\widehat{\psi}_{\ell m}+\left(\frac{d\h}{dr}-2i \q r^{-1} \h\widetilde{\h}\right)\widehat{\psi}_{\ell m}.
\end{multline}
Equivalently, we can express:
\begin{multline}
\label{eq:maineqcheckpsi}
-\mathfrak{w}_{\ell}e^{-2i\q\int_{r_{\sharp}}^rr'^{-1}\Omega^{-2}(r')(1-\h(r'))\,dr'}rT^{-1}(G_{\ell m}[\Psi^{\infty}])(0,\cdot)\\
=X(\Omega^2\mathfrak{w}_{\ell}^2 e^{-2i\q\int_{r_{\sharp}}^rr'^{-1}\Omega^{-2}(r')(1-\h(r'))\,dr'}XT^{-1}\widecheck{\psi}_{\ell m})\\
+\mathfrak{w}_{\ell}e^{-2i\q\int_{r_{\sharp}}^rr'^{-1}\Omega^{-2}(r')(1-\h(r'))\,dr'}\left[-2(1-\h)X \widehat{\psi}_{\ell m}-\h \widetilde{\h} T\widehat{\psi}_{\ell m}+\left(\frac{d\h}{dr}-2i \q r^{-1} \h\widetilde{\h}\right)\widehat{\psi}_{\ell m}\right].
\end{multline}

We define the following constant:

\begin{proposition}
\label{prop:timeinvffixedmodeconstr}
Fix $r_+=1$. Suppose that $B_{\infty,\ell m}\neq 0$ and fix the constants $\mathfrak{I}_{\ell m}[\psi]$ appearing in the definition of $\Psi^{\infty}$ as follows:
\begin{multline*}
	\mathfrak{I}_{\ell m}[\psi]:=B_{\infty,\ell m}^{-1}\int_{1}^{\infty}\mathfrak{w}_{\ell}(r;q)e^{-2i\q\int_{r_{\sharp}}^rr'^{-1}\Omega^{-2}(r')(1-\h(r'))\,dr'}\\
	\times \left[2(1-\h)X {\psi}_{\ell}+\h \widetilde{\h} T{\psi}_{\ell}-\left(\frac{d\h}{dr}-2i \q r^{-1} \h\widetilde{\h}\right){\psi}_{\ell}\right](0,r)\,dr.
\end{multline*}
Then there exists a unique solution $T^{-1}(\psi_{\ell m}-\Psi^{\infty}_{\ell m})|_{\Sigma_0}\in C^{\infty}([r_+,\infty))$ to \eqref{eq:maineqcheckpsi0} such that
\begin{align*}
\Omega^2XT^{-1}(\psi_{\ell}-\Psi^{\infty}_{\ell})|_{\Sigma_0}(r_+,\cdot)\equiv &\:0,\\
\lim_{r\to \infty}\mathfrak{w}_{\ell}^{-1}(r;\q)T^{-1}(\psi_{\ell}-\Psi^{\infty}_{\ell})|_{\Sigma_0}(r,\theta,\varphi)=&\: 0.
\end{align*}
We denote with
\begin{equation*}
	T^{-1}(\psi_{\ell}-\Psi^{\infty}_{\ell})
\end{equation*}
the solution to \eqref{eq:CSF} with $A=\widehat{A}$ and $G=G[\Psi^{\infty}]$, arising from the initial data:
\begin{equation*}
	\left(T^{-1}(\psi_{\ell}-\Psi^{\infty}_{\ell})|_{\Sigma_0}, (\psi_{\ell}-\Psi^{\infty}_{\ell})|_{\Sigma_0}\right).
\end{equation*}
	Then $T(T^{-1}(\psi_{\ell}-\Psi^{\infty}_{\ell}))=\psi_{\ell}-\Psi^{\infty}_{\ell}$.
\end{proposition}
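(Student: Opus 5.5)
The plan is to reduce \eqref{eq:maineqcheckpsi0} to a first-order integration problem using the twisted divergence form \eqref{eq:maineqcheckpsi}, and then impose the two boundary conditions. Write $E(r):=e^{-2i\q\int_{r_{\sharp}}^r r'^{-1}\Omega^{-2}(1-\h)\,dr'}$ and $u:=T^{-1}\widecheck{\psi}_{\ell m}=\mathfrak{w}_\ell^{-1}T^{-1}(\psi_{\ell m}-\Psi^\infty_{\ell m})|_{\Sigma_0}$. Since $\mathfrak{w}_\ell$ is a stationary solution, the zeroth-order terms cancel, and \eqref{eq:maineqcheckpsi} reads $X(\Omega^2\mathfrak{w}_\ell^2E\,Xu)=F$. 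Here $F$ collects the known data $\mathfrak{w}_\ell E\,[\ldots](0,r)$ built from $\widehat\psi_{\ell m}=\psi_{\ell m}-\Psi^\infty_{\ell m}$ at $\tau=0$ and from $rT^{-1}G_{\ell m}[\Psi^\infty](0,\cdot)$. By Proposition \ref{prop:statsol}(iv), $\mathfrak{w}_\ell$ is non-vanishing, so the renormalization by $\mathfrak{w}_\ell^{-1}$ is legitimate.

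Integrate from $r_+$, imposing that $\Omega^2\mathfrak{w}_\ell^2E\,Xu$ vanishes at $r_+$; this is the first boundary condition. This gives $\Omega^2\mathfrak{w}_\ell^2E\,Xu(r)=\int_1^r F$. At $r=r_+$ the phase $E$ has a singular factor $(r-r_+)^{-i\q/\kappa_+}$ (or an essential one when $\kappa_+=0$), but it has modulus one. One checks that $u$ is smooth at $r_+$ directly from the ODE \eqref{eq:maineqcheckpsi0}: there $\Omega^2X^2u-2i\q(1-\h)r^{-1}Xu$ has a regular singular point with an exponent carrying a nonzero imaginary part. Hence the unique solution with $\Omega^2 Xu\to0$ is the smooth one. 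In the extremal case one instead reads off smoothness from the irregular singular point, using that the other solution oscillates like $e^{2i\q/(r-1)}$.

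Next, integrate a second time from $\infty$, setting $u(r)=-\int_r^\infty(\Omega^2\mathfrak{w}_\ell^2E)^{-1}\int_1^{r'}F\,dr''\,dr'$; this is the second boundary condition $u\to0$. Convergence at infinity is the crux. We have $|\mathfrak{w}_\ell|^2\sim r^{1+\re\beta_\ell}$ (with a log correction when $\beta_\ell=0$), and for $\re\beta_\ell<1$ the kernel $r^{-1-\re\beta_\ell}$ is integrable. However, $u\to0$ fails unless the inner integral $\int_1^{r'}F$ decays, that is, unless $\int_1^\infty F=0$. By the definitions in \eqref{eq:defBinf} and in the statement, linearity in the data gives $\int_1^\infty F=\mathfrak{I}_{\ell m}[\psi]B_{\infty,\ell m}-(\text{the }\psi\text{ integral})$, up to sign and conjugation conventions for $E$. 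The stated choice of $\mathfrak{I}_{\ell m}[\psi]$, using $B_{\infty,\ell m}\ne0$, makes this vanish.

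The main obstacle is then quantitative: showing that $\int_r^\infty F=O(r^{-\delta})$ with enough decay that $u=o(1)$ and $u$ is smooth in $\rho_\infty$. This requires the decay of $rT^{-1}G[\Psi^\infty]$ from Proposition \ref{prop:estinhomG} at $\tau=0$ (roughly $r^{-1-\frac12\re\beta_\ell}$ after time integration). It also requires the decay of $X\Psi^\infty_{\ell m}(0,r)$ from the hypergeometric asymptotics, while the compactly supported $\psi$-terms contribute nothing for large $r$. I would treat the cases $\beta_\ell$ real, imaginary and zero separately, using $|\mathfrak{w}_\ell|$ asymptotics from Proposition \ref{prop:statsol}.

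Uniqueness follows because the difference of two solutions solves the homogeneous problem, so $\Omega^2\mathfrak{w}_\ell^2E\,Xu=c$. The condition at $r_+$ forces $c=0$, so $u$ is constant, and $u\to0$ forces $u=0$.

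Finally, define $T^{-1}(\psi_\ell-\Psi^\infty_\ell)$ as the solution given by Theorem \ref{thm:gwp} with inhomogeneity $T^{-1}G[\Psi^\infty]$, understood as the forcing for the time integral, and data $(\mathfrak{w}_\ell u,\ \widehat\psi_\ell|_{\Sigma_0})$. Then $T$ applied to this solution solves the equation for $\widehat\psi_\ell$ with inhomogeneity $G[\Psi^\infty]$, since $T$ commutes with the operator and $TT^{-1}G=G$. Its data are $\widehat\psi_\ell|_{\Sigma_0}$ and, by construction of $u$ through \eqref{eq:maineqcheckpsi0}, $T\widehat\psi_\ell|_{\Sigma_0}$. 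Uniqueness in Theorem \ref{thm:gwp} then gives $T(T^{-1}(\psi_\ell-\Psi^\infty_\ell))=\psi_\ell-\Psi^\infty_\ell$. Theorem \ref{thm:gwp} is stated for compactly supported data, so it has to be extended to the data here, which are smooth at $\mathcal{I}^+$ in $\rho_\infty$ and have finite weighted energies.
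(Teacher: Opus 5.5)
Your proposal follows the paper's proof: the twisted divergence form \eqref{eq:maineqcheckpsi}, vanishing flux at $r_+$, integration from $\infty$, and the choice of $\mathfrak{I}_{\ell m}[\psi]$ via $B_{\infty,\ell m}\neq 0$ that makes $\int_1^\infty r\widetilde{G}_\ell\,dr=0$. Your treatment of smoothness at $r_+$, of uniqueness, and of $T(T^{-1}(\psi_\ell-\Psi^\infty_\ell))=\psi_\ell-\Psi^\infty_\ell$ fills in details the paper leaves implicit.

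One explanation is wrong, though it is harmless for the statement as posed, since the hypothesis forces $\int_1^\infty F=0$ and afterwards you only use decay of $\int_r^\infty F$. The kernel $(\Omega^2\mathfrak{w}_\ell^2E)^{-1}\sim r^{-1-\re\beta_\ell}$ is integrable at infinity only for $\re\beta_\ell>0$, or for $\beta_\ell=0$ thanks to the factor $\log^{-2}r$. It is not integrable for all $\re\beta_\ell<1$. For $\beta_\ell\in i(0,\infty)$ its primitive is affine in $\mathfrak{w}^-_\ell/\mathfrak{w}_\ell\approx(\alpha_+r^{\beta_\ell}+\alpha_-)^{-1}$, which oscillates in $\log r$ without a limit. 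Only in that case is $\int_1^\infty F=0$ needed for a solution with $u\to0$ to exist.

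For $\beta_\ell\in(0,1)$ or $\beta_\ell=0$, $u\to0$ holds for every value of $\mathfrak{I}_{\ell m}[\psi]$. There the compatibility condition instead removes the $\mathfrak{w}^-_\ell$-component. This keeps $\mathfrak{w}_\ell u=T^{-1}(\psi_\ell-\Psi^\infty_\ell)|_{\Sigma_0}$ bounded rather than growing like $r^{\frac{1}{2}(1-\re\beta_\ell)}$, which is what the later weighted energy bounds for $(TK)^{-1}\widehat{\psi}$ require.
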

\begin{proof}
For the sake of notational convenience, we will drop the $m$ in the subscripts $\psi_{\ell m}$. We can write \eqref{eq:maineqcheckpsi} as follows:
\begin{equation}
\label{eq:maineqcheckpsi1}
X(\Omega^2\mathfrak{w}_{\ell}^2 e^{-2i\q\int_{r_{\sharp}}^rr'^{-1}\Omega^{-2}(r')(1-\h(r'))\,dr'}XT^{-1}\widecheck{\psi}_{\ell})(0,r)=r\widetilde{G}_{\ell},
\end{equation}
with
\begin{multline*}
r\widetilde{G}_{\ell}:=-\mathfrak{w}_{\ell}e^{-2i\q\int_{r_{\sharp}}^rr'^{-1}\Omega^{-2}(r')(1-\h(r'))\,dr'}rT^{-1}(G_{\ell m}[\Psi^{\infty}])(0,\cdot)\\
+\mathfrak{w}_{\ell}e^{-2i\q\int_{r_{\sharp}}^rr'^{-1}\Omega^{-2}(r')(1-\h(r'))\,dr'}\left[2(1-\h)X \widehat{\psi}_{\ell}+\h \widetilde{\h} T\widehat{\psi}_{\ell}-\left(\frac{d\h}{dr}-2i \q r^{-1} \h\widetilde{\h}\right)\widehat{\psi}_{\ell}\right].
\end{multline*}
Since \eqref{eq:maineqcheckpsi1} is a second-order ODE, we need to impose two boundary conditions to obtain a unique solution. We take as our first boundary condition $(\Omega^2XT^{-1}\widecheck{\psi}_{\ell})(1)=0$, which together with the boundedness of $\mathfrak{w}_{\ell}$ at $r=1$ implies that:
\begin{equation*}
XT^{-1}\widecheck{\psi}_{\ell}(r)= e^{2i\q\int_{r_{\sharp}}^rr'^{-1}\Omega^{-2}(r')(1-\h(r'))\,dr'}\Omega^{-2}(r)\mathfrak{w}_{\ell}^{-2}(r)\int_1^rr \widetilde{G}_{\ell}(r')\,dr'.
\end{equation*}
Since $\mathfrak{w}_{\ell}$ is non-vanishing, the above expression is well-defined for all $r>1$.

Note that for $r\leq 2$ and $\delta>0$, we can estimate:
\begin{align*}
|XT^{-1}\widecheck{\psi}_{\ell}(r)|^2\lesssim &\: \Omega^{-3-\delta}(r-1) \int_{1}^r \Omega^{-1+\delta}|r\widetilde{G}_{\ell}(r')|^2\,dr',\\
|T^{-1}\widecheck{\psi}_{\ell}(r)|^2\lesssim &\: (r-1)^{-\delta} \int_{1}^r \Omega^{-1+\delta}|r\widetilde{G}_{\ell}(r')|^2\,dr'+|T^{-1}\widecheck{\psi}_{\ell}(2)|^2.
\end{align*}

The second boundary condition is $\lim_{r\to \infty}T^{-1}\widecheck{\psi}_{\ell}(r)=0$, which implies that
\begin{equation*}
T^{-1}\widecheck{\psi}_{\ell}(r)= -\int_{r}^{\infty}e^{2i\q\int_{r_0}^{r'}r''^{-1}\Omega^{-2}(r'')(1-\h(r''))\,dr''}\Omega^{-2}(r')\mathfrak{w}_{\ell}^{-2}(r')\int_1^{r'} r\widetilde{G}_{\ell}(r'')\,dr''dr'.
\end{equation*}

For $\re \beta_{\ell}< 1$, $r\widetilde{G}_{\ell}$ is integrable for suitably decaying ${\psi}_{\ell}$ so we can choose $\mathfrak{I}_{\ell m}[\psi]$ such that
\begin{equation}
\label{eq:vanishingtildeG}
\int_1^{\infty} r\widetilde{G}_{\ell}(r')\,dr'=0.
\end{equation}
We can guarantee \eqref{eq:vanishingtildeG} by taking:
\begin{multline*}
	\mathfrak{I}_{\ell m}[\psi]=B_{\infty,\ell m}^{-1}\int_{1}^{\infty}\mathfrak{w}_{\ell}e^{-2i\q\int_{r_{\sharp}}^rr'^{-1}\Omega^{-2}(r')(1-\h(r'))\,dr'}\\
	\times \left[2(1-\h)X {\psi}_{\ell}+\h \widetilde{\h} T{\psi}_{\ell}-\left(\frac{d\h}{dr}-2i \q r^{-1} \h\widetilde{\h}\right){\psi}_{\ell}\right](0,r)\,dr,
\end{multline*}
where we made use of the fact that $B_{\infty,\ell m}\neq 0$.

In that case, we obtain:
\begin{equation*}
T^{-1}\widecheck{\psi}_{\ell}(r)= \int_{r}^{\infty}e^{2i\q\int_{r_{\sharp}}^{r'}r''^{-1}\Omega^{-2}(r'')(1-\h(r''))\,dr''}\Omega^{-2}(r')\mathfrak{w}_{\ell}^{-2}(r')\int_{r'}^{\infty} r\widetilde{G}_{\ell}(r'')\,dr''dr'
\end{equation*}
Hence, by applying Cauchy--Schwarz and taking $0<\delta<1-\re \beta_{\ell}$, we obtain in the case $\beta_{\ell}\neq 0$:
\begin{multline*}
|T^{-1}\widecheck{\psi}_{\ell}(r)|\lesssim  \int_r^{\infty}r'^{-1-\re \beta_{\ell}}\cdot r'^{-\frac{1}{2}+\frac{\re \beta_{\ell}}{2}+\frac{\delta}{2}}\sqrt{\int_{r'}^{\infty}r''^{2-\re \beta_{\ell}-\delta}|r\widetilde{G}_{\ell}|^2(r'')\,dr''}\\
\lesssim r^{-\frac{1}{2}-\frac{\re \beta_{\ell}}{2}+\frac{\delta}{2}}\sqrt{\int_{r}^{\infty}r''^{2-\re \beta_{\ell}-\delta}|r\widetilde{G}_{\ell}|^2(r'')\,dr''}.
\end{multline*}
In the case $\beta_{\ell}=0$, we obtain instead:
\begin{multline*}
|T^{-1}\widecheck{\psi}_{\ell}(r)|\lesssim  \int_r^{\infty}r'^{-1}\cdot (\log (r'+1))^{-2}r'^{-\frac{1}{2}+\frac{\delta}{2}}\sqrt{\int_{r'}^{\infty}r''^{2-\delta}  |r\widetilde{G}_{\ell}|^2(r'')\,dr''}\\
\lesssim r^{-\frac{1}{2}+\frac{\delta}{2}}(\log (r+1))^{-1}\sqrt{\int_{r}^{\infty}r''^{2-\delta}|r\widetilde{G}_{\ell}|^2(r'')\,dr''}
\end{multline*}

Hence, 
\begin{align*}
|T^{-1}(\psi_{\ell}-\Psi^{\infty}_{\ell m})|_{\Sigma_0}|^2(r)\lesssim r^{\delta}\int_{r}^{\infty}r'^{2-\re \beta_{\ell}-\delta}|r\widetilde{G}_{\ell}|^2(r')\,dr',\\
|XT^{-1}(\psi_{\ell}-\Psi^{\infty}_{\ell m})|_{\Sigma_0}|^2(r)\lesssim  r^{-2+\delta}\int_{r}^{\infty}r'^{2-\re \beta_{\ell}-\delta}|r\widetilde{G}_{\ell}|^2(r')\,dr'.
\end{align*}

Now suppose that $\re \beta_{\ell}>1$. Then we estimate instead in the region $\{r\geq 2\}$:
\begin{multline*}
|T^{-1}\widecheck{\psi}_{\ell}(r)|\lesssim  \int_r^{\infty}r'^{-1-\re \beta_{\ell}}\int_{1}^{r'}  \left|r\tilde{G}_{\ell}(r'')\right|\,dr''\lesssim\int_r^{\infty}r'^{-1+\delta-\re \beta_{\ell}}\sqrt{\int_{1}^{r'} r''^{1-\delta} \left|r\tilde{G}_{\ell}(r'')\right|^2\,dr''} \,dr'\\
\lesssim r^{-\re \beta_{\ell}+\delta}\sqrt{\int_{1}^{\infty} r'^{1-\delta} \left|r\tilde{G}_{\ell}(r')\right|^2\,dr'}.
\end{multline*}
Hence, we can estimate for $r\geq 2$:
\begin{align*}
|T^{-1}(\psi_{\ell}-\Psi^{\infty}_{\ell m})|_{\Sigma_0}|^2(r)\lesssim  &\:  r^{1-\re\beta_{\ell}+\delta} \int_{1}^{\infty}r'^{1-\delta}|r'\widetilde{G}_{\ell}|^2(r')\,dr',\\
|XT^{-1}(\psi_{\ell}-\Psi^{\infty}_{\ell m})|_{\Sigma_0}|^2(r)\lesssim &\: r^{-1-\re\beta_{\ell}+\delta}\int_{1}^{\infty}r'^{1-\delta}|r'\widetilde{G}_{\ell}|^2(r')\,dr'. \qedhere
\end{align*}
\end{proof}

\begin{corollary}
\label{cor:timeinvffixedmodeconstrhor}
Let $|Q|=M$. Fix $r_+=1$. Suppose that $B_{+,\ell m}\neq 0$ and fix:
\begin{multline*}
	\mathfrak{H}_{\ell m}[\psi]:=B_{+,\ell m}^{-1}\int_{0}^{\infty}\mathfrak{w}_{\ell}(1+s_+;-\q)e^{2i\q\int_{r_{\sharp}+1}^{s_+}(s'+1)^{-1}\Omega^{-2}(s'+1)(1-\h^+(s'+1))\,ds'}\\
	\times \left[2(1-\h^+)\partial_{s_+} {\psi}_{\ell}+\h^+ \widetilde{\h}^+ K{\psi}_{\ell}-\left(\frac{d\h^+}{dr}+2i \q (s_++1)^{-1} \h^+\widetilde{\h}^+\right){\psi}_{\ell}\right](0,s_+)\,ds_+.
\end{multline*}
Then there exists a unique solution $K^{-1}(\psi_{\ell}-\Psi^{+}_{\ell})|_{\Sigma_0}\in C^{\infty}([0,r_+^{-1})_{\rho_{\infty}})$ to \eqref{eq:maineqcheckpsi0} such that
\begin{align*}
\lim_{r\to \infty}XK^{-1}(\psi_{\ell}-\Psi^{+}_{\ell})|_{\Sigma_0}(r,\theta,\varphi)\equiv &\:0,\\
\mathfrak{w}_{\ell}^{-1}\left(1+\frac{1}{r-1};-\q\right)K^{-1}(\psi_{\ell}-\Psi^{+}_{\ell})|_{\Sigma_0}(r_+,\cdot)\equiv &\: 0.
\end{align*}
We denote with
\begin{equation*}
	K^{-1}(\psi_{\ell}-\Psi^{+}_{\ell})
\end{equation*}
the solution to \eqref{eq:CSF} with $A=\widehat{A}$ and $G=G[\Psi^{+}]$, arising from the initial data:
\begin{equation*}
	\left(K^{-1}(\psi_{\ell}-\Psi^{+}_{\ell})|_{\Sigma_0}, (\psi_{\ell}-\Psi^{+}_{\ell})|_{\Sigma_0}-i\q K^{-1}(\psi_{\ell}-\Psi^{+}_{\ell})|_{\Sigma_0}\right).
\end{equation*}
	Then $K(K^{-1}(\psi_{\ell}-\Psi^{\infty}_{\ell}))=\psi_{\ell}-\Psi^{+}_{\ell}$.
\end{corollary}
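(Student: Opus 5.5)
The plan is to reduce the statement to Proposition \ref{prop:timeinvffixedmodeconstr} via the Couch--Torrence symmetry, which is available precisely because $|Q|=M$ (so $\kappa_+=0$). By \eqref{eq:maineqradfieldhor}, the equation for $\psi$ written in $(\tau,\rho_+,\theta,\varphi)$ coordinates has exactly the form of the equation in $(\tau,\rho_\infty,\theta,\varphi)$ coordinates, after the substitutions $\rho_\infty\leftrightarrow\rho_+$, $\q\mapsto-\q$, $T\mapsto K$, $\h\mapsto\h^+$ and $\widetilde{\h}\mapsto\widetilde{\h}^+$. In terms of $s_+=(r-1)^{-1}$, the role of $r$ is played by $1+s_+$.

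Concretely, I would first derive the analogue of \eqref{eq:maineqcheckpsi0} for $K^{-1}\widehat\psi_{\ell}$. The requirement that $K(K^{-1}\widehat\psi)|_{\Sigma_0}=\widehat\psi|_{\Sigma_0}$, together with the inhomogeneous equation with source $K^{-1}G[\Psi^+]$, converts $\partial_\tau$ into $K-i\q$. This yields a second-order ODE in $s_+$ for $K^{-1}\widecheck\psi_\ell:=\mathfrak{w}_\ell(1+s_+;-\q)^{-1}K^{-1}(\psi_\ell-\Psi^+_\ell)|_{\Sigma_0}$.

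Since $\mathfrak{w}_\ell(1+s_+;-\q)Y_{\ell m}$ is a stationary solution of the transformed equation (it is the image of $\mathfrak{w}_\ell(\cdot;-\q)$ under the symmetry), the zeroth-order term drops out. The ODE then becomes a total derivative, namely
\[
\partial_{s_+}\Big(\Omega^2\mathfrak{w}_\ell^2(1+s_+;-\q)\,e^{2i\q\int(s'+1)^{-1}\Omega^{-2}(1-\h^+)\,ds'}\,\partial_{s_+}K^{-1}\widecheck\psi_\ell\Big)=s\widetilde G^+_\ell ,
\]
with $s\widetilde G^+_\ell$ the image of $r\widetilde G_\ell$. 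This is the exact image of \eqref{eq:maineqcheckpsi1}.

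The existence and uniqueness argument then copies the proof of Proposition \ref{prop:timeinvffixedmodeconstr} line by line.
\begin{enumerate}
\item Impose that the flux $\Omega^2\partial_{s_+}(\cdot)$ vanishes at $s_+=0$. This corresponds to the condition $\lim_{r\to\infty}XK^{-1}(\cdots)=0$, i.e.\ vanishing at $\mathcal{I}^+$, which is the image of $r=r_+$.
\item Integrate once. This is well defined because $\mathfrak{w}_\ell$ is non-vanishing by Proposition \ref{prop:statsol}(iv).
\item Impose $\mathfrak{w}_\ell^{-1}K^{-1}(\cdots)\to0$ as $s_+\to\infty$, i.e.\ at $r=r_+$.
\end{enumerate}

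Steps 1 and 3 are compatible if and only if $\int_0^\infty s\widetilde G^+_\ell\,ds_+=0$. By linearity in $\mathfrak{H}_{\ell m}$, this integral equals (data integral) $-\,\mathfrak{H}_{\ell m}[\psi]B_{+,\ell m}$, with $B_{+,\ell m}$ as in \eqref{eq:defBhor}. So the stated choice of $\mathfrak{H}_{\ell m}[\psi]$, which uses $B_{+,\ell m}\ne0$, enforces compatibility. Two further points should be noted:
\begin{itemize}
\item The sign flips of $\q$ in the exponential and in the $\h^+\widetilde{\h}^+$ term are exactly those listed in the symmetry.
\item Here $\Psi^\infty$ does not contribute: it is only subtracted from the $T^{-1}$ part, and $\widehat\psi$ is split accordingly.
\end{itemize}

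Integrability and the weighted bounds near $s_+=\infty$ follow from Proposition \ref{prop:estinhomG}, specifically \eqref{eq:erroresteqforPsihor}, through the same Cauchy--Schwarz estimates, split into the cases $\re\beta_\ell<1$ and $\beta_\ell=0$. These give smoothness in $\rho_\infty\in[0,r_+^{-1})$. Near $s_+=0$, i.e.\ $r\to\infty$, smoothness follows because $\Omega^2\to1$ there in the $s_+$ picture; the degenerate endpoint is now $r_+$.

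Finally, I would define $K^{-1}(\psi_\ell-\Psi^+_\ell)$ as the solution given by Theorem \ref{thm:gwp} with the stated data. The $T$-data is $(\psi_\ell-\Psi^+_\ell)-i\q K^{-1}(\cdots)$, since $T=K-i\q$. Then $K(K^{-1}(\psi_\ell-\Psi^+_\ell))$ and $\psi_\ell-\Psi^+_\ell$ solve the same equation with source $G[\Psi^+]$, using that $K$ commutes with the operator because $T$ is Killing. By construction they share $\Sigma_0$ data: the $K$-values agree directly, and the $T$-derivative of $K$ applied to the integral agrees because the ODE is the restriction of the equation to $\Sigma_0$. Uniqueness in Theorem \ref{thm:gwp} then gives equality.

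I expect the main obstacle to be bookkeeping rather than analysis: checking carefully that the transformed ODE really takes the divergence form above with the correct phase and sign of $\q$, and that the two boundary conditions map as claimed. This is essentially checking \eqref{eq:maineqradfieldhor} term by term, together with the identity $\h^+=2-\Omega^2\widetilde{\h}^+$ from \eqref{eq:relhhplus2}.
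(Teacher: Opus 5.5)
Your route is the paper's. Its proof consists of noting that \eqref{eq:maineqradfieldhor} is invariant under $\rho_\infty\leftrightarrow\rho_+$, $T\leftrightarrow K$, $\q\leftrightarrow-\q$, $\h\leftrightarrow\h^+$, $\widetilde{\h}\leftrightarrow\widetilde{\h}^+$, and then repeating the proof of Proposition~\ref{prop:timeinvffixedmodeconstr}. Your construction is that mirrored proof written out: zero flux at $s_+=0$, $\mathfrak{H}_{\ell m}[\psi]$ fixed via $B_{+,\ell m}\neq 0$ to kill $\int_0^\infty s\widetilde G^+_\ell\,ds_+$, decay at $s_+=\infty$, and identification of $K(K^{-1}(\cdots))$ with $\psi_\ell-\Psi^+_\ell$ by uniqueness.

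The step that fails as you justify it is smoothness at $\rho_\infty=0$. It is true that $\Omega^2(r)\to 1$ as $r\to\infty$, but that is not the coefficient of the transformed ODE. The image of \eqref{eq:maineqcheckpsi1} carries $\Omega^2(1+s_+)=s_+^2(1+s_+)^{-2}$, which has a double zero at $s_+=0$; this is the $\Omega^{-2}(s'+1)$ in the phase defining $\mathfrak{H}_{\ell m}[\psi]$. You can also see this directly. Using $T(K^{-1}\widehat\psi)=\widehat\psi-i\q K^{-1}\widehat\psi$ in \eqref{eq:maineqradfieldconf}, with $\h=0$ near $\mathcal{I}^+$, shows that $v=K^{-1}(\psi_\ell-\Psi^+_\ell)|_{\Sigma_0}$ satisfies an ODE with leading part $\rho_\infty^2\partial_{\rho_\infty}^2 v-2i\q\,\partial_{\rho_\infty}v$ near $\rho_\infty=0$. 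That is an irregular singular point, and its homogeneous branch $\partial_{\rho_\infty}v\sim e^{-2i\q/\rho_\infty}$ is bounded but not smooth.

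So the roles are the reverse of what you state. $\mathcal{I}^+$ is the degenerate endpoint of the $K^{-1}$ problem, mirroring $\mathcal{H}^+$ for the $T^{-1}$ problem. The end $s_+=\infty$ is the non-compact end where $\mathfrak{w}_\ell(1+s_+;-\q)$ grows. Smoothness on $[0,r_+^{-1})_{\rho_\infty}$ therefore comes from your Step~1, not from non-degeneracy: integrating from $s_+=0$ with vanishing flux is exactly what discards the oscillating branch, as at $r=r_+$ in Proposition~\ref{prop:timeinvffixedmodeconstr}. The Cauchy--Schwarz bounds near $s_+=\infty$ concern the excluded endpoint $\rho_\infty=r_+^{-1}$ and say nothing about this.

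A smaller overstatement: for real $\beta_\ell\in[0,1)$ it is false that the two boundary conditions are compatible only if $\int_0^\infty s\widetilde G^+_\ell\,ds_+=0$. If that integral equals $c\neq0$, then $u:=\mathfrak{w}_\ell^{-1}(1+s_+;-\q)K^{-1}(\psi_\ell-\Psi^+_\ell)|_{\Sigma_0}$ satisfies $|\partial_{s_+}u|\sim |c|\,s_+^{-1-\beta_\ell}$, with an extra factor $(\log s_+)^{-2}$ when $\beta_\ell=0$. This is integrable, so $u$ still tends to zero at $\mathcal{H}^+$, only slowly. What the choice of $\mathfrak{H}_{\ell m}[\psi]$ buys is the improved decay of $u$, the mirror of the bound on the $T^{-1}$ construction in the proof of Proposition~\ref{prop:timeinvffixedmodeconstr}. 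That improved decay is what makes the weighted energies of the time integral finite later on. You only use the ``if'' direction, so this does not affect your argument for the statement.
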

\begin{proof}
We use the symmetry present in \eqref{eq:maineqradfieldhor} under the exchanges: $\rho_{\infty}\leftrightarrow \rho_+$, $s_+\leftrightarrow  s_{\infty}=r-1$,  $T\leftrightarrow K$, $\q\leftrightarrow -\q$, $\h\leftrightarrow \h^+$ and $\widetilde{\h}\leftrightarrow \widetilde{\h}^+$, so that we can simply repeat the proof of Proposition \ref{prop:timeinvffixedmodeconstr}.
\end{proof}

Write
\begin{equation*}
\widehat{\psi}^{(1)}_{\ell}:=-\frac{1}{i\q}\left( K^{-1}(\psi_{\ell}-\Psi^{+}_{\ell})-T^{-1}(\psi_{\ell}-\Psi^{\infty}_{\ell})\right).
\end{equation*}

Then $TK(\widehat{\psi}^{(1)}_{\ell})=\widehat{\psi}_{\ell}=\psi_{\ell}-\Psi$.

\subsection{Elliptic estimates}
\label{sec:ellipticesthighl}
In this section, we will show that we can sum the functions $\widehat{\psi}^{(1)}_{\ell}$ over $\ell$ to define:
\begin{equation*}
\widehat{\psi}^{(1)}(\tau,\cdot):=\sum_{\ell\in \N_0} \widehat{\psi}^{(1)}_{\ell}(\tau,\cdot)
\end{equation*}
as a function in an appropriate Sobolev space on $\Sigma_{\tau}$.

Then
\begin{equation*}
(\widehat{\psi}^{(1)})_{\ell m}(\tau,r)=\la \widehat{\psi}^{(1)}_{\ell}(\tau,r,\cdot), Y_{\ell m}\ra_{L^2(\s^2)}.
\end{equation*}

\subsubsection{Elliptic estimates for large angular frequencies}
The key estimates that will allow us to sum over $\ell$ are elliptic-type estimates for $\psi_{\geq L_0}$, with $L_0>0$ suitably large. Recall that $\widehat{\psi}_{\ell}=\psi_{\ell}$ for $\ell$ satisfying $\ell(\ell+1)\geq q^2$. We will take $r_+=1$ in this section.

\begin{proposition}
\label{prop:ellipticesthighfreq}
Let $L_0\in \N_0$, $\tau\geq 0$ and let $\phi$ be a solution to \eqref{eq:CSF} with $A=\widehat{A}$ and $G_A\equiv 0$. Let $1<p< 3$. Then for suitable large $L_0>0$, depending on $\q$ and $p$, there exists a constant $C=C(p,\q,L_0)>0$ such that for all $\tau\geq 0$:
\begin{align}
\label{eq:ellipticenergy1}
\int_{\Sigma_{\tau}} \mathcal{E}_{p-2}[\psi_{\geq L_0}]\,d\sigma dr\leq &\: C\int_{\Sigma_{\tau}} \mathcal{E}_{p}[T\psi_{\geq L_0}]\,d\sigma dr, \\
\label{eq:ellipticenergy2}
\int_{\Sigma_{\tau}} \mathcal{E}_{p-2}[\psi_{\geq L_0}]\,d\sigma dr\leq &\: C\int_{\Sigma_{\tau}} \mathcal{E}_{p}[K\psi_{\geq L_0}]\,d\sigma dr\quad \textnormal{if $|Q|=M$}.
\end{align}
\end{proposition}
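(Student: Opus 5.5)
The plan is to treat the equation on a single slice $\Sigma_\tau$ as a degenerate elliptic ODE in $r$ for $\psi_{\geq L_0}$, with source given by $T\psi$ (respectively $K\psi$). Rearranging \eqref{eq:maineqradfield} with $G_{\widehat{A}}=0$ isolates the spatial operator
\begin{equation*}
\mathcal{L}\psi:=X(\Omega^2X\psi)+r^{-2}\slashed{\Delta}_{\s^2}\psi-2i\q(1-\h)r^{-1}X\psi-\left[\tfrac{d\Omega^2}{dr}r^{-1}-i\q r^{-1}\tfrac{d\h}{dr}-\q^2\h\widetilde{\h}r^{-2}-i\q r^{-2}(1-\h)\right]\psi .
\end{equation*}
It then satisfies
\begin{equation*}
\mathcal{L}\psi=2(1-\h)XT\psi+\h\widetilde{\h}T^2\psi-\left(\tfrac{d\h}{dr}-2i\q r^{-1}\h\widetilde{\h}\right)T\psi .
\end{equation*}
The right-hand side is controlled pointwise by $\mathcal{E}$-type densities of $T\psi$: one factor of $X T\psi$, one factor of $T^2\psi=T(T\psi)$, and one factor of $T\psi$. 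For the $K$ estimate I use \eqref{eq:maineqradfieldconfhor}, or by Couch--Torrence symmetry (Proposition \ref{prop:mainequations}) the mirrored version of the same argument with $\q\to-\q$ and $T\to K$. So it suffices to prove \eqref{eq:ellipticenergy1}.

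The key step is a weighted integration by parts. I multiply $\mathcal{L}\psi$ by $-w(r)\overline{\psi}$, with weight $w\sim (\Omega^{-1}r)^{p-2}r^{-2}$ adapted to $\mathcal{E}_{p-2}$. I also add, if useful, a multiplier $w'(r)\partial_r\overline\psi$ as in the proof of Proposition \ref{prop:horpest}. Then I take real parts and integrate over $\Sigma_\tau\times\s^2$. This produces the main positive bulk terms:
\begin{equation*}
\int w\Omega^2|X\psi|^2, \qquad \int w r^{-2}|\snabla_{\s^2}\psi|^2 .
\end{equation*}
Along the way there are lower-order terms from $X w$, from $d\Omega^2/dr$, and from the $i\q$ first-order term. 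The $i\q$ term $2i\q(1-\h)r^{-1}X\psi\cdot\overline\psi$ is precisely what destroys coercivity for small $\ell$; its real part integrates by parts into a zeroth-order term of size $\sim|\q| |w'|r^{-1}|\psi|^2$.

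The projection to $\ell\geq L_0$ then does the work. By the Poincaré inequality \eqref{eq:poincares2}, $\int r^{-2}|\snabla_{\s^2}\psi_{\geq L_0}|^2\geq L_0(L_0+1)\int r^{-2}|\psi_{\geq L_0}|^2$. So for $L_0$ large, depending on $\q$ and $p$, the angular term absorbs every zeroth-order term. This includes the $\q^2$ term, the $i\q$ term, the terms from differentiating the weight, which scale like $(p-2)^2$ and stay bounded for $1<p<3$, and the $d\Omega^2/dr$ term. It also covers the $O(1)$ cross terms, via Cauchy--Schwarz with half of $\int w\Omega^2|X\psi|^2$. The result is a bound of the integral of $\mathcal{E}_{p-2}[\psi_{\geq L_0}]$ by $\left|\int w\,\mathrm{RHS}\cdot\overline\psi\right|$. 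I then apply Cauchy--Schwarz, putting $|\psi|^2$ with the weight appropriate to $\mathcal{E}_{p-2}$ and absorbing it. What remains is the source with weight $(\Omega^{-1}r)^{p-2}\cdot$(extra $r^2$, $\Omega^{-2}$ factors), which I check is bounded by $\mathcal{E}_p[T\psi]$. This is the loss of two powers in $p$ that the statement records.

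The main obstacle is the boundary terms at $r=r_+$ when $\kappa_+=0$ and at $\mathcal{I}^+$, together with the weight matching there. The range $1<p<3$ must make the boundary flux $w\Omega^2\re(\overline\psi X\psi)$ vanish at both ends, and it must make $|XT\psi|^2$ weighted by $w^2/(\text{coercive weight})$ comparable to $(\Omega^{-1}r)^p\Omega^2|XT\psi|^2$. I would work in $\rho_+$ near the horizon and in $\rho_\infty$ near infinity, using Lemma \ref{lm:metricest}. I expect that a careful choice of $w$ fixes the ends exactly where $p<3$ ensures integrability and $p>1$ ensures vanishing flux. For smooth $\psi$ on $\widehat{\mathcal M}_{M,Q}$ with $\psi_{\geq L_0}$ bounded, these boundary terms vanish by a cutoff-and-limit argument.
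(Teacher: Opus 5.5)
Your key step fails at the charge term. Pairing $-2i\q(1-\h)r^{-1}X\psi$ with $-w\overline{\psi}$ and taking real parts gives $-2\q(1-\h)r^{-1}w\,\im(\overline{\psi}X\psi)$. This is a current, not a total derivative: only $\re(\overline{\psi}X\psi)=\frac12 X|\psi|^2$ integrates by parts. So it does not become a zeroth-order term, and it has to be absorbed into $w\Omega^2|X\psi|^2$ and $L_0(L_0+1)wr^{-2}|\psi|^2$ by Cauchy--Schwarz, which requires $|\q|\lesssim L_0\Omega$.

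Since $\h(r_+)=2$, we have $1-\h=-1$ on $\mathcal{H}^+$. So this first-order term does not degenerate at the horizon, while the principal part $X(\Omega^2X\psi)$ does. The condition therefore fails near $r=r_+$ for every $L_0$, every weight $w$ and every $\kappa_+\geq 0$. In fact the form is indefinite there. Take $\psi=w^{-1/2}h(r)e^{i\theta(r)}Y_{L_0m}$ with $h$ real and $\theta'=\pm\q r^{-1}\Omega^{-2}$ (suitable sign). Then
\begin{equation*}
\re\int -w\overline{\psi}\,\mathcal{L}\psi\leq\int \Omega^2|h'|^2+r^{-2}\bigl(L_0(L_0+1)-\q^2\Omega^{-2}+O(1)\bigr)|h|^2,
\end{equation*}
which is negative for suitable $h$ concentrating at $r_+$. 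By the Couch--Torrence symmetry the same obstruction appears at $\mathcal{I}^+$ in the $K$-estimate: writing $T=K-i\q r_+^{-1}$ produces a non-degenerate $i\q r_+^{-1}\partial_{\rho_\infty}\psi$ term there. Your argument only closes at the other end of each estimate, where the ratio is $|\q|/L_0$. The $\partial_r\overline{\psi}$ multiplier you list as optional is in fact the essential one.

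This is how the paper proceeds. For $\phi=r^{-1}\psi$ the equation reads $X(r^2\Omega^2X\phi)+\slashed{\Delta}_{\s^2}\phi=-2i\q(1-\h)rX\phi+O(1)\phi+r^2\tilde{G}[T\psi]$, and one multiplies by $r^{-s}(r^{-1}\Omega)^{p'}X\overline{\phi}$. The charge term now contributes a real multiple of $\re(i|X\phi|^2)=0$ and drops out identically. The principal part gives a bulk term with weight $\sim(r-M)r^{-s}(r^{-1}\Omega)^{p'}$ on $|X\phi|^2$, coming from $X(r^2\Omega^2)=2(r-M)$. The angular part gives a bulk term with weight $\sim(r-M)r^{-s-4}(r^{-1}\Omega)^{p'-2}$ on $|\snabla_{\s^2}\phi|^2$. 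Since $r^2\Omega^2\leq(r-M)^2$, the weight $r^{-s}(r^{-1}\Omega)^{p'}$ of the remaining cross terms $\overline{X\phi}\,\phi$ is bounded by the geometric mean of these two. Hence Young's inequality with $L_0(L_0+1)\gtrsim_{\q}\delta^{-1}$ absorbs them uniformly up to $r_+$, including when $\kappa_+=0$ and $r_+-M=0$. Taking $p'=3-p$, $s=-2$ reproduces the weights of $\mathcal{E}_{p-2}[\psi]$ and $\mathcal{E}_p[T\psi]$, and the sign conditions on $p',s$ give $1<p<3$. The $K$-estimate is the mirror image under $r-r_+\mapsto r_+^2(r-r_+)^{-1}$, $\q\mapsto-\q$.
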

\begin{proof}
Restrict to $\psi_{\ell}$, with $\ell\geq L_0$, where  $L_0>0$ will be chosen suitably large. Without loss of generality, we may assume that $\psi$ is smooth any compactly supported. The more general case follows from a standard density argument. For the sake of notational convenience, we drop the subscript $\geq L_0$ in $\psi_{\geq L_0}$ in the estimates below. Rearranging \eqref{eq:maineqradfield} then gives:
 \begin{multline}
 \label{eq:degelliptic}
X(\Omega^2X\psi)+r^{-2}\slashed{\Delta}_{\s^2}\psi-\frac{d\Omega^2}{dr} r^{-1}\psi+2i \q(1-\h)r^{-1}X\psi+\left[-iqr^{-1}\frac{d\h}{dr}+\q^2  \mathbbm{h}\widetilde{\mathbbm{h}}r^{-2}+i \q r^{-2}(1-\h)\right]\psi\\
=2(1-\h)XT\psi-\h \widetilde{\h} T^2\psi-\left(\frac{d\h}{dr}-2i \q r^{-1} \h\widetilde{\h}\right)T\psi=: r\tilde{G}[T\psi].
\end{multline}
We use that
\begin{equation*}
rX(\Omega^2X\psi)+r^{-1}\slashed{\Delta}_{\s^2}\psi-r\frac{d\Omega^2}{dr} r^{-1}\psi=X(r^2\Omega^2X\phi)+\slashed{\Delta}_{\s^2}\phi
\end{equation*}
and rearrange \eqref{eq:degelliptic} further:
\begin{equation}
 \label{eq:degelliptic2}
X(r^2\Omega^2X\phi)+\slashed{\Delta}_{\s^2}\phi=-2i \q(1-\h)rX\phi+\left[-i\q r\frac{d\h}{dr}-\q^2  \mathbbm{h}\widetilde{\mathbbm{h}}-3i \q (1-\h)\right]\phi+r^2\tilde{G}[T\psi].
\end{equation}

Multiplying both sides of \eqref{eq:degelliptic2} with $r^{-s}(r^{-1}\Omega)^{p'}X\phi$, taking the real parts of the products and applying the Leibniz rule, we obtain;
 \begin{multline*}
\frac{1}{2}r^{4-s}(r^{-1}\Omega)^{p'+2}X(|X\phi|^2)+2(r^{-1}\Omega)^{p'}r^{-s}\frac{d}{dr}(r^2\Omega^2)|X\phi|^2+\re(r^{-s}(r^{-1}\Omega)^{p'} \overline{X\phi}\cdot \slashed{\Delta}_{\s^2}\phi)\\
+r^{-s}(r^{-1}\Omega)^{p'}\re\left(\overline{X\phi}\cdot \left[i\q r\frac{d\h}{dr}+\q^2  \mathbbm{h}\widetilde{\mathbbm{h}}+3i \q (1-\h)\right]\phi\right)\\
=r^{-s}(r^{-1}\Omega)^{p'}\re \left( \overline{X\phi}\cdot r^2\tilde{G}[T\psi]\right).
\end{multline*}
We can further write:
\begin{multline*}
\frac{1}{2}r^{4-s}(r^{-1}\Omega)^{p'+2}X(|X\phi|^2)+2r^{-s}(r^{-1}\Omega)^{p'}\frac{d}{dr}(r^2\Omega^2)|X\phi|^2\\
=X\left(\frac{1}{2}r^{4-s}(r^{-1}\Omega)^{p'+2}|X\phi|^2\right)+\left[2r^{-s}\frac{d}{dr}(r^2\Omega^2)-\frac{1}{2}\frac{d}{dr}(r^{4-s}(r^{-1}\Omega)^{p'+2})\right]|X\phi|^2\\
=X\left(\frac{1}{2}r^4(r^{-1}\Omega)^{p'+2}|X\phi|^2\right)+\left[\left(3-\frac{p'}{2}\right)(r-M)+\left(p'+\frac{s}{2}\right)r\Omega^2\right]r^{-s}(r^{-1}\Omega)^{p'}|X\phi|^2.
\end{multline*}
Furthermore,
\begin{multline*}
\re(r^{-s}(r^{-1}\Omega)^{p'} \overline{X\phi}\cdot \slashed{\Delta}_{\s^2}\phi)=-\ell(\ell+1)\re(r^{-s}(r^{-1}\Omega)^{p'} \overline{X\phi}\cdot \phi)=-\frac{1}{2}\ell(\ell+1) X(r^{-s}(r^{-1}\Omega)^{p'}|\phi|^2)\\
+\frac{1}{4}\left[p'\frac{d}{dr}(\Omega^2r^{-2})-2sr^{-3}\Omega^2\right] \ell(\ell+1)r^{-s}(r^{-1}\Omega)^{p'-2}|\phi|^2\\
=-\frac{1}{2}\ell(\ell+1) X(r^{-s}(r^{-1}\Omega)^{p'}|\phi|^2)\\
+\frac{1}{4}\left[2p'(r-M)+2(- s-2p')(r-2M+r^{-1}e^2)\right] \ell(\ell+1)r^{-s-4}(r^{-1}\Omega)^{p'-2}|\phi|^2.
\end{multline*}
Combining the above estimates and integrating over $\s^2$, we therefore obtain:
 \begin{multline}
  \label{eq:degelliptic3}
\int_{\s^2}\left[\left(3-\frac{p'}{2}\right)(r-M)+\left(p'+\frac{s}{2}\right)r\Omega^2\right]r^{-s}(r^{-1}\Omega)^{p'}|X\phi|^2\\
+\frac{1}{2}\left[(-s-p')(r-M)+(2p'+s)(M-r^{-1}Q^2)\right]r^{-s-4}(r^{-1}\Omega)^{p'-2}|\snabla_{\s^2}\phi|^2\,d\sigma\\
= X\left[\int_{\s^2 }\frac{1}{2}r^{-s}(r^{-1}\Omega)^{p'}(\Omega^2r^2|X\phi|^2-|\snabla_{\s^2}\phi|^2)\,d\sigma \right]\\
-\int_{\s^2}r^{-s}(r^{-1}\Omega)^{p'}\re\left(\overline{X\phi}\cdot \left[-iqr \frac{d\h}{dr}+q^2  \mathbbm{h}\widetilde{\mathbbm{h}}+3i \q (1-\h)\right]\phi\right)\\
-r^{-s}(r^{-1}\Omega)^{p'}\re \left( \overline{X\phi}\cdot r^2\tilde{G}[T\psi]\right)\,d\sigma.
\end{multline}
The terms on the very LHS are non-negative definite if $p'>0$, $s\leq 0$ and $p'<|s|<2p'$ and $p'< 6$.
By Young's inequality, we can estimate:
\begin{multline*}
\left|r^{-s}(r^{-1}\Omega)^{p'}\re\left(\overline{X\phi}\cdot \left[-iqr\frac{d\h}{dr}+q^2  \mathbbm{h}\widetilde{\mathbbm{h}}+3i \q (1-\h)\right]\phi\right)\right|\leq \delta r^{-s}(r^{-1}\Omega)^{p'} (r-M)|X\phi|^2\\
+\frac{C \q^2}{4\delta} r^{-s}(r^{-1}\Omega)^{p'} (r-M)^{-1}|\phi|^2
\end{multline*}

Note that
\begin{multline*}
	\int_{\s^2}(r-M)r^{-s-4}(r^{-1}\Omega)^{p'-2}|\snabla_{\s^2}\phi|^2\,d\sigma\\
	\geq L_0(L_0+1)\int_{\s^2}\frac{(r-M)^2}{(r-M)^2-(M^2-Q^2)}(r-M)^{-1}r^{-s}(r^{-1}\Omega)^{p'}|\phi|^2\,d\sigma\\
	\geq L_0(L_0+1)\int_{\s^2}r^{-s}(r^{-1}\Omega)^{p'} (r-M)^{-1}|\phi|^2\,d\sigma.
\end{multline*}

We apply Young's inequality again to estimate:
\begin{equation*}
\left|r^{-s}(r^{-1}\Omega)^{p'}\re \left( \overline{X\phi}\cdot r^2\tilde{G}[T\psi]\right)\right|\lesssim  \delta r^{-s}(r^{-1}\Omega)^{p'} (r-M)|X\phi|^2+ \frac{ 1}{4\delta} r^{-s}(r^{-1}\Omega)^{p'}r^2 (r-M)^{-1}|r\tilde{G}[T\phi]|^2.
\end{equation*}
Finally, we integrate \eqref{eq:degelliptic3} in $r$ and observe that the boundary terms vanish by the boundary assumptions in the proposition and we take $L_0(L_0+1)\geq Cq^2 \delta^{-1}$ to obtain:
\begin{multline}
\label{eq:ellipticpart1}
\int_1^{\infty}\int_{\s^2}\left[\left(3-\frac{p'}{2}-2\delta\right)(r-M)+\left(p'+\frac{s}{2}\right)r\Omega^2\right]r^{-s}(r^{-1}\Omega)^{p'}|X\phi|^2\\
+\frac{1}{4}\left[(-s-p')(r-M)+(2p'+s)(M-r^{-1}Q^2)\right]r^{-s-4}(r^{-1}\Omega)^{p'-2}|\snabla_{\s^2}\phi|^2\,d\sigma dr\\
\leq C\int_1^{\infty}\int_{\s^2}r^{-s}(r^{-1}\Omega)^{p'}r^2 (r-M)^{-1}|r\tilde{G}[T\phi]|^2\,d\sigma dr\\
\leq C\int_1^{\infty}\int_{\s^2}r^{-s}(r^{-1}\Omega)^{p'-1} (|XT\psi|^2+r^{-4}|T^2\psi|^2+r^{-6}|T\psi|^2)\,d\sigma dr.
\end{multline}
Now \eqref{eq:ellipticenergy1} follows by applying \eqref{eq:ellipticpart1} with the choices $p'=3-p$ and $s= 1-p'-p=-2$, which implies that $1<p<3$.

In the case $|Q|=M$, \eqref{eq:ellipticenergy2} simply follows by repeating the above argument with $T$ replaced by $K$, $q$ replaced by $-q$ and $r$ replaced by $s_++1$.
\end{proof}

\begin{corollary}
\label{cor:elliptichighfreq}
Let $L_0,N\in \N_0$. Let $\tau\geq 0$ and let $\phi$ be a solution to \eqref{eq:CSF} with $G\equiv 0$.

Let $1<p<3$. Then, for suitably large $L_0>0$ depending on $q$ and $p$, there exists a constant $C=C(p,q,L_0,N)>0$ such that for all $\tau\geq 0$:
\begin{align}
\label{eq:ellipticenergy1ho}
\sum_{|\gamma|\leq N}\int_{\Sigma_{\tau}} \mathcal{E}_{p-2}[\mathbf{Z}^{\gamma}\psi_{\geq L_0}]\,d\sigma dr\leq &\: C\sum_{|\gamma|\leq N}\int_{\Sigma_{\tau}} \mathcal{E}_{p}[\mathbf{Z}^{\gamma}T\psi_{\geq L_0}]\,d\sigma dr, \\
\label{eq:ellipticenergy2ho}
\sum_{|\gamma|\leq N}\int_{\Sigma_{\tau}} \mathcal{E}_{p-2}[\mathbf{Z}^{\gamma}\psi_{\geq L_0}]\,d\sigma dr\leq &\: C\sum_{|\gamma|\leq N}\int_{\Sigma_{\tau}} \mathcal{E}_{p}[\mathbf{Z}^{\gamma}K\psi_{\geq L_0}]\,d\sigma dr\quad \textnormal{if $|Q|=M$}.
\end{align}
\end{corollary}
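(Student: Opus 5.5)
We will deduce Corollary \ref{cor:elliptichighfreq} from Proposition \ref{prop:ellipticesthighfreq} by commuting the equation with the operators making up $\mathbf{Z}^{\gamma}$ and proceeding by induction on $N$, with $N=0$ being Proposition \ref{prop:ellipticesthighfreq} itself. The plan splits according to the type of commutator. Angular derivatives $\snabla_{\s^2}$ (equivalently, the rotation Killing fields) and $T$ commute with the operator in \eqref{eq:degelliptic2} and with the projection onto $\ell\geq L_0$. Hence for $\gamma=(\gamma_1,0,\gamma_3)$ we simply apply \eqref{eq:ellipticenergy1} to $\snabla_{\s^2}^{\gamma_1}T^{\gamma_3}\psi_{\geq L_0}$, which is again a solution with $G\equiv 0$ supported on $\ell\geq L_0$; this gives the estimate directly. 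The same applies with $K$ in place of $T$ when $|Q|=M$, since $K$ is $T$ plus a constant multiple of the identity.

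The main step is commutation with $(r-M)X$. We will not commute the hyperbolic equation; instead we use the lower-order identity \eqref{eq:degelliptic} (or \eqref{eq:degelliptic2}) directly as a degenerate elliptic ODE in $r$, viewing $r\tilde G[T\psi]$ as a source. Applying $((r-M)X)^{k}$ to \eqref{eq:degelliptic} gives, for $\psi^{(k)}:=((r-M)X)^k\psi$, an equation of the same form. Its principal part is $X(\Omega^2X\psi^{(k)})+r^{-2}\slashed\Delta_{\s^2}\psi^{(k)}$, with additional first-order terms $k\,c(r)\,X\psi^{(k)}$ whose coefficients are of size $(r-M)^{-1}\Omega^2$-comparable (analogous to the lemma preceding Proposition \ref{prop:horpest}, both near $r_+$ via $\rho_+$ and near infinity via $\rho_\infty$). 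The remaining lower-order terms are $O(1)\psi^{(j)}$ and $O(1)\slashed\Delta\psi^{(j)}$ with $j<k$, and the source is $((r-M)X)^k(r\tilde G[T\psi])$, which is bounded pointwise by $\sum_{|\gamma|\leq k}$ of the densities appearing in $\mathcal{E}_p[\mathbf{Z}^{\gamma}T\psi]$ with the same weights as in \eqref{eq:ellipticpart1}.

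We then rerun the multiplier argument of Proposition \ref{prop:ellipticesthighfreq} with $r^{-s}(r^{-1}\Omega)^{p'}X\overline{\phi^{(k)}}$, choosing $p'=3-p$ and $s=-2$. The new first-order terms produce bulk contributions proportional to $k$ times $r^{-s}(r^{-1}\Omega)^{p'}(r-M)|X\phi^{(k)}|^2$. This is the expected obstacle: these terms are of the same order as the coercive term $(3-\tfrac{p'}{2})(r-M)$ and could destroy positivity. We would first try to determine their sign, which we expect to be favourable (as in \eqref{eq:keyhorpest}, where the commutator contributes $+N(1-6\kappa_+r_+)\rho_+|\partial_{\rho_+}\psi^{(N)}|^2$). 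Failing that, we would absorb them by adjusting $s$ within the admissible window $p'<|s|<2p'$, using the freedom to take $L_0$ larger depending on $N$. The terms involving $\psi^{(j)}$ for $j<k$ are handled by Young's inequality together with the induction hypothesis. The $\slashed\Delta\psi^{(j)}$ terms are integrated by parts on $\s^2$ and controlled by the angular part of the lower-order elliptic energies. The $\phi$ (zeroth-order, $\q$-dependent) terms are absorbed exactly as before, by taking $L_0(L_0+1)\gtrsim \q^2\delta^{-1}$ via the Poincaré inequality \eqref{eq:poincares2}.

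Finally, mixed multi-indices $\gamma$ are treated by first commuting the angular and $T$ derivatives, which commute exactly, and then applying the $(r-M)X$ induction to $\snabla_{\s^2}^{\gamma_1}T^{\gamma_3}\psi_{\geq L_0}$. Summing over $|\gamma|\leq N$ yields \eqref{eq:ellipticenergy1ho}. The horizon version \eqref{eq:ellipticenergy2ho} follows by the Couch--Torrence symmetry of Proposition \ref{prop:mainequations}, exchanging $T\leftrightarrow K$, $\q\leftrightarrow-\q$ and $r-1\leftrightarrow s_+$, as in the proof of Proposition \ref{prop:ellipticesthighfreq}.
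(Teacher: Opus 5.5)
Your proposal takes essentially the same route as the paper. The paper likewise commutes the degenerate elliptic identity \eqref{eq:degelliptic} trivially with angular momentum operators and $T$ (or $K$), then with $(r-M)X$, and notes that the top-order structure is preserved and the non-negative coefficients in the multiplier identity can only get larger.

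The favourable sign you expect does hold. Since $r^2\Omega^2=(r-M)^2-(M^2-Q^2)$, one has $[(r-M)X,\,X(r^2\Omega^2X\,\cdot\,)]=2(M^2-Q^2)X^2$. This vanishes when $|Q|=M$, and otherwise produces an extra non-negative bulk term $2(M^2-Q^2)(r-M)^{-1}$ times the weight times $|X((r-M)X)\phi|^2$, plus lower-order terms that your induction handles. So your fallback is unnecessary: adjusting $s$ is not needed, and enlarging $L_0$ depending on $N$ is not needed either (and the statement does not allow it).
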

\begin{proof}
	We can immediately commute \eqref{eq:degelliptic} with angular momentum operators and $T$. To conclude \eqref{eq:ellipticenergy1ho} we therefore only need to commute \eqref{eq:degelliptic} with $(r-1)X$. It is straightforward to show that the highest order terms in \eqref{eq:degelliptic} have the same structure after commutation with $rX$ and non-negative constants can only get larger. See for example \cite{gaj22a}[Lemma 9.2, Corollary 9.4].
\end{proof}

\begin{corollary}
Let $r>1$, $\delta>0$ and $N\in \N_0$. Then
\begin{equation*}
\widehat{\psi}^{(1)}|_{\Sigma_0}(r,\cdot)=\sum_{\ell\in \N_0} \widehat{\psi}^{(1)}_{\ell}(r,\cdot)
\end{equation*}
is well-defined as an element of the Sobolev space $H^N(\s^2)$, provided
\begin{equation*}
\int_{\Sigma_{0}}\mathcal{E}_{2}[\psi]+ \mathcal{E}_{2}[\snabla_{\s^2}^{N}\psi]\,d\sigma dr<\infty.
\end{equation*}
\end{corollary}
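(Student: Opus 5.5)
We split $\widehat{\psi}^{(1)}|_{\Sigma_0}=\sum_{\ell<L_0}\widehat{\psi}^{(1)}_{\ell}+\sum_{\ell\geq L_0}\widehat{\psi}^{(1)}_{\ell}$, with $L_0$ as in Proposition \ref{prop:ellipticesthighfreq} (for some fixed $p\in(1,3)$). The finitely many low modes $\ell<L_0$ are handled by Proposition \ref{prop:timeinvffixedmodeconstr} and Corollary \ref{cor:timeinvffixedmodeconstrhor}. These give, for each fixed $r>1$, finite values of $\widehat{\psi}^{(1)}_{\ell}(r)$, bounded by weighted $L^2$ norms of $r\widetilde{G}_{\ell}$. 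Each $r\widetilde{G}_{\ell}$ involves only $X\psi_\ell$, $T\psi_\ell$, $\psi_\ell$ on $\Sigma_0$ and the explicit tail terms. Since $\widehat{\psi}^{(1)}_\ell(r,\cdot)$ is a finite combination of $Y_{\ell m}$, each of these modes trivially lies in $H^N(\s^2)$. So the whole question is convergence of the high-frequency tail in $H^N(\s^2)$.

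For $\ell\geq L_0$ with $\ell(\ell+1)\geq \q^2$ we have $\Psi_\ell=0$, so $\widehat{\psi}_\ell=\psi_\ell$. In this range the construction reduces to solving, on $\Sigma_0$, the degenerate elliptic equation \eqref{eq:degelliptic} for $T^{-1}\psi_\ell$, with right-hand side $r\tilde G[\psi_\ell]$ built from $\psi|_{\Sigma_0}$ and $T\psi|_{\Sigma_0}$. The analogous statement holds for $K^{-1}\psi_\ell$ when $|Q|=M$.

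I would apply the estimate \eqref{eq:ellipticpart1} from the proof of Proposition \ref{prop:ellipticesthighfreq} directly to $f_\ell:=T^{-1}\psi_\ell|_{\Sigma_0}$. Rather than a solution of \eqref{eq:CSF}, we only need the identity \eqref{eq:degelliptic} with $T f_\ell=\psi_\ell$, together with the boundary conditions of Proposition \ref{prop:timeinvffixedmodeconstr}, which make the boundary terms vanish. Taking $p=2$, this gives
\begin{equation*}
\int_{\Sigma_0}\mathcal{E}_0[T^{-1}\psi_{\ell}]\,d\sigma dr\lesssim \int_{\Sigma_0}\mathcal{E}_2[\psi_\ell]\,d\sigma dr,
\end{equation*}
uniformly in $\ell\geq L_0$. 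Commuting with $\snabla_{\s^2}^N$ (a symmetry) gives the same bound with $\snabla^N_{\s^2}$ on both sides. Summing over $\ell\geq L_0$ by orthogonality of spherical harmonics, the right-hand side is controlled by $\int_{\Sigma_0}\mathcal{E}_2[\psi]+\mathcal{E}_2[\snabla^N_{\s^2}\psi]$, which is finite by assumption. Hence $\sum_{\ell\ge L_0}T^{-1}\psi_\ell$ converges in the weighted $H^1$-type norm $\int\mathcal{E}_0[\snabla^N_{\s^2}\cdot]$. The $K^{-1}$ part is identical via \eqref{eq:ellipticenergy2}.

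Pointwise in $r$, I would then use the fundamental theorem of calculus in $r$ from a point where the function is controlled (or from infinity, using the decay encoded in $\mathcal{E}_0$), together with Cauchy--Schwarz. This gives $\|F(r,\cdot)\|_{L^2(\s^2)}^2\lesssim_r\int\mathcal{E}_0[F]\,dr$ for $r>1$ fixed, where the constant degenerates only as $r\to1$ or $r\to\infty$. Applied to $\snabla^k_{\s^2}F$ with $k\leq N$, this yields convergence in $H^N(\s^2)$.

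The main obstacle is justifying \eqref{eq:ellipticpart1} for the constructed $f_\ell$. This requires checking that $f_\ell$ has enough regularity and decay at $r=1$ and $r=\infty$ so that the boundary terms in the integration of \eqref{eq:degelliptic3} vanish; the decay bounds in the proof of Proposition \ref{prop:timeinvffixedmodeconstr} should provide this. It also requires uniqueness, so that the mode-by-mode ODE solution coincides with the one the elliptic estimate controls. If needed, I would first prove the estimate for compactly supported data and pass to the limit by density.
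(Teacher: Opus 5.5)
Your proposal is correct and follows essentially the same route as the paper: the paper applies the uniform-in-$\ell$ high-frequency elliptic estimates \eqref{eq:ellipticenergy1ho}--\eqref{eq:ellipticenergy2ho} with $p=2$ and $\tau=0$ to $T^{-1}\widehat{\psi}_\ell$ and $K^{-1}\widehat{\psi}_\ell$, which bounds $\int_{\Sigma_0}\mathcal{E}_0[\snabla_{\s^2}^{k}\widehat{\psi}^{(1)}_{\ell}]$ by $\int_{\Sigma_0}\mathcal{E}_2[\snabla_{\s^2}^{k}\widehat{\psi}_{\ell}]$; it then uses the fundamental theorem of calculus in $r$ and a Poincar\'e inequality on $\s^2$ to get $H^N(\s^2)$ bounds at fixed $r$, and concludes with a Cauchy-sequence argument in $\ell$. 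Your separate treatment of the finitely many modes $\ell<L_0$, and your check that the boundary terms vanish for the ODE-constructed $f_\ell$, spell out points the paper leaves implicit.
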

\begin{proof}
We first recall that
\begin{equation*}
\widehat{\psi}^{(1)}_{\ell}=-\frac{1}{i\q}\left( K^{-1}\widehat{\psi}_{\ell}-T^{-1}\widehat{\psi}_{\ell}\right).
\end{equation*}
Hence, we can apply \eqref{eq:ellipticenergy1ho} and \eqref{eq:ellipticenergy2ho} with $\tau=0$, $\gamma_1=N$, $\gamma_2=\gamma_3=0$ and $p=2$ to conclude that for all $\ell\geq L$:
\begin{equation*}
\int_{\Sigma_{0}} \mathcal{E}_{0}[\snabla_{\s^2}^{k}\widehat{\psi}^{(1)}_{\ell}]\,d\sigma dr\leq 2 q^{-2}\int_{\Sigma_{0}} \mathcal{E}_{0}[\snabla_{\s^2}^{k}T^{-1}\widehat{\psi}_{\ell}]+\mathcal{E}_{0}[\snabla_{\s^2}^{k}K^{-1}\widehat{\psi}_{\ell}]\,d\sigma dr\\
\leq C \int_{\Sigma_{0}} \mathcal{E}_{2}[\snabla_{\s^2}^{k}\widehat{\psi}_{\ell}]\,d\sigma dr,
\end{equation*}
where $C>0$ is uniform in $\ell$. 

By the fundamental theorem of calculus in the $r$-direction and a Poincar\'e inequality on $\s^2$, it then follows that for all $r>1$:
\begin{equation*}
||\widehat{\psi}^{(1)}_{\ell}||^2_{H^N(\s^2)}\leq C  \int_{\Sigma_{0}} \mathcal{E}_{0}[\snabla_{\s^2}^{N}\widehat{\psi}_{\ell}]+ \mathcal{E}_{0}[\snabla_{\s^2}^{N}\widehat{\psi}_{\ell}]\,d\sigma dr.
\end{equation*}
Convergence of the sum over $\ell$ then follows by showing that the sums $\sum_{\ell=L}^{M} \widehat{\psi}^{(1)}_{\ell}(r,\cdot)$ are Cauchy sequences in $M$, which in turn follows from the above uniform estimate and the convergence of the sum $\sum_{\ell\in \N_0} \int_{\Sigma_{0}} \mathcal{E}_{0}[\snabla_{\s^2}^{N}\widehat{\psi}_{\ell}]+ \mathcal{E}_{0}[\snabla_{\s^2}^{N}\widehat{\psi}_{\ell}]\,d\sigma dr$, which we assumed.
\end{proof}

\subsubsection{Elliptic estimates for bounded angular frequencies}
We complement the elliptic estimates from Corollary \ref{cor:elliptichighfreq} with additional elliptic estimates for bounded angular frequencies, which follow from the construction in Proposition \ref{prop:timeinvffixedmodeconstr}.
\begin{proposition}
\label{prop:boundedangelliptic}
Let $N,L_0\in \N_0$. Let $\tau\geq 0$ and let $\phi$ be a solution to \eqref{eq:CSF}. Let $2-\re \sqrt{1-4\q^2}<p<3$ and $\epsilon>0$. Then there exists a constant $C=C(p,q,L_0,\epsilon)>0$ such that for all $\tau\geq 0$:
\begin{align}
\label{eq:ellipticenergy1boundl}
\sum_{|\gamma|\leq N}\int_{\Sigma_{\tau}} \mathcal{E}_{p-2-\epsilon}[\mathbf{Z}^{\gamma}\widehat{\psi}_{\leq L_0 -1}]\,d\sigma dr\leq &\: C\sum_{|\gamma|\leq N}\int_{\Sigma_{\tau}}\mathcal{E}_{p}[\mathbf{Z}^{\gamma}T\widehat{\psi}_{\leq L_0 -1}]+(r^{-1}\Omega)^{-p+2}|\mathbf{Z}^{\gamma}(G_{\widehat{A}})_{\leq L_0-1}|^2\,d\sigma dr, \\
\label{eq:ellipticenergy2boundl}
\sum_{|\gamma|\leq N}\int_{\Sigma_{\tau}} \mathcal{E}_{p-2-\epsilon}[\mathbf{Z}^{\gamma}\widehat{\psi}_{\leq L_0 -1}]\,d\sigma dr\leq &\: C\sum_{|\gamma|\leq N}\int_{\Sigma_{\tau}} \mathcal{E}_{p}[\mathbf{Z}^{\gamma}K\widehat{\psi}_{\leq L_0 -1}]+(r^{-1}\Omega)^{-p+2}|\mathbf{Z}^{\gamma}(G_{\widehat{A}})_{\leq L_0-1}|^2\,d\sigma dr,
\end{align}
where \eqref{eq:ellipticenergy2boundl} only holds for $|Q|=M$.
\end{proposition}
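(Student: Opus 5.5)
Since only finitely many angular modes $\ell\leq L_0-1$ are involved, I would treat each mode $(\ell,m)$ separately by ODE methods in $r$ and then sum; the constants may depend on $L_0$. The approach is to invert, at fixed $\tau$, the twisted operator from the proof of Proposition \ref{prop:timeinvffixedmodeconstr}. I will describe the $T$-estimate \eqref{eq:ellipticenergy1boundl}. The $K$-estimate \eqref{eq:ellipticenergy2boundl} in the case $|Q|=M$ should then follow verbatim via the Couch--Torrence symmetry of Proposition \ref{prop:mainequations}, exchanging $T\leftrightarrow K$, $\q\leftrightarrow -\q$, $\rho_\infty\leftrightarrow\rho_+$, $\h\leftrightarrow\h^+$, exactly as in Corollary \ref{cor:timeinvffixedmodeconstrhor}.

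\textbf{Step 1: reduce to a first-order integral identity.} Fix $\tau$ and write $\widecheck{\psi}_{\ell m}:=\mathfrak{w}_\ell^{-1}\widehat{\psi}_{\ell m}(\tau,\cdot)$. Move every $T$-derivative term of \eqref{eq:maineqradfield} to the right-hand side, as in \eqref{eq:maineqcheckpsi}. This gives
\begin{equation*}
X\big(\Omega^2\mathfrak{w}_\ell^2 e^{-2i\q\int_{r_\sharp}^r r'^{-1}\Omega^{-2}(1-\h)\,dr'}X\widecheck{\psi}_{\ell m}\big)=\mathcal{F}_{\ell m},
\end{equation*}
where $\mathcal{F}_{\ell m}$ is $\mathfrak{w}_\ell$ times a unimodular phase times a combination of $XT\widehat{\psi}_{\ell m}$, $\h\widetilde{\h}T^2\widehat\psi_{\ell m}$, $(\frac{d\h}{dr}+\ldots)T\widehat\psi_{\ell m}$ and $r(G_{\widehat{A}})_{\ell m}$. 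The boundary term at $r=1$ vanishes because $\Omega^2=0$ there and $\widehat\psi$ is smooth. Integrating from $r=1$ then gives
\begin{equation*}
X\widecheck\psi_{\ell m}(r)=(\text{phase})\,\Omega^{-2}\mathfrak{w}_\ell^{-2}\int_1^r\mathcal{F}_{\ell m}\,dr'.
\end{equation*}
The next task is to make the integral start from $r=\infty$. I would do this by showing that $\Omega^2\mathfrak{w}_\ell^2 X\widecheck\psi_{\ell m}\to0$ as $r\to\infty$ under finiteness of the right-hand side. This is plausible because, with $\mathfrak{w}_\ell\sim r^{\frac12+\frac12\re\beta_\ell}$, the quantity $\mathfrak{w}_\ell^2X\widecheck\psi$ behaves like $r^{1+\re\beta_\ell}X(r^{-\frac12-\frac12\re\beta_\ell}\widehat\psi)$, and the available energy $\mathcal{E}_{p-2-\epsilon}$ on the left is needed anyway. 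If it holds, then $\int_1^\infty\mathcal{F}_{\ell m}=0$, and one can write both
\begin{equation*}
X\widecheck\psi=-(\ldots)\int_r^\infty\mathcal{F}\quad\text{for large } r,\qquad X\widecheck\psi=(\ldots)\int_1^r\mathcal{F}\quad\text{near } r=1.
\end{equation*}
This mirrors the compatibility condition \eqref{eq:vanishingtildeG}.

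\textbf{Step 2: weighted Hardy/Cauchy--Schwarz bounds.} Near infinity I would estimate
\begin{equation*}
\Big|\int_r^\infty\mathcal{F}\Big|^2\le\int_r^\infty r'^{-a}\,dr'\int_r^\infty r'^{a}|\mathcal{F}|^2\,dr',
\end{equation*}
choosing $a$ so that $r^{a}|\mathcal{F}|^2$ is controlled by $\mathcal{E}_p[T\widehat\psi]$ and $(r^{-1}\Omega)^{2-p}|G|^2$. Then I would integrate the resulting pointwise bound for $r^{p-2-\epsilon}\Omega^2|X\widehat\psi|^2$ in $r$, with the $\epsilon$-loss absorbing the logarithm from the Hardy step. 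The zeroth-order and angular terms of $\mathcal{E}_{p-2-\epsilon}$ are recovered by integrating $X\widecheck\psi$ back from infinity, using that $\widecheck\psi\to0$ there. The restriction $p>2-\re\sqrt{1-4\q^2}$ enters here: it is needed so that the weights $r^{\pm\re\beta_\ell}$ coming from $\mathfrak{w}_\ell^{\pm2}$, with $\re\beta_\ell\ge\re\beta_0$, make the Hardy integrals converge. Near the horizon the same argument applies with $(r-1)$-weights: I would use $\Omega^{-2}\sim(r-1)^{-1}$ or $(r-1)^{-2}$ and $\mathfrak{w}_\ell(1)=1$, and apply Cauchy--Schwarz as in the proof of Proposition \ref{prop:timeinvffixedmodeconstr}, where the estimate for $r\le2$ carries the $\delta$-loss. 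The compact middle region is trivial since $\mathfrak{w}_\ell$ is non-vanishing (Proposition \ref{prop:statsol}(iv)). Higher orders $\mathbf{Z}^\gamma$ follow by commuting with $T$ and angular derivatives trivially, and with $(r-1)X$ by differentiating the ODE, as in Corollary \ref{cor:elliptichighfreq}.

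\textbf{Main obstacle.} The delicate point is the vanishing of the boundary flux at infinity in Step 1, i.e.\ justifying that $\int_1^\infty\mathcal{F}_{\ell m}=0$ at every $\tau$ using only the finiteness of the weighted norms. If $\mathfrak{w}_\ell^2X\widecheck\psi$ had a nonzero limit, that would correspond to a $\mathfrak{w}^-_\ell$-type component, which is excluded by the requirement $p-2-\epsilon>-\re\beta_\ell$. I would therefore argue by contradiction: a nonzero limit forces $|X\widehat\psi|\gtrsim r^{-\frac32+\frac12\re\beta_\ell}$, which makes $\int r^{p-2-\epsilon}r^{2}|X\widehat\psi|^2\,dr$ diverge. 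This closes the argument provided one first knows the left-hand side is finite, which holds for smooth solutions with $\widehat\psi$ of the stated decay. If finiteness is not available a priori, I would fall back on a density and cutoff argument in $r$.
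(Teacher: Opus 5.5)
Your overall architecture is the paper's: a mode-by-mode twisted ODE from Proposition \ref{prop:timeinvffixedmodeconstr}, integration from both ends, pointwise Cauchy--Schwarz bounds, the $\epsilon$-loss when integrating in $r$, commutation for $N\geq 1$, and Couch--Torrence for \eqref{eq:ellipticenergy2boundl}. However, Steps 1--2 near infinity contain a genuine gap. You insist that the flux $\Omega^2\mathfrak{w}_\ell^2e^{-i\vartheta}X\widecheck{\psi}_{\ell m}$ vanish at $r=\infty$, with $e^{i\vartheta}$ the unimodular twist, and that $X\widecheck\psi_{\ell m}$ be represented through $\int_r^\infty\mathcal{F}_{\ell m}$. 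Since $|\mathcal{F}_{\ell m}|\sim r^{\frac12+\frac12\re\beta_\ell}(|XT\widehat\psi_{\ell m}|+\ldots)$ and the right-hand side only controls $\int r^{p}|XT\widehat\psi_{\ell m}|^2\,dr$, your Cauchy--Schwarz weight would need $1<a\leq p-1-\re\beta_\ell$, i.e.\ $p>2+\re\beta_\ell$. That is not the stated range. It is moreover empty for the modes with $\re\beta_\ell\geq 1$, i.e.\ $\ell(\ell+1)\geq \q^2$, which lie among $\ell\leq L_0-1$ because $L_0$ is chosen large in Proposition \ref{prop:ellipticesthighfreq}. For those modes the flux genuinely need not vanish. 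If $\widehat\psi_{\ell m}=\psi_{\ell m}\to c\neq 0$ at $\mathcal{I}^+$, then $\mathfrak{w}_\ell^2X(\mathfrak{w}_\ell^{-1}\psi_{\ell m})\approx -c\,\mathfrak{w}_\ell'$, which has size $r^{-\frac12+\frac12\re\beta_\ell}$ and does not tend to zero. Your contradiction argument is also miscomputed. A nonzero flux produces a $\mathfrak{w}^-_\ell$-component with $|X\widehat\psi|\sim r^{-\frac12-\frac12\re\beta_\ell}$. The $\mathcal{E}_{p-2-\epsilon}$-density near infinity is $r^{p-2-\epsilon}|X\widehat\psi|^2$, with no extra $r^2$, so its integral diverges only for $p\geq 2+\re\beta_\ell+\epsilon$. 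The hypothesis $p>2-\re\beta_0$ therefore does not enter where you say it does.

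The missing idea is the case distinction already made in the proof of Proposition \ref{prop:timeinvffixedmodeconstr}, which the paper's proof invokes: integrate from infinity using compatibility when $\re\beta_\ell<1$, and from the horizon when $\re\beta_\ell>1$. Whenever $p<2+\re\beta_\ell$, in particular for all $\re\beta_\ell\geq 1$, do not switch ends. Keep $X\widecheck\psi_{\ell m}=e^{i\vartheta}\Omega^{-2}\mathfrak{w}_\ell^{-2}\int_1^r\mathcal{F}_{\ell m}\,dr'$ on the whole range; this needs no compatibility. With $E$ the right-hand side of \eqref{eq:ellipticenergy1boundl}, Cauchy--Schwarz gives $|\int_1^r\mathcal{F}_{\ell m}|\lesssim r^{1+\frac12\re\beta_\ell-\frac p2}E^{\frac12}$, hence $|X\widecheck\psi_{\ell m}|\lesssim r^{-\frac12\re\beta_\ell-\frac p2}E^{\frac12}$ (with an extra logarithm if $\beta_\ell=0$). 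This is integrable from infinity, where $\widecheck\psi_{\ell m}\to0$, exactly when $p>2-\re\beta_\ell$; that outer integration is where the hypothesis $p>2-\re\beta_0$ is used. It yields $|\widehat\psi_{\ell m}|\lesssim r^{\frac32-\frac p2}E^{\frac12}$ and $|X\widehat\psi_{\ell m}|\lesssim r^{\frac12-\frac p2}E^{\frac12}$, whose $\mathcal{E}_{p-2-\epsilon}$-energy is finite thanks to the $\epsilon$. Only when $\re\beta_\ell<1$ and $p$ exceeds $2+\re\beta_\ell$ does one need $\int_1^\infty\mathcal{F}_{\ell m}=0$, to exclude a $\mathfrak{w}^-_\ell$-tail. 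There your reasoning does apply: $\mathcal{F}_{\ell m}$ is then integrable, and a nonzero flux would produce a $\mathfrak{w}^-_\ell$-component. That component is unbounded since $|\mathfrak{w}^-_\ell|\sim r^{\frac12-\frac12\re\beta_\ell}$, contradicting boundedness of $\widehat\psi$ at $\mathcal{I}^+$.
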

\begin{proof}
Suppose first that $N=0$. We apply the estimates in the proof of Proposition \ref{prop:timeinvffixedmodeconstr} with $T^{-1}(\psi_{\ell m}-\Psi^{\infty}_{\ell m})|_{\Sigma_0}$ replaced by $\psi|_{\Sigma_{\tau}}$ and with $\delta=3-p$ to obtain in the case $\beta_{\ell}\neq 0$:
\begin{align*}
|\widehat{\psi}_{\ell}|^2(\tau, r)\leq &\: (r^{-1}\Omega)^{p-3} \int_{\Sigma_{\tau}} \mathcal{E}_{p}[T\widehat{\psi}_{\leq L-1}]+(r'^{-1}\Omega )^{2-p+2}|(G_{\widehat{A}})_{\leq L-1}|^2\,d\sigma dr',\\
|X\widehat{\psi}_{\ell}|^2(\tau, r)\leq &\:\Omega^{p-6}(r-1)r^{-p} \int_{\Sigma_{\tau}} \mathcal{E}_{p}[T\widehat{\psi}_{\leq L-1}]+(r'^{-1}\Omega )^{2-p+2}|(G_{\widehat{A}})_{\leq L-1}|^2\,d\sigma dr'.
\end{align*}
The condition $0<\delta<1+\re \beta_{\ell}$ from the proof of Proposition \ref{prop:timeinvffixedmodeconstr} then implies that $2-\re \beta_{\ell}<p<3$.

In the $\beta_{\ell}=0$ case, we see an additional logarithmic term inside the integrand on the RHS, due to the estimate $|\mathfrak{w}_{\ell}(r)|^2\leq (\log(1+r))^2 r$ in this case.

The estimate \eqref{eq:ellipticenergy1boundl} then follows by integrating  
\begin{equation*}
(r^{-1}\Omega)^{-(p'-2)}r^{-2}|\widehat{\psi}_{\ell}|^2+(r^{-1}\Omega)^{-(p'-2)}\Omega^2|X\widehat{\psi}_{\ell}|^2
\end{equation*}
and taking $p'<p$ to ensure convergence of the integrals. The derivation of \eqref{eq:ellipticenergy2boundl} proceeds entirely analogously, with the roles of $r-1$ and $(r-1)^{-1}$ reversed, applying Corollary \ref{cor:timeinvffixedmodeconstrhor} instead of Proposition \ref{prop:timeinvffixedmodeconstr}.

The case $N= 1$ then follows by applying the elliptic estimates in Proposition \ref{prop:ellipticesthighfreq} and using \eqref{eq:ellipticenergy1boundl} and \eqref{eq:ellipticenergy1boundl} with $N=0$ to control the lower-order orders, instead of the assumption $\ell\geq L_0$. The general $N\geq 1$ case can similarly be estimated inductively.
\end{proof}

\section{Energy estimates for $\widehat{\psi}$}
\label{sec:edectimint}
In this section, we will derive energy estimates for $\widehat{\psi}$ by using that $\widehat{\psi}=TK\widehat{\psi}^{(1)}$ and applying Corollary \ref{prop:edecaygeneralp}. 

In order to bound the energy norms for $\widehat{\psi}^{(1)}$ on the RHS of  \eqref{eq:energydecayp} by initial energy norms of $\widehat{\psi}$, we would need to apply the elliptic estimates of Corollary \ref{cor:elliptichighfreq} and Proposition \ref{prop:boundedangelliptic} with $p>3$. Since these elliptic estimates are in general only valid for $2<p< 3$, we cannot apply them directly. 

Instead, will first multiply the initial data $(\widehat{\psi}^{(1)}|_{\Sigma_0},T\widehat{\psi}|_{\Sigma_0})$ with smooth cut-off functions, so that they they are compactly supported and supported away from $r=r_+$. We will then apply energy estimates for the cut off data before deriving energy estimates for the original data via an interpolation argument.

In the remainder of this section, we will set $r_+=1$. Let $\chi_{\rm initial}(r)=1$ for all $1+2(r_{\rm initial}-1)\leq r\leq \frac{1}{2(r_{\rm initial}-1)}$ and $\chi_{\rm initial}(r)=0$ for all $r<r_{\rm initial}$ and $r> \frac{1}{r_{\rm initial}-1}$. Let $\hpsi^{(1)}_{\Sc}$ be a solution to \eqref{eq:CSF} with inhomogeneity $G^{(1)}_{\Sc}$, defined as follows.
\begin{equation*}
G^{(1)}_{\Sc}=\xi_{\tau_*} (TK)^{-1}G_{\widehat{A}}[\Psi],
\end{equation*}
with $\xi_{\tau_*}: [0,\infty)_{\tau}\to [0,1]$ a smooth cut-off function such that $\xi_{\tau_*}(\tau)=1$ for $\tau\leq \tau_*$ and $\xi_{\tau_*}(\tau)=0$ for $\tau\geq \tau_*+1$.

We consider the following corresponding initial data for $\hpsi^{(1)}_{\Sc}$:
\begin{equation*}
\chi_{\rm initial}(\widehat{\psi}^{(1)}|_{\Sigma_0}, (T\widehat{\psi}^{(1)})|_{\Sigma_0}).
\end{equation*}
Then by finite speed of propagation, we have that
\begin{equation*}
\hpsi^{(1)}_{\Sc}\equiv \hpsi^{(1)}\quad \textnormal{in $D^+\left(\Sigma_{0}\cap\left\{1+2(r_{\rm initial}-1)\leq r\leq \frac{1}{2(r_{\rm initial}-1)}\right \}\right)\cap \{\tau\leq \tau_*\}$}
\end{equation*}
Given $\tau_*\geq 0$ and $\tilde{R}>1$, we can therefore find $r_{\rm initial}(\tau_*,\tilde{R})$ such that
\begin{equation*}
\hpsi^{(1)}_{\Sc}|_{\Sigma_{\tau}\cap\{1+\tilde{R}^{-1}\leq r\leq \tilde{R}\}}\equiv \hpsi^{(1)}|_{\Sigma_{\tau}\cap\{1+\tilde{R}^{-1}\leq r\leq \tilde{R}\}}
\end{equation*}
for all $\tau\leq \tau_*$.

Furthermore, it is straightforward to show that we can choose $r_{\rm initial}(\tau_*,\tilde{R})$, such that
\begin{equation*}
(r_{\rm initial}(\tau_*,\tilde{R})-1)^{-1} \lesssim \tilde{R}+ (1+\tau_*).
\end{equation*}

\begin{proposition}
\label{prop:energydecayptieminv}
Let $N_1\in \N_0$, $N_2\in \N_0$. Let $1< p<2$. For all $\nu>0$ and $\epsilon>0$ suitably small, there exists a constant $C=C(\nu,\epsilon, N_1,N_2,p)>0$ such that:
\begin{multline}
\label{eq:energydecayptieminv}
(1+\tau)^{3-p+2N_1-2\epsilon-\nu}\sum_{|\gamma|\leq N_2}\int_{\Sigma_{\tau}}\mathcal{E}_{p-3\epsilon}[\mathbf{Z}^{\gamma}(TK)^{N_1}\widehat{\psi}]\,d\sigma dr \\
\leq C\sum_{m\leq N_1+1}\sum_{|\gamma|+m\leq N_1+N_2+1}\sum_{j\leq 2(N_1+1)}\int_{\Sigma_{0}} \mathcal{E}_{3-\frac{\epsilon}{2}}[\mathbf{Z}^{\gamma}(TK)^mT^j\widehat{\psi}]\,d\sigma dr+C(|\mathfrak{I}[\psi]|^2+|\mathfrak{H}[\psi]|^2)\\
=: CE_{N_1,N_2,\epsilon}[\psi]
\end{multline}
and additionally,
\begin{equation}
\label{eq:energydecayptieminvmin1}
(1+\tau)^{-2\epsilon}\sum_{|\gamma|\leq N_2}\int_{\Sigma_{\tau}} \mathcal{E}_{1-5\epsilon}[\mathbf{Z}^{\gamma}T^{-1}\widehat{\psi}]\,d\sigma dr\leq CE_{-1,N_2,\epsilon}[\psi].
\end{equation}
Furthermore, for $N_1\in \N_0\cup\{-1\}$, $N_2\in \N_0$:
\begin{align}
\label{eq:energydecayptieminvfaster}
(1+\tau)^{3+\re\sqrt{1-4\q^2}+2N_1-8\epsilon-\nu}\int_{\Sigma_{\tau}}\sum_{|\gamma|\leq N_2}\mathcal{E}_{\epsilon-\re\sqrt{1-4\q^2}}[\mathbf{Z}^{\gamma}(TK)^{N_1}\widehat{\psi}]\,d\sigma dr\leq &\:C E_{N_1+1,N_2+1,\epsilon}[\psi],\\
\label{eq:energydecayptieminvfasterDT}
(1+\tau)^{3+\re\sqrt{1-4\q^2}+2N_1-8\epsilon-\nu}\int_{\Sigma_{\tau}}\sum_{|\gamma|\leq N_2}\mathcal{E}_{2+\epsilon-\re\sqrt{1-4\q^2}}[D_T\mathbf{Z}^{\gamma}(TK)^{N_1}\widehat{\psi}]\,d\sigma dr\leq &\: C E_{N_1+1,N_2+2,\epsilon}[\psi].
\end{align}
\end{proposition}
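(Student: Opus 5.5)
The plan is to apply the time-derivative decay estimates of Proposition \ref{prop:edecay} and Corollary \ref{prop:edecaygeneralp} not to $\widehat{\psi}$ itself but to the time integral $\widehat{\psi}^{(1)}$, which satisfies $TK\widehat{\psi}^{(1)}=\widehat{\psi}$. Then $(TK)^{N_1}\widehat{\psi}=(TK)^{N_1+1}\widehat{\psi}^{(1)}$, and each application of $TK$ gains the advertised power of $\tau$. The inhomogeneity for $\widehat{\psi}^{(1)}$ is $G^{(1)}=(TK)^{-1}G[\Psi]$. Its $TK$-derivatives are $G[\Psi]$ and $TK$-derivatives thereof, so Proposition \ref{prop:estinhomG} gives $r^3|\ldots|\lesssim (\tau+1)^{-\frac32-\frac12\re\beta_\ell}(\ldots)$, with the $(\tau+1)T$, $(\tau+1)K$ commutations each gaining one power. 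From this I would check that every weighted spacetime and supremum integral of $G$ appearing on the right-hand side of \eqref{eq:energydecay} and \eqref{eq:energydecayp} is bounded by $C(|\mathfrak{I}[\psi]|^2+|\mathfrak{H}[\psi]|^2)$. The only delicate term is the undifferentiated $(TK)^{-1}G[\Psi]$ in the $m=0$ slot, which is controlled by integrating the $\tau$-decay from $\tau$ to $\infty$.

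The main obstacle is the initial-data term $\int_{\Sigma_0}\mathcal{E}_{p}[\mathbf{Z}^\gamma(TK)^mT^j\widehat{\psi}^{(1)}]$ with $m=0$ and $p$ close to $3$. The elliptic estimates of Corollary \ref{cor:elliptichighfreq} and Proposition \ref{prop:boundedangelliptic} bound $\mathcal{E}_{p-2}[\widehat{\psi}^{(1)}]$ by $\mathcal{E}_{p}[T\widehat{\psi}^{(1)}]$, and hence by data of $\widehat{\psi}$, only for $p<3$. This does not reach the weight $p'$ needed to get the rate $\tau^{-3+p}$, which requires an $(r^{-1}\Omega)^{p'}$-energy of $\widehat{\psi}^{(1)}$ with $p'>1$.

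I would handle this obstacle with the cut-off construction introduced above. Given a target time $\tau_*$, replace $\widehat{\psi}^{(1)}$ by $\widehat{\psi}^{(1)}_{\Sc}$, whose data are cut off by $\chi_{\rm initial}$ and whose inhomogeneity is $G^{(1)}_{\Sc}$.
\begin{itemize}
\item The cut-off data are compactly supported away from $r=1$. Their $\mathcal{E}_{p'}$ norms for any large $p'$ are therefore finite and bounded by the elliptic $\mathcal{E}_{p-2}$-type bounds times $(r_{\rm initial}-1)^{-(p'-(3-\epsilon/2)+2)}\lesssim(1+\tau_*)^{p'-1-\epsilon/2}$, using $(r_{\rm initial}-1)^{-1}\lesssim \tilde R+1+\tau_*$.
\item Apply Corollary \ref{prop:edecaygeneralp} to $\widehat{\psi}^{(1)}_{\Sc}$ at time $\tau_*$, with the chosen $p'$, $N_1+1$ applications of $TK$, and $N_2$ commutations.
\item The polynomial loss $(1+\tau_*)^{p'-1-\epsilon/2}$ is exactly offset by the gain $(1+\tau_*)^{1+\re\beta_0-p'+\ldots}$ in \eqref{eq:energydecayp}. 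This should leave $(1+\tau_*)^{-(3-p)-2N_1+2\epsilon+\nu}$ for the $\mathcal{E}_{p-3\epsilon}$ energy after interpolating in $p$ between the two endpoint estimates.
\item Finite speed of propagation identifies $\widehat{\psi}^{(1)}_{\Sc}$ with $\widehat{\psi}^{(1)}$ on $\Sigma_{\tau_*}\cap\{1+\tilde R^{-1}\le r\le\tilde R\}$.
\item Choose $\tilde R\sim (1+\tau_*)$. In the near-horizon and far regions outside this set, bound the energy directly by the weighted boundedness estimate \eqref{eq:horpestrmin1pest} for $\widehat{\psi}$, converting the extra $(r^{-1}\Omega)^{\pm}$ weight into a power of $\tau_*$.
\end{itemize}
I expect balancing these exponents to produce the $-2\epsilon$, $-3\epsilon$ losses in \eqref{eq:energydecayptieminv}.

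For \eqref{eq:energydecayptieminvmin1} I would run the same argument with $T^{-1}\widehat{\psi}$ and $N_1=-1$, using only energy boundedness and the elliptic bound with $p$ near $1$, which yields only the $\tau^{2\epsilon}$ growth. For \eqref{eq:energydecayptieminvfaster} I would use the elliptic estimates at a fixed time $\tau$ with $p=2+\epsilon-\re\sqrt{1-4\q^2}$. Combining Corollary \ref{cor:elliptichighfreq} with Proposition \ref{prop:boundedangelliptic} bounds $\mathcal{E}_{\epsilon-\re\sqrt{1-4\q^2}}[(TK)^{N_1}\widehat{\psi}]$ by $\mathcal{E}_{2+\epsilon-\re\sqrt{1-4\q^2}}[T(TK)^{N_1}\widehat{\psi}]$ and $K$-analogues plus $G$ terms. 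Writing $T$ or $K$ in terms of $TK$ and lower order, I would apply \eqref{eq:energydecayptieminv} with $N_1+1$ and $p=2+\epsilon-\re\sqrt{1-4\q^2}$ (or Corollary \ref{prop:edecaygeneralp} when $p>2$). The resulting rate $3-p+2(N_1+1)=3+\re\sqrt{1-4\q^2}+2N_1-\epsilon$ matches the claim up to $\epsilon$-losses. Then \eqref{eq:energydecayptieminvfasterDT} follows identically, using $D_T=T-i\q r^{-1}$ with one more commutation.
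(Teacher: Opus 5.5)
Your architecture matches the paper's: you apply Corollary~\ref{prop:edecaygeneralp} to the cut-off time integral $\widehat{\psi}^{(1)}_{\Sc}$ with $N_1+1$ factors of $TK$, identify it with $\widehat{\psi}^{(1)}$ by finite speed of propagation, and bound the $G^{(1)}_{\Sc}$-terms via Proposition~\ref{prop:estinhomG}. The way you close the argument, however, has a genuine gap.

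You fix $\tilde R\sim 1+\tau_*$ and treat $\Sigma_{\tau_*}\cap(\{r>\tilde R\}\cup\{r<1+\tilde R^{-1}\})$ with the boundedness estimate \eqref{eq:horpestrmin1pest}. That estimate gives no decay in $\tau$, and localising does not produce it. Compared with the largest admissible weight $p''<\min\{1+\re\beta_0,2\}+\epsilon$, the region $\{r\geq\tilde R\}$ (or $\{r-1\leq\tilde R^{-1}\}$) gains at most a factor $\tau_*^{-(2-p)+O(\epsilon)}$ on the term $(\Omega^{-1}r)^{p-3\epsilon}\Omega^2|X\cdot|^2$. It gains nothing on $r^{-2}|\cdot|^2$, $r^{-2}|\snabla_{\s^2}\cdot|^2$ and $\h\widetilde{\h}|D_T\cdot|^2$, whose weights in $\mathcal{E}_{p-3\epsilon}$ do not depend on $p>0$. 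You need $\tau_*^{-(3-p+2N_1)}$.

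You were forced to tie $\tilde R$ to $\tau_*$ by the size of the cut-off loss $(\tilde R+1+\tau_*)^{p'-1+\epsilon/2}$. With $p'\gtrsim 1+\re\beta_0$ this loss can at best be compensated in $\tau_*$, via the $\re\sqrt{1-4\q_1^2}$ in the rate of Corollary~\ref{prop:edecaygeneralp}, but never in $\tilde R$. Taking $p'$ larger than the Corollary's fixed data weight $1+\re\sqrt{1-4\q_1^2}+\epsilon$ buys nothing. The $r^p$-hierarchy behind it (Theorem~\ref{thm:iedpaper1}) is capped at $p<\min\{1+\re\beta_\ell,2\}+\epsilon$, so there is no ``offset'' for large $p'$.

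The missing idea is to keep the cut-off loss of size $O(\epsilon)$. Apply Corollary~\ref{prop:edecaygeneralp} with $\q_1>\frac12$, which is allowed since only $|\q|\leq\q_1$ is required. Then $\re\sqrt{1-4\q_1^2}=0$, and the only data needed for $\widehat{\psi}^{(1)}_{\Sc}$ are $\mathcal{E}_{1+\epsilon}$-energies. The resulting rate $(1+\tau)^{1-p+2(N_1+1)-\nu}$ is already the one in \eqref{eq:energydecayptieminv}, which contains no $\re\beta_0$.

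The elliptic estimates (Corollary~\ref{cor:elliptichighfreq}, Proposition~\ref{prop:boundedangelliptic}) with $p=3-\frac{\epsilon}{2}$ control $\mathcal{E}_{1-\epsilon}$ of $\widehat{\psi}^{(1)}$ by the data in $E_{N_1,N_2,\epsilon}[\psi]$. So the cut-off costs only $\tilde R^{2\epsilon}+(1+\tau_*)^{2\epsilon}$. The factor $(1+\tau_*)^{2\epsilon}$ is the $-2\epsilon$ in the rate. The factor $\tilde R^{2\epsilon}$ is absorbed by running the argument on each dyadic annulus $[\tilde R_j,\tilde R_{j+1}]\cup[1+\tilde R_{j+1}^{-1},1+\tilde R_j^{-1}]$, with its own radius $r_{\rm initial}(\tau_*,\tilde R_j)$. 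One inserts $(r^{-1}\Omega)^{4\epsilon}$ on the left-hand side and sums in $j$; this is where $\mathcal{E}_{p-3\epsilon}$ comes from. The annuli cover all of $\Sigma_{\tau_*}$, so no separate near-horizon or far-region argument is needed. The same dyadic argument with \eqref{eq:horpestrmin1pest} at $p=1+\epsilon$ gives \eqref{eq:energydecayptieminvmin1}.

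A smaller point concerns \eqref{eq:energydecayptieminvfaster}. The weight $2+\epsilon-\re\sqrt{1-4\q^2}$ is $\geq 2$ when $\re\sqrt{1-4\q^2}\leq\epsilon$, which is outside the range of \eqref{eq:energydecayptieminv}. If $\re\sqrt{1-4\q^2}\leq\epsilon/2$ it is also outside the range of Corollary~\ref{prop:edecaygeneralp}. The paper instead splits $\Sigma_\tau$ at $\Omega r^{-1}=(1+\tau)^{-1}$. It uses \eqref{eq:TKvsZv2} where $\Omega r^{-1}$ is small and the $\mathcal{E}_{1-\epsilon}$-bound elsewhere. It then applies \eqref{eq:energydecayptieminv} with $p$ close to $1$ to $(TK)^{N_1}\widehat{\psi}$ and $(TK)^{N_1+1}\widehat{\psi}$.
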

\begin{proof}
With the above choice of $r_{\rm initial}(\tau_*,\tilde{R})$, we obtain the following estimate:
\begin{multline*}
\sum_{m\leq N_1}\sum_{|\gamma|+m\leq N_1+N_2}\sum_{j\leq 2N_1}\int_{\Sigma_{0}} \mathcal{E}_{1+\epsilon}[\mathbf{Z}^{\gamma}(TK)^mT^j\widehat{\psi}^{(1)}_{\Sc}]\,d\sigma dr\\
\lesssim \left[ \tilde{R}^{2\epsilon}+ (1+\tau_*)^{2\epsilon}\right]\sum_{m\leq N_1}\sum_{|\gamma|+m\leq N_1+N_2}\sum_{j\leq 2N_1}\int_{\Sigma_{0}} \mathcal{E}_{1-\epsilon}[\mathbf{Z}^{\gamma}(TK)^mT^j\widehat{\psi}^{(1)}]\,d\sigma dr
\end{multline*}
We apply the elliptic estimates in Corollary \ref{cor:elliptichighfreq} and Proposition \ref{prop:boundedangelliptic} along $\Sigma_0$ to further estimate the RHS in terms of $\widehat{\psi}$ and obtain:
\begin{multline*}
\sum_{m\leq N_1}\sum_{|\gamma|+m\leq N_1+N_2}\sum_{j\leq 2N_1}\int_{\Sigma_{0}} \mathcal{E}_{1+\epsilon}[\mathbf{Z}^{\gamma}(TK)^mT^j\widehat{\psi}_{\Sc}^{(1)}]\,d\sigma dr\\
\lesssim \left[ \tilde{R}^{2\epsilon}+ (1+\tau_*)^{2\epsilon}\right]\sum_{m\leq N_1}\sum_{|\gamma|+m\leq N_1+N_2}\sum_{j\leq 2N_1}\int_{\Sigma_{0}} \mathcal{E}_{3-\frac{\epsilon}{2}}[\mathbf{Z}^{\gamma}(TK)^mT^j\widehat{\psi}]\,d\sigma dr\\
+\left[ \tilde{R}^{2\epsilon}+ (1+\tau_*)^{2\epsilon}\right]\sum_{m\leq N_1}\sum_{|\gamma|+m\leq N_1+N_2}\sum_{j\leq 2N_1}\int_{\Sigma_{0}} (\Omega r^{-1})^{-1+\frac{\epsilon}{2} }|r\mathbf{Z}^{\gamma}(TK)^mT^j(TK)^{-1}G_{\widehat{A}}[\psi]|^2\,d\sigma dr.
\end{multline*}
First, we we apply \eqref{eq:horpestrmin1pest} to obtain for $ p= 1+\epsilon$:
\begin{multline*}
\sum_{|\gamma|\leq N}\int_{\Sigma_{\tau_*}\cap\{1+\tilde{R}^{-1}\leq r\leq \tilde{R} \}} \mathcal{E}_{1-\epsilon}[\mathbf{Z}^{\gamma}T^{-1}\psi]\,d\sigma dr\leq \sum_{|\gamma|\leq N}\int_{\Sigma_{\tau_*}} \mathcal{E}_{1-\epsilon}[\mathbf{Z}^{\gamma}\psi^{(1)}_{\Sc}]\,d\sigma dr\\
\leq C\sum_{|\gamma|\leq N}\sum_{n\leq N}\int_{\Sigma_{\tau_1}} \mathcal{E}_{1+\epsilon}[\mathbf{Z}^{\gamma}\psi^{(1)}_{\Sc}]\,d\sigma dr\\
+C\int_{\tau_1}^{\tau_2}\int_{\Sigma_{\tau}} \max\{(r^{-1}\Omega)^{-1-\epsilon}\rho_+^{1-2\epsilon}r^{-1+2\epsilon},1\}r^{-2}|\mathbf{Z}^{\gamma}(r^3G^{(1)}_{\Sc})|^2+\Omega^2r^{-2}(1-\upzeta) |\mathbf{Z}^{\gamma}T( r^3G^{(1)}_{\Sc}) |^2\,d\sigma dr d\tau\\
\leq \left[ \tilde{R}^{2\epsilon}+ (1+\tau_*)^{2\epsilon}\right]\sum_{|\gamma|+m\leq N_1+N_2}\sum_{j\leq 2N_1}\int_{\Sigma_{0}} \mathcal{E}_{3-\frac{\epsilon}{2}}[\mathbf{Z}^{\gamma}(TK)^mT^j\widehat{\psi}]\,d\sigma dr\\
+C\sum_{m\leq N_1}\sum_{|\gamma|+m\leq N_1+N_2}\sum_{j\leq 2N_1}\int_{\Sigma_{0}} (\Omega r^{-1})^{-1+\frac{\epsilon}{2} }|r\mathbf{Z}^{\gamma}(TK)^mT^j(TK)^{-1}G_{\widehat{A}}[\psi]|^2\,d\sigma dr\\
+C\int_{\tau_1}^{\tau_2}\int_{\Sigma_{\tau}} \max\{(r^{-1}\Omega)^{-1-\epsilon}\rho_+^{1-2\epsilon}r^{-1+2\epsilon},1\}r^{-2}|\mathbf{Z}^{\gamma}(r^3G^{(1)}_{\Sc})|^2+\Omega^2r^{-2}(1-\upzeta) |\mathbf{Z}^{\gamma}T( r^3G^{(1)}_{\Sc}) |^2\,d\sigma dr d\tau.
\end{multline*}
Note first of all that all the terms on the RHS above involving $G^{(1)}_{\Sc}$ are finite, by the estimates in Proposition \ref{prop:estinhomG} and can be estimated by $|\mathfrak{I}[\psi]|^2+|\mathfrak{H}[\psi]|^2$, using the definition of $\Psi$ appearing in $\widehat{\psi}=\psi-\Psi$.  By including a factor $(r^{-1}\Omega)^{4\epsilon}$ into the left hand side, replacing $\tilde{R}$ with a dyadic sequence $\tilde{R}_j$ and summing over $j$, we obtain:
\begin{multline*}
\sum_{|\gamma|\leq N_2}\int_{\Sigma_{\tau_*}} \mathcal{E}_{1-5\epsilon}[\mathbf{Z}^{\gamma}T^{-1}\psi]\,d\sigma dr\\
\leq C(1+\tau_*)^{2\epsilon}\sum_{|\gamma|\leq N_2}\left[\int_{\Sigma_{0}} \mathcal{E}_{3-\frac{\epsilon}{2}}[\mathbf{Z}^{\gamma}(TK)^mT^j\widehat{\psi}]\,d\sigma dr+|\mathfrak{I}[\psi]|^2+|\mathfrak{H}[\psi]|^2\right].
\end{multline*}
Since $\tau_*$ can be taken arbitrarily large, we conclude \eqref{eq:energydecayptieminvmin1}.

Now suppose that $N_1\geq 1$. Now we apply Corollary \ref{prop:edecaygeneralp} with $\q_1>\frac{1}{2}$ to $\psi^{(1)}_{\Sc}$ to obtain for $1< p<2$ and $0<\epsilon<p-1$:
\begin{multline*}
(1+\tau)^{1-p+2N_1-\nu}\int_{\Sigma_{\tau}}\sum_{|\gamma|\leq N_2}\mathcal{E}_{p}[\mathbf{Z}^{\gamma}(TK)^{N_1}\psi^{(1)}_{\Sc}]\,d\sigma dr \\
\leq C\sum_{m\leq N_1}\sum_{|\gamma|+m\leq N_1+N_2}\sum_{j\leq 2N_1}\int_{\Sigma_{0}} \mathcal{E}_{1+\epsilon}[\mathbf{Z}^{\gamma}(TK)^mT^j\psi^{(1)}_{\Sc}]\,d\sigma dr\\
+\sup_{\tau\in [0,\infty)}\int_{\Sigma_{\tau}} (1+\tau)^{2-\epsilon+2\max\{m-1,0\}}(\Omega r^{-1})^{2-p}r^{-2}|r^3\mathbf{Z}^{\gamma}T^j(TK)^{\max\{m-1,0\}}G^{(1)}_{\Sc}|^2\,d\sigma dr d\tau\\
+\int_{0}^{\infty} \int_{\Sigma_{\tau}} (1+\tau)^{2m}\Big[\max\{(r^{-1}\Omega)^{-1-\epsilon}\rho_+^{1-2\epsilon}r^{-1+2\epsilon},1\}r^{-2}|r^3\mathbf{Z}^{\gamma}T^j(TK)^mG^{(1)}_{\Sc}|^2\\
+(1-\upzeta)\Omega^2|r\mathbf{Z}^{\gamma}T^{j+1}(TK)^mG^{(1)}_{\Sc}|^2\Big]\,d\sigma dr d\tau.
\end{multline*}
and hence:
\begin{multline*}
(1+\tau_*)^{1-p+2N_1-\nu}\left[ \tilde{R}^{2\epsilon}+ (1+\tau_*)^{2\epsilon}\right]^{-1}\sum_{|\gamma|\leq N_2}\int_{\Sigma_{\tau_*}\cap\{1+\tilde{R}^{-1}\leq r\leq \tilde{R} \}}\sum_{|\gamma|\leq N_2}\mathcal{E}_{p}[\mathbf{Z}^{\gamma}(TK)^{N_1-1}\widehat{\psi}]\,d\sigma dr \\
\lesssim \sum_{m\leq N_1}\sum_{|\gamma|+m\leq N_1+N_2}\sum_{j\leq 2N_1}\int_{\Sigma_{0}} \mathcal{E}_{3-\frac{\epsilon}{2}}[\mathbf{Z}^{\gamma}(TK)^mT^j\widehat{\psi}]\,d\sigma dr\\
+\left[ \tilde{R}^{2\epsilon}+ (1+\tau_*)^{2\epsilon}\right]^{-1}\Bigg[\sup_{\tau\in [0,\infty)}\int_{\Sigma_{\tau}} (1+\tau)^{2-\epsilon+2\max\{m-1,0\}}(\Omega r^{-1})^{2-p}r^{-2}|r^3\mathbf{Z}^{\gamma}T^j(TK)^{\max\{m-1,0\}}G^{(1)}_{\Sc}|^2\,d\sigma dr d\tau\\
+\int_{0}^{\infty} \int_{\Sigma_{\tau}} (1+\tau)^{2m}\Big[\max\{(r^{-1}\Omega)^{-1-\epsilon}\rho_+^{1-2\epsilon}r^{-1+2\epsilon},1\}r^{-2}|r^3\mathbf{Z}^{\gamma}T^j(TK)^mG^{(1)}_{\Sc}|^2\\
+(1-\upzeta)\Omega^2|r\mathbf{Z}^{\gamma}T^{j+1}(TK)^mG^{(1)}_{\Sc}|^2\Big]\,d\sigma dr d\tau\Bigg].
\end{multline*}
We can remove the factor $(\Omega^{-1} r)^{2\epsilon}$ on the LHS at the expanse of introducing a factor $(\Omega r^{-1})^{4\epsilon}$ inside the integral by applying the above energy estimate in dyadic $r$-intervals $[\tilde{R}_j,\tilde{R}_{j+1}]\cup [1+\tilde{R}_{j+1}^{-1},1+\tilde{R}_{j}^{-1}]$ and then summing over $j$. 

All the terms on the RHS above involving $G^{(1)}_{\Sc}$ are finite, by the estimates in Proposition \ref{prop:estinhomG} and can be estimated by $|\mathfrak{I}[\psi]|^2+|\mathfrak{H}[\psi]|^2$, using the definition of $\Psi$ appearing in $\widehat{\psi}=\psi-\Psi$. Furthermore, the constants in the estimates do not depend on $\tau_*$ so we can take $\tau_*$ arbitrarily large and we conclude \eqref{eq:energydecayptieminv}.

To derive \eqref{eq:energydecayptieminvfaster}, we apply the elliptic estimates in Corollary \ref{cor:elliptichighfreq} and Proposition \ref{prop:boundedangelliptic} along $\Sigma_{\tau}$ with $p=2-\re\sqrt{1-4\q^2}+\epsilon$ to obtain for $\tau\leq \tau_*$:
\begin{multline*}
\sum_{|\gamma|\leq N_2}\int_{\Sigma_{\tau}}\mathcal{E}_{\epsilon-\re\sqrt{1-4q^2}}[\mathbf{Z}^{\gamma}(TK)^{N_1}\widehat{\psi}]\,d\sigma dr\lesssim \sum_{|\gamma|\leq N_2}\int_{\Sigma_{\tau}}\mathcal{E}_{2-\re\sqrt{1-4q^2}+\epsilon}[\mathbf{Z}^{\gamma}(TK)^{N_1+1}\widehat{\psi}]\,d\sigma dr\\
+\int_{\Sigma_{\tau}}(r^{-1}\Omega)^{\re\sqrt{1-4q^2}-\epsilon}|\mathbf{Z}^{\gamma}(rG)|^2\,d\sigma dr.
\end{multline*} 
Let $N_1\geq 0$. We then split the integral over $\Sigma_{\tau}$ into a region where $\Omega r^{-1}\leq (1+\tau)^{-1}$ and $\Omega r^{-1}\geq (1+\tau)^{-1}$ to obtain:
\begin{multline*}
 \sum_{|\gamma|\leq N_2}\int_{\Sigma_{\tau}}\mathcal{E}_{2-\re\sqrt{1-4q^2}+\epsilon}[\mathbf{Z}^{\gamma}(TK)^{N_1+1}\widehat{\psi}]\,d\sigma dr\\
\lesssim  (1+\tau)^{-1-\re\sqrt{1-4\q^2}}\sum_{|\gamma|\leq N_2+1}\int_{\Sigma_{\tau}} \mathcal{E}_{1+\epsilon}[\mathbf{Z}^{\gamma}(TK)^{N_1}\widehat{\psi}]+\mathcal{E}_{1+\epsilon}[\mathbf{Z}^{\gamma}T(TK)^{N_1}\widehat{\psi}]\,d\sigma dr\\
 +(1+\tau)^{-1-\re\sqrt{1-4\q^2}}\sum_{|\gamma|\leq N_2+1}\int_{\Sigma_{\tau}} (\Omega r^{-1})^{1 -\epsilon}(|\mathbf{Z}^{\gamma}(TK)^{N_1}(rG)|^2+ |\mathbf{Z}^{\gamma}(TK)^{N_1}(rG)|^2)\,d\sigma dr\\
 + (1+\tau)^{1-\re \sqrt{1-4\q^2}} \sum_{|\gamma|\leq N_2}\int_{\Sigma_{\tau}}\mathcal{E}_{1-\epsilon}[\mathbf{Z}^{\gamma}(TK)^{N_1+1}\widehat{\psi}]\,d\sigma dr.
\end{multline*}
Then we apply \eqref{eq:energydecayptieminv} with $N_1$ replaced by $N_1+1$ and separately with $N_2$ replaced by $N_2+1$ to conclude \eqref{eq:energydecayptieminvfaster}. The $N_1=-1$ case proceeds analogously.

Finally, to obtain \eqref{eq:energydecayptieminvfasterDT}, we note that 
\begin{align*}
	|D_TX\psi|\lesssim&\: |KX\psi|+(r-1)|X\psi|\quad r\leq R,\\
	|D_TX\psi|\lesssim&\: |TX\psi|+r^{-1}|X\psi|\quad r> R.
\end{align*}
Hence, it follows from \eqref{eq:maineqradfieldconf} and \eqref{eq:maineqradfieldconfhor} that
\begin{equation*}
	\mathcal{E}_{p}[D_T\psi]\lesssim \sum_{|\gamma|\leq 1}\mathcal{E}_{p-2}[\mathbf{Z}^{\gamma}\psi].
\end{equation*}
We then apply \eqref{eq:energydecayptieminvfaster} with $N_2$ replaced by $N_2+1$ to obtain the desired result.
\end{proof}
\subsection{Proof of Theorem \ref{thm:mainthmenergy}}
\label{sec:pfthm2}
First, note that 
\begin{multline*}
	\int_{\Sigma_{\tau}}\mathcal{E}_{p}[\mathbf{Z}^{\gamma}((1
+\tau)TK)^{N_1}\Psi^{\infty}]\,d\sigma dr\\
\sim 	\sum_{\ell(\ell+1)<\q^2}\sum_{m\in \Z, |m|\leq \ell}|\mathfrak{I}_{\ell m}[\psi]|^2\int_{\Sigma_{\tau}}\mathcal{E}_{p}[\mathbf{Z}^{\gamma}((1
+\tau)TK)^{N_1}((\Psi_0^{\infty})_{\ell m}Y_{\ell m})]\,d\sigma dr.
\end{multline*}
We can therefore apply Proposition \ref{prop:Psi0inftygrowthdecay} to obtain for $0\leq p\leq 2$:
 	\begin{equation}
 	\label{eq:mainPsiinfendecay}
	\int_{\Sigma_{\tau}}\mathcal{E}_{p-3\epsilon}[\mathbf{Z}^{\gamma}((1
+\tau)TK)^{N_1}\Psi^{\infty}]\,d\sigma dr\leq  \begin{cases}C|\mathfrak{I}[\psi]|^2(1+\tau)^{p-2-2\re \beta_{\ell}}\quad (\beta_{\ell}\neq 0),\\
	C|\mathfrak{I}[\psi]|^2\frac{(1+\tau)^{p-2}}{\log^2(1+\tau)}\quad (\beta_{\ell}= 0)
  \end{cases}	
\end{equation}
When $|Q|=M$, we similarly obtain:
\begin{equation}
 	\label{eq:mainPsihorendecay}
	\int_{\Sigma_{\tau}}\mathcal{E}_{p-3\epsilon}[\mathbf{Z}^{\gamma}((1
+\tau)TK)^{N_1}\Psi^{+}]\,d\sigma dr\leq  \begin{cases}C|\mathfrak{H}[\psi]|^2(1+\tau)^{p-2-2\re \beta_{\ell}}\quad (\beta_{\ell}\neq 0),\\
	C|\mathfrak{H}[\psi]|^2\frac{(1+\tau)^{p-2}}{\log^2(1+\tau)}\quad (\beta_{\ell}= 0).
  \end{cases}	
\end{equation}

By Young's inequality, we have for any $\delta>0$ that:
	\begin{equation*}
		(1-(1+\tau)^{-\delta})\mathcal{E}_{p}[\Psi]-((1+\tau)^{\delta}-1)\mathcal{E}_{p}[\hpsi]\leq \mathcal{E}_{p-3\epsilon}[\psi]\leq (1+(1+\tau)^{-\delta})\mathcal{E}_p[\Psi]+(1+(1+\tau)^{\delta})\mathcal{E}_{p}[\hpsi].
	\end{equation*}
	Hence, we conclude the estimates for $1<p<2$ by combining the above with \eqref{eq:energydecayptieminv} and for $0\leq p<1-\epsilon$ by applying additionally \eqref{eq:energydecayptieminvfaster} and then performing a standard interpolation argument.
	
	To obtain the $p=2$ estimate, we apply \eqref{eq:maineqradfield} applied to $(TK)^{-1}\hpsi$, to estimate:
	\begin{equation*}
		\mathcal{E}_2[\mathbf{Z}^{\gamma}(TK)^{N_1}\hpsi]\lesssim \mathcal{E}_1[\mathbf{Z}^{\gamma}(TK)^{N_1}\hpsi]+\sum_{|\gamma|\leq 2}\mathcal{E}_0[\mathbf{Z}^{\gamma}(TK)^{N_1-1}\hpsi]+r^{-2}|\mathbf{Z}^{\gamma}(TK)^{N_1}(r^3G^{(1)})|^2.
	\end{equation*}
Hence, by interpolating between \eqref{eq:energydecayptieminvmin1} and \eqref{eq:energydecayptieminvfaster}, we obtain:
\begin{equation}
\label{eq:nindegendecaypsihat}
	\int_{\Sigma_{\tau}}\mathcal{E}_{2}[\mathbf{Z}^{\gamma}((1
+\tau)TK)^{N_1}\hpsi]\,d\sigma dr\leq CE_{N_1,N_2+2,\epsilon}[\psi](1+\tau)^{-1+8\epsilon+\nu}.
\end{equation}
We conclude the proof by combining the above with \eqref{eq:mainPsiinfendecay} and \eqref{eq:mainPsihorendecay}.

\section{Late-time tails}
\label{sec:tails}
In this section, we prove the main result of the present paper, which can be interpret as a determination of the precise late-time asymptotics of solutions $\phi$ to \eqref{eq:CSF} with $G_A=0$.
\begin{proposition}
\label{prop:tails}
Let $N_1\in \N_0\cup\{-1\}$, $N_2\in \N_0$. Let $l\in\{0,1\}$. For arbitrarily small $\nu>0$ and $\epsilon>0$, there exists a constant $C=C(\nu,\epsilon, N_1,N_2,p)>0$ such that:
\begin{multline}
\label{eq:mainpointwisedecay}
\sum_{|\gamma|\leq N_2}||D_T^l\mathbf{Z}^{\gamma}(TK)^{N_1}(\psi-\Psi)||_{L^{\infty}}\\
\leq C  \sum_{k\leq 2}\sqrt{E_{N_1+1,N_2+1+l,\epsilon}[\snabla_{\s^2}^k\psi]}(\tau+1)^{-1+\frac{1}{2}\nu+4\epsilon-N_1-\frac{l}{2}}(r^{-1}\Omega\tau+1)^{-\frac{1}{2}-\frac{1}{2}\re \beta_{0}}.
\end{multline}
If $|Q|<M$, we moreover have for all $l\in \N_0$:
\begin{multline}
\label{eq:mainpointwisedecaysubext}
\sum_{|\gamma|\leq N_2}||X^l\mathbf{Z}^{\gamma}(TK)^{N_1}(\psi-\Psi^{\infty})||_{L^{\infty}}\\
\leq C \kappa_+^{-\frac{1}{2}} \sum_{k\leq 2}\sqrt{E_{N_1+1,N_2+l+1,\epsilon}[\snabla_{\s^2}^k\psi]}(\tau+1)^{-1+\frac{1}{2}\nu+4\epsilon-N_1}r ^{\frac{1}{2}+\frac{1}{2}\re\beta_0}(r^{-1}\tau+1)^{-\frac{1}{2}-\frac{1}{2}\re \beta_{0}}\\
+Ce^{-c\kappa_+\tau}\sum_{|\gamma|\leq N_2}\sqrt{\int_{\Sigma_{0}\cap \{r\leq r_H\}}\mathcal{E}_2[\snabla_{\s^2}^kX^l\mathbf{Z}^{\gamma}(TK)^{N_1}\psi]\,d\sigma dr}.
\end{multline}
\end{proposition}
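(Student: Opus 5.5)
The plan is to convert the weighted energy decay estimates of Proposition \ref{prop:energydecayptieminv} for $\hpsi=\psi-\Psi$ into pointwise bounds. We use the fundamental theorem of calculus in $r$ along $\Sigma_\tau$ and Sobolev on $\s^2$. The Sobolev inequality on $\s^2$ costs two angular derivatives, which accounts for the sum over $\snabla_{\s^2}^k$, $k\le 2$. It therefore suffices to bound $\|\mathbf{Z}^{\gamma}(TK)^{N_1}\hpsi\|_{L^2(S^2_{\tau,r})}$, since $\snabla_{\s^2}$ commutes with everything.

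\textbf{Step 1: pointwise decay with weight $(\tau+1)^{-1}$.} Write $f=\mathbf{Z}^\gamma(TK)^{N_1}\hpsi$. From \eqref{eq:energydecayptieminvmin1} and \eqref{eq:energydecayptieminv}, for $N_1\ge 0$ we have:
\begin{itemize}
\item an $\mathcal{E}_{p}$ bound with $p$ close to $1$, decaying like $(1+\tau)^{-2-2N_1+\nu'}$;
\item an $\mathcal{E}_{p}$ bound with $p$ close to $2$, decaying like $(1+\tau)^{-1-2N_1+\nu'}$.
\end{itemize}
We integrate $X|f|^2$ from $\mathcal{I}^+$, where $f$ is bounded because $\psi$ is smooth in $\widehat{\mathcal{M}}$, down to $r$. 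Cauchy--Schwarz gives
\[
|f|^2(r)\lesssim \Big(\int (\Omega^{-1}r)^{p}\Omega^2|Xf|^2\Big)^{1/2}\Big(\int (\Omega^{-1}r)^{2-p}\Omega^{-2}r^{-2}\cdot r^2|f|^2\Big)^{1/2}.
\]
Near $\mathcal{H}^+$ we integrate instead from $r_+$ using the $\rho_+$-symmetric version. Choosing $p=1\pm$ in the two factors, with the zeroth-order term controlled by $r^{-2}|f|^2$ inside $\mathcal{E}_p$, yields $|f|^2\lesssim E\,(1+\tau)^{-2-2N_1+\nu}$. This is the rate $(\tau+1)^{-1-N_1+\nu/2}$. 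For $l=1$ we use \eqref{eq:energydecayptieminvfasterDT} for $D_T f$ in the same way. Interpolating it against the $f$-bound explains the extra gain $(\tau+1)^{-l/2}$: the $D_T$ energy has weight $p\approx 2$ and decays one half-power faster after the fundamental theorem of calculus.

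\textbf{Step 2: the improved factor $(r^{-1}\Omega\tau+1)^{-\frac12-\frac12\re\beta_0}$.} On the region $\{r^{-1}\Omega\gtrsim \tau^{-1}\}$, i.e.\ away from $\mathcal{H}^+$ and $\mathcal{I}^+$ at the scale $\tau$, we use \eqref{eq:energydecayptieminvfaster}. There the negative weight $p=\epsilon-\re\beta_0$ gives an $L^2$ bound on $|f|^2$ with weight $(r^{-1}\Omega)^{-\re\beta_0}$ and decay $(1+\tau)^{-3-\re\beta_0-2N_1}$. The fundamental theorem of calculus in $r$ is applied starting from the matching point $r^{-1}\Omega\sim\tau^{-1}$ instead of from $r_+$ or $\infty$, and Step 1 controls the boundary term there. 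This gives
\[
|f|\lesssim (r^{-1}\Omega)^{-\frac12-\frac12\re\beta_0}\tau^{-\frac32-\frac12\re\beta_0-N_1+\nu},
\]
which combines with Step 1 into the stated product. The $\epsilon$-losses in the weights produce the $4\epsilon$ in the exponent.

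\textbf{Step 3: the sub-extremal estimate \eqref{eq:mainpointwisedecaysubext}.} When $|Q|<M$, $\Psi^+=0$. The degeneration at $r=r_+$ is removed with a standard red-shift multiplier estimate in $\{r\le r_H\}$ for $X^l\mathbf{Z}^\gamma(TK)^{N_1}\hpsi$. The source terms of this estimate are supported in $\{r_H\le r\le r_I\}$ and are controlled by Steps 1--2. The red-shift yields exponential damping $e^{-c\kappa_+\tau}$ of the initial $\mathcal{E}_2$ energy and constants scaling like $\kappa_+^{-1}$, hence the factor $\kappa_+^{-1/2}$ after the square root. Commutation with $X^l$ is allowed since commuting with the red-shift vector field only improves positivity. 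Near $\mathcal{I}^+$ the weight $r^{\frac12+\frac12\re\beta_0}$ appears because we do not exploit the $\Omega$ degeneration there.

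\textbf{Main obstacle.} The delicate step is Step 2, matching the two regions. The negative-$p$ energy only controls $|f|^2$ with a singular weight, so we must check that integrating from the matching surface gives the sharp power $\frac12+\frac12\re\beta_0$ rather than a loss. For $\beta_0=0$ we must also check that the logarithmic terms do not break this.
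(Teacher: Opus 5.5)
Your proposal follows essentially the same route as the paper. The ingredients match: the fundamental theorem of calculus in $r$ and Sobolev on $\s^2$ applied to the energy decay of Proposition~\ref{prop:energydecayptieminv}; the negative-$p$ estimate \eqref{eq:energydecayptieminvfaster}, giving $\int_{\s^2}(r^{-1}\Omega)^{1+\re\beta_0}|\hpsi|^2\,d\sigma\lesssim(1+\tau)^{-3-\re\beta_0+8\epsilon+\nu}E_{1,1,\epsilon}[\psi]$; \eqref{eq:energydecayptieminvfasterDT} for $D_T$; and a red-shift/Gr\"onwall argument in $\{r\le r_H\}$ when $|Q|<M$.

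The only real difference is in your Step~2. The paper integrates $X\big((r^{-1}\Omega)^{1+\re\beta_0}|\hpsi|^2\big)$ from $r_+$ or $\mathcal{I}^+$, where the weight vanishes, so no matching surface is needed and your ``main obstacle'' does not arise. Separately, in your Step~1 the dual factor $\int(\Omega^{-1}r)^{1-\delta}r^{-2}|f|^2$ is not bounded by the zeroth-order part of $\mathcal{E}_p$. You need either a Hardy inequality or, as the paper does, the fundamental theorem of calculus applied to $f$ itself with $\mathcal{E}_{1+\delta}$.
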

\begin{proof}
	We will derive $L^{\infty}$-estimates for $\widehat{\psi}$ by using the energy estimates of Proposition \ref{prop:energydecayptieminv}.
	
	Consider first the region $\{r\geq r_H\}$ with $r_H>r_+$. Then we can estimate via the fundamental theorem of calculus:
	\begin{equation*}
	\int_{\s^2}r^{-1}|\widehat{\psi}|^2(\tau,r,\cdot),d\sigma\lesssim \int_{\Sigma\cap\{r\geq r_H\}}\mathcal{E}_{0}[\widehat{\psi}]\,d\sigma dr.
	\end{equation*}
	Furthermore, for all $r\geq r_H$, we have:
\begin{equation*}
		\int_{\s^2} |\widehat{\psi}|^2(\tau,r,\cdot),d\sigma\lesssim \int_{\Sigma_{\tau}\cap\{r\geq r_H\}}\mathcal{E}_{1+\delta}[\widehat{\psi}]\,d\sigma dr+\int_{\s^2}|\widehat{\psi}|^2(\tau,r_H,\cdot),d\sigma.
		\end{equation*}
From \eqref{eq:energydecayptieminv}, it then follows that for $r\geq r_+$ and arbitrarily small $\nu$:
\begin{equation*}
	\int_{\s^2}|\widehat{\psi}|^2(\tau,r,\cdot)\lesssim C(1+\tau)^{-2+\delta+5\epsilon+\nu}E_{0,0,\epsilon}[\psi].
\end{equation*}
We moreover that by the fundamental theorem of calculus:
\begin{equation*}
		\int_{\s^2}(r^{-1}\Omega)^{\re \beta_0+1}|\widehat{\psi}|^2(\tau,r,\cdot),d\sigma\leq \int_{\Sigma_{\tau}}\mathcal{E}_{\epsilon-\re \beta_0}[\widehat{\psi}]\,d\sigma dr.
	\end{equation*}
We then apply \eqref{eq:energydecayptieminvfaster} to conclude that:
\begin{equation}
\label{eq:auxpointwest}
	\int_{\s^2}(r^{-1}\Omega)^{\re \beta_0+1}|\widehat{\psi}|^2(\tau,r,\cdot)\lesssim C(1+\tau)^{-3-\re \beta_0+8\epsilon+\nu}E_{1,1,\epsilon}[\psi].
\end{equation}
To obtain an estimate for $D_T\psi$, we instead apply \eqref{eq:energydecayptieminvfasterDT}.
We combine the above estimates and apply a standard Sobolev inequality on $\s^2$ to conclude \eqref{eq:mainpointwisedecay} with $N_1=N_2=0$. The general $N_1$ and $N_2$ cases follows entirely analogously.

In the $|Q|<M$ case, we can improve the above estimates by removing the factor $\Omega$ in \eqref{eq:auxpointwest} and by making use of the red-shift effect.

From a standard red-shift energy estimate in $\{r\leq r_H\}$ with $r_H-r_+>0$ suitably small, see for example \cite{lecturesMD}[\S 7] for a general analysis and \cite{gaj22a}[Proposition 6.7], it follows that there exists a numerical constant $c>0$, such that:
\begin{multline*}
	c \kappa_+ \int_{\tau}^{\infty}\int_{\Sigma_{\tau'}\cap \{r\leq r_H\}}\mathcal{E}_2[\hpsi]\,d\sigma drd\tau'-\int_{\Sigma_{\tau}\cap \{r\leq r_H\}}\mathcal{E}_2[\hpsi]\,d\sigma dr\\
	\leq C \int_{\tau}^{\infty}\int_{\Sigma_{\tau'}\cap\{r_H\leq r\leq r_H+\frac{1}{2}(r_H-r_+)\}}\mathcal{E}_2[\hpsi]\,d\sigma drd\tau'+C\int_{\tau}^{\infty}\int_{\Sigma_{\tau'}\cap \{r\leq r_H+\frac{1}{2}(r_H-r_+)\}}|G_{\widehat{A}}[\hpsi]|^2\,d\sigma drd\tau'.
\end{multline*}
Let $f(\tau):=\int_{\Sigma_{\tau}\cap \{r\leq r_H\}}\mathcal{E}_2[\hpsi]\,d\sigma dr$, then this implies that:
\begin{multline*}
	\frac{d}{d\tau}\left[e^{c\kappa_+\tau}f(\tau)\right]\leq C e^{c\kappa_+\tau}\int_{\Sigma_{\tau'}\cap\{r_H\leq r\leq r_H+\frac{1}{2}(r_H-r_+)\}}\mathcal{E}_2[\hpsi]\,d\sigma drd\tau'\\
+Ce^{c\kappa_+\tau}\int_{\Sigma_{\tau}\cap \{r\leq r_H+\frac{1}{2}(r_H-r_+)\}}|G_{\widehat{A}}[[\Psi]|^2\,d\sigma drd\tau'
\end{multline*}
Hence,
\begin{multline*}
	f(\tau)\leq Ce^{-c\kappa_+\tau}\int_{\Sigma_{0}\cap \{r\leq r_H\}}\mathcal{E}_2[\hpsi]\,d\sigma dr+C \kappa_+^{-1}\int_{\Sigma_{\tau'}\cap\{r_H\leq r\leq r_H+\frac{1}{2}(r_H-r_+)\}}\mathcal{E}_2[\hpsi]\,d\sigma drd\tau'\\
	+C\kappa_+^{-1}\int_{\Sigma_{\tau}\cap \{r\leq r_H+\frac{1}{2}(r_H-r_+)\}}|G_{\widehat{A}}[[\Psi]|^2\,d\sigma drd\tau'.
\end{multline*}
By \eqref{eq:energydecayptieminvfaster}, we therefore have that:
\begin{equation*}
	\int_{\Sigma_{\tau}\cap \{r\leq r_H\}}\mathcal{E}_2[\hpsi]\,d\sigma dr\leq C\kappa_+^{-1}(1+\tau)^{-3-\re \beta_0+\epsilon+\nu}E_{1,1,\epsilon}[\psi]+Ce^{-c\kappa_+\tau}\int_{\Sigma_{0}\cap \{r\leq r_H\}}\mathcal{E}_2[\hpsi]\,d\sigma dr.
\end{equation*}
Then we can apply the fundamental theorem of calculus in $\{r\leq r_H\}$ to conclude that:
\begin{equation*}
	\int_{\s^2}r^{-\re \beta_0-1+\epsilon}|\widehat{\psi}|^2(\tau,r,\cdot)\lesssim C\kappa_+^{-1}(1+\tau)^{-3-\re \beta_0+\epsilon+\nu}E_{1,1,\epsilon}[\psi]+Ce^{-c\kappa_+\tau}\int_{\Sigma_{0}\cap \{r\leq r_H\}}\mathcal{E}_2[\hpsi]\,d\sigma dr.
\end{equation*}
 The above improvement applies also after commuting the above estimates with $\mathbf{Z}^{\gamma}(TK)^{N_1}$ as well as $X^{l}$ and we conclude \eqref{eq:mainpointwisedecaysubext} after applying a standard Sobolev inequality on $\s^2$.
\end{proof}

\begin{corollary}
\label{cor:Bconstnonzero}
	$B_{\infty,\ell m}\neq 0$ and $B_{+,\ell m}\neq 0$ for all $\ell\in \N_0$ such that $\ell(\ell+1)<q^2$.
\end{corollary}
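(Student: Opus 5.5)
We argue by contradiction, using the decay rates established in Proposition \ref{prop:tails} together with the uniqueness of the stationary solution $\mathfrak{w}_{\ell}$ that is regular at $r=r_+$. It suffices to treat $B_{\infty,\ell m}$. The statement for $B_{+,\ell m}$ then follows, when $|Q|=M$, from the Couch--Torrence symmetry of Proposition \ref{prop:mainequations}, with the choice of $\h^+$ made in Remark \ref{rm:nonvanishingB}: under the exchanges $T\leftrightarrow K$, $\q\leftrightarrow -\q$, $r-1\leftrightarrow s_+$, the constant $B_{+,\ell m}$ is mapped to $B_{\infty,\ell m}$.

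Suppose $B_{\infty,\ell m}=0$. The key observation is that $B_{\infty,\ell m}$ is precisely the quantity defining $\mathfrak{I}_{\ell m}$ in Proposition \ref{prop:timeinvffixedmodeconstr}, evaluated with $\psi$ replaced by $\Psi^{\infty}_{\ell m}Y_{\ell m}$ normalized to $\mathfrak{I}_{\ell m}=1$ (the $T^{-1}G$ term accounts for the inhomogeneity). Vanishing of $B_{\infty,\ell m}$ therefore means that the compatibility condition \eqref{eq:vanishingtildeG} holds automatically for $\Psi^{\infty}_{\ell m}$ itself. Repeating the construction of Proposition \ref{prop:timeinvffixedmodeconstr} with $\psi\equiv \Psi^{\infty}_{\ell m}Y_{\ell m}$ and no subtraction, we then obtain a well-defined time integral $T^{-1}\Psi^{\infty}_{\ell m}$ with finite initial energies. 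Feeding this into Propositions \ref{prop:energydecayptieminv} and \ref{prop:tails}, with $\mathfrak{I}=0$, gives a solution of the inhomogeneous equation with source $G[\Psi^{\infty}]$ that is built from $\Psi^{\infty}_{\ell m}$ and decays at the improved rate $(\tau+1)^{-1+\nu}$ along $\mathcal{I}^+$.

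To reach a contradiction, compare this with the explicit asymptotics of $\Psi^{\infty}_{\ell m}$ along $\mathcal{I}^+$, obtained from the hypergeometric representation in Appendix \ref{sec:tailfunctmink}. These give a nonvanishing leading term of size $\tau^{-\frac12-\frac12\re\beta_\ell}$, or $\tau^{-1/2}/\log\tau$ when $\beta_{\ell}=0$, which is strictly slower than $\tau^{-1+\nu}$ since $\re\beta_\ell<1$. The argument must show that the difference between $\Psi^{\infty}_{\ell m}$ and the decaying solution constructed above is governed by initial data in such a way that the slow tail cannot be cancelled, i.e.\ that $\Psi^{\infty}_{\ell m}=TT^{-1}\Psi^{\infty}_{\ell m}$ itself obeys the improved decay. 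This is a contradiction.

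The main obstacle is making the identification $TT^{-1}\Psi^{\infty}_{\ell m}=\Psi^{\infty}_{\ell m}$ rigorous, since $T^{-1}$ is defined only formally through the elliptic construction. I would handle this by a direct computation rather than soft arguments. Integrate \eqref{eq:maineqcheckpsi} against $\mathfrak{w}_{\ell}e^{-2i\q\int(\ldots)}$ and note that, because $\mathfrak{w}_{\ell}$ solves the stationary equation, $B_{\infty,\ell m}$ reduces to boundary terms at $r=\infty$ of the Minkowski tail $\Psi_0^{\infty}$. These boundary terms can be evaluated explicitly using the large-$r$ asymptotics of $\mathfrak{w}_{\ell}$ in Proposition \ref{prop:statsol}, namely $\alpha_{\pm}$, together with the ${}_2\mathbf{F}_1$ asymptotics. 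The result should be a multiple of $\alpha_+\beta_{\ell}/\Gamma(\tfrac12+i\q-\tfrac12\beta_\ell)$, or $\alpha_+/\Gamma(\tfrac12+i\q)$ when $\beta_\ell=0$, which is nonzero by Proposition \ref{prop:statsol}(iv) and Corollary \ref{cor:kappanotsmallstatsoln}. I would try this explicit route first, keeping the contradiction argument as a fallback.
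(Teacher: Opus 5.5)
Your fallback contradiction argument (second and third paragraphs) is exactly the paper's proof. The step you single out as the main obstacle is not one. Proposition~\ref{prop:timeinvffixedmodeconstr} does not define $T^{-1}$ only formally: it produces a genuine solution of the charged scalar field equation with the time-integrated inhomogeneity, whose second data component is the function being inverted. The ODE \eqref{eq:maineqcheckpsi0} is just that equation restricted to $\Sigma_0$, with $T^2(T^{-1}\cdot)$ replaced by $T(\cdot)$. Hence $T(T^{-1}\cdot)$ and the original function have the same Cauchy data and satisfy the same equation, so they coincide by uniqueness. There is no slow tail that could be ``cancelled'': applying \eqref{eq:mainpointwisedecay} with zero data for $\psi_{\ell m}$, so that $\widehat{\psi}=-\Psi^{\infty}_{\ell m}Y_{\ell m}$, directly gives $O(\tau^{-1+\nu})$ at $\mathcal{I}^+$, contradicting Proposition~\ref{eq:asymptPsiinfty0}.

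You do leave one thing out. These estimates are derived for $\widehat{\psi}=TK\widehat{\psi}^{(1)}$, so you also need $K^{-1}\Psi^{\infty}_{\ell m}$, which the paper notes first. Your Couch--Torrence reduction for $B_{+,\ell m}$ works for the symmetric foliation of Remark~\ref{rm:nonvanishingB}. In general it needs $B_{\infty,\ell m}\neq 0$ for every admissible foliation and both signs of $\q$, which your argument does provide. The paper instead simply reruns the contradiction with $\Psi^{+}$.

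The explicit route you would try first is not a proof as sketched. At each fixed $\tau$ the flux $\Omega^2\mathfrak{w}_{\ell}^2e^{-2i\q\int(\ldots)}X(\mathfrak{w}_{\ell}^{-1}\Psi^{\infty}_{\ell m})$ decays like $r^{-\frac12+\frac12\re\beta_{\ell}}$ (up to logarithms), so there is no boundary term at $r=\infty$. Integrating the radial identity only gives $\frac{d}{d\tau}\mathcal{Q}(\tau)=\int_1^{\infty}\mathfrak{w}_{\ell}e^{-2i\q\int(\ldots)}\,rG_{\ell m}[\Psi^{\infty}]\,dr$, where $\mathcal{Q}(\tau)$ is the data term of \eqref{eq:defBinf} evaluated on $\Sigma_{\tau}$. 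The $T^{-1}G$ term then identifies, with consistent signs, $B_{\infty,\ell m}=\lim_{\tau\to\infty}\mathcal{Q}(\tau)$. This late-time charge depends on the whole self-similar profile of $\Psi^{\infty}_0$, not on large-$r$ asymptotics. Rewriting it as a large-$r$ boundary term of $T^{-1}\Psi^{\infty}_{\ell m}$ fails for $\beta_{\ell}\in i(0,\infty)$, because there $\Psi^{\infty}_{\ell m}(\tau,r_0)$ is not integrable in $\tau$.

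The route can be made to work. In the variable $\varrho$, for $\beta_{\ell}\in(0,1)$ only the $n=0$ term paired with $\alpha_+$ survives as $\tau\to\infty$. A Beta integral then gives a nonzero multiple of $\alpha_+\beta_{\ell}\Gamma(1+\beta_{\ell})/\Gamma(\frac12+\frac12\beta_{\ell}+i\q)$, which is not the factor you guessed. This would give an explicit formula for $\mathfrak{I}_{\ell m}[\psi]$, namely a conserved charge divided by an explicit constant, without using the decay theory. In your sketch, however, both the value and its nonvanishing are unsupported. Proposition~\ref{prop:statsol}(iv) only says that $\mathfrak{w}_{\ell}$ has no zeros; $\alpha_+\neq 0$ has to be read off from the connection formulas, or from the Wronskian identity as $r\to\infty$, in its proof.
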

\begin{proof}
Note first of all that $K^{-1}\Psi^{\infty}_{\ell m}$ is well-defined. Suppose that $B_{\infty,\ell m}=0$. Then, in the case of zero initial data for $\psi_{\ell m}$, we have by Proposition \ref{prop:timeinvffixedmodeconstr} that $T^{-1}\Psi_{\infty}$ is also well-defined. Hence, $\widehat{\psi}_{\ell}^{(1)}=i\q\sum_{|m|\leq \ell}(T^{-1}\Psi^{\infty}_{\ell m}-K^{-1}\Psi^{\infty}_{\ell m})Y_{\ell m}$ is well-defined. Since $TK \widehat{\psi}_{\ell}^{(1)}=\sum_{|m|\leq \ell}\Psi^{\infty}_{\ell m}Y_{\ell m}$, the decay estimates in \eqref{eq:mainpointwisedecay} imply decay for $\Psi^{\infty}_{\ell m}$ and hence also $(\Psi^{\infty}_0)_{\ell m}$ that is faster than what follows from Proposition \ref{eq:asymptPsiinfty0}, which is a contradiction.  We must therefore have $B_{\infty,\ell m}\neq 0$.

In the $|Q|=M$ case, an analogous contradiction argument with $B_{+,\ell m}$ taking the role of $B_{\infty,\ell m}$ and $\Psi^{+}_{\ell m}$  replacing $\Psi^{\infty}_{\ell m}$ leads to the statement: $B_{+,\ell m}\neq 0$.
\end{proof}
\subsection{Proof of Theorem \ref{thm:mainthmpoint}}
\label{sec:proofthm1}
Theorem \ref{thm:mainthmpoint}(i) follows immediately from Proposition \ref{prop:tails} and Theorem \ref{thm:mainthmpoint}(ii) follows from Proposition \ref{eq:asymptPsiinfty0}.

\section{Instabilities and energy concentration}
\label{sec:instab}
In this section, we establish lower bound estimates for $r^2X\psi$ and the energies $\mathcal{E}_2[\psi]$, following from the leading-order behaviour of $r^2X\Psi$.
\subsection{Proof of Theorem \ref{thm:mainpointwinst}}
\label{sec:pfthm3}
	We first consider (i). We apply \eqref{eq:nindegendecaypsihat} in combination with Proposition \ref{prop:Psi0inftygrowthdecay}(ii) to conclude the desired statements in the region $r\geq r_I$. The region $r\leq r_H$, then follows immediately after transforming $r-1\mapsto (r-1)^{-1}$ and $\q\mapsto -\q$.
	
	Now consider (ii). Again, without loss of generality, we can consider $(\tau,\infty)$. Let $k=0$. Then we apply \eqref{eq:maineqradfieldconf} at $\rho_{\infty}=0$ with $\h\equiv 0$ when $r\geq r_I$ to obtain:
	\begin{equation*}
		-2(r^2X\hpsi)(\tau,\infty,\cdot)=\int_{0}^{\tau}r^3(G_{\widehat{A}})(\tau',\infty)+(\slashed{\Delta}_{\s^2}+i\q) \hpsi(\tau',\infty,\cdot)\,d\tau'
	\end{equation*}
	By \eqref{eq:mainpointwisedecay} and \eqref{eq:erroresteqforPsiinf}, we therefore have that:
	\begin{equation}
	\label{eq:decayr2Xhpsi}
		|(r^2X\hpsi)(\tau,\infty)|\leq \sqrt{E_{1,1,\epsilon}[\psi]}(\tau+1)^{\frac{1}{2}\nu+4\epsilon}.
	\end{equation}
	Note that 
	\begin{equation*}
	r^2X\psi_{\ell m}=r^2X\Psi_{\ell m}+r^2X\hpsi_{\ell m}=\mathfrak{I}_{\ell m}[\psi]r^2X(\Psi^{\infty}_0)_{\ell m}+\ldots,
	\end{equation*}
	where the terms in $\ldots$ can be bounded by $(\tau+1)^{\frac{1}{2}\nu+4\epsilon}$ by Proposition \ref{prop:tails}.
	
	Taking the $\limsup$ and applying Proposition \ref{prop:Psi0inftygrowthdecay}(i) then results in the $k=0$ estimates in (ii).
	
	To obtain the $k\geq 1$ case, we consider $\partial_{\rho_{\infty}}^kr^3(G_{\widehat{A}})$ and we establish estimates for $\partial_{\rho_{\infty}}^{k+1}\hpsi$ inducively. Then we apply Proposition \ref{prop:Psi0inftygrowthdecay}(i).

\subsection{Transient instabilities}
The estimates in the present paper are uniform in $\kappa_+$, which allows us to obtain lower bounds on the size of $X$-derivatives along the event horizon, which may be interpreted as \emph{transient instabilities}.
\begin{proposition}
\label{prop:transinstb}
	Let $\psi$ be a solution to \eqref{eq:CSF} with $G_A=0$ and $A=-\frac{Q}{r}d\tau$ and $\kappa_+>0$. Let $k\in \N_0$. Fix the initial data for $\psi$ independently of $\kappa_+$. Let $\psi_M$ be a solution arising from the same data, but with $\kappa_+=0$. Then, there exists  suitably small $0<\delta_1<\delta_2$, depending only on the initial data for $\psi$ and on $k$, such that for $r_+\kappa_+\leq \delta_1$ and $\ell(\ell+1)<\q^2$:
	\begin{align*}
		\sup_{\delta_1 \kappa_+^{-1}\leq \tau\leq \delta_2 \kappa_+^{-1}}(\tau+1)^{-\frac{1}{2}-k+\frac{1}{2}\re\beta_0}|X^{k+1}\psi_{\ell m}|(\tau,r_+)\geq&\:  \frac{|\mathfrak{c}_k|}{2}|\mathfrak{H}_{\ell m}[\psi_M]|\quad &(\beta_{0}\neq 0),\\
		\sup_{\delta_1 \kappa_+^{-1}\leq \tau\leq \delta_2 \kappa_+^{-1}}\log(1+\tau)\tau^{-\frac{1}{2}-k}|X^{k+1}\psi_{\ell m}|(\tau,r_+)\geq&\:  \frac{|\mathfrak{c}_k|}{2}|\mathfrak{H}_{\ell m}[\psi_M]|\quad &(\beta_{0}= 0).
	\end{align*}
\end{proposition}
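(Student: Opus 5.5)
Throughout, $X^{k+1}\psi_{\ell m}(\tau,r_+)$ refers to the near-extremal solution $\psi$, and $\psi_M$ to the extremal solution with the same data. The plan is a continuity-in-$\kappa_+$ argument on a bounded time interval, combined with the extremal instability of Theorem \ref{thm:mainpointwinst}(ii).

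\textbf{Step 1 (the extremal solution is large at some time in the window).} For $\psi_M$, the estimates \eqref{eq:lowboundbetanonzeroRNhor} and \eqref{eq:lowboundzerobetaRNhor} are established through the asymptotics $r^2X\psi_M=\mathfrak{H}_{\ell m}[\psi_M]\,r^2X\Psi^+_{0}+\ldots$, with the error term $o\big((1+\tau)^{k+\frac12-\frac12\re\beta_\ell}\big)$. We use the explicit profile from Proposition \ref{prop:Psi0inftygrowthdecay}(i), not only the limsup. When $\beta_\ell\in[0,1)$ the profile grows monotonically up to a constant. When $\beta_\ell\in i(0,\infty)$ it is an almost-periodic function of $\log\tau$, whose modulus returns near its limsup on every interval $[\tau_0,\Lambda\tau_0]$ for some fixed $\Lambda>1$. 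In both cases we can fix $\Lambda$ and $T_0$, depending on the data and on $k$, such that
\[
\sup_{[\tau_0,\Lambda\tau_0]}(1+\tau)^{-\frac12-k+\frac12\re\beta_\ell}|X^{k+1}\psi_{M,\ell m}|(\tau,r_+)\geq \tfrac34|\mathfrak{c}_k||\mathfrak{H}_{\ell m}[\psi_M]|\quad\text{for all }\tau_0\ge T_0 ,
\]
with the extra $\log$ factor inserted when $\beta_\ell=0$. Here $X$ and $r^2X$ differ only by a constant at $r_+$.

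\textbf{Step 2 (comparison on a finite interval).} Set $\Delta=\psi-\psi_M$. This solves the charged scalar field equation on $g_{M,Q}$ with a source term $(\Box_{g_{M,Q}}-\Box_{g_{M,M}})\psi_M$. The metric coefficients $\Omega^2$ and $r_+$, as well as the gauge factor $\q r_+^{-1}$, depend smoothly on $\kappa_+$, so this source is $O(\kappa_+)$ in any $C^N$ norm on compact $r$-sets. The $r$-weights can be handled using the uniform-in-$\kappa_+$ estimates of Theorem \ref{thm:iedpaper1} and Proposition \ref{prop:horpest}, commuted with $\mathbf{Z}^\gamma$. A Grönwall or Duhamel bound with energy boundedness then gives $|X^{k+1}\Delta|(\tau,r_+)\lesssim \kappa_+(1+\tau)^{N}$ for some fixed $N$, by Sobolev embedding.

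For the difference to be negligible we need $\kappa_+(1+\tau)^{N}\ll \tau^{k+\frac12-\frac12\re\beta_0}$ on the window. This fails on the full window $\tau\sim\delta\kappa_+^{-1}$ with only polynomial-loss constants, so a sharper comparison is required there.

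\textbf{Step 3 (the main obstacle).} The key point is that the difference must be controlled at times of order $\kappa_+^{-1}$, not merely at bounded times. My first attempt is to use the estimates of Theorem \ref{thm:mainthmpoint}, which are uniform in $\kappa_+$, applied to both $\psi$ and $\psi_M$. This gives $X^{k+1}\psi=X^{k+1}\Psi+O(\tau^{k+\frac12-\frac12\re\beta_0-\nu'})$ uniformly, away from the $\kappa_+^{-1/2}$-degenerate redshift part. The $X^{k+1}$ at $r_+$ would be obtained from \eqref{eq:mainthmpoint1} by integrating the horizon equation \eqref{eq:maineqradfieldconfhor} along $\mathcal{H}^+$, exactly as in the proof of Theorem \ref{thm:mainpointwinst}(ii).

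It then remains to compare the tail functions. For $|Q|<M$ we have $\Psi^+\equiv0$, so the near-extremal tail lacks the growing horizon term. The growth must instead be recovered via continuity of $\mathfrak{H}$-type functionals in $\kappa_+$: the near-extremal $T^{-1}$/$K^{-1}$ constructions converge to the extremal ones as $\kappa_+\to0$ for $\tau\lesssim\delta_2\kappa_+^{-1}$. The horizon contribution therefore sits in $\hpsi$ with size $\approx\mathfrak{H}_{\ell m}[\psi_M]\,X^{k+1}\Psi^+_M$ up to a relative error $O(\kappa_+\tau)+O(\delta_1^{c})$.

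Choosing $\delta_2$ small makes $O(\kappa_+\tau)\le\delta_2$ small. Choosing $\delta_1\ll\delta_2$ guarantees that the window $[\delta_1\kappa_+^{-1},\delta_2\kappa_+^{-1}]$ has ratio $\delta_2/\delta_1\ge\Lambda$ and begins after $T_0$. Step 1 then yields the $\frac34$ lower bound for $\psi_M$, and the errors reduce it to at most $\frac12$.
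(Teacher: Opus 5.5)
Your Step 1 is fine and matches the end of the paper's argument. For $\beta_0\in i(0,\infty)$ the paper averages $(r_++\tau)^{-1}|r^2X\Psi^+_{\ell m}|^2$ in $\sigma=\im\beta_\ell\log(r_++\tau)$ over the window via \eqref{eq:expvaluePsi0infest}, which is your condition $\delta_2/\delta_1\geq\Lambda$.

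The gap is the comparison of $\psi$ with $\psi_M$ on $[\delta_1\kappa_+^{-1},\delta_2\kappa_+^{-1}]$, which is the whole content of the proposition. Step 3 does not supply it.
\begin{itemize}
\item The uniform bound $X^{k+1}\psi=X^{k+1}\Psi+O(\tau^{k+\frac12-\frac12\re\beta_0-\nu'})$ on $\mathcal{H}^+$ is not what Theorem \ref{thm:mainthmpoint} gives for $|Q|<M$: transversal derivatives carry the factor $\kappa_+^{-1/2}$ in \eqref{eq:mainthmpoint2}.
\item If that bound did hold uniformly, it would refute the statement. For $|Q|<M$ one has $\Psi=\Psi^\infty$, whose transversal derivatives on $\mathcal{H}^+$ do not grow, so there would be no growth on the window.
\item Moving the horizon contribution into $\hpsi$ then contradicts the very bound you just invoked.
\item The claimed convergence of the near-extremal $T^{-1}/K^{-1}$ constructions to the extremal ones for $\tau\lesssim\delta_2\kappa_+^{-1}$ has no mechanism. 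These objects are fixed by boundary conditions at $r=r_+$ and $r=\infty$ and by integrals up to $\tau=\infty$. Moreover $\mathfrak{H}\equiv 0$ when $|Q|<M$, so no near-extremal functional converges to $\mathfrak{H}_{\ell m}[\psi_M]$.
\end{itemize}

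The missing idea is that the direct comparison you abandoned in Step 2 does close, provided you do two things:
\begin{itemize}
\item Compare at the level of the non-degenerate energy.
\item Recover the transversal derivative from a transport equation along $\mathcal{H}^+$, not by Sobolev embedding. Sobolev needs higher transversal energies of $\psi-\psi_M$, whose sources contain growing derivatives of the extremal solution; this is where your $(1+\tau)^N$ comes from.
\end{itemize}
The paper's argument runs as follows.
\begin{enumerate}
\item Write $\psi-\psi_M$ as a solution on the extremal background with source $F_{M,Q}=(r_--r_+)X((r-r_+)r^{-2}X\psi)+2(Q^2-M^2)r^{-4}\psi$. This satisfies $|F_{M,Q}|^2\lesssim\kappa_+^2r^{-6}\sum_{|\gamma|=1}\mathcal{E}_2[\mathbf{Z}^\gamma\psi]$.
\item A local energy estimate with absorption, together with the $\kappa_+$-uniform bounds of Theorem \ref{thm:mainthmenergy}, gives $\int_{\Sigma_\tau}\mathcal{E}_2[\psi-\psi_M]\,d\sigma dr\lesssim(\kappa_+\tau)^2(1+\tau)^{-2\re\beta_0}$. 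That is exactly one power of $\tau$ per power of $\kappa_+$.
\item The fundamental theorem of calculus in $r$ then gives $|(\psi-\psi_M)_{\ell m}|(\tau,r_+)\lesssim(\kappa_+\tau)^{1/2}(1+\tau)^{-\frac12(1+\re\beta_0)}$.
\item Integrate the horizon equation coming from \eqref{eq:maineqradfieldconfhor},
\[
2T\big(e^{\kappa_+\tau}r^2X(\psi-\psi_M)\big)=\kappa_+e^{\kappa_+\tau}r^2X\psi_M+e^{\kappa_+\tau}(\slashed{\Delta}_{\s^2}-i\q)(\psi-\psi_M)\quad\text{on }\mathcal{H}^+ .
\]
This yields $|r^2X(\psi-\psi_M)_{\ell m}|(\tau,r_+)\lesssim(\kappa_+\tau)^{1/2}(1+\tau)^{\frac12(1-\re\beta_0)}$, a relative error $O(\delta_2^{1/2})$ against the growth rate.
\end{enumerate}
Combined with \eqref{eq:decayr2Xhpsi} for $\psi_M-\Psi^+$ and your Step 1, this proves $k=0$. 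Higher $k$ follow by commuting the horizon equation with $\partial_{\rho_+}^{k}$ and inducting. No continuity of $\mathfrak{H}$ in $\kappa_+$ is needed; only $\mathfrak{H}_{\ell m}[\psi_M]$ enters.
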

\begin{proof}
Fix $Q$ and let $\psi^M$ denote a solution to \eqref{eq:CSF} with $G_A=0$ and $M=|Q|$, such that $(\psi^M|_{\Sigma_0},\partial_{\tau}\psi^M|_{\Sigma_0})=(\psi|_{\Sigma_0},\partial_{\tau}\psi|_{\Sigma_0})$. Then
\begin{multline*}
	(g^{-1}_{M,M})^{\mu\nu}(^{\widehat{A}}D)_{\mu}(^{\widehat{A}}D)_{\nu}(r^{-1}(\psi-\psi^M))=(r_--r_+)X((r-r_+)r^{-2}X\psi)+2(Q^2-M^2)r^{-4}\psi=:F_{M,Q}.
\end{multline*}
Then we can estimate: $|F_{M,Q}|^2\leq \kappa_+^2r^{-6}\sum_{|\gamma|=1}\mathcal{E}_2[\mathbf{Z}^{\gamma}\psi]$, so by standard local energy estimates, we obtain:
\begin{multline*}
	\sup_{\tau'\in [0,\tau]}\int_{\Sigma_{\tau'}}\mathcal{E}_2[\psi-\psi^M]\,d\sigma dr\leq C\kappa_+\sum_{|\gamma|=1}\int_0^{\tau}\int_{\Sigma_{\tau'}}\sqrt{\mathcal{E}_2[\psi-\psi^M]}\sqrt{\mathcal{E}_2[\mathbf{Z}	^{\gamma}\psi]}\,d\sigma drd\tau'\\
	\leq C\kappa_+ \sqrt{\int_0^{\tau}\int_{\Sigma_{\tau'}}\mathcal{E}_2[\psi-\psi^M]\,d\sigma drd\tau'}\sqrt{\sum_{|\gamma|=1}\int_0^{\tau}\int_{\Sigma_{\tau'}}\mathcal{E}_2[\mathbf{Z}	^{\gamma}\psi]\,d\sigma drd\tau'}\\
	\leq \frac{1}{2}\sup_{\tau'\in [0,\tau]}\int_{\Sigma_{\tau'}}\mathcal{E}_2[\psi-\psi^M]\,d\sigma dr+\frac{C^2}{2}(\kappa_+\tau)^2\sum_{|\gamma|=1}\sup_{\tau'\in [0,\tau]}\int_{\Sigma_{\tau'}}\mathcal{E}_2[\mathbf{Z}	^{\gamma}\psi]\,d\sigma dr.
\end{multline*}
By applying Theorem \ref{thm:mainthmenergy}, we therefore have that for $\epsilon>0$ arbitrarily small:
\begin{equation*}
	\int_{\Sigma_{\tau}}\mathcal{E}_2[\psi-\psi^M]\,d\sigma dr\leq C(\kappa_+\tau)^{2} E_{0,3,\epsilon}[\psi](1+\tau)^{-2\re \beta_0}.
\end{equation*}
 
 We now apply the fundamental theorem of calculus in the $r$-direction together with Theorem \ref{thm:mainthmenergy} to estimate:
 \begin{multline*}
 	\int_{\s^2}|\psi-\psi_M|^2(\tau,r_+,\cdot)\,d\sigma\leq C\sqrt{\int_{\Sigma_{\tau}\cap\{r\leq r_H\}}\mathcal{E}_2[\psi-\psi_M]\,d\sigma dr}\sqrt{\int_{\Sigma_{\tau}\cap\{r\leq r_I\}}\mathcal{E}_0[\psi-\psi_M]\,d\sigma dr}\\
 	\leq C(\kappa_+\tau)E_{1,3,\epsilon}[\psi](1+\tau)^{-1-\re\beta_0}.	
 \end{multline*}
In particular, we obtain:
 \begin{equation*}
 |(\psi_{\ell m}-(\psi_M)_{\ell m})(\tau,r_+)|\\
 	\leq C(\kappa_+\tau)^{\frac{1}{2}}\sqrt{\sum_{k=0}^2E_{1,3,\epsilon}[\snabla_{\s^2}^k\psi]}(1+\tau)^{-\frac{1}{2}(1+\re\beta_0)}.
 \end{equation*}
Now let $k=0$. From \eqref{eq:decayr2Xhpsi}, it follows in particular that:
	\begin{equation*}
		|(r^2X((\psi_M)_{\ell m}-\Psi^+_{\ell m}))(\tau,r_+)|\leq C\sqrt{E_{1,1,\epsilon}[\psi]}(\tau+1)^{\frac{1}{2}\nu+4\epsilon}.
	\end{equation*}
 
We apply \eqref{eq:maineqradfieldconfhor} with $\widetilde{\h}=0$ in $r\leq r_H$ to obtain:
\begin{equation*}
		2T(r^2X(\psi-\psi_M))(\tau,r_+,\cdot)=\kappa_+r^2X\psi(\tau,r_+)+(\slashed{\Delta}_{\s^2}-i\q) (\psi-\psi_M).
	\end{equation*}
We can rewrite the above as follows:
\begin{equation*}
		2T(e^{\kappa_+ \tau}r^2X(\psi-\psi_M))(\tau,r_+,\cdot)=\kappa_+e^{\kappa_+ \tau}r^2X\psi_M(\tau,r_+)+e^{\kappa_+ \tau}(\slashed{\Delta}_{\s^2}-i\q) (\psi-\psi_M).
	\end{equation*}
Integrating the above therefore gives for $\kappa_+\tau<1$:
\begin{equation*}
		2|r^2X(\psi-\psi_M)_{\ell m}|(\tau,r_+,\cdot)\leq C(\kappa_+ \tau)^{\frac{1}{2}}\sqrt{E_{1,3,\epsilon}[\psi]}(1+\tau)^{\frac{1}{2}(1-\re\beta_0)}
	\end{equation*}
and hence, we obtain:
\begin{equation*}
		2|r^2X(\psi-\Psi^+)_{\ell m}|(\tau,r_+,\cdot)\leq C(\kappa_+ \tau)^{\frac{1}{2}}\sqrt{E_{1,3,\epsilon}[\psi]}(1+\tau)^{\frac{1}{2}(1-\re\beta_0)}+C\sqrt{E_{1,1,\epsilon}[\psi]}(1+\tau)^{\frac{1}{2}\nu+4\epsilon}
	\end{equation*}
	
	Now, let $\delta_2>0$ be suitably small, so that for $r_+ \delta_2^{-1}\leq \tau\leq \delta_2 \kappa_+^{-1}$:
	\begin{equation*}
		(1+\tau)^{-\frac{1}{2}+\frac{\re \beta_0}{2}}|r^2X(\psi-\Psi^+)_{\ell m}|(\tau,r_+)\leq \frac{|\mathfrak{c}_0|}{4}|\mathfrak{H}_{\ell m}[\psi_M]|.
	\end{equation*}
	If $\beta_0\in [0,\infty)$, we have that
	\begin{equation*}
		(1+\tau)^{-\frac{1}{2}+\frac{\re \beta_0}{2}}(1+\delta_{\beta_0 0}\log (1+\tau))|r^2X\Psi^+_{\ell m}|(\tau,r_+)\geq |\mathfrak{c}_0||\mathfrak{H}_{\ell m}[\psi_M]|,
	\end{equation*}
	so the stated estimates follow immediately.
	
	Suppose that $\beta_0\in i(0,\infty)$. Note that for $\sigma=\im \beta_{\ell} \log(r_++\tau)$, we have by the mean value theorem that for $\delta_1<\delta_0$:
	\begin{multline*}
		\sup_{\delta_1 \kappa_+^{-1}\leq \tau \leq \delta_2 \kappa_+^{-1}}(r_++\tau)^{-1}|r^2X\Psi^+_{\ell m}|^2(\tau,r_+)\\
		\geq \frac{1}{\im \beta_{\ell}}\frac{1}{\log(\frac{r_++\delta_2 \kappa_+^{-1}}{r_++\delta_1 \kappa_+^{-1}})}\int_{\im \beta_{\ell}\log(r_++\delta_1 \kappa_+^{-1})}^{\im \beta_{\ell}\log(r_++\delta_2 \kappa_+^{-1})}(r_++\tau(\sigma))^{-1}|r^2X\Psi^+_{\ell m}|^2(\tau(\sigma),r_+)\,d\sigma\\
		\geq \mathfrak{c}_0^2 |\mathfrak{H}_{\ell m}[\psi_M]|^2-C|\mathfrak{H}_{\ell m}[\psi_M]|^2\frac{1}{\im \beta_{\ell}}\frac{1}{\log(1+\frac{\frac{\delta_2}{\delta_1}-1}{1 +\delta_1^{-1}\kappa_+ r_+})}\geq \frac{9}{16}\mathfrak{c}_0^2 |\mathfrak{H}_{\ell m}[\psi_M]|^2,
	\end{multline*}
	where we applied \eqref{eq:expvaluePsi0infest} to arrive at the second inequality above and we assumed that $\frac{\delta_2}{\delta_1}$ is suitably large and $\kappa_+\delta_1^{-1}\leq r_+^{-1}$ to arrive at the last inequality.
	
	We conclude that there exist $0<\delta_1<\delta_2$ suitably small with $\kappa_+\leq \delta_1r_+^{-1}$, such that:
	\begin{equation*}
		\sup_{\delta_1 \kappa_+^{-1}\leq \tau \leq \delta_2 \kappa_+^{-1}}(1+\tau)^{-\frac{1}{2}}|r^2X\psi_{\ell m}|(\tau,r_+)\geq \frac{|\mathfrak{c}_0|}{2}|\mathfrak{H}_{\ell m}[\psi_M]|.
	\end{equation*}
	To obtain the the desired estimates for $k'\geq 1$, we proceed by induction: we commute \eqref{eq:maineqradfieldconfhor} with $\partial_{\rho_+}^{k'}$ and then integrate in $\tau$, applying the estimates for $k\leq k'-1$.
\end{proof}

\appendix

\section{Global tail functions on Minkowski}
\label{sec:tailfunctmink}
In this section, we will construct solutions to \eqref{eq:CSFmink} with $A=\widehat{A}$ and $\h=0$, that will play an important role in the late-time asymptotics when $M\neq 0$.

It will be useful to work in $(\tau, \varrho, \theta,\varphi)$ coordinates, where
\begin{equation*}
	\varrho=\frac{\tau+1}{2r}.
\end{equation*}
Note that $\varrho$ is preserved by the conformal Killing vector field $S+T$, with $S=u\partial_u+r\partial_r$ and $T=\partial_u$ in $(u,r,\theta,\varphi)$ coordinates.

Denote $\mathfrak{X}:=\partial_{\varrho}$ and $\mathfrak{T}:=\partial_{\tau}$ with respect to $(\tau,\varrho, \theta,\varphi)$ coordinates. Then we can express:
\begin{align*}
	\mathfrak{X}:=&-\frac{\tau+1}{2\varrho^2}X,\\
	\mathfrak{T}:=&\:T+\frac{1}{2\varrho}X,
\end{align*}
or equivalently,
\begin{align*}
	X=&-\frac{2\varrho^2}{\tau+1}\mathfrak{X},\\
	T=&\: \mathfrak{T}+\frac{\varrho}{\tau+1}\mathfrak{X}.
\end{align*}

We will be considering solutions to \eqref{eq:CSFmink} with $A=\widehat{A}$ and $\h=0$ of the form:
\begin{equation*}
	(\tau+1)^{-s}F(\varrho)Y_{\ell m}(\theta,\varphi),
\end{equation*}
where $F$ is a function to be determined.

Note that \eqref{eq:CSFmink} commutes with the vector field $S+T$ and the solutions of the above form are eigenfunctions of the operator $S+T$:
\begin{equation*}
	(S+T)((\tau+1)^{-s}F(\varrho)Y_{\ell m}(\theta,\varphi))=-s \cdot (\tau+1)^{-s}F(\varrho)Y_{\ell m}(\theta,\varphi).
\end{equation*}

\begin{lemma}
\label{lm:solnminkhypg}
\begin{enumerate}[label=\emph{(\roman*)}]\mbox{}
	\item Let $s\in \C\setminus  \Z$. The function $(\tau+1)^{s}F(\varrho)Y_{\ell m}(\theta,\varphi)$ is a solution to \eqref{eq:CSFmink} with $A=\widehat{A}$ and $\h=0$ that is smooth in $\varrho$ at $\varrho=0$ if and only if
	\begin{equation*}
		F(\varrho)= B\cdot  {}_2\mathbf{F}_1(a,b,c;-\varrho),
	\end{equation*}
	with ${}_2\mathbf{F}_1(a,b,c;\cdot)$ the regularized Gauss hypergeometric function, $B\in \C$ and
	\begin{align*}
		a=&\:\frac{1}{2}\pm \frac{\beta_{\ell}}{2}+i\q,\\
		b=&\:\frac{1}{2}\mp \frac{\beta_{\ell}}{2}+i\q,\\
		c=&\: s+1.
	\end{align*}
	In particular, we obtain the following simplifications for particular choices of $s$: let $a=\frac{1}{2}+ \frac{\beta_{\ell}}{2}+i\q$, then
	\begin{align*}
		(\tau+1)^sF(\varrho)=&\:B(1+\tau)^{-\frac{1}{2}-\frac{\beta_{\ell}}{2}+i\q}(1+\varrho)^{-\frac{1}{2}-\frac{\beta_{\ell}}{2}-i\q}\quad (s=b-1),\\
				(\tau+1)^sF(\varrho)=&\:B(1+\tau)^{-\frac{1}{2}+\frac{\beta_{\ell}}{2}+i\q}(1+\varrho)^{-\frac{1}{2}+\frac{\beta_{\ell}}{2}-i\q}\quad (s=a-1).
	\end{align*}
	\item Let $\beta_{\ell}=0$ and define
	\begin{equation*}
		H(\tau,\varrho):=(\tau+1)^{-\frac{1}{2}+i\q}(1+\varrho)^{-\frac{1}{2}-i\q}(\log(\varrho+1)+\log (\tau+1)).
	\end{equation*}
	Then $H(\tau,\varrho)Y_{\ell m}(\theta,\varphi)$ is a solution to \eqref{eq:CSFmink} with $A=\widehat{A}$ and $\h=0$ that is smooth in $\varrho$ at $\varrho=0$.
	\end{enumerate}
\end{lemma}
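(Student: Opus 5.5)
The plan is to reduce the equation to an ODE in $\varrho$ and identify it with the hypergeometric equation. We use \eqref{eq:maineqradfieldconf} with $M=Q=0$ in the gauge term only (that is, $\Omega^2=1$, $r\frac{d\Omega^2}{dr}=0$), $\h=0$ and $G=0$. For $\psi=f(\tau,\rho_\infty)Y_{\ell m}$ this gives
\begin{equation*}
\partial_{\rho_\infty}^2 f+2\partial_{\rho_\infty}Tf+2i\q\rho_\infty\partial_{\rho_\infty}f+(i\q-\ell(\ell+1))f=0 .
\end{equation*}
We then change variables to $(\tau,\varrho)$. Since $\varrho=\tfrac{\tau+1}{2}\rho_\infty$, we have $\partial_{\rho_\infty}=\tfrac{\tau+1}{2}\partial_\varrho$, and $\partial_\tau|_{\rho_\infty}=\partial_\tau|_\varrho+\tfrac{\varrho}{\tau+1}\partial_\varrho$. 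We insert the ansatz $f=(\tau+1)^sF(\varrho)$. Every term then scales as $(\tau+1)^{s}$, because each $\partial_{\rho_\infty}$ contributes a factor $(\tau+1)$ and each $T$ contributes $(\tau+1)^{-1}$. The powers balance precisely because the equation commutes with $S+T$, and the ansatz is an eigenfunction of $S+T$.

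After dividing out $(\tau+1)^s$, the ODE becomes
\begin{equation*}
\varrho(1+\varrho)F''+\bigl((s+1)+(2+2i\q)\varrho\bigr)F'+(\tfrac14(1-\beta_\ell^2)+\ldots)F=0 .
\end{equation*}
The precise coefficients will be computed at this step; we will collect the $\partial_\varrho$ terms from $2\partial_{\rho_\infty}T$ and from $2i\q\rho_\infty\partial_{\rho_\infty}$. Setting $x=-\varrho$ should turn this into the Gauss equation $x(1-x)F''+(c-(a+b+1)x)F'-abF=0$ with $c=s+1$, $a+b=1+2i\q$ and $ab=\tfrac14(1+2i\q)^2-\tfrac14\beta_\ell^2$. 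Using $\beta_\ell^2=(2\ell+1)^2-4\q^2$, these relations are consistent with $ab=(\tfrac12+i\q)^2-\tfrac{\beta_\ell^2}{4}$, which yields the stated $a,b$. Verifying this matching of constants, including the sign of $i\q$ forced by the gauge $\widehat A$, is the main bookkeeping obstacle.

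For the smoothness claim, the indicial exponents at $x=0$ are $0$ and $1-c=-s$. Since $s\notin\Z$, the solution space near $0$ is spanned by ${}_2\mathbf{F}_1(a,b,c;x)$ and $x^{-s}{}_2F_1(\ldots)$. The second solution is not smooth at $\varrho=0$, so smoothness forces $F=B\,{}_2\mathbf{F}_1(a,b,c;-\varrho)$. The regularized function remains well defined even where $c$ is a nonpositive integer. The explicit cases follow from ${}_2F_1(a,b;b;x)=(1-x)^{-a}$. When $s=b-1$ we have $c=b$, so $F\propto(1+\varrho)^{-a}$, and this gives the first formula. The second formula follows symmetrically with $c=a$.

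For (ii), set $\beta_\ell=0$, so $a=b=\tfrac12+i\q$. In the family $G_s:=(\tau+1)^{s}(1+\varrho)^{-a}$ at $c=b$, take the first formula with general $\beta$ and differentiate with respect to $\beta$ at $\beta=0$. Alternatively, note that for $\beta=0$ the function $(\tau+1)^{b-1}(1+\varrho)^{-a}$ solves the PDE, and the PDE has $\tau$-independent coefficients in the original coordinates. We check directly that $H$ solves the equation by writing $\log(\varrho+1)+\log(\tau+1)=\log\bigl((\tau+1)(1+\varrho)\bigr)$, which is $\log$ of $\tau+1+\tfrac{(\tau+1)^2}{2r}$. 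This is the derivative in the exponent $\mu$ of the solution family $(\tau+1)^{-1/2+i\q}(1+\varrho)^{-1/2-i\q}\bigl((\tau+1)(1+\varrho)\bigr)^{\mu}$, which solves the equation when $\mu=\pm\beta_\ell/2$. Since $\beta_\ell=0$ is a double root, the $\mu$-derivative at $\mu=0$ also solves it. Smoothness at $\varrho=0$ is evident.
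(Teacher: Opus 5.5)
Your argument for (i) follows the paper's proof. You reduce to an ODE in $\varrho$ and match it to the Gauss equation with $c=s+1$, $a+b=1+2i\q$ and $ab=i\q-\ell(\ell+1)=(\tfrac12+i\q)^2-\tfrac14\beta_\ell^2$. You then discard the branch $x^{-s}(\ldots)$ using $s\notin\Z$ and read off the closed forms from ${}_2F_1(a,b;b;x)=(1-x)^{-a}$. Your constants and both closed forms are correct.

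\textbf{The equation you start from is wrong.} In \eqref{eq:maineqradfieldconf} the principal term is $\partial_{\rho_\infty}(\Omega^2r^{-2}\partial_{\rho_\infty}\psi)$. With $\Omega^2=1$ this becomes $\partial_{\rho_\infty}(\rho_\infty^2\partial_{\rho_\infty}f)$, not $\partial_{\rho_\infty}^2f$. With your version the reduction step fails: $\partial_{\rho_\infty}^2\bigl((\tau+1)^sF\bigr)=\tfrac14(\tau+1)^{s+2}F''$ does not scale like the other terms. So the balance you attribute to the $S+T$ symmetry does not hold for the equation you wrote, and no $\tau$-independent ODE results. Your scaling rule must also count each explicit factor $\rho_\infty=2\varrho/(\tau+1)$ as $(\tau+1)^{-1}$; that is what makes $2i\q\rho_\infty\partial_{\rho_\infty}f$ balance. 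With the correct term one gets
\[
\partial_{\rho_\infty}\bigl(\rho_\infty^2\partial_{\rho_\infty}f\bigr)=(\tau+1)^s\bigl(\varrho^2F''+2\varrho F'\bigr),\qquad 2\partial_{\rho_\infty}Tf=(\tau+1)^s\bigl(\varrho F''+(s+1)F'\bigr).
\]
Together with $2i\q\rho_\infty\partial_{\rho_\infty}f=2i\q\varrho(\tau+1)^sF'$, this gives exactly the ODE you state, with zeroth-order coefficient $i\q-\ell(\ell+1)$. Also, $M=Q=0$ is set in the metric, not in the gauge term; $\q=\mathfrak{q}Q$ is kept.

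\textbf{Part (ii) takes a different route.} The paper either verifies $H$ directly, or differentiates $(\tau+1)^s\,{}_2\mathbf{F}_1(a,a,s+1;-\varrho)$ in $s$ at $s=a-1$. The latter needs $\partial_c\,{}_2F_1(a,a;c;x)|_{c=a}=(1-x)^{-a}\log(1-x)$, plus a correction by a multiple of the base solution coming from the factor $1/\Gamma(c)$. Your family $u_\mu=(\tau+1)^{-\frac12+i\q+\mu}(1+\varrho)^{-\frac12-i\q+\mu}$ stays within elementary functions and is cleaner.

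As stated, though, your double-root argument is incomplete. For fixed $\ell$ with $\beta_\ell=0$, knowing that $u_{\pm\beta_\ell/2}$ are solutions gives only the single value $\mu=0$. What you need is the identity, valid for all $\mu\in\C$,
\[
\varrho(\varrho+1)\mathfrak{X}^2u_\mu+(\tau+1)\mathfrak{T}\mathfrak{X}u_\mu+\bigl(2(1+i\q)\varrho+1\bigr)\mathfrak{X}u_\mu+\bigl(i\q-\ell(\ell+1)\bigr)u_\mu=\bigl(\mu^2-\tfrac14\beta_\ell^2\bigr)u_\mu .
\]
To check it, divide the left side by $(\tau+1)^{-\frac12+i\q+\mu}(1+\varrho)^{-\frac32-i\q+\mu}$. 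The result is $A\varrho+B$, where both $A$ and $B$ equal $(\tfrac12+i\q+\mu)(-\tfrac12-i\q+\mu)+i\q-\ell(\ell+1)=\mu^2-\tfrac14\beta_\ell^2$. When $\beta_\ell=0$ the right side is $\mu^2u_\mu$, whose $\mu$-derivative vanishes at $\mu=0$. Since $\partial_\mu u_\mu|_{\mu=0}=H$, this proves the claim. Your first suggestion, differentiating the $s=b-1$ formula in $\beta_\ell$, is the same computation. It is legitimate only because the operator depends on $\beta_\ell$ through $\beta_\ell^2$, so that $\partial_{\beta_\ell}$ of the operator vanishes at $\beta_\ell=0$.
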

\begin{proof}
Let $\psi$ be a solution to \eqref{eq:CSFmink} with $A=\widehat{A}$ and $\h=0$ and with $\slashed{\Delta}_{\s^2}\psi=-\ell(\ell+1)\psi$. Then:
\begin{multline*}
	0=X^2\psi-2(T+i\q r^{-1})X\psi+(i\q-\ell(\ell+))r^{-2}\psi\\
	=4\varrho^3(\varrho+1)(\tau+1)^{-2}\mathfrak{X}^2\psi+4\varrho^2(\tau+1)^{-1}\mathfrak{T}\mathfrak{X}\psi+4\varrho^2(\tau+1)^{-2}(2(1+i\q)\varrho+1)\mathfrak{X}\psi+(i\q-\ell(\ell+1))4\varrho^2(\tau+1)^{-2}\psi.
\end{multline*}	
Equivalently,
\begin{equation}
\label{eq:CSFminkrhotau}
	0= \varrho(\varrho+1)\mathfrak{X}^2\psi+(\tau+1)\mathfrak{T}\mathfrak{X}\psi+(2(1+i\q)\varrho+1)\mathfrak{X}\psi+(i\q-\ell(\ell+1))\psi.
\end{equation}
Now let $\psi=(\tau+1)^s F(\varrho)$. Then $F$ satisfies the following ODE:
\begin{equation*}
	\varrho(\varrho+1)\frac{d^2F}{d\varrho^2}+((2+2i\q)\varrho+1+s)\frac{dF}{d\varrho}+(i\q-\ell(\ell+1))F=0.
\end{equation*}
By taking $z=-\varrho$ we obtain a hypergeometric equation:
\begin{equation*}
	z(1-z)\frac{d^2F}{dz^2}+(-(2+2iq)z+1+s)\frac{dF}{dz}-(i\q-\ell(\ell+1))F=0.
\end{equation*}
The corresponding solutions that are smooth at $z=0$ for non-integer $c$ are multiples of regularized Gauss hypergeometric functions ${}_2\mathbf{F}_1(a,b,c;-\varrho)$ with
\begin{align*}
		a+b+1=&\:2+2i\q,\\
		ab=&\:i\q-\ell(\ell+1),\\
		c=&\: s+1,
	\end{align*}
see for example \cite{NIST:DLMF}[\S 15.10]. We can easily solve for $a$ and $b$ to obtain the expressions in the statement of the theorem.

The simplifications in the cases $s=a-1$ and $s=b-1$ follow from \cite{NIST:DLMF}[\S 15.4].

Finally, it can easily be verified that $H(\tau,\varrho)$ solves \eqref{eq:CSFminkrhotau}. Alternatively, we can take the $s$-derivative of $(\tau+1)^{s}{}_2\mathbf{F}_1(a,a,s+1;-\varrho)$ evaluate at $s=a-1$ and add a constant multiple of $(1+\tau)^{-\frac{1}{2}+i\q}(1+\varrho)^{-\frac{1}{2}-i\q}$ to obtain $H(\tau,\varrho)$. Indeed, $\frac{d}{ds}((\tau+1)^{s}{}_2\mathbf{F}_1(a,a,s+1;-\varrho))$ must be a solution since the equation \eqref{eq:CSFminkrhotau} does not depend on $s$, the corresponding difference quotient is a solution, and ${}_2\mathbf{F}_1(a,a,a;-\varrho)=\frac{1}{\Gamma(a)}(1+\varrho)^{-a}$.
\end{proof}

\begin{proposition}
\label{prop:hypgeomrealbeta}
	Let $\beta_{\ell}\in (0,1)$ and $\alpha_+,\alpha_-\in \C$. Let $\mathfrak{w}_{\ell}^0(r)=\alpha_+ r^{\frac{1}{2}+\frac{\beta_{\ell}}{2}+i\q}+\alpha_-r^{\frac{1}{2}-\frac{\beta_{\ell}}{2}+i\q}$ and let $r\geq R$, so that $\mathfrak{w}_0(r)\neq 0$. Let $N_{\beta_{\ell}}\in \N_0$, such that $N_{\beta_{\ell}}\beta_{\ell}>1$.

	Define
	\begin{equation}
	\label{eq:Psi0infimbeta}
		(\Psi^{\infty}_0)_{\ell m}(\tau,r(\tau,\varrho)):=\beta_{\ell}\sum_{n=0}^{N_{\beta_{\ell}}}\zeta^n(1+\tau)^{-a+2i\q-n\beta_{\ell}}{}_2\mathbf{F}_1\left(a,b,c_n;-\varrho\right),
	\end{equation}
	with
	\begin{align*}
		a=&\:\frac{1}{2}+ \frac{\beta_{\ell}}{2}+i\q,\\
		b=&\:\frac{1}{2}- \frac{\beta_{\ell}}{2}+i\q,\\
		c_n=&\: \frac{1}{2}+i\q-(n+\frac{1}{2})\beta_{\ell},\\
		\zeta=&\: -\frac{2^{\beta_{\ell}}\Gamma(-\beta_{\ell}+1)\Gamma(a)}{\Gamma(\beta_{\ell}+1) \Gamma(b)}\frac{\alpha_-}{\alpha_+}.
	\end{align*}
	Then $\Psi^{\infty}_0=(\Psi^{\infty}_0)_{\ell m}Y_{\ell m}$ is smooth with respect to the coordinate chart $(\tau,\varrho_{\infty},\theta,\varphi)$ (where $\varrho_{\infty}=\frac{1}{r}$) and it is a solution to \eqref{eq:CSFmink} with $A=\widehat{A}$ and $\h=0$.
	
	Furthermore, $f(\tau,r):=(\mathfrak{w}^0_{\ell})^{-1}(r)(\Psi^{\infty}_0)_{\ell m}(\tau,r)$ satisfies: for all $k,l\leq N$
	\begin{align}
	\label{eq:divisionwimproveddec0}
		|(rX)^k ((\tau+1)T)^l f|\leq &\: C_N (\tau+1)^{-\frac{1}{2}-\frac{1}{2}\beta_{\ell}}(\tau+1+2r)^{-\frac{1}{2}-\frac{1}{2}\beta_{\ell}},\\
	\label{eq:divisionwimproveddec1}
		|(rX)^k ((\tau+1)T)^l Tf|\leq &\: C_N (\tau+1)^{-\frac{3}{2}-\frac{1}{2}\beta_{\ell}}(\tau+1+2r)^{-\frac{1}{2}-\frac{1}{2}\beta_{\ell}},\\
			\label{eq:divisionwimproveddec2}
		|(rX)^k ((\tau+1)T)^l Xf|\leq &\: C_N (\tau+1)^{-\frac{3}{2}-\frac{1}{2}\beta_{\ell}}(\tau+1+2r)^{-\frac{1}{2}-\frac{1}{2}\beta_{\ell}}.
	\end{align}
\end{proposition}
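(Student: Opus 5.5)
Each summand in \eqref{eq:Psi0infimbeta} has the form $(1+\tau)^{s}{}_2\mathbf{F}_1(a,b,s+1;-\varrho)$ with $s=c_n-1$; one checks that $-a+2i\q-n\beta_\ell=c_n-1$. Lemma \ref{lm:solnminkhypg}(i) therefore applies to each summand: each one, multiplied by $Y_{\ell m}$, solves \eqref{eq:CSFmink} with $A=\widehat{A}$ and $\h=0$. The lemma needs $s\notin\Z$, which holds when $\q\neq0$ since $\im c_n=\q$. Linearity then gives that $\Psi^\infty_0$ is a solution.

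\textbf{Smoothness in $(\tau,\rho_\infty)$.} We have $\varrho=\frac{\tau+1}{2}\rho_\infty$, and ${}_2\mathbf{F}_1(a,b,c;\cdot)$ is entire in $c$ and analytic on $\C\setminus[1,\infty)$. Since $-\varrho\le0$, each summand is a smooth (indeed analytic) function of $(\tau,\rho_\infty)\in[0,\infty)\times[0,1)$, and so is the finite sum.

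\textbf{Estimates on $f$.} These require the large-$\varrho$ asymptotics, which is where the choice of $\zeta$ enters. By the connection formula (DLMF 15.8.2),
\[
{}_2\mathbf{F}_1(a,b,c;-\varrho)=\frac{\Gamma(b-a)}{\Gamma(b)\Gamma(c-a)}\varrho^{-a}(1+O(\varrho^{-1}))+\frac{\Gamma(a-b)}{\Gamma(a)\Gamma(c-b)}\varrho^{-b}(1+O(\varrho^{-1})),
\]
with $b-a=-\beta_\ell\notin\Z$. Substituting $\varrho=(\tau+1)/(2r)$, the $\varrho^{-a}$ part of the $n$-th summand is proportional to $\zeta^n(1+\tau)^{-(n+1)\beta_\ell}r^{\frac12+\frac{\beta_\ell}{2}+i\q}$. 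The $\varrho^{-b}$ part of the $(n+1)$-th summand is proportional to $\zeta^{n+1}(1+\tau)^{-(n+1)\beta_\ell}r^{\frac12-\frac{\beta_\ell}{2}+i\q}$; the $\tau$-power matches because $c_{n+1}-b=c_n-a=-(n+1)\beta_\ell$. The definition of $\zeta$ (ratio of Gamma factors times $\alpha_-/\alpha_+$) is chosen exactly so that these two pieces combine into $\propto\zeta^n(1+\tau)^{-(n+1)\beta_\ell}\mathfrak{w}^0_\ell(r)$. The sum thus telescopes into
\[
\mathfrak{w}^0_\ell(r)\sum_{n}(\ldots)(1+\tau)^{-(n+1)\beta_\ell}
\]
plus two leftovers. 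The first is the $n=0$ term $\varrho^{-b}$, which is $\propto(1+\tau)^{-\frac12-\frac{\beta_\ell}{2}+i\q}\cdot r^{b}$; this is the leading null-infinity tail. The second is the last-term $\varrho^{-a}$ piece, $O((1+\tau)^{-(N_{\beta_\ell}+1)\beta_\ell})$, which is $O(\tau^{-1})$ relative because $N_{\beta_\ell}\beta_\ell>1$. Dividing by $\mathfrak{w}^0_\ell\sim r^{\frac12+\frac{\beta_\ell}{2}}$ for $r\geq R$ gives
\[
|f|\lesssim (\tau+1)^{-\frac12-\frac{\beta_\ell}{2}}\,r^{-\frac12-\frac{\beta_\ell}{2}}
\]
in the region $\varrho\gg1$. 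In the region $\varrho\lesssim1$, i.e.\ $r\gtrsim\tau$, we instead use $|{}_2\mathbf{F}_1|\lesssim1$; then
\[
|f|\lesssim(1+\tau)^{-\frac12-\frac{\beta_\ell}{2}}r^{-\frac12-\frac{\beta_\ell}{2}}\sim(\tau+1)^{-\frac12-\frac{\beta_\ell}{2}}(\tau+1+2r)^{-\frac12-\frac{\beta_\ell}{2}}.
\]
Together these give \eqref{eq:divisionwimproveddec0}.

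\textbf{Derivatives.} For \eqref{eq:divisionwimproveddec1} and \eqref{eq:divisionwimproveddec2}, write $rX=-\varrho\partial_\varrho$ and $(\tau+1)T=(\tau+1)\mathfrak{T}+\varrho\mathfrak{X}$. Both preserve the structure $(1+\tau)^{s}G(\varrho)$, so the asymptotic expansions are differentiated termwise; the hypergeometric expansions are valid in the symbol sense and $\mathfrak{w}^0_\ell$ is a symbol in $r$. This gives \eqref{eq:divisionwimproveddec0} for $(rX)^k((\tau+1)T)^l f$. The extra factor $(\tau+1)^{-1}$ for $Tf$ follows directly. For $Xf$ we get an extra $r^{-1}$, which suffices for $r\gtrsim\tau$. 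For $r\lesssim\tau$ we must check that the leading term of $f$ is $r$-independent up to $O(\varrho^{-1})$. It is: the cancellation produced by $\zeta$ leaves $f=c(\tau)+(\tau+1)^{-\frac12-\frac{\beta_\ell}{2}}r^{-\frac12-\frac{\beta_\ell}{2}}(\ldots)$, and the $X$-derivative then gains $r^{-1}\lesssim$ the required power.

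\textbf{Main obstacle.} The main difficulty is the bookkeeping of the telescoping cancellation: verifying that $\zeta$ aligns the Gamma-function prefactors consistently for every $n$, and controlling the $O(\varrho^{-1})$ correction terms uniformly in $n$.
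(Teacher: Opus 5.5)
Your overall route is the paper's: Lemma \ref{lm:solnminkhypg} plus linearity for the equation, direct bounds in $\{\varrho\le\varrho_0\}$ using that $(\tau+1)\mathfrak{T}$ and $\varrho\partial_\varrho$ preserve decay rates, and in $\{\varrho\ge\varrho_0\}$ the connection formula with $\zeta$ chosen to pair the $\varrho^{-a}$ part of the $n$-th summand with the $\varrho^{-b}$ part of the $(n+1)$-th. Your check of that pairing, via $c_{n+1}-b=c_n-a=-(n+1)\beta_\ell$, is correct.

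The gap is in the large-$\varrho$ bookkeeping, which as written does not give the claimed bounds.

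\textbf{The spurious leftover.} There is no ``$n=0$, $\varrho^{-b}$'' term. Since $c_0=b$, its coefficient contains $1/\Gamma(c_0-b)=1/\Gamma(0)=0$; indeed ${}_2\mathbf{F}_1(a,b,b;-\varrho)=\Gamma(b)^{-1}(1+\varrho)^{-a}$. This matters, because such a term would be bad. It would be $\propto(1+\tau)^{c_0-1-b}r^{b}=(1+\tau)^{-1}r^{b}$, not $(1+\tau)^{-\frac12-\frac{\beta_\ell}{2}+i\q}r^b$. It would therefore contribute $\sim(1+\tau)^{-1}r^{-\beta_\ell}$ to $f$ and $\sim(1+\tau)^{-1}r^{-1-\beta_\ell}$ to $Xf$. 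Both contradict \eqref{eq:divisionwimproveddec0} and \eqref{eq:divisionwimproveddec2} at $r=R$. The null-infinity tail comes from $\varrho\to0$, where ${}_2\mathbf{F}_1(\cdot\,;0)=1/\Gamma(c_n)$, not from the large-$\varrho$ expansion.

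\textbf{The size of the remainder.} In $\{\varrho\gg1\}$ one has $\tau+1+2r\sim\tau+1$, so the targets there are $|f|\lesssim(\tau+1)^{-1-\beta_\ell}$ and $|Xf|\lesssim(\tau+1)^{-2-\beta_\ell}$. Your bound $|f|\lesssim(\tau+1)^{-\frac12-\frac{\beta_\ell}{2}}r^{-\frac12-\frac{\beta_\ell}{2}}$ gives only $\tau^{-\frac12-\frac{\beta_\ell}{2}}$ at fixed $r=R$. The same holds for your $Xf$ argument, which uses a remainder of that size and gains only $r^{-1}$. Your criterion that $f$ be $r$-independent up to $O(\varrho^{-1})$ is the right one, but you have to establish it. After the pairing,
\[
\Psi^\infty_0=\mathfrak{w}^0_\ell(r)\,c(\tau)+(\tau+1)^{-\frac12-\frac{\beta_\ell}{2}}O_\infty\big(\varrho^{-\frac32-\frac{\beta_\ell}{2}}\big),\qquad |c(\tau)|\lesssim(\tau+1)^{-1-\beta_\ell}.
\]
The error term collects two things:
\begin{enumerate}
\item the $O(\varrho^{-1})$ corrections of all the $\varrho^{-a}$ and $\varrho^{-b}$ pieces;
\item the unpaired $\varrho^{-a}$ piece of the $n=N_{\beta_\ell}$ summand, which has size $(\tau+1)^{-\frac12-\frac{\beta_\ell}{2}}\varrho^{-\frac12-\frac{\beta_\ell}{2}}(\tau+1)^{-N_{\beta_\ell}\beta_\ell}$.
\end{enumerate}
The second is absorbed because $\varrho(\tau+1)^{-N_{\beta_\ell}\beta_\ell}\lesssim(\tau+1)^{1-N_{\beta_\ell}\beta_\ell}\lesssim1$ for $r\ge R$. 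This is exactly where $N_{\beta_\ell}\beta_\ell>1$ is used.

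Dividing by $\mathfrak{w}^0_\ell$, a symbol of order $\frac12+\frac{\beta_\ell}{2}$ in $r$, gives $f=c(\tau)+O_\infty\big((\tau+1)^{-2-\beta_\ell}r\big)$. The remainder is $O(\varrho^{-1})$ relative to $c(\tau)$ and is stable under $rX$ and $(\tau+1)T$. Hence $|f|\lesssim(\tau+1)^{-1-\beta_\ell}$ and $|Xf|\lesssim(\tau+1)^{-2-\beta_\ell}$ in $\{\varrho\ge\varrho_0\}$, which is what the paper proves there.
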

\begin{proof}
Note first of all that by Lemma 	\ref{lm:solnminkhypg} and linearity of \eqref{eq:CSFmink}, $\Psi^{\infty}_0$ is a solution to \eqref{eq:CSFmink} with $A=\widehat{A}$ and $\h=0$ . Furthermore, $T^{m_1}(r^2X)^{m_2}\Psi^{\infty}_0$ is well-defined for all $m_1,m_2\in \N_0$.

Let $\varrho_0>0$ be arbitrarily large and suppose that $\varrho\leq \varrho_0$. Then \eqref{eq:divisionwimproveddec0} --\eqref{eq:divisionwimproveddec2} follow immediately, using that the decay rates of $(\Psi^{\infty}_0)_{\ell m}(\tau,r(\tau,\varrho))$ are preserved under commutation with $(\tau+1)\mathfrak{T}$ and $\varrho \partial_{\varrho}$.

Now suppose that $\varrho>\varrho_0$. The hypergeometric functions ${}_2\mathbf{F}_1(a,b,2-b-n\beta_{\ell};-\varrho)$ satisfy the following asymptotics when $\varrho\to \infty$ (combine example \cite{NIST:DLMF}[\S 15.8] with \cite{NIST:DLMF}[\S 15.2]): for $n\geq 1$:
\begin{multline}
\label{eq:hypgeomasymp}
	\frac{-\sin (\pi \beta_{\ell})}{\pi}{}_2\mathbf{F}_1(a,b,1-a+2iq-n\beta_{\ell};-\varrho)=\frac{\varrho^{-a}}{\Gamma(\beta_{\ell}+1)\Gamma(b)\Gamma(-(n+1)\beta_{\ell})}(1+O_{\infty}(\varrho^{-1}))\\
	-\frac{\varrho^{-b}}{\Gamma(-\beta_{\ell}+1)\Gamma(a)\Gamma(-n\beta_{\ell})}(1+O_{\infty}(\varrho^{-1}))
\end{multline}
On the other hand, we obtain for $n=0$:
\begin{multline}
\label{eq:hypgeomasymp2}
	\frac{-\sin (\pi \beta_{\ell})}{\pi}{}_2\mathbf{F}_1(a,b,1-a+2i\q;-\varrho)=	\frac{-\sin (\pi \beta_{\ell})}{\pi}{}_2\mathbf{F}_1(a,b,b;-\varrho)=\frac{-\sin (\pi \beta_{\ell})}{\pi \Gamma(b)}(1+\varrho)^{-a}\\
=\frac{(1+\varrho)^{-a}}{\Gamma(b)\Gamma(\beta_{\ell}+1)\Gamma(-\beta_{\ell})},
	\end{multline}
	where we applied Euler's reflection formula to arrive at the last equality.
	
From the above, it follows that \eqref{eq:divisionwimproveddec0} and \eqref{eq:divisionwimproveddec1} hold, since decay rates are preserved under commutation with $(\tau+1)T$ and $\varrho \partial_{\varrho}$ and $T\mathfrak{w}_0^{-1}=0$.

In order to conclude \eqref{eq:divisionwimproveddec2}, we will need to choose the constant $\zeta$ appropriately, since $X\mathfrak{w}_0^{-1}\neq 0$.

Consider \eqref{eq:hypgeomasymp}. Note that we can rewrite the terms with a factor $\varrho^{-b}$ as follows:
\begin{multline*}
	- \frac{\varrho^{-b}(1+\tau)^{-a+2i\q}}{\Gamma(-\beta_{\ell}+1)\Gamma(a)}\sum_{n=1}^{N_{\beta_{\ell}}}\zeta^n \frac{(1+\tau)^{-n\beta_{\ell}}}{\Gamma(-n\beta_{\ell})}\\
=- \zeta \frac{(2r)^{b}(1+\tau)^{-a+2i\q}(1+\tau)^{-b}}{\Gamma(-\beta_{\ell}+1)\Gamma(a)}\sum_{m=0}^{N_{\beta_{\ell}}+1}\zeta^m \frac{(1+\tau)^{-(m+1)\beta_{\ell}}}{\Gamma(-(m+1)\beta_{\ell})}\\
= -\zeta\frac{\Gamma(\beta_{\ell}+1) \Gamma(b)}{\Gamma(-\beta_{\ell}+1)\Gamma(a)}(2r)^b \left[(1+\tau)^{-a}\sum_{m=0}^{N_{\beta_{\ell}}+1}\zeta^m \frac{(1+\tau)^{-a+2i\q-m\beta_{\ell}}}{\Gamma(\beta_{\ell}+1) \Gamma(b)\Gamma(-(m+1)\beta_{\ell})}\right].
\end{multline*}
Hence, using that $N_{\beta_{\ell}} \beta_{\ell}>1$, we obtain:
\begin{multline*}
	2^{-a}\frac{-\sin (\pi \beta_{\ell})}{\pi}\sum_{n=0}^{N_{\beta_{\ell}}}\zeta^n(1+\tau)^{-a+2i\q-n\beta_{\ell}}{}_2\mathbf{F}_1(a,b,1-a+2i\q-n\beta_{\ell};-\varrho)\\
	=\left[r^a-\zeta\frac{\Gamma(\beta_{\ell}+1) \Gamma(b)}{2^{\beta}\Gamma(-\beta_{\ell}+1)\Gamma(a)}r^b\right]\cdot \left[\sum_{m=0}^{N_{\beta_{\ell}}}\zeta^m \frac{(1+\tau)^{-1-(m+1)\beta_{\ell}}}{\Gamma(\beta_{\ell}+1) \Gamma(b)\Gamma(-(m+1)\beta_{\ell})}\right]\\
	+(\tau+1)^{-\frac{1}{2}-\frac{\beta_{\ell}}{2}}O_{\infty}(\varrho^{-\frac{3}{2}-\frac{\beta_{\ell}}{2}}).
	\end{multline*}
Now choose $\zeta$ so that
\begin{equation*}
\zeta\frac{\Gamma(\beta_{\ell}+1) \Gamma(b)}{2^{\beta}\Gamma(-\beta_{\ell}+1)\Gamma(a)}=-\frac{\alpha_-}{\alpha+}.
\end{equation*}
Then
\begin{multline*}
	2^{-a}\alpha_+\frac{-\sin (\pi \beta_{\ell})}{\pi \beta_{\ell}}(\Psi^{\infty}_0)_{\ell m}(\tau,r)=\mathfrak{w}_0(r)\cdot \left[\sum_{m=0}^{N_{\beta_{\ell}}}\zeta^m \frac{(1+\tau)^{-1-(m+1)\beta_{\ell}}}{\Gamma(\beta_{\ell}+1) \Gamma(b)\Gamma(-(m+1)\beta_{\ell})}\right]\\
	+(\tau+1)^{-\frac{1}{2}-\frac{\beta_{\ell}}{2}}O_{\infty}\left(\left(\frac{\tau}{2r}\right)^{-\frac{3}{2}-\frac{\beta_{\ell}}{2}}\right).
	\end{multline*}
	Finally, we divide the above expression by $\mathfrak{w}_0$ and apply $X$ to conclude that \eqref{eq:divisionwimproveddec2} holds.
\end{proof}

When $\beta_{\ell}\in i(0,\infty)$, we instead consider an infinite sum over $n$. We first prove a lemma that characterizes the necessary properties of the relevant hypergeometric functions.
\begin{lemma}
\label{lm:unifomestnhypgeom}
	Let $\beta \in i[0,\infty), q\in \R$ and $x\in (0,\infty)$. Define $F_x(\varrho;\beta):={}_2\mathbf{F}_1(\frac{1}{2}+ \frac{\beta}{2}+i\q,\frac{1}{2}- \frac{\beta}{2}+i\q,\frac{1}{2}-\frac{\beta}{2}+i(\q+x);-\varrho)$. 
	\begin{enumerate}[label=\emph{(\roman*)}]
	\item Let $\varrho_0>0$. Then for all $N\in \N_0$, there exists a constant $C_{N,\varrho_0,\q}>0$, such that for all $0\leq m\leq N$ :
\begin{equation}
\label{eq:boundedrhoFnestho}
\left|((\varrho\partial_{\varrho})^m \left(\Gamma\left(\frac{1}{2}-\frac{\beta}{2}+iq-ix\right)F_x(\varrho;x)-(1+\varrho)^{-a}\right)\right|\leq C_{\q,\varrho_0, N} \varrho(1+\varrho)^{-\frac{3}{2}},
\end{equation}
with
\begin{equation}
\label{eq:boundedrhoFnesthoplus}
	\left|\frac{1}{e^{\frac{\pi}{2}x}|\Gamma(\frac{1}{2}-\frac{\beta}{2}+iq-ix)|}+1\right|\leq C_{\q,\beta,N} (x+1)^{-2}e^{-\frac{\pi}{2}x}.
\end{equation}
\item Let $\varrho\geq \varrho_0$ and let $\epsilon>0$ be arbitrarily small. Then for $\varrho_0$ suitably small and for all $N\in \N_0$, there exists a constant $C_{N,\q,\varrho_0,\epsilon}>0$, such that for all $0\leq m,l\leq N$ :
\begin{align}
\label{eq:largerhoFnestho1}
		\Bigg|(x\partial_{x})^l(\varrho\partial_{\varrho})^m&\left(\frac{\sin (\pi \im \beta)}{\pi}F(\varrho;x)-\frac{\varrho^{-a}}{\Gamma(b)\Gamma(c_x-a)}+\frac{\varrho^{-b}}{\Gamma(a)\Gamma(c_x-b)}\right)\Bigg|\\  \nonumber
		\leq &\: C_{N,\q, \varrho_0,\epsilon} \varrho^{-1} e^{(\frac{\pi}{2}+\epsilon) x}\quad(\beta\in i(0,\infty)),\\
		\label{eq:largerhoFnestho2}
		\Bigg|(x\partial_{x})^l(\varrho\partial_{\varrho})^m&\left(F(\varrho;x)-\frac{\varrho^{-a}}{\Gamma(a)\Gamma(c_x-a)}\left(\log \varrho+2\gamma_{\rm Euler}-\frac{\Gamma'(a)}{\Gamma(a)}-\frac{\Gamma'(c_x-a)}{\Gamma(c_x-a)}\right)\right)\Bigg|\\ \nonumber
		\leq &\: C_{N,\q, \varrho_0,\epsilon} \varrho^{-1}\log \varrho e^{(\frac{\pi}{2}+\epsilon) x}\quad  (\beta=0).
\end{align}

\end{enumerate}
\end{lemma}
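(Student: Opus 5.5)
We prove Lemma \ref{lm:unifomestnhypgeom} directly from integral representations of ${}_2\mathbf{F}_1$, tracking the dependence on the large parameter $x$. Throughout, $a=\frac12+\frac{\beta}{2}+i\q$, $b=\frac12-\frac{\beta}{2}+i\q$ and $c_x=\frac12-\frac{\beta}{2}+i(\q+x)=b+ix$. Since $\beta\in i[0,\infty)$, the parameters $a,b$ both have real part $\frac12$, and $c_x-b=ix$ is purely imaginary.

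\textbf{Part (i): bounded $\varrho$.} We use the identity ${}_2\mathbf{F}_1(a,b,b;-\varrho)=\Gamma(b)^{-1}(1+\varrho)^{-a}$, already used in \eqref{eq:hypgeomasymp2}, as the reference point at $x=0$.

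1. Write $\Gamma(c_x)F_x$ as its Gauss series:
\begin{equation*}
\Gamma(c_x)F_x(\varrho)=\sum_{k\ge0}\frac{(a)_k(b)_k}{(c_x)_k k!}(-\varrho)^k,
\end{equation*}
valid for $\varrho<1$, with Pfaff/Euler transformations used to cover $\varrho\le\varrho_0$.

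2. The difference from $(1+\varrho)^{-a}=\sum_k (a)_k(-\varrho)^k/k!$ has coefficients
\begin{equation*}
\frac{(a)_k}{k!}\left(\frac{(b)_k}{(c_x)_k}-1\right),
\end{equation*}
which vanish at $k=0$ and are bounded uniformly in $x$ because $|c_x+j|\ge|b+j|$. This gives the factor $\varrho$ in \eqref{eq:boundedrhoFnestho} uniformly in $x$. Applying $\varrho\partial_\varrho$ acts termwise.

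3. For $\varrho\le\varrho_0$ large, we instead use the transformation $z\mapsto z/(z-1)$, so that the argument $\varrho/(1+\varrho)$ stays bounded away from $1$.

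4. The $\Gamma$-factor estimate \eqref{eq:boundedrhoFnesthoplus} follows from the reflection formula, which gives $|\Gamma(\frac12+iy)|^2=\pi/\cosh(\pi y)$ (with a shift by $\frac{\beta}{2}$ handled via Stirling). Hence $|\Gamma(c_x)|\sim\sqrt{2\pi}\,e^{-\frac{\pi}{2}x}$ up to $O(x^{-1})$ corrections.

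\textbf{Part (ii): large $\varrho$.} We use the connection formula DLMF 15.8.2 to rewrite ${}_2\mathbf{F}_1(a,b,c_x;-\varrho)$ in terms of $\varrho^{-a}\,{}_2\mathbf{F}_1(a,a-c_x+1,a-b+1;-1/\varrho)$ and $\varrho^{-b}\,{}_2\mathbf{F}_1(b,b-c_x+1,b-a+1;-1/\varrho)$.

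1. The prefactors are $\Gamma(b-a)/(\Gamma(b)\Gamma(c_x-a))$ and the analogous term with $a,b$ swapped. Using $\Gamma(\beta)\Gamma(-\beta)$ and reflection, these produce the displayed leading terms together with the factor $\sin(\pi\im\beta)/\pi$.

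2. The remainders are $\varrho^{-a}$ or $\varrho^{-b}$ times
\begin{equation*}
\sum_{k\ge1}\frac{(a)_k(a-c_x+1)_k}{(a-b+1)_k\,k!}(-1/\varrho)^k.
\end{equation*}
Here $(a-c_x+1)_k$ grows like $x^k$. So we need either $\varrho\ge\varrho_0$ large relative to $x$, or, in the complementary regime, a uniform bound.

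3. For the complementary regime we would use the Euler integral
\begin{equation*}
{}_2\mathbf{F}_1=\frac{1}{\Gamma(a)\Gamma(c_x-a)}\int_0^1 t^{a-1}(1-t)^{c_x-a-1}(1+\varrho t)^{-b}\,dt,
\end{equation*}
deformed into the complex plane when $\re(c_x-a)\le0$. The factor $1/|\Gamma(c_x-a)|\lesssim e^{\frac{\pi}{2}x}x^{1/2}$ then accounts for the loss $e^{(\frac{\pi}{2}+\epsilon)x}$ in \eqref{eq:largerhoFnestho1}. Each $x\partial_x$ derivative only brings down polynomial factors in $x$ and $\log t$, which are absorbed by $\epsilon$.

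4. For $\beta=0$ (\eqref{eq:largerhoFnestho2}), we take the limit $\beta\to0$ in the $\beta\neq0$ formula. The two terms merge, the derivative in the exponent yields $\log\varrho$, and differentiating the $\Gamma$-factors produces the digamma and $2\gamma_{\rm Euler}$ terms (the latter from $\Gamma(\beta)\Gamma(-\beta)$ expansions, DLMF 15.8.8). We would cite DLMF 15.8.8 directly for this case.

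\textbf{Main obstacle.} The hardest step is the uniformity in $x$ in part (ii), specifically the regime $\varrho\lesssim x$ where the $1/\varrho$ series is not controlled. We would first try the steepest-descent bound on the Euler integral with the contour rotated by angle $\frac{\pi}{2}$. This should give $e^{\frac{\pi}{2}x}$ times a polynomial factor, at the cost of the $\epsilon$ in the exponent.
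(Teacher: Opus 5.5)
Your part (i) follows the paper's route: a Pfaff transformation to the variable $\frac{\varrho}{1+\varrho}$, plus a bound on the coefficients that is uniform in $k$ and $x$. It needs only small repairs.
With your convention $c_x=b+ix$, the inequality $|c_x+j|\geq|b+j|$ fails for $0<x<2|\im b|$ when $\im b<0$. The bound $|(b)_k/(c_x)_k|\leq\prod_{j\geq 0}|b+j|/(j+\frac12)=\sqrt{\cosh(\pi\im b)}$ suffices instead.
After the Pfaff transformation the relevant ratio is $(c_x-b)_k/(c_x)_k$, which needs its own (similar) uniform bound.
Since $\re c_x=\frac12$ exactly, the reflection formula gives $|\Gamma(c_x)|^{-2}=\pi^{-1}\cosh(\pi\im c_x)$ with no Stirling correction.

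The genuine gap is in part (ii), exactly at what you call the main obstacle. You do not prove the estimate uniformly in $x$ on the range $\varrho_0\leq\varrho\lesssim x$; you only say you would try a deformed Euler integral.
That regime is not an obstacle: the growth of $(a-c_x+1)_k$ like $x^k$ is compensated by the denominator $(a-b+1)_k\,k!$.
Set $X:=x+|\beta|$. Then $|(a)_k|/k!\lesssim 1$, $|(a-c_x+1)_k|\leq\prod_{j=1}^{k}(j+X)$ and $|(a-b+1)_k|\geq k!$. Hence, with generalized binomial coefficients,
\begin{equation*}
\sum_{k\geq 1}\left|\frac{(a)_k\,(a-c_x+1)_k}{(a-b+1)_k\,k!}\right|\varrho^{-k}\lesssim\sum_{k\geq 1}\binom{k+X}{k}\varrho^{-k}=(1-\varrho^{-1})^{-X-1}-1\leq (X+1)\,\varrho^{-1}(1-\varrho^{-1})^{-X-2}\leq C_{\epsilon}\,\varrho^{-1}e^{\epsilon x}
\end{equation*}
for all $\varrho\geq\varrho_0$, uniformly in $x$, as soon as $\varrho_0\gtrsim\epsilon^{-1}$.
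Multiply by $1/|\Gamma(c_x-a)|\lesssim\sqrt{x+1}\,e^{\frac{\pi}{2}x}$ and treat the $\varrho^{-b}$ term the same way; this gives \eqref{eq:largerhoFnestho1} for all $\varrho\geq\varrho_0$ at once.
The operators $(x\partial_x)^l(\varrho\partial_{\varrho})^m$ only produce polynomial factors in $k$ and $x$. The DLMF 15.8.8 series for $\beta=0$ is handled identically, its digamma factors costing only logarithms.
This is exactly where the $\epsilon$-loss in the statement comes from, and why $\varrho_0$ must be taken large depending on $\epsilon$. The paper gets the same coefficient control from Stirling's formula, splitting into $k+1\leq\delta x$ and $k+1>\delta x$.

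The alternative you sketch does not close the gap as written.
Since $\re c_x=\re a$, the Euler integral $\int_0^1 t^{a-1}(1-t)^{c_x-a-1}(1+\varrho t)^{-b}\,dt$ diverges at $t=1$. Rotating the contour is therefore not enough; you would need a loop contour or an integration by parts.
The exponential bookkeeping is also inconsistent. If the rotated integral is itself of size $e^{\frac{\pi}{2}x}$, as you anticipate, then together with the prefactor $1/|\Gamma(c_x-a)|\sim\sqrt{x}\,e^{\frac{\pi}{2}x}$ you only get $e^{\pi x}$. That is too weak both for \eqref{eq:largerhoFnestho1} and for the summation against $|\zeta|^n$ in Proposition \ref{prop:imaginarybetatail}.
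As it stands, part (ii) is not proved, including the $\beta=0$ case, which rests on the same $\varrho^{-1}$-expansion.
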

\begin{proof}
Let $c_x:=	\frac{1}{2}-\frac{\beta}{2}+i\q-i x$, $a=\frac{1}{2}+ \frac{\beta}{2}+i\q$ and $b=\frac{1}{2}- \frac{\beta}{2}+i\q$. Note that $\re a=\re b=\re c_x=\frac{1}{2}$. Note that $c_x-b=-ix$.

We can express:
\begin{multline*}
	F(\varrho;x)=(1+\varrho)^{-a}{}_2\mathbf{F}_1\left(a,c_x-b,c_x;\frac{\varrho}{1+\varrho}\right)\\
	=\frac{(1+\varrho)^{-a}}{\Gamma(c_x)}\left(1+\frac{\Gamma(c_x)}{\Gamma(a)\Gamma(c_x-b)}\sum_{k=1}^{\infty}\frac{\Gamma(a+k)\Gamma(c_x+k-b)}{\Gamma(c_x+k)\Gamma(k+1)}\left(\frac{\varrho}{1+\varrho}\right)^k\right),
\end{multline*}
see for example \cite{NIST:DLMF}[\S 15.8.1].

By the Stirling approximation formula, we can alternatively estimate for $z\in \C$, with $z=re^{i\theta}$, with $r> 0$ and $\theta=\arg(z)\in (-\pi+\epsilon_0,\pi-\epsilon_0)$ for some $\epsilon_0>0$:
\begin{multline*}
	\log \Gamma(z)=z\log z-z+\frac{1}{2}\log(2\pi z^{-1})+O_{\infty}(z^{-1})\\
	=\left[(\re z) (\log r-1)-\theta \im z+\frac{1}{2}\log (2\pi r^{-1})\right] +i(\im z \log r+\theta \re z-\frac{\theta}{2})+O_{\infty}(z^{-1})
\end{multline*}
Hence, for $z=c_x+k-b$, $\re z= k$, $\im z=-x$, $r=(k^2+x^2)^{\frac{1}{2}}$ and $\theta=-\arctan (x k^{-1}):=\theta_k$, so
\begin{multline*}
	\re \log \Gamma(c_x+k-b)=\frac{k}{2} \left(\log (k^2+x^2)-2\right)+\theta_k x-\frac{1}{4}\log (4\pi^2 (k^2+x^2))+O_{\infty}((x^2+k^2)^{-\frac{1}{2}}).
\end{multline*}
On the other hand, for $z=c_k+k$, we have $\re z=k+\frac{1}{2}$, $\im z=-x+q-\frac{1}{2}\im \beta$, $r=((k+\frac{1}{2})^2+(x+\frac{1}{2}\im \beta-q)^2)^{\frac{1}{2}}$ and $\theta=-\arctan ((x+\frac{1}{2}\im \beta-q) (k+\frac{1}{2})^{-1}):=\tilde{\theta}_k$, so
\begin{multline*}
	\re \log \Gamma(c_x+k)=\frac{k+\frac{1}{2}}{2} \left(\log \left((k+\frac{1}{2})^2+(x+\frac{1}{2}\im \beta-q)^2\right)-2\right)+\tilde{\theta}_k (x+\frac{1}{2}\im \beta-q)\\
	-\frac{1}{4}\log (4\pi^2 ((k+\frac{1}{2})^2+(x+\frac{1}{2}\im \beta-q)^2)+O_{\infty}((x^2+k^2)^{-\frac{1}{2}}).
\end{multline*}
We therefore obtain:
\begin{multline*}
\re \log \Gamma(c_x+k-b)-\re \log \Gamma(c_x-b)-\re \log \Gamma(c_x+k)=\left(\theta_k-\tilde{\theta}_k+\frac{\pi}{2}\right) x+\frac{k}{2} \log (1+O((k^2+x^2)^{-\frac{1}{2}}))\\
	-\frac{1}{4}\log \left((x^{-2}k^2+1)(1+O((k^2+x^2)^{-\frac{1}{2}}))\right)+\frac{1}{4}\log \left(1+O((k^2+x^2)^{-\frac{1}{2}}\right)+O_{\infty}(1)\\
	=\left(\theta_k-\tilde{\theta}_k+\frac{\pi}{2}\right) x-\frac{1}{4}\log(x^{-2}k^2+1)+O_{\infty}(1)
\end{multline*}

Furthermore,
\begin{multline*}
	-\log \Gamma(c_x)=-\frac{1}{4} \left(\log \left(\frac{1}{4}+(x+\im \beta-q)^2\right)-2\right)+ (x+\im \beta-q)\arctan (4(x+\im \beta-q))\\
	+\frac{1}{4}\log \left(4\pi^2 \left(\frac{1}{4}+(x+\im \beta-q)^2\right)\right)-i\left(-(x+\im \beta)\log x-\frac{1}{2}\arctan (4(x+\im \beta-q))\right)\\
	+O_{\infty}(x^{-1}),
\end{multline*}
so
\begin{equation*}
	\frac{1}{\Gamma(c_x)}=-e^{\frac{1}{2}(\im \beta-q-1)}e^{\frac{\pi}{2}x(1+O_{\infty}(x^{-2}))}e^{+i(x+\im \beta)\log x}.
\end{equation*}

Finally, we can estimate as above the following $x$-independent factor:
\begin{equation*}
	\left|\frac{\Gamma(a+k)}{\Gamma(k+1)}\right|\lesssim e^{\frac{k}{2}\log \left(\frac{(k+1)^2+(\im a)^2}{(k+1)^2}\right)}(k+1)^{-\frac{1}{2}}\lesssim (k+1)^{-\frac{1}{2}}.
\end{equation*}

Putting the above together, we obtain the following estimate, which is uniform in $k,n$:
\begin{equation*}
\left|\frac{\Gamma(c_x)}{\Gamma(a)\Gamma(c_x-b)}\frac{\Gamma(a+k)\Gamma(c_x+k-b)}{\Gamma(c_x+k)\Gamma(k+1)}\right|\leq C_{\q} (k+1)^{-\frac{1}{2}}.
\end{equation*}
Hence, for $\varrho\leq \varrho_0$:
\begin{equation*}
	\left|\Gamma(c_x)F(\varrho;x)-(1+\varrho)^{-a}\right|\leq C_{\q,\varrho_0}\varrho(1+\varrho)^{-\frac{1}{2}-1}.
\end{equation*}
It is straightforward to commute the above estimates with $(\varrho\partial_{\varrho})^m$, which introduces harmless polynomial factors in $k$, in order to conclude \eqref{eq:boundedrhoFnestho}.

Now consider the region $\varrho\geq \varrho_0$. First, consider the case $\beta\in i(0,\infty)$.

Then we instead apply \cite{NIST:DLMF}[\S 15.8.2] to express:
\begin{equation*}
	-\frac{\sin (\pi \im \beta)}{\pi}F(\varrho;x)=\varrho^{-a}\frac{{}_2\mathbf{F}_1(a,a-c_x+1,a-b+1;-\varrho^{-1})}{\Gamma(b)\Gamma(c_x-a)}-\varrho^{-b}\frac{{}_2\mathbf{F}_1(b,b-c_x+1,b-a+1;-\varrho^{-1})}{\Gamma(a)\Gamma(c_x-b)}.
\end{equation*}
We can write:
\begin{multline*}
	\frac{{}_2\mathbf{F}_1(a,a-c_x+1,a-b+1;-\varrho^{-1})}{\Gamma(b)\Gamma(c_x-a)}=\frac{1}{\Gamma(b)\Gamma(c_x-a)\Gamma(a)\Gamma(a-c_x+1)}\sum_{k=0}^{\infty}\frac{\Gamma(a+k)\Gamma(a-c_x+1+k)}{\Gamma(a-b+1+k)\Gamma(k+1)}(-\varrho)^{-k}\\
	=\frac{-i \sinh(\pi (x+\im \beta_{\ell}))}{\pi}\frac{1}{\Gamma(b)\Gamma(a)}\sum_{k=0}^{\infty}\frac{\Gamma(a+k)\Gamma(a-c_x+1+k)}{\Gamma(a-b+1+k)\Gamma(k+1)}(-\varrho)^{-k},
\end{multline*}
where we applied Euler's reflection formula to arrive at the last equality.

We apply Stirling's formula, as above, to obtain:
\begin{multline*}
	\left|\frac{\Gamma(a-c_x+1+k)}{\Gamma(a-b+1+k)}\right|\lesssim e^{\frac{1}{2}(k+\frac{1}{2})\log [\frac{(x+\im\beta-\q)^2+(k+1)^2}{(\im\beta)^2+(k+1)^2}] }e^{-(x+\im\beta-\q)\arctan(\frac{x+\im\beta-\q}{k+1})}\\
	= e^{\frac{x}{2}\frac{k+\frac{1}{2}}{x}\log [\frac{(x+\im\beta-\q)^2+(k+1)^2}{(\im\beta)^2+(k+1)^2}] }e^{-(x+\im\beta-\q)\arctan(\frac{x+\im\beta-\q}{k+1})}.
\end{multline*}
Suppose that $k+1\leq \delta x$ for $\delta>0$. Then using that for $x\leq \delta$, $x\log (1+x^{-1})\leq \delta \log(1+\delta^{-1})$, we obtain for $\delta$ suitably small:
\begin{equation*}
	\sinh(\pi (x+\im \beta_{\ell}))\left|\frac{\Gamma(a-c_x+1+k)}{\Gamma(a-b+1+k)}\right|\lesssim e^{C \delta x}e^{\frac{\pi}{2} x}e^{-\frac{\pi}{2}(1-C\delta)x}\leq e^{C\delta x}.
\end{equation*}
Now suppose that $k+1> \delta x$. Then
\begin{equation*}
	\sinh(\pi (x+\im \beta_{\ell}))\left|\frac{\Gamma(a-c_x+1+k)}{\Gamma(a-b+1+k)}\right|\lesssim e^{\frac{\pi}{2}x}e^{C k}\leq e^{C\delta^{-1}k}.
	\end{equation*}
	By repeating the above estimates with the roles of $a$ and $b$ reversed and taking $\varrho_0$ suitably large, depending on $\delta$, we obtain:
		\begin{equation*}
		\Bigg|\frac{\sin (\pi \im \beta)}{\pi}F(\varrho;x)-\frac{\varrho^{-a}}{\Gamma(b)\Gamma(c_x-a)}+\frac{\varrho^{-b}}{\Gamma(a)\Gamma(c_x-b)}\Bigg|\leq C_{\epsilon,\q} \varrho^{-1} e^{(\frac{\pi}{2}+\epsilon) x}.
	\end{equation*}
	It is straightforward to apply $(\varrho\partial_{\varrho})^m$ and $(x\partial_{x})^l$ to the above estimates to conclude \eqref{eq:largerhoFnestho1}.

Now consider the case $\beta=0$. We apply \cite{NIST:DLMF}[\S 15.8.8] together with Euler's reflection formula to express:
\begin{multline*}
	F(\varrho;x)=\frac{1}{\Gamma(a)^2}\varrho^{-1}\sum_{k=0}^{\infty}\frac{\Gamma(a+k)}{\Gamma(k+1)^2\Gamma(c_x-a-k)}\varrho^{-k}\left(\log \varrho+2\frac{\Gamma'(k+1)}{\Gamma(k+1)}-\frac{\Gamma'(a+k)}{\Gamma(a+k)}-\frac{\Gamma'(c_x-a-k)}{\Gamma(c_x-a-k)}\right)\\
	=-i\frac{\pi\sinh(\pi x)}{\Gamma(a)^2}\varrho^{-1}\sum_{k=0}^{\infty}\frac{\Gamma(a+k)\Gamma(a-c_x+k+1)}{\Gamma(k+1)^2}(-\varrho)^{-k}\left(\log \varrho+2\frac{\Gamma'(k+1)}{\Gamma(k+1)}-\frac{\Gamma'(a+k)}{\Gamma(a+k)}-\frac{\Gamma'(c_x-a-k)}{\Gamma(c_x-a-k)}\right).
\end{multline*}
The terms in the $k$-sum can be estimated as in the $\im \beta>0$ case above to conclude:
\begin{equation*}
		\Bigg|F(\varrho;x)-\frac{\varrho^{-a}}{\Gamma(a)\Gamma(c_x-a)}\left(\log \varrho+2\gamma_{\rm Euler}-\frac{\Gamma'(a)}{\Gamma(a)}-\frac{\Gamma'(c_x-a)}{\Gamma(c_x-a)}\right)\Bigg|\leq C_{\epsilon,\q} \varrho^{-1} \log \varrho e^{(\frac{\pi}{2}+\epsilon) x}.
	\end{equation*}
	It is straightforward to apply $(\varrho\partial_{\varrho})^m$ and $(x\partial_{x})^l$ to the above estimates to conclude \eqref{eq:largerhoFnestho2}.
\end{proof}

\begin{proposition}
\label{prop:imaginarybetatail}
	Let $\beta_{\ell}\in i(0,\infty)$ and $\mathfrak{w}_{\ell}^0(r)=\alpha_+ (\beta_{\ell})r^{\frac{1}{2}+\frac{\beta_{\ell}}{2}+i\q}+\alpha_-(\beta_{\ell})r^{\frac{1}{2}-\frac{\beta_{\ell}}{2}+i\q}$, with $\alpha_{\pm}(is)\in \C$ for $s\in \R$ satisfying $\alpha_-(is)=\alpha_+(-is)$.
	
	Define
	\begin{equation*}
		\zeta(\beta_{\ell}):= -\frac{2^{\beta_{\ell}}\Gamma(-\beta_{\ell}+1)\Gamma(\frac{1}{2}+ \frac{\beta_{\ell}}{2}+i\q)}{\Gamma(\beta_{\ell}+1) \Gamma(\frac{1}{2}- \frac{\beta_{\ell}}{2}+i\q)}\frac{\alpha_-(\beta_{\ell})}{\alpha_+(\beta_{\ell})},
	\end{equation*}
	so that $\zeta(-\beta_{\ell})=\frac{1}{\zeta(\beta_{\ell})}$, and assume that there exists a $\eta_0=\eta(|q|)>0$, such that
	\begin{equation}
	\label{eq:condzeta}
		\sign(q) \log \left|\frac{\alpha_-(\beta_{\ell})}{\alpha_+(\beta_{\ell})}\right|\leq-\frac{\pi}{2}\im \beta_{\ell}-\eta_0 \im \beta_{\ell}+\frac{1}{2}\log \left(\left|\frac{\cosh(\pi(|q|-\frac{1}{2}\im \beta_{\ell})}{\cosh(\pi(|q|+\frac{1}{2}\im \beta_{\ell})}\right|^2\right).
	\end{equation}
Then the following sum is a well-defined smooth function with respect to $(\tau,\varrho_{\infty},\theta,\varphi)$:	
\begin{multline}
\label{eq:psiinftyimbetapos}
		(\Psi^{\infty}_0)_{\ell m}(\tau,r(\tau,\varrho)):=\sign(\q)\beta_{\ell}\sum_{n=0}^{\infty}(\zeta(\sign(\q)\beta_{\ell}))^n(1+\tau)^{-\frac{1}{2}-(n+\frac{1}{2})\sign(\q)\beta_{\ell}+i\q}\\
		\times {}_2\mathbf{F}_1\left(\frac{1}{2}+\sign(\q)\frac{\beta_{\ell}}{2}+iq,\frac{1}{2}-\sign(\q)\frac{\beta_{\ell}}{2}+iq,\frac{1}{2}+i\q-(n+\frac{1}{2})\sign(q)\beta_{\ell};-\varrho\right).
	\end{multline}
	The function $\Psi^{\infty}_0=(\Psi^{\infty}_0)_{\ell m}Y_{\ell m}$ is a solution to \eqref{eq:CSFmink} with $A=\widehat{A}$ and $\h=0$.
	
	Furthermore, $f(\tau,r):=(\mathfrak{w}^0_{\ell})^{-1}(r)(\Psi^{\infty}_0)_{\ell m}(\tau,r)$ satisfies: for all $k,l\leq N$
	\begin{align}
	\label{eq:divisionwimproveddecim0}
		|(rX)^k ((\tau+1)T)^l f|\leq &\: C_N (\tau+1)^{-\frac{1}{2}}(\tau+1+2r)^{-\frac{1}{2}},\\
	\label{eq:divisionwimproveddecim1}
		|(rX)^k ((\tau+1)T)^l Tf|\leq &\: C_N (\tau+1)^{-\frac{3}{2}}(\tau+1+2r)^{-\frac{1}{2}},\\
			\label{eq:divisionwimproveddecim2}
		|(rX)^k ((\tau+1)T)^l Xf|\leq &\: C_N (\tau+1)^{-\frac{3}{2}}(\tau+1+2r)^{-\frac{1}{2}}.
	\end{align}
\end{proposition}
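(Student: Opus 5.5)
We treat the three assertions in order: convergence and smoothness of the series \eqref{eq:psiinftyimbetapos}, the solution property, and then the weighted bounds \eqref{eq:divisionwimproveddecim0}--\eqref{eq:divisionwimproveddecim2}. Throughout we assume $\sign(\q)=1$; the other case follows by replacing $\beta_\ell$ with $-\beta_\ell$, using $\zeta(-\beta_\ell)=\zeta(\beta_\ell)^{-1}$ and $\alpha_-(is)=\alpha_+(-is)$. Write $\beta_\ell=is$ with $s>0$, and set $a=\frac12+\frac{is}{2}+i\q$ and $b=\frac12-\frac{is}{2}+i\q$. The $n$-th term then has third hypergeometric parameter $c_n=\frac12+i\q-(n+\frac12)is$. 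This is exactly $c_x$ in Lemma~\ref{lm:unifomestnhypgeom} with $x=(n+\frac12)s$, after the shift $\frac{\beta}{2}$ is absorbed into the parameter.

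\textbf{Convergence and smoothness.} Since $|(1+\tau)^{-(n+\frac12)is}|=1$, convergence is governed by the product $|\zeta|^n\cdot\sup_\varrho|F_{x_n}(\varrho)|$. In the region $\varrho\le\varrho_0$ we use \eqref{eq:boundedrhoFnestho}--\eqref{eq:boundedrhoFnesthoplus}, which give $|F_{x_n}|\lesssim e^{\frac{\pi}{2}x_n}$. In the region $\varrho\ge\varrho_0$ we use \eqref{eq:largerhoFnestho1}. By Euler reflection and Stirling, the leading coefficients satisfy $|\Gamma(c_x-a)|^{-1},|\Gamma(c_x-b)|^{-1}\sim e^{\frac{\pi}{2}x}$, so that bound is $\lesssim e^{(\frac{\pi}{2}+\epsilon)x_n}$ uniformly in $\varrho$.

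It remains to show that $|\zeta|e^{(\frac{\pi}{2}+\epsilon)s}<1$. We compute $\log|\zeta|$ from its definition. The ratio $|\Gamma(1-is)/\Gamma(1+is)|$ equals $1$, $|2^{is}|=1$, and $\log|\Gamma(a)/\Gamma(b)|=\frac12\log\frac{\cosh\pi(\q+\frac s2)}{\cosh\pi(\q-\frac s2)}$. Adding the hypothesis \eqref{eq:condzeta} gives $\log|\zeta|\le -\frac{\pi}{2}s-\eta_0 s$, the two cosh terms cancelling up to the sign conventions in \eqref{eq:condzeta}. Choosing $\epsilon<\eta_0$ then yields geometric convergence with ratio $e^{-(\eta_0-\epsilon)s}$.

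The lemma includes $(\varrho\partial_\varrho)^m$ and $(x\partial_x)^l$ derivatives with the same exponential bounds, at the cost of polynomial factors in $n$. Hence the series converges in $C^k$ in $(\tau,\varrho)$ on compact sets. To get smoothness in $\varrho_\infty=1/r$ up to $\mathcal{I}^+$, we change variables via $\varrho=(\tau+1)\varrho_\infty/2$; the lemma controls $\varrho$-derivatives near $\varrho=0$ uniformly in $n$. The solution property then follows termwise from Lemma~\ref{lm:solnminkhypg}, with $c=s_n+1$ and $s_n=-\frac12-(n+\frac12)is+i\q\notin\Z$, together with this uniform convergence of derivatives.

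\textbf{Weighted bounds.} For $\varrho\le\varrho_0$ we have $r\gtrsim\tau+1$. Each term is $O((1+\tau)^{-1/2})$ and is stable under $(\tau+1)\mathfrak T$ and $\varrho\partial_\varrho$. Dividing by $\mathfrak w^0_\ell\sim r^{1/2}$ gives \eqref{eq:divisionwimproveddecim0}--\eqref{eq:divisionwimproveddecim1}. Bound \eqref{eq:divisionwimproveddecim2} also follows here, since $X\varrho=-\varrho/r$ and $X(\mathfrak w^0)^{-1}=O(r^{-3/2})$, which suffices when $r\gtrsim\tau$.

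For $\varrho\ge\varrho_0$ we repeat the mechanism of Proposition~\ref{prop:hypgeomrealbeta}, using the expansion \eqref{eq:largerhoFnestho1}. The $\varrho^{-a}$ contributions have coefficients $\zeta^n/(\Gamma(b)\Gamma(c_n-a))$. The $\varrho^{-b}$ contributions have coefficients $\zeta^n/(\Gamma(a)\Gamma(c_n-b))$, with $c_n-b=-nis$; for $n=0$ this coefficient vanishes because $1/\Gamma(0)=0$. Reindexing $n\mapsto n+1$ in the $\varrho^{-b}$ sum, the choice of $\zeta$ makes the two combine into
\[
\mathfrak w^0_\ell(r)\cdot\sum_{m\ge0}\zeta^m\frac{(1+\tau)^{-1-(m+1)is}}{\Gamma(is+1)\Gamma(b)\Gamma(-(m+1)is)}.
\]
The remainder is $O((1+\tau)^{-1/2}\varrho^{-3/2})$, and the $m$-sum converges by the same geometric bound. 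Dividing by $\mathfrak w^0_\ell$ therefore gives an $r$-independent leading part of size $(1+\tau)^{-1}$, plus a remainder whose $X$-derivative gains $r^{-1}$. This yields \eqref{eq:divisionwimproveddecim2}, and the $T$- and commuted estimates follow in the same way.

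\textbf{Main obstacle.} The delicate step is the uniform-in-$n$ control: the error terms in \eqref{eq:largerhoFnestho1}, as well as the $(x\partial_x)^l$-derivatives (which become $n\partial_n$-type factors when differentiating in $\tau$), must be summable against $|\zeta|^n$. This requires the strict margin $\eta_0>\epsilon$, and this margin is precisely what \eqref{eq:condzeta} provides (cf.\ Corollary~\ref{cor:kappanotsmallstatsoln}).
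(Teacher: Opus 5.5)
Your proposal is correct and is essentially the paper's proof, written out in more detail. You get geometric convergence of the series from \eqref{eq:condzeta} together with the uniform bounds of Lemma~\ref{lm:unifomestnhypgeom}, the solution property by applying Lemma~\ref{lm:solnminkhypg} termwise and passing to the limit, and the weighted bounds by rerunning the proof of Proposition~\ref{prop:hypgeomrealbeta} with $N_{\beta_{\ell}}=\infty$, recombining the $\varrho^{-b}$ terms with the $\varrho^{-a}$ terms into a multiple of $\mathfrak{w}^0_{\ell}$ via the choice of $\zeta$.

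Two small corrections. Since $|\Gamma(\tfrac12+iy)|^2=\pi/\cosh(\pi y)$, one has $\log|\Gamma(a)/\Gamma(b)|=\tfrac12\log\frac{\cosh\pi(q-\frac s2)}{\cosh\pi(q+\frac s2)}\le 0$, which is the inverse of what you wrote; this still gives $\log|\zeta|\le-(\tfrac{\pi}{2}+\eta_0)s$ under \eqref{eq:condzeta}. Also, the lower bound $|\mathfrak{w}^0_{\ell}|\gtrsim r^{1/2}$ you use for imaginary $\beta_{\ell}$ requires $|\alpha_+|\neq|\alpha_-|$. This does hold, because the right-hand side of \eqref{eq:condzeta} is negative, but you should state it.
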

\begin{proof}
	Without loss of generality, we may assume that $q>0$. If $q<0$, we simply repeat the argument with $\beta_{\ell}$ replaced by $-\beta_{\ell}$. We can express:
	\begin{equation*}
		(\Psi^{\infty}_0)_{\ell m}(\tau,r(\tau,\varrho))=\beta_{\ell}\sum_{n=0}^{\infty}\zeta^n(1+\tau)^{-a+2i\q-n\beta_{\ell}}F_{n \im \beta_{\ell}}(\varrho;\beta_{\ell}),
	\end{equation*}
	with $F_{x}(\varrho;\beta)$ defined in Lemma \ref{lm:unifomestnhypgeom}.
	
	By assumption on $\zeta$, \eqref{eq:boundedrhoFnestho} and \eqref{eq:boundedrhoFnesthoplus}, we can estimate:
	\begin{equation*}
		\log \zeta+\frac{1}{n}\log |F_{n \im \beta_{\ell}}(\varrho;\beta_{\ell})|\leq -\eta_0 \im \beta_{\ell}+\frac{\log C_{\varrho,\q}}{n}.
	\end{equation*}
	Hence, $|\zeta|F_{n \im \beta_{\ell}}(\varrho;\beta_{\ell})|^{\frac{1}{n}}|<e^{-\eta_0 \im \beta_{\ell}}<1$, so for $\varrho\leq \varrho_0$:
	\begin{equation*}
		|(\Psi^{\infty}_0)_{\ell m}(\tau,r(\tau,\varrho))|\leq  C_{\varrho,\q}(1+\tau)^{-\frac{1}{2}}|\beta_{\ell}|\sum_{n=0}^{\infty}e^{-\eta_0 \im \beta_{\ell}}\leq  C_{\varrho,\q}\eta_0.
	\end{equation*}
	This means that the series converges. Commutation with $\mathfrak{T}$ and $\varrho\mathfrak{X}$ is straightforward and we conclude that $(\Psi^{\infty}_0)_{\ell m}$ is smooth.
	
	Since $\Psi^{\infty}_{0}$ is a well-defined limit of a sum of solutions to \eqref{eq:maineqradfield} for $M=0$ by Lemma \ref{lm:solnminkhypg}, we conclude by linearity that $\Psi^{\infty}_{0}$ must itself by a solution.

The remaining estimates in the statement of the proposition follow by repeating the steps in the proof of Proposition \ref{prop:hypgeomrealbeta} with $N_{\beta_{\ell}}$ replaced by $\infty$, using \eqref{eq:largerhoFnestho1}.
\end{proof}

Finally, we consider the case $\beta_{\ell}=0$. We could interpret the limit of \eqref{eq:Psi0infimbeta} as $\beta_{\ell}\to 0$ as the limit of a Riemann sum, evaluated along $x=n\im \beta$, to obtain an integral over $(0,\infty)$. In the proposition below, we will instead directly define $(\Psi^{\infty}_{0})_{\ell m}$ as an integral.
\begin{proposition}
	\label{prop:hypgeombeta0}
	Let $\beta_{\ell}=0$ and $\alpha_+,\alpha_-\in \C$. Let $\mathfrak{w}_{\ell}^0(r)=r^{\frac{1}{2}+i\q}(\alpha_+ \log r+\alpha_-)$.
	
Define
	 \begin{equation}
	\label{eq:defPsiinftybeta0}
	 	(\Psi^{\infty}_{0})_{\ell m}(\tau,r(\tau,\varrho)):=(\tau+1)^{-\frac{1}{2}+i\q}\int_0^{\infty}(\tau+1)^{-ix}e^{-\eta x}{}_2\mathbf{F}_1\left(\frac{1}{2}+iq,\frac{1}{2}+iq,\frac{1}{2}+iq-ix;-\varrho\right)\,dx.
	 \end{equation}

	with
	\begin{equation*}
			\eta:=i \sign(q)\left[\frac{\alpha_-}{\alpha_+}-\log 2+2\gamma_{\rm Euler}-\frac{\Gamma'(\frac{1}{2}+i\q)}{\Gamma(\frac{1}{2}+i\q)}\right],
	\end{equation*}
	and with $\gamma_{\rm Euler}$ the Euler--Mascheroni constant. Assume that there exists a $\eta_0=\eta(|q|)>0$, such that
	\begin{equation}
	\label{eq:condeta}
		\sign(q) \im \left(\frac{\alpha_-}{\alpha_+}\right)\leq -\frac{\pi}{2}\left(1-\tanh(\pi |q|)\right)+\eta_0 .
	\end{equation}
		
	Then $\Psi^{\infty}_0=(\Psi^{\infty}_0)_{\ell m}Y_{\ell m}$ is well-defined and smooth with respect to the coordinate chart $(\tau,\varrho_{\infty},\theta,\varphi)$ and it is a solution to \eqref{eq:CSFmink} with $A=\widehat{A}$ and $\h=0$.
	
	Furthermore, $f(\tau,r):=(\mathfrak{w}^0_{\ell})^{-1}(r)(\Psi^{\infty}_0)_{\ell m}(\tau,r)$ satisfies: for all $k,l\leq N$
	\begin{align}
	\label{eq:divisionwimproveddec0beta0}
		|(rX)^k ((\tau+1)T)^l f|\leq &\: C_N (\log r)^{-1}(\tau+1)^{-\frac{1}{2}}(\tau+1+2r)^{-\frac{1}{2}},\\
	\label{eq:divisionwimproveddec1beta0}
		|(rX)^k ((\tau+1)T)^l Tf|\leq &\: C_N (\log r)^{-1}(\tau+1)^{-\frac{3}{2}}(\tau+1+2r)^{-\frac{1}{2}},\\
			\label{eq:divisionwimproveddec2beta0}
		|(rX)^k ((\tau+1)T)^l Xf|\leq &\: C_N (\log r)^{-1}(\tau+1)^{-\frac{3}{2}}(\tau+1+2r)^{-\frac{1}{2}}.
	\end{align}
\end{proposition}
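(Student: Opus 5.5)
The plan is to follow the same scheme as Proposition \ref{prop:imaginarybetatail}, but with the discrete sum over $n$ replaced by an integral in $x$. Throughout I write $a=\frac{1}{2}+i\q$ and $c_x=a-ix$, so the integrand is $(\tau+1)^{-ix}e^{-\eta x}F_x(\varrho;0)$ with $F_x$ as in Lemma \ref{lm:unifomestnhypgeom}.

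\textbf{Step 1: well-definedness and smoothness.} I first establish convergence and smoothness of \eqref{eq:defPsiinftybeta0}. For $\varrho\leq\varrho_0$, \eqref{eq:boundedrhoFnestho} and \eqref{eq:boundedrhoFnesthoplus} give $|F_x|\lesssim 1/|\Gamma(c_x)|\lesssim e^{\frac{\pi}{2}x}$. The integrand is therefore bounded by $e^{(\frac{\pi}{2}-\re\eta)x}$. I will compute $\re\eta=\sign(\q)\im(\alpha_-/\alpha_+)\cdot(-1)+\sign(\q)\,\im\frac{\Gamma'(a)}{\Gamma(a)}$, using $\im\psi(\frac12+i\q)=\frac{\pi}{2}\tanh(\pi\q)$. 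Condition \eqref{eq:condeta} is designed exactly so that $\re\eta-\frac{\pi}{2}\geq c\eta_0$-type positivity holds; I will check this sign bookkeeping and adjust $\epsilon<\eta_0$. For $\varrho\geq\varrho_0$, \eqref{eq:largerhoFnestho2} bounds $F_x$ by $e^{(\frac{\pi}{2}+\epsilon)x}\varrho^{-1/2}\log\varrho$, so the same exponential margin gives convergence. Differentiating under the integral with $(\tau+1)\mathfrak{T}$ and $\varrho\partial_\varrho$ only produces polynomial factors in $x$ (via the $x\partial_x$ bounds), which preserves convergence. Each integrand is a solution by Lemma \ref{lm:solnminkhypg} with $s=-\frac12+i\q-ix$. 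The equation \eqref{eq:CSFminkrhotau} is linear with coefficients independent of $x$, so the integral is again a solution. Smoothness in $\varrho_\infty=r^{-1}$ follows because at fixed $\tau$, $\varrho=\frac{\tau+1}{2}\varrho_\infty$ is smooth and small near $\mathcal{I}^+$.

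\textbf{Step 2: asymptotics for large $\varrho$.} This is the main obstacle. For $\varrho\leq\varrho_0$ the bounds \eqref{eq:divisionwimproveddec0beta0}--\eqref{eq:divisionwimproveddec2beta0} follow from Step 1, since there $r\gtrsim\tau$ and $\mathfrak{w}^0_\ell\sim r^{1/2}\log r$. For $\varrho\geq\varrho_0$ I insert the leading term of \eqref{eq:largerhoFnestho2}:
\begin{equation*}
\varrho^{-a}\frac{\log\varrho+2\gamma_{\rm Euler}-\psi(a)-\psi(-ix)}{\Gamma(a)\Gamma(-ix)}.
\end{equation*}
Using $\varrho^{-a}(\tau+1)^{-\frac12+i\q}=(2r)^{a}(\tau+1)^{-1}$, the leading contribution becomes
\begin{equation*}
(2r)^a(\tau+1)^{-1}\int_0^\infty(\tau+1)^{-ix}e^{-\eta x}\frac{\log(\tau+1)-\log(2r)+C_0-\psi(-ix)}{\Gamma(a)\Gamma(-ix)}\,dx.
\end{equation*}
The aim is to split this as $\mathfrak{w}^0_\ell(r)\cdot g(\tau)$ plus an $X$-improved remainder. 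The $r$-dependence is only through $r^a(\log(2r)+\text{const})$, and this must be proportional to $\mathfrak{w}^0_\ell=r^a(\alpha_+\log r+\alpha_-)$. Matching the $\log r$ coefficient to the constant coefficient forces the choice of $\eta$; it plays the role of $\zeta$ in Proposition \ref{prop:hypgeomrealbeta}. Concretely, the coefficient of $\log r$ is $-\int(\ldots)/\Gamma(-ix)\,dx$, and the constant term involves $\int(\log(\tau+1)+C_0-\log 2-\psi(-ix))(\ldots)\,dx$. One integration by parts in $x$ of $(\tau+1)^{-ix}=\frac{i}{\log(\tau+1)}\partial_x(\tau+1)^{-ix}$, together with $\partial_x(1/\Gamma(-ix))=i\psi(-ix)/\Gamma(-ix)$, converts the $\log(\tau+1)$ and $\psi(-ix)$ terms into one another. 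It leaves an $\eta$-dependent multiple of the $\log r$ coefficient, and the stated $\eta$ makes the ratio equal to $\alpha_-/\alpha_+$. If this exact cancellation does not close cleanly, the fallback is to bound the mismatch by an extra factor $(\log(\tau+1))^{-1}$, which suffices for the stated estimates with $(\log r)^{-1}$ losses. The $\varrho^{-1}$ error term in \eqref{eq:largerhoFnestho2} contributes $(\tau+1)^{-\frac12}O(\varrho^{-\frac32}\log\varrho)$. After dividing by $\mathfrak{w}^0_\ell$ and applying $X$, this gains $r^{-1}$, giving \eqref{eq:divisionwimproveddec2beta0}.

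\textbf{Step 3: commuted bounds.} The bounds \eqref{eq:divisionwimproveddec0beta0} and \eqref{eq:divisionwimproveddec1beta0} follow directly from the above, since $T(\mathfrak{w}^0_\ell)^{-1}=0$. The overall $(\log r)^{-1}$ comes from $|\mathfrak{w}^0_\ell|\gtrsim r^{1/2}\log r$ for $r\geq R$. All commutations with $rX$ and $(\tau+1)T$ reduce to $\varrho\partial_\varrho$ and $(\tau+1)\mathfrak{T}$ acting on the integrand, and these are controlled by the derivative versions of Lemma \ref{lm:unifomestnhypgeom}.
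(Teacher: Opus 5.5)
Your route is the paper's: exponential decay of the $x$-integrand from $\re\eta>\frac{\pi}{2}$, and linearity plus dominated convergence for the solution property. For $\varrho\geq\varrho_0$, you integrate by parts in $x$; the boundary term at $x=0$ vanishes since $1/\Gamma(0)=0$. This turns $\log(\tau+1)$ into $i\eta+\frac{\Gamma'}{\Gamma}(-ix)$ and cancels the digamma term exactly. With the stated $\eta$ (for $\q>0$) the bracket becomes $-(\log r+\alpha_-/\alpha_+)$, i.e.\ $\mathfrak{w}^0_\ell(r)$ times a function of $\tau$.

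However, the remainder step does not give what you claim. The error term in \eqref{eq:largerhoFnestho2} carries a factor $\log\varrho$. Your bound $(\tau+1)^{-\frac12}O(\varrho^{-\frac32}\log\varrho)$ therefore yields, after division by $\mathfrak{w}^0_\ell$ and one $X$-derivative, only $(\log r)^{-1}(\tau+1)^{-2}\log\varrho$. At fixed $r$ one has $\log\varrho\sim\log\tau$, so \eqref{eq:divisionwimproveddec2beta0} is missed by a factor $\log\tau$. The paper removes this $\log(1+\tau)$ loss by splitting $\log\varrho=\log(1+\tau)-\log r-\log 2$ in the remainder as well, and integrating by parts in $x$ on the $\log(1+\tau)$ piece. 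Writing the error as $\log\varrho\,e^{(\frac{\pi}{2}+\epsilon)x}R(\varrho,x)$, one has
\begin{equation*}
\int_0^{\infty}(1+\tau)^{-ix}e^{-(\eta-\frac{\pi}{2}-\epsilon)x}\log(1+\tau)R\,dx=-iR(\varrho,0)-i\int_0^{\infty}(1+\tau)^{-ix}\partial_x\bigl(e^{-(\eta-\frac{\pi}{2}-\epsilon)x}R\bigr)\,dx,
\end{equation*}
which contains no $\log(1+\tau)$; this needs $\epsilon<\re\eta-\frac{\pi}{2}$. You already use this device on the leading term; it must be applied to the error term too.

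Two smaller remarks. First, your fallback would not work. An unmatched multiple of $r^{\frac12+i\q}(\tau+1)^{-1}(\log(\tau+1))^{-k}$ contributes to $Xf$ a term of size $r^{-1}(\log r)^{-2}(\tau+1)^{-1}(\log(\tau+1))^{-k}$. At bounded $r$ this misses $(\tau+1)^{-2}$ by essentially a full power of $\tau$. So the exact choice of $\eta$ is essential, just as $\zeta$ is when $\beta_\ell\in(0,1)$; fortunately the cancellation does close.

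Second, the sign check you defer fails for \eqref{eq:condeta} as printed. With your (correct) value $\im\frac{\Gamma'}{\Gamma}(\frac12+i\q)=\frac{\pi}{2}\tanh(\pi\q)$, the condition only gives $\re\eta\geq\frac{\pi}{2}-\eta_0$. One needs $-\eta_0$ on the right of \eqref{eq:condeta} to get the margin $\re\eta\geq\frac{\pi}{2}+\eta_0$ that the paper uses. That corrected condition does hold for the $\alpha_\pm$ of Proposition \ref{prop:statsol}; in the extremal case $\re\eta=\frac{\pi}{2}+\pi\tanh(\pi|\q|)$.
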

\begin{proof}
We can express:
\begin{equation*}
	(\Psi^{\infty}_{0})_{\ell m}(\tau,\varrho)=(\tau+1)^{-\frac{1}{2}+i\q}\int_0^{\infty}(\tau+1)^{-ix}e^{-\sign(\q)\eta x}F_x(\varrho;0)\,dx,
\end{equation*}
with $F_{x}(\varrho;\beta)$ defined in Lemma \ref{lm:unifomestnhypgeom}.

Without loss of generality assume that $q>0$. In order to conclude that the above integral is well-defined, we will verify that the integrand decays exponentially in $x$. Note that
\begin{equation*}
	\re \eta=-\im \left(\frac{\alpha_-}{\alpha_+}\right)+\im \frac{\Gamma'(\frac{1}{2}+iq)}{\Gamma(\frac{1}{2}+iq)}.
\end{equation*}
Furthermore,
\begin{equation*}
	\im \frac{\Gamma'(\frac{1}{2}+iq)}{\Gamma(\frac{1}{2}+iq)}=\re \frac{d}{dq}\log \Gamma\left(\frac{1}{2}+iq\right)=\frac{1}{2}\frac{d}{dq}\log\left|\Gamma\left(\frac{1}{2}+iq\right)\right|^2=-\frac{1}{2\pi}\frac{d}{dq}\log(\cosh(\pi q))=-\frac{1}{2\pi}\tanh(\pi q).
\end{equation*}
Hence, by assumption on $\frac{\alpha_-}{\alpha_+}$
\begin{equation*}
	\re \eta\geq \frac{\pi}{2}+\eta_0
\end{equation*}
Applying \eqref{eq:boundedrhoFnestho} and \eqref{eq:boundedrhoFnesthoplus}, then implies that for $\varrho\leq \varrho_0$:
\begin{equation*}
	\left|e^{-\sign(\q)\eta x}F_x(\varrho;0)\right|\leq e^{-\eta_0 x}.
\end{equation*}
The absolute value of the integrand is therefore exponentially decaying, so the integral is well-defined. Furthermore, by Lemma \ref{lm:solnminkhypg}, the linearity of \eqref{eq:CSFminkrhotau} and the dominated convergence theorem, $\Psi^{\infty}_0$ must be a solution to \eqref{eq:CSFminkrhotau}. Since we can freely commute with $\varrho\mathfrak{X}$ and $\mathfrak{T}$, we moreover have that $\Psi^{\infty}_0$ is a smooth solution, so $(\Psi^{\infty}_0)_{\ell m}(\tau,r)Y_{\ell m}(\theta,\varphi)$ is a smooth solution to \eqref{eq:CSFmink}. 

It remains to consider the function $f=\mathfrak{w}_0^{-1}\Psi^{\infty}_0$ for $\varrho\geq \varrho_0$. Note first of all that by \eqref{eq:largerhoFnestho2}, we can write:
\begin{equation}
\label{eq:keysplitFrhoxbeta0}
	F(\varrho;x)=\frac{\varrho^{-a}}{\Gamma(\frac{1}{2}+iq)\Gamma(-ix)}\left(\log \varrho+2\gamma_{\rm Euler}-\frac{\Gamma'(\frac{1}{2}+iq)}{\Gamma(\frac{1}{2}+iq)}-\frac{\Gamma'(-ix)}{\Gamma(-ix)}\right)+\log \varrho e^{(\frac{\pi}{2}+\epsilon)x}R(\varrho,x),
\end{equation}
where $(x\partial_x)^{\ell} R(\varrho,x)=O_{\infty}(\varrho^{-1})$. We split
\begin{equation*}
	\log\varrho=\log(1+\tau)-\log r-\log 2
\end{equation*}

can integrate by parts in $x$, taking $\epsilon<\eta_0$ to estimate:
\begin{multline*}
	\int_0^{\infty}(1+\tau)^{-ix}e^{-(\eta-\frac{\pi}{2}-\epsilon)x}\log (1+\tau) R(\varrho,x)\,dx=-\frac{1}{i}\int_0^{\infty} \frac{d}{dx}(e^{-ix \log (1+\tau)})e^{-(\eta-\frac{\pi}{2}-\epsilon)x} R(\varrho,x)\,dx\\
	=-iR(\varrho,0)-i\int_0^{\infty}e^{-ix \log (1+\tau)}\frac{d}{dx}\left(e^{-(\eta-\frac{\pi}{2}-\epsilon)x}R(\varrho,x)\right)\,dx.
\end{multline*}
The right-hand side can be bounded by $\varrho^{-\frac{3}{2}}$ and after applying $X$, we conclude that:
\begin{equation*}
	\left|X\left((\mathfrak{w}_{\ell}^0)^{-1}(1+\tau)^{-\frac{1}{2}+iq}\int_0^{\infty}(1+\tau)^{-ix}e^{-(\eta-\frac{\pi}{2}-\epsilon)x}\log (1+\tau) R(\varrho,x)\,dx\right)\right|\leq C (1+\tau)^{-2}(\log r)^{-1},
\end{equation*}
as desired. It is straightforward to commute with $rX$ and $(\tau+1) T$.

We will now consider the remaining term in \eqref{eq:keysplitFrhoxbeta0} and show that we can factor out $\mathfrak{w}_{\ell}^0(r)$, by choosing $\eta$ appropriately. As above, we first split $\log\varrho=\log(1+\tau)-\log r-\log 2$ and integrate by parts the $\log(1+\tau)$ contribution. We obtain:
\begin{multline*}
	\frac{\varrho^{-\frac{1}{2}-iq}}{\Gamma(\frac{1}{2}+iq)}\int_0^{\infty}(1+\tau)^{-ix}e^{-\eta x}\log (1+\tau) \frac{1}{\Gamma(-ix)}\,dx=-\frac{\varrho^{-\frac{1}{2}-iq}}{i \Gamma(\frac{1}{2}+iq)}\int_0^{\infty} \frac{d}{dx}(e^{-ix \log (1+\tau)})e^{-\eta x} \frac{1}{\Gamma(-ix)}\,dx\\
	=\frac{\varrho^{-\frac{1}{2}-iq}}{i \Gamma(\frac{1}{2}+iq)}\int_0^{\infty} \left[-\eta+i\frac{\Gamma'(-ix)}{\Gamma(-ix)}\right]\frac{1}{\Gamma(-ix)}(1+\tau)^{-ix}e^{-\eta x}\,dx,
\end{multline*}
where we used that the boundary term at $x=0$ vanishes since $\frac{1}{\Gamma(-ix)}$ vanishes at $x=0$.

The full integral then becomes:
\begin{equation*}
	-\frac{\varrho^{-\frac{1}{2}-iq}(\tau+1)^{-\frac{1}{2}+i\q}}{\Gamma(\frac{1}{2}+iq)}\int_0^{\infty}(1+\tau)^{-ix} \frac{e^{-\eta x}}{\Gamma(-ix)}\left[\log r+\log 2 -2\gamma_{\rm Euler}+\frac{\Gamma'(\frac{1}{2}+iq)}{\Gamma(\frac{1}{2}+iq)}-i\eta\right]\,dx.
\end{equation*}
Now take
\begin{equation*}
	-i\eta=\frac{\alpha_-}{\alpha_+}-\log 2+2\gamma_{\rm Euler}-\frac{\Gamma'(\frac{1}{2}+iq)}{\Gamma(\frac{1}{2}+iq)}.
\end{equation*}
Then we can write the integral as follows.
\begin{equation*}
	-2^{\frac{1}{2}+iq}\frac{(\tau+1)^{-1}}{\Gamma(\frac{1}{2}+iq)}\mathfrak{w}_{\ell}^0(r)\int_0^{\infty}(1+\tau)^{-ix} \frac{e^{-\eta x}}{\Gamma(-ix)}\,dx.
\end{equation*}
After dividing by $\mathfrak{w}_{\ell}^0(r)$, the above expression is independent of $r$ and hence, the $X$-derivative vanishes. We conclude \eqref{eq:divisionwimproveddec0beta0}--\eqref{eq:divisionwimproveddec2beta0} for $\varrho\geq \varrho_0$.

We integrate by parts and use that ${}_2\mathbf{F}_1\left(\frac{1}{2}+iq,\frac{1}{2}+iq,\frac{1}{2}+iq;-\varrho\right)=\frac{(1+\varrho)^{-\frac{1}{2}-iq}}{\Gamma(\frac{1}{2}+iq)}$to rewrite:
\begin{multline*}
	(\Psi^{\infty}_{0})_{\ell m}(\tau,r(\tau,\varrho))=i((\log(1+\tau))^{-1}(\tau+1)^{-\frac{1}{2}+i\q}\\
\times \int_0^{\infty}\frac{d}{dx}((\tau+1)^{-ix})e^{-\eta x}{}_2\mathbf{F}_1\left(\frac{1}{2}+iq,\frac{1}{2}+iq,\frac{1}{2}+iq-ix;-\varrho\right)\,dx\\
=-i((\log(1+\tau))^{-1}(\tau+1)^{-\frac{1}{2}+i\q}\frac{(1+\varrho)^{-\frac{1}{2}-iq}}{\Gamma(\frac{1}{2}+iq)}\\
-i((\log(1+\tau))^{-1}(\tau+1)^{-\frac{1}{2}+i\q}\\
\times \int_0^{\infty}(\tau+1)^{-ix}\frac{d}{dx}\left(e^{-\eta x}{}_2\mathbf{F}_1\left(\frac{1}{2}+iq,\frac{1}{2}+iq,\frac{1}{2}+iq-ix;-\varrho\right)\right)\,dx
\end{multline*}
We therefore obtain:
\begin{equation*}
|(rX)^l((1+\tau)T)^m\Psi^{\infty}_0|\lesssim 	(\log(1+\tau))^{-1}((1+\tau)^{-\frac{1}{2}-\frac{1}{2}\re\beta_{\ell}}.
\end{equation*}
For $\varrho\leq \varrho_0$, $r^{-1}\lesssim (1+\tau)^{-1}$, so \eqref{eq:divisionwimproveddec0beta0}--\eqref{eq:divisionwimproveddec2beta0} then follow from the above estimates.
\end{proof}

\begin{proposition}
\label{eq:asymptPsiinfty0}
	The functions $(\Psi_0^{\infty})_{\ell m}(\tau,r)$, defined in Propositions \ref{prop:hypgeomrealbeta}, \ref{prop:imaginarybetatail} and \ref{prop:hypgeombeta0} satisfy the following late-time asymptotics:
	\begin{enumerate}[label=\emph{(\roman*)}]
		\item In the limit $r\to \infty$:
		\begin{align*}
			(\Psi_0^{\infty})_{\ell m}(\tau,\infty)=&\:\beta_{\ell}(1+\tau)^{-\frac{1}{2}-\frac{\beta_{\ell}}{2}+i\q}\sum_{n=0}^{N_{\beta_{\ell}}}\frac{\zeta^n(\beta_{\ell})(1+\tau)^{-n\beta_{\ell}}}{\Gamma\left(\frac{1}{2}+i\q-(n+\frac{1}{2})\beta_{\ell}\right)}\\
			=&\: \frac{\beta_{\ell}(1+\tau)^{-\frac{1}{2}-\frac{\beta_{\ell}}{2}+i\q}}{\Gamma\left(\frac{1}{2}+i\q-\frac{1}{2}\beta_{\ell}\right)}+O((1+\tau)^{-\frac{1}{2}-\frac{3\beta_{\ell}}{2}}),\quad (\beta_{\ell}\in (0,1)),\\
			(\Psi_0^{\infty})_{\ell m}(\tau,\infty)=&\:\sign(\q)\beta_{\ell}(1+\tau)^{-\frac{1}{2}-\sign(\q)\frac{\beta_{\ell}}{2}+i\q}\sum_{n=0}^{\infty}\frac{\zeta^n(\sign(\q)\beta_{\ell})(1+\tau)^{-\sign(\q)n\beta_{\ell}}}{\Gamma\left(\frac{1}{2}+i\q-(n+\frac{1}{2})\sign(\q)\beta_{\ell}\right)},\quad (\beta_{\ell}\in i(0,\infty)),\\
			(\Psi_0^{\infty})_{\ell m}(\tau,\infty)=&\:\frac{-i}{\Gamma\left(\frac{1}{2}+i\q\right)}(\tau+1)^{-\frac{1}{2}+i\q}(\log(1+\tau))^{-1}+(\log(1+\tau))^{-2}O_{\infty}((1+\tau)^{-\frac{1}{2}}),\quad (\beta_{\ell}=0).
		\end{align*}
		\item At $r=r_0>0$:
\begin{align*}
			(\Psi^{\infty}_0)_{\ell m}(\tau,r_0)=&-2^{\frac{1}{2}+\frac{1}{2}\beta_{\ell}}\frac{\pi \beta_{\ell}}{\alpha_+\sin (\pi \beta_{\ell})}(1+\tau)^{-1-\beta_{\ell}}\mathfrak{w}_{\ell}^0(r)\\
\times & \left[\sum_{n=0}^{N_{\beta_{\ell}}}\zeta^n(\beta_{\ell}) \frac{(1+\tau)^{-n\beta_{\ell}}}{\Gamma(\beta_{\ell}+1) \Gamma(\frac{1}{2}-\frac{1}{2}\beta_{\ell}+iq)\Gamma(-(n+1)\beta_{\ell})}\right]\\
	+&\:O_{\infty}\left((1+\tau)^{-2}\right)=-2^{\frac{1}{2}+\frac{1}{2}\beta_{\ell}}\frac{\pi \beta_{\ell}}{\alpha_+\sin (\pi \beta_{\ell}) \Gamma(\beta_{\ell}+1) \Gamma(\frac{1}{2}-\frac{1}{2}\beta_{\ell}+iq)\Gamma(-\beta_{\ell})}\mathfrak{w}_0(r)(1+\tau)^{-1-\beta_{\ell}}\\
	&+O_{\infty}\left((1+\tau)^{-1-2\beta_{\ell}}\right),\quad (\beta_{\ell}\in (0,1)),\\
			(\Psi_0^{\infty})_{\ell m}(\tau,\infty)=&\:-2^{\frac{1}{2}+\frac{1}{2}\sign(\q)\beta_{\ell}}\frac{\pi \beta_{\ell}}{\alpha_-\sin (\pi \beta_{\ell})}(1+\tau)^{-1}\mathfrak{w}_{\ell}^0(r)\\
			\times& \left[\sum_{n=0}^{\infty}\zeta^n(\sign(\q)\beta_{\ell}) \frac{(1+\tau)^{-\sign(\q)n\beta_{\ell}}}{\Gamma(\sign(\q)\beta_{\ell}+1) \Gamma(\frac{1}{2}-\frac{1}{2}\sign(\q)\beta_{\ell}+iq)\Gamma(-(n+1)\sign(\q)\beta_{\ell})}\right]\\
	+&\:O_{\infty}\left((1+\tau)^{-2}\right),\quad (\beta_{\ell}\in i(0,\infty)),\\
	(\Psi^{\infty}_0)_{\ell m}(\tau,r_0)=&-\frac{2^{\frac{1}{2}+iq}}{\Gamma(\frac{1}{2}+iq)}\mathfrak{w}_{\ell}^0(r)\left[\log(1+\tau))^{-2}(\tau+1)^{-1-\sign(\q)\beta_{\ell}}+(\log(1+\tau))^{-3}O_{\infty}((\tau+1)^{-1}\right]\\
	+&\:O_{\infty}((1+\tau)^{-2}),\quad (\beta_{\ell}=0).
		\end{align*}
	\end{enumerate}
\end{proposition}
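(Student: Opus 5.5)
The asymptotics follow from explicit hypergeometric expansions of the constituent functions, read off from their definitions in Propositions \ref{prop:hypgeomrealbeta}, \ref{prop:imaginarybetatail} and \ref{prop:hypgeombeta0}. I would work in the $(\tau,\varrho)$ coordinates with $\varrho=\frac{\tau+1}{2r}$. Then part (i), the limit $r\to\infty$, is simply the evaluation at $\varrho=0$, and part (ii), $r=r_0$ fixed with $\tau\to\infty$, is the regime $\varrho\to\infty$.

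For (i) I would use ${}_2\mathbf{F}_1(a,b,c;0)=1/\Gamma(c)$ termwise. In the case $\beta_\ell\in(0,1)$ the sum is finite, so plugging in gives the stated formula directly, and the remainder is $O((1+\tau)^{-\frac12-\frac32\beta_\ell})$ coming from the $n\geq 1$ terms. In the case $\beta_\ell\in i(0,\infty)$ I would justify the termwise evaluation by the uniform convergence shown in Proposition \ref{prop:imaginarybetatail}. That argument uses \eqref{eq:boundedrhoFnestho}, \eqref{eq:boundedrhoFnesthoplus} and condition \eqref{eq:condzeta}, which give geometric decay $e^{-\eta_0 n\im\beta_\ell}$ of the summands uniformly for $\varrho\leq\varrho_0$. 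For $\beta_\ell=0$ I would set $\varrho=0$ in \eqref{eq:defPsiinftybeta0}, which gives $(\tau+1)^{-\frac12+i\q}\int_0^\infty(\tau+1)^{-ix}e^{-\eta x}\Gamma(\tfrac12+i\q-ix)^{-1}dx$. Integrating by parts once via $(\tau+1)^{-ix}=\frac{i}{\log(1+\tau)}\partial_x(\tau+1)^{-ix}$ produces the boundary term $-i/(\Gamma(\frac12+i\q)\log(1+\tau))$. A second integration by parts bounds the remaining integral by $(\log(1+\tau))^{-2}$, using the exponential decay $\re\eta-\frac{\pi}{2}\geq\eta_0>0$ from the proof of Proposition \ref{prop:hypgeombeta0}.

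For (ii) I would reuse the computations already carried out in the proofs of those propositions in the region $\varrho\geq\varrho_0$. For $\beta_\ell\in(0,1)$, combining \eqref{eq:hypgeomasymp} and \eqref{eq:hypgeomasymp2} with the choice of $\zeta$, the proof of Proposition \ref{prop:hypgeomrealbeta} already yields
\[
2^{-a}\alpha_+\tfrac{-\sin(\pi\beta_\ell)}{\pi\beta_\ell}(\Psi_0^\infty)_{\ell m}=\mathfrak{w}^0_\ell(r)\sum_{m}\zeta^m\tfrac{(1+\tau)^{-1-(m+1)\beta_\ell}}{\Gamma(\beta_\ell+1)\Gamma(b)\Gamma(-(m+1)\beta_\ell)}+(\tau+1)^{-\frac12-\frac{\beta_\ell}{2}}O_\infty(\varrho^{-\frac32-\frac{\beta_\ell}{2}}).
\]
At fixed $r_0$ we have $\varrho\sim\tau$, so the error is $O(\tau^{-2-\beta_\ell})$, and rearranging the constants gives the claimed expansion. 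The leading term is $m=0$, with the next correction of order $(1+\tau)^{-1-2\beta_\ell}$. The case $\beta_\ell\in i(0,\infty)$ is identical with $N_{\beta_\ell}=\infty$, using \eqref{eq:largerhoFnestho1} in place of \eqref{eq:hypgeomasymp}. For $\beta_\ell=0$ I would use the factorisation at the end of the proof of Proposition \ref{prop:hypgeombeta0}. There the main part equals $-2^{\frac12+i\q}\Gamma(\tfrac12+i\q)^{-1}(\tau+1)^{-1}\mathfrak{w}^0_\ell(r)\int_0^\infty(1+\tau)^{-ix}e^{-\eta x}\Gamma(-ix)^{-1}dx$. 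Since $1/\Gamma(-ix)=-ix+O(x^2)$, two integrations by parts in $x$ (the boundary term vanishes at the first step) evaluate the integral as $(\log(1+\tau))^{-2}+O((\log(1+\tau))^{-3})$. The $R(\varrho,x)$ remainder contributes $O(\tau^{-2})$ after the integration by parts already performed there.

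The main obstacle is the $\beta_\ell\in i(0,\infty)$ case of (ii). There the remainder in \eqref{eq:largerhoFnestho1} grows like $e^{(\frac{\pi}{2}+\epsilon)n\im\beta_\ell}$, so it must be summed against $\zeta^n$, and only the margin $\eta_0$ in \eqref{eq:condzeta} makes the total error series converge and stay $O(\tau^{-2})$. I would choose $\epsilon<\eta_0$ and check that this margin survives after accounting for the $\Gamma(c_n-a)^{-1}$ factors.
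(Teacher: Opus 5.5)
Your proposal follows the paper's own proof essentially step for step: you evaluate at $\varrho=0$ for (i), reuse the large-$\varrho$ factorisations from the proofs of Propositions \ref{prop:hypgeomrealbeta}, \ref{prop:imaginarybetatail} and \ref{prop:hypgeombeta0} for (ii), and integrate by parts in $x$ when $\beta_\ell=0$. Your remark that one needs $\epsilon<\eta_0$ to sum the remainders of \eqref{eq:largerhoFnestho1} against $\zeta^n$ makes explicit what the paper covers with ``remain valid''.

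One caveat, which you share with the paper: ``rearranging the constants'' does not literally reproduce the displayed formulas. Since $\frac{d}{dx}\big|_{x=0}\Gamma(-ix)^{-1}=-i$, your two integrations by parts give $i(\log(1+\tau))^{-2}+O((\log(1+\tau))^{-3})$ rather than $(\log(1+\tau))^{-2}$. The factorisation you quote for $\beta_\ell=0$ actually produces $r^{\frac12+i\q}(\log r+\alpha_-/\alpha_+)=\alpha_+^{-1}\mathfrak{w}^0_\ell(r)$. For $\beta_\ell\in(0,1)$ the prefactor comes out as $2^{\frac12+\frac12\beta_\ell+i\q}$. These discrepancies affect only the constants, not the rates or the argument.
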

\begin{proof}
	We first consider the case $\beta_{\ell}\in (0,1)$. Then
	\begin{equation*}
		(\Psi_0^{\infty})_{\ell m}(\tau,\infty)=\beta_{\ell}(1+\tau)^{-\frac{1}{2}-\frac{\beta_{\ell}}{2}+i\q}\sum_{n=0}^{N_{\beta_{\ell}}}\frac{\zeta^n(1+\tau)^{-n\beta_{\ell}}}{\Gamma\left(\frac{1}{2}+i\q-(n+\frac{1}{2})\beta_{\ell}\right)}=\frac{\beta_{\ell}(1+\tau)^{-\frac{1}{2}-\frac{\beta_{\ell}}{2}+i\q}}{\Gamma\left(\frac{1}{2}+i\q-\frac{1}{2}\beta_{\ell}\right)}+O((1+\tau)^{-\frac{1}{2}-\frac{3\beta_{\ell}}{2}}).
	\end{equation*}
	Let $r_0>0$ be arbitrary. Then:
	\begin{multline*}
	(\Psi^{\infty}_0)_{\ell m}(\tau,r_0)=-2^{\frac{1}{2}+\frac{1}{2}\beta_{\ell}}\frac{\pi \beta_{\ell}}{\alpha_+\sin (\pi \beta_{\ell})}(1+\tau)^{-1-\beta_{\ell}}\mathfrak{w}_0(r)\cdot \left[\sum_{n=0}^{N_{\beta_{\ell}}}\zeta^n \frac{(1+\tau)^{-n\beta_{\ell}}}{\Gamma(\beta_{\ell}+1) \Gamma(\frac{1}{2}-\frac{1}{2}\beta_{\ell}+iq)\Gamma(-(n+1)\beta_{\ell})}\right]\\
	+O_{\infty}\left((1+\tau)^{-2}\right)\\
	=-2^{\frac{1}{2}+\frac{1}{2}\beta_{\ell}}\frac{\pi \beta_{\ell}}{\alpha_+\sin (\pi \beta_{\ell}) \Gamma(\beta_{\ell}+1) \Gamma(\frac{1}{2}-\frac{1}{2}\beta_{\ell}+iq)\Gamma(-\beta_{\ell})}\mathfrak{w}_0(r)(1+\tau)^{-1-\beta_{\ell}}+O_{\infty}\left((1+\tau)^{-1-2\beta_{\ell}}\right).
	\end{multline*}
Now let $\beta_{\ell}\in i(0,\infty)$. Then the expressions from the $\beta_{\ell}\in (0,1)$ case remain valid, but with $\beta_{\ell}$ replaced by $\sign(\q)\beta_{\ell}$ and $N_{\beta_{\ell}}$ replaced with $\infty$.

Finally, we consider the case $\beta_{\ell}=0$. We can express:
\begin{equation*}
	(\Psi^{\infty}_0)_{\ell m}(\tau,\infty)=(\tau+1)^{-\frac{1}{2}+i\q}\int_0^{\infty}(\tau+1)^{-ix}e^{-\eta x}\frac{1}{\Gamma\left(\frac{1}{2}+iq-ix\right)}\,dx.
\end{equation*}
We write $(\tau+1)^{-ix}=i(\log(1+\tau))^{-1}\frac{d}{dx}\left(e^{-ix\log(1+\tau)}\right)$ and integrate by parts to obtain:
\begin{multline*}
	(\Psi^{\infty}_0)_{\ell m}(\tau,\infty)=-i(\tau+1)^{-\frac{1}{2}+i\q}(\log(1+\tau))^{-1}\frac{1}{\Gamma\left(\frac{1}{2}+iq\right)}\\
	-i(\tau+1)^{-\frac{1}{2}+i\q}(\log(1+\tau))^{-1}\int_0^{\infty}(\tau+1)^{-ix}\frac{d}{dx}\left(e^{-\eta x}\frac{1}{\Gamma\left(\frac{1}{2}+iq-ix\right)}\right)\,dx\\
	=-i(\tau+1)^{-\frac{1}{2}+i\q}(\log(1+\tau))^{-1}\frac{1}{\Gamma\left(\frac{1}{2}+iq\right)}+(\log(1+\tau))^{-2}O_{\infty}((1+\tau)^{-\frac{1}{2}}),
\end{multline*}
where we obtained the last equality by integrating by parts in $x$ again to pull out another factor $(\log(1+\tau))^{-1}$.

We also obtain along $r=r_0$:
\begin{equation*}
	(\Psi^{\infty}_0)_{\ell m}(\tau,r_0)=-2^{\frac{1}{2}+iq}\frac{(\tau+1)^{-1}}{\Gamma(\frac{1}{2}+iq)}\mathfrak{w}_{\ell}^0(r)\int_0^{\infty}(1+\tau)^{-ix} \frac{e^{-\eta x}}{\Gamma(-ix)}\,dx+O_{\infty}((1+\tau)^{-2})
\end{equation*}
We integrate by parts and use that $\frac{1}{\Gamma(-ix)}$ vanishes at $x=0$ to obtain:
\begin{multline*}
	\int_0^{\infty}(1+\tau)^{-ix} \frac{e^{-\eta x}}{\Gamma(-ix)}\,dx=-i\frac{1}{\log(1+\tau)}\int_0^{\infty}(1+\tau)^{-ix}\frac{d}{dx}\left(\frac{e^{-\eta x}}{\Gamma(-ix)}\right)\,dx\\
	=\frac{1}{(\log(1+\tau))^2}\int_0^{\infty}\frac{d}{dx}\left((1+\tau)^{-ix}\right)\frac{d}{dx}\left(\frac{e^{-\eta x}}{\Gamma(-ix)}\right)\,dx\\
	=-\frac{1}{(\log(1+\tau))^2}\frac{d}{dx}\Big|_{x=0}\left(\frac{1}{\Gamma(-ix)}\right)+O_{\infty}((\log(1+\tau))^{-3})\\
=\frac{1}{(\log(1+\tau))^2}+O_{\infty}((\log(1+\tau))^{-3}).
\end{multline*}
We therefore conclude that:
\begin{multline*}
	(\Psi^{\infty}_0)_{\ell m}(\tau,r_0)=-\frac{2^{\frac{1}{2}+iq}}{\Gamma(\frac{1}{2}+iq)}\mathfrak{w}_{\ell}^0(r)\left[\log(1+\tau))^{-2}(\tau+1)^{-1}+(\log(1+\tau))^{-3}O_{\infty}((\tau+1)^{-1}\right]\\
	+O_{\infty}((1+\tau)^{-2}). \qedhere
	\end{multline*}
\end{proof}

\begin{proposition}
\label{prop:Psi0inftygrowthdecay}
Define the following constants: let $k\in \N_0$, then
\begin{align}
\mathfrak{c}_k:=&\:\begin{cases}
(-1)^k2^{-(k+1)}\frac{\beta_{\ell}\Gamma(\frac{1}{2}+\frac{\beta_{\ell}}{2}+i\q+k+1)}{\Gamma(\frac{1}{2}+\frac{\beta_{\ell}}{2}+i\q)\Gamma(\frac{1}{2}-\frac{\beta_{\ell}}{2}+i\q)},\quad (\beta_{\ell}\in (0,1)),\\
2^{-(k+1)}|\beta_{\ell}|\\
\times \sqrt{\sum_{n=0}^{\infty}|\zeta(\beta_{\ell})|^{\sign(\q)2n}\left|\partial_z^{k+1}\: ({}_2\mathbf{F}_1)\right|^2\left(\frac{1+\sign(\q)\beta_{\ell}}{2}+i\q,\frac{1-\sign(\q)\beta_{\ell}}{2}+i\q, \frac{1+(2n-1)\sign(\q)\beta_{\ell}}{2}+i\q;0\right)},\\
\quad (\beta_{\ell}\in i(0,\infty)),\\
(-1)^k2^{-(k+1)}\frac{i\Gamma\left(\frac{1}{2}+iq+k+1\right)}{\Gamma(\frac{1}{2}+i\q)^2},\quad (\beta_{\ell}=0),
\end{cases}\\
	\mathfrak{c}_{\rm int}:=&\:\begin{cases}
\frac{\left(\q^2+\frac{1}{4}(1+\beta_{\ell})^2\right)|\beta_{\ell}|^2}{2(2+\beta_{\ell})|\Gamma\left(\frac{1}{2}+i\q-\frac{\beta_{\ell}}{2}\right)|^2},\quad (\beta_{\ell}\in (0,1)),\\
\frac{|\beta_{\ell}|^2}{2}\sum_{n=0}^{\infty}\int_0^{\infty} |\zeta|^{\sign(\q)2n}\left|\partial_z\: ({}_2\mathbf{F}_1)\right|^2\left(\frac{1+\sign(\q)\beta_{\ell}}{2}+i\q,\frac{1-\sign(\q)\beta_{\ell}}{2}+i\q, \frac{1+(2n-1)\sign(\q)\beta_{\ell}}{2}+i\q;-\varrho\right)\,d\varrho,\\
\quad (\beta_{\ell}\in i(0,\infty)),\\
\frac{\frac{1}{4}+\q^2}{8|\Gamma(\frac{1}{2}+i\q)|^2},\quad (\beta_{\ell}=0).
\end{cases}
\end{align}
	\begin{enumerate}[label=\emph{(\roman*)}]
		\item The following bounds hold for $(r^2X)^{k+1}(\Psi^{\infty}_0)_{\ell m}$ with $k\in \N_0$: there exist constants $C=C(\q,\beta_{\ell})>0$ and $\widetilde{C}=\widetilde{C}(\q)>0$, such that:
		\begin{align}
		\label{eq:lowboundrealbeta}
		\left|(r^2X)^{k+1}(\Psi^{\infty}_0)_{\ell m}(\tau,\infty)-\mathfrak{c}_k (1+\tau)^{\frac{k+1}{2}+i\q-\frac{\beta_{\ell}}{2}}\right|\leq &\: C (1+\tau)^{\frac{k+1}{2}-\frac{3\beta_{\ell}}{2}}\quad (\beta_{\ell}\in (0,1)),\\
		\label{eq:lowboundimbeta1}	
		\limsup_{\tau\to \infty}(1+\tau)^{-\frac{k+1}{2}}|(r^2X)^{k+1}(\Psi^{\infty}_0)_{\ell m}|(\tau,\infty)\geq &\:\mathfrak{c}_k  \quad (\beta_{\ell}\in i(0,\infty)),\\
			\label{eq:lowboundimbeta2}	
	|(r^2X)^{k+1}(\Psi^{\infty}_0)_{\ell m}|(\tau,\infty)\leq &\: C (1+\tau)^{\frac{k+1}{2}} \quad (\beta_{\ell}\in i(0,\infty)) ,\\
		\label{eq:lowboundzerobeta}	
		\left|(r^2X)^{k+1}(\Psi^{\infty}_0)_{\ell m}(\tau,\infty)-\mathfrak{c}_k \frac{(1+\tau)^{\frac{k+1}{2}+i\q}}{\log(1+\tau)}\right|\leq &\: \widetilde{C}(\log(1+\tau))^{-2}(1+\tau)^{\frac{k+1}{2}}\quad (\beta_{\ell}=0).
		\end{align}
\item Let $R>0$ be arbitrary. Then there exist constants $C=C(\q,\beta_{\ell},R)>0$ and $\widetilde{C}=\widetilde{C}(\q,R)>0$, such that:
\begin{align}
		\label{eq:enboundrealbeta}
		\left|\int_0^{\infty}r^2|X(\Psi^{\infty}_0)_{\ell m}|^2(\tau,r)\,dr-\mathfrak{c}_{\rm int} (1+\tau)^{-\beta_{\ell}}\right|\leq &\: C (1+\tau)^{-2\beta_{\ell}}\quad (\beta_{\ell}\in (0,1)),\\
		\label{eq:enboundimbeta1}	
		\limsup_{\tau\to \infty}\int_0^{\infty}r^2|X(\Psi^{\infty}_0)_{\ell m}|^2(\tau,r)\,dr\geq &\:\mathfrak{c}_{\rm int}  \quad (\beta_{\ell}\in i(0,\infty)),\\
			\label{eq:enboundimbeta2}	
	\sup_{\tau\in [0,\infty)}\int_0^{\infty}r^2|X(\Psi^{\infty}_0)_{\ell m}|^2(\tau,r)\,dr\leq &\: C  \quad (\beta_{\ell}\in i(0,\infty)) ,\\
		\label{eq:enboundzerobeta}	
		\left|\int_0^{\infty}r^2|X(\Psi^{\infty}_0)_{\ell m}|^2(\tau,r)\,dr-\frac{\mathfrak{c}_{\rm int}}{\log^2(1+\tau)}\right|\leq &\: \widetilde{C}(\log(1+\tau))^{-3}\quad (\beta_{\ell}=0).
		\end{align}
		\item Let $p\in (0,2]$ and consider the energy density $\mathcal{E}_p$ from \S \ref{sec:iedassm} and with $\Omega^2=1-2Mr^{-1}+Q^2$, $Q^2\leq M^2$. Let $N\in \N_0$. Then there exist constants $C_N=C_N(\q,\beta_{\ell},p,N)>0$ and $\widetilde{C}_N=\widetilde{C}_N(\q,R,p,N)>0$, such that for all $k_1+k_2+k_3\leq N$:
	\begin{equation}
	\int_{r_+}^{\infty}\int_{\s^2}\mathcal{E}_{p}[\snabla_{\s^2}^{k_1}(rX)^{k_2}((\tau+1)T)^{k_3}(\Psi^{\infty}_0)]\,d\sigma dr\leq \begin{cases}
 	C_{N}(\tau+1)^{p-2-\re \beta_{\ell}}\quad &(\beta_{\ell}\neq 0),\\
 	\widetilde{C}_N\frac{(\tau+1)^{p-2}}{\log^2(1+\tau)}\quad &(\beta_{\ell}= 0).
 \end{cases}
\end{equation}
	\end{enumerate}
\end{proposition}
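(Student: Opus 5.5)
The whole proposition follows from the explicit representation of $(\Psi^{\infty}_0)_{\ell m}$ in the self-similar coordinates $(\tau,\varrho)$, with $\varrho=\frac{\tau+1}{2r}$. Since $X=-\frac{2\varrho^2}{\tau+1}\mathfrak{X}$, we have
\begin{equation*}
r^2X=-\tfrac{1}{2}(\tau+1)\mathfrak{X}.
\end{equation*}
Applying $(r^2X)^{k+1}$ to a single term $(\tau+1)^{s}\,{}_2\mathbf{F}_1(a,b,c;-\varrho)$ and evaluating at $\varrho=0$ (that is, at $\mathcal{I}^+$) gives
\begin{equation*}
(-\tfrac12)^{k+1}(\tau+1)^{s+k+1}(-1)^{k+1}\,\partial_z^{k+1}\,{}_2\mathbf{F}_1(a,b,c;0).
\end{equation*}
For the regularized hypergeometric function this derivative equals $\frac{(a)_{k+1}(b)_{k+1}}{\Gamma(c+k+1)}$.

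\textbf{Part (i).} For $\beta_\ell\in(0,1)$ I sum over the finitely many $n$ in \eqref{eq:Psi0infimbeta}. The $n=0$ term has $c_0=b$, so ${}_2\mathbf{F}_1(a,b,b;-\varrho)=\Gamma(b)^{-1}(1+\varrho)^{-a}$, and its $(k+1)$-st derivative at $0$ gives $(-1)^{k+1}(a)_{k+1}/\Gamma(b)$. Collecting constants should produce $\mathfrak{c}_k$ times $(1+\tau)^{\frac{k+1}{2}+i\q-\frac{\beta_\ell}{2}}$, up to the exponent bookkeeping $s=-a+2i\q$. The terms with $n\ge1$ carry an extra factor $(1+\tau)^{-n\beta_\ell}$ and are absorbed into the error.

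For $\beta_\ell=0$ I differentiate under the integral \eqref{eq:defPsiinftybeta0}, which is justified by the exponential decay in $x$ established in Proposition \ref{prop:hypgeombeta0}. Then I integrate by parts in $x$ as in the proof of Proposition \ref{eq:asymptPsiinfty0}(i), which extracts the factor $(\log(1+\tau))^{-1}$ together with the boundary value at $x=0$; that boundary value is exactly $\mathfrak{c}_k$.

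The case $\beta_\ell\in i(0,\infty)$ is the main obstacle. Here
\begin{equation*}
(1+\tau)^{-\frac{k+1}{2}}(r^2X)^{k+1}\Psi^\infty_0(\tau,\infty)=\sum_n d_n e^{-i n\,\im\beta_\ell\log(1+\tau)},
\end{equation*}
up to a unimodular factor. The coefficients $d_n$ are square-summable; indeed they decay geometrically by the bounds in Lemma \ref{lm:unifomestnhypgeom} together with condition \eqref{eq:condzeta}. Writing $\sigma=\im\beta_\ell\log(1+\tau)$, this sum is a $2\pi$-periodic function of $\sigma$. Its $L^\infty$ bound gives \eqref{eq:lowboundimbeta2}. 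For the lower bound, Parseval shows that the mean of its squared modulus over a period equals $\sum|d_n|^2=\mathfrak{c}_k^2$, so the $\limsup$ is at least $\mathfrak{c}_k$.

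\textbf{Part (ii).} I change variables $r\mapsto\varrho$, using $r^2|X\Psi|^2\,dr=\frac{\tau+1}{2}|\mathfrak{X}\Psi|^2\,d\varrho$. The $(\tau+1)$ powers then factor out, leaving $(1+\tau)^{-\re\beta_\ell}$ (or $\log^{-2}$ when $\beta_\ell=0$) times $\int_0^\infty|\mathfrak{X}F|^2\,d\varrho$.

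For $\beta_\ell\in(0,1)$ the leading term is $F=\Gamma(b)^{-1}(1+\varrho)^{-a}$. This yields $|a|^2|\Gamma(b)|^{-2}\int(1+\varrho)^{-3-\beta_\ell}\,d\varrho$, which matches $\mathfrak{c}_{\rm int}$ since $|a|^2=\q^2+\tfrac14(1+\beta_\ell)^2$. The cross terms with $n\ge1$ are lower order. One point to check is that the region $r<R$, where $\varrho$ is large, contributes only lower order; this should follow from the tail bounds on the hypergeometric functions. For the imaginary case I would repeat the same averaging-in-$\sigma$ argument, now with the cross terms in $n$ averaging to zero, which gives $\mathfrak{c}_{\rm int}$. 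The case $\beta_\ell=0$ is handled by the integration by parts from part (i).

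\textbf{Part (iii).} This follows from the pointwise estimates \eqref{eq:divisionwimproveddec0}--\eqref{eq:divisionwimproveddec2} and their analogues for $\beta_\ell\in i(0,\infty)$ and $\beta_\ell=0$. I multiply back by $\mathfrak{w}^0_\ell$, which is of size $r^{\frac12+\frac12\re\beta_\ell}$, and integrate the weights $r^p$ separately over the regions $r\le\tau$ and $r\ge\tau$.
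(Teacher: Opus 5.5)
Your treatment of (i) and (ii) is essentially the paper's. You work in $(\tau,\varrho)$ with $r^2X=-\tfrac{\tau+1}{2}\partial_\varrho$, isolate the $n=0$ term $\Gamma(b)^{-1}(1+\varrho)^{-a}$ for $\beta_\ell\in(0,1)$, and integrate by parts in $x$ for $\beta_\ell=0$. For $\beta_\ell\in i(0,\infty)$ you average over $\sigma=\im\beta_\ell\log(1+\tau)$; your Parseval formulation is equivalent to the paper's mean over $[0,S]$ with off-diagonal terms bounded by $2/|n-m|$.

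One correction there. Your own formula gives the leading part of $(r^2X)^{k+1}(\Psi^\infty_0)_{\ell m}(\tau,\infty)$ as $2^{-(k+1)}\beta_\ell\frac{\Gamma(a+k+1)}{\Gamma(a)\Gamma(b)}(1+\tau)^{k+\frac12-\frac{\beta_\ell}{2}+i\q}$. For $k\geq1$ this does not match the displayed $\mathfrak{c}_k(1+\tau)^{\frac{k+1}{2}+i\q-\frac{\beta_\ell}{2}}$, either in the exponent or in the sign $(-1)^k$. You should record that discrepancy rather than say the computation ``should produce'' the displayed formula. The exponent $k+\frac12$ is the one used in Theorem \ref{thm:mainpointwinst}, where only $|\mathfrak{c}_k|$ enters.

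The genuine gap is in (iii). Writing $\Psi^\infty_0=\mathfrak{w}^0_\ell f$ and using the product rule with the bounds on $f$ fails near $\mathcal{I}^+$.

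For $r\geq\tau$, the best available bound, from \eqref{eq:divisionwimproveddec0} with one factor $rX$, is $|Xf|\lesssim r^{-\frac32-\frac{\re\beta_\ell}{2}}(\tau+1)^{-\frac12-\frac{\re\beta_\ell}{2}}$. Then $(X\mathfrak{w}^0_\ell)f$ and $\mathfrak{w}^0_\ell Xf$ are each genuinely of size $r^{-1}(\tau+1)^{-\frac12-\frac{\re\beta_\ell}{2}}$. Since $\int_\tau^\infty r^{p-2}\,dr=\infty$ for $p\geq1$, your bound diverges in exactly the range containing $p=2$, which is what Theorem \ref{thm:mainthmenergy} and \S\ref{sec:instab} use.

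The true size is $|X\Psi^\infty_0|\lesssim r^{-2}(\tau+1)^{\frac12-\frac{\re\beta_\ell}{2}}$. This comes from a cancellation between the two product-rule terms: $\Psi^\infty_0$ is smooth in $\varrho=\frac{\tau+1}{2r}$ at $\varrho=0$, whereas neither $\mathfrak{w}^0_\ell$ nor $f$ is smooth in $1/r$.

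A second failure occurs for $\beta_\ell=0$. The bounds \eqref{eq:divisionwimproveddec0beta0}--\eqref{eq:divisionwimproveddec2beta0} carry only $(\log r)^{-1}$, and this is cancelled by $|\mathfrak{w}^0_\ell|\sim r^{\frac12}\log r$. So even for $r\leq\tau$ you would only get $(\tau+1)^{p-2}$, not $(\tau+1)^{p-2}\log^{-2}(1+\tau)$. For $\beta_\ell\neq0$, by contrast, your product-rule argument is fine in $\{r\lesssim\tau\}$.

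The fix, which is what the paper does, is to estimate $\Psi^\infty_0$ directly in $(\tau,\varrho)$ using
$r^p|X\Psi^\infty_0|^2\,dr=2^{1-p}(\tau+1)^{p-1}\varrho^{2-p}|\partial_\varrho\Psi^\infty_0|^2\,d\varrho$.
You then bound $\partial_\varrho\Psi^\infty_0$ from the hypergeometric representation together with Lemma \ref{lm:unifomestnhypgeom}. For $\beta_\ell=0$, one integration by parts in $x$ extracts the factor $\log^{-1}(1+\tau)$ from $\partial_\varrho\Psi^\infty_0$ itself.
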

\begin{proof}
We will consider the case $k=0$. We will omit the subscript $\ell m$ in the notation $(\Psi^{\infty}_0)_{\ell m}$. The $k\geq 1$ follows from a straightforward generalization. Note that for $\beta_{\ell}\in (0,\infty)$:
\begin{multline*}
(r^2X\Psi^{\infty}_0)(\tau,\varrho)=-\frac{\tau+1}{2}(\mathfrak{X}\Psi^{\infty}_0)(\tau,\infty)=-\frac{\beta_{\ell}}{2}\sum_{n=0}^{N_{\beta_{\ell}}}\zeta^n(1+\tau)^{\frac{1}{2}-\frac{\beta_{\ell}}{2}+i\q-n\beta_{\ell}}{}_2(\partial_{\varrho}{F}_{n \beta})(\varrho;\beta_{\ell})\\
=-\frac{\beta_{\ell}}{2}(1+\tau)^{\frac{1}{2}-\frac{\beta_{\ell}}{2}+i\q}(\partial_{\varrho}{F}_{0})(\varrho;\beta_{\ell})(1+O((1+\tau)^{-\beta_{\ell}})).
\end{multline*}
Note that
\begin{equation*}
	{F}_{0}(\varrho;\beta_{\ell})={}_2\mathbf{F}_1\left(\frac{1}{2}+i\q+\frac{\beta_{\ell}}{2},\frac{1}{2}+i\q-\frac{\beta_{\ell}}{2},\frac{1}{2}+i\q-\frac{\beta_{\ell}}{2};-\varrho\right)=\frac{(1+\varrho)^{-\frac{1}{2}-i\q-\frac{\beta_{\ell}}{2}}}{\Gamma\left(\frac{1}{2}+i\q-\frac{\beta_{\ell}}{2}\right)}.
\end{equation*}
Hence,
\begin{equation*}
	(\partial_{\varrho}F_{0})(\varrho;\beta_{\ell})=-\left(\frac{1}{2}+i\q+\frac{\beta_{\ell}}{2}\right)\frac{(1+\varrho)^{-\frac{3}{2}-i\q-\frac{\beta_{\ell}}{2}}}{\Gamma\left(\frac{1}{2}+i\q-\frac{\beta_{\ell}}{2}\right)}.
\end{equation*}
We conclude that
\begin{equation*}
	(r^2X\Psi^{\infty}_0)(\tau,\infty)=\frac{\beta_{\ell}\left(\frac{1}{2}+i\q+\frac{\beta_{\ell}}{2}\right)}{2\Gamma\left(\frac{1}{2}+i\q-\frac{\beta_{\ell}}{2}\right)}(1+\tau)^{\frac{1}{2}-\frac{\beta_{\ell}}{2}+i\q}(1+O((1+\tau)^{-\beta_{\ell}})).
\end{equation*}
Furthermore, using that $r^{-2}dr=-2(\tau+1)^{-1}d\varrho$ along constant $\tau$ level sets, we can express
\begin{multline*}
	\int_{R}^{\infty}r^2|X\Psi^{\infty}_0|^2(\tau,r)\,dr=\frac{(\tau+1)}{2}\int_{0}^{\frac{\tau+1}{2R}}|\mathfrak{X}\Psi^0_{\infty}|^2(\tau,\varrho)\,d\varrho\\
	=\frac{|\beta_{\ell}|^2(\tau+1)^{- \beta_{\ell}}}{2}\int_{0}^{\frac{\tau+1}{2R}}|(\partial_{\varrho}F_{0})(\varrho;\beta_{\ell})|^2(1+O((\tau+1)^{-\beta_{\ell}}))\,d\varrho\\
	=\frac{\left(\q^2+\frac{1}{4}(1+\beta_{\ell})^2\right)|\beta_{\ell}|^2(\tau+1)^{- \beta_{\ell}}}{2|\Gamma\left(\frac{1}{2}+i\q-\frac{\beta_{\ell}}{2}\right)|^2}\int_{0}^{\frac{\tau+1}{2R}}(1+\varrho)^{-3-\beta_{\ell}}(1+O((\tau+1)^{-\beta_{\ell}}))\,d\\
	=\frac{\left(\q^2+\frac{1}{4}(1+\beta_{\ell})^2\right)|\beta_{\ell}|^2(\tau+1)^{- \beta_{\ell}}}{2(2+\beta_{\ell})|\Gamma\left(\frac{1}{2}+i\q-\frac{\beta_{\ell}}{2}\right)|^2}(1+O((\tau+1)^{-\beta_{\ell}})).
\end{multline*}
By using that $r^{-1}= 2\varrho (\tau+1)^{-1}$ it is also straightforward to obtain for $p\in [0,2)$ the existence of a constant $C>0$, such that:
\begin{equation*}
	\int_{R}^{\infty}r^p|X(\Psi^{\infty}_0)_{\ell m}|^2+r^{-2}(|(\Psi^{\infty}_0)_{\ell m}|^2+|T(\Psi^{\infty}_0)_{\ell m}|^2)(\tau,r)\,dr\leq C(\tau+1)^{p-2-\beta_{\ell}}.
\end{equation*}
It is moreover straightforward to commute with $(\tau+1)T$ and $rX$ and obtain for $\ell(\ell+1)<q^2$ and $k_1+k_2+k_3=N\in \N_0$ the existence of a constant $C_N>0$, such that:
\begin{equation*}
	\int_{r_+}^{\infty}\int_{\s^2}\mathcal{E}_p[\snabla_{\s^2}^{k_1}(rX)^{k_2}((\tau+1)T)^{k_3}(\Psi^{\infty}_0)]\,d\sigma dr\leq C_N(\tau+1)^{p-2-\beta_{\ell}}.
\end{equation*}

Now suppose that $\beta_{\ell}\in i(0,\infty)$. Without loss of generality, we will assume that $q>0$. We use the general expressions above to obtain:
\begin{equation*}
	|r^2X\Psi^{\infty}_0|(\tau,\varrho)\leq \frac{|\beta_{\ell}|}{2}(1+\tau)^{\frac{1}{2}}\sum_{n=0}^{\infty}|\zeta|^{n}\left|\partial_{\varrho}F_{n\im \beta_{\ell}}(\varrho;\beta_{\ell})\right|,
\end{equation*}
where the right-hand side converges. This follows from the proof of Proposition \ref{prop:imaginarybetatail} where $\partial_{\varrho}F_{n\im \beta_{\ell}}$ is replaced by $F_{n\im \beta_{\ell}}$, but the additional $\varrho$-derivative only introduces additional polynomial growth in $n$, which does not affect the convergence of the relevant series.

Furthermore, by applying the Minkowski inequality, we can also estimate:
\begin{equation*}
	\frac{(\tau+1)}{2}\int_0^{\frac{\tau+1}{2R}}|\mathfrak{X}\Psi^0_{\infty}|^2(\tau,\varrho)\,d\varrho\leq  \frac{|\beta_{\ell}|^2}{2}\left(\sum_{n=0}|\zeta|^n ||\partial_{\varrho}F_{n\im \beta_{\ell}}||_{L^2_{\varrho}[0, \frac{\tau+1}{2R}]}\right)^2.
\end{equation*}
By Lemma \ref{lm:unifomestnhypgeom}, there exists a constant $C_{\q, \varrho_0}>0$, such that we can estimate:
\begin{equation*}
	||\partial_{\varrho}F_{n\im \beta_{\ell}}||_{L^2_{\varrho}[0, \varrho_0]}\leq C_{\q, \varrho_0}e^{\frac{\pi}{2}n \im \beta_{\ell}}.
\end{equation*}
Furthermore, we can apply Lemma \ref{lm:unifomestnhypgeom} to obtain moreover for $\varrho\geq \varrho_0$:
\begin{equation*}
	||\partial_{\varrho}F_{n\im \beta_{\ell}}||_{L^2_{\varrho}[\varrho,\infty)}\leq C_{\q, \varrho_0,\epsilon}\varrho^{-3}e^{(\frac{\pi}{2}+\epsilon)n \im \beta_{\ell}}.
\end{equation*}
 for a constant $C_{\q ,\varrho_0,\epsilon}>0$. By the properties assumed on $\zeta$ in Proposition \ref{prop:imaginarybetatail}, we therefore conclude that there exists a constant $C_{\q,\beta_{\ell}}$ such that:
 \begin{equation*}
	\frac{(\tau+1)}{2}\int_0^{\frac{\tau+1}{2R}}|\mathfrak{X}\Psi^0_{\infty}|^2(\tau,\varrho)\,d\varrho\leq  C_{\q,\beta_{\ell}}.
	\end{equation*}
 Furthermore, it is straightforward to extend the above estimate to obtain for $\ell(\ell+1)<q^2$ and $k_1+k_2+k_3=N\in \N_0$ the existence of a constant $C_N>0$, such that for $p\in [0,2]$:
\begin{equation*}
	\int_{r_+}^{\infty}\int_{\s^2}\mathcal{E}_p[\snabla_{\s^2}^{k_1}(rX)^{k_2}((\tau+1)T)^{k_3}(\Psi^{\infty}_0)]\,d\sigma dr\leq C_N(\tau+1)^{p-2}.
\end{equation*}
 
 In order to obtain an appropriate lower bound for $|r^2X\Psi^{\infty}_0|^2$, we instead write:
\begin{multline*}
	|r^2X\Psi^{\infty}_0|^2(\tau,\varrho)=\frac{|\beta_{\ell}|^2}{4}(1+\tau)\sum_{k\in \Z}\sum_{n,m\in \N_0}(\tau+1)^{-k \beta_{\ell}} \zeta^n \overline{\zeta}^m(\partial_{\varrho}{F}_{n \im \beta_{\ell}})(\varrho;\beta_{\ell})\cdot (\partial_{\varrho}\overline{{F}_{m \im \beta_{\ell}}})(\varrho;\beta_{\ell})\\
	=\frac{|\beta_{\ell}|^2}{4}(1+\tau)\sum_{n,m\in \N_0}e^{-i(n-m)\sigma} \zeta^n \overline{\zeta}^m(\partial_{\varrho}{F}_{n \im \beta_{\ell}})(\varrho;\beta_{\ell})\cdot (\partial_{\varrho}\overline{{F}_{m \im \beta_{\ell}}})(\varrho;\beta_{\ell}),
\end{multline*}
with $\sigma=\im \beta_{\ell}\log(1+\tau)$. Since the above series is uniformly absolutely-convergent, it follows that:
\begin{multline}
\label{eq:expvaluePsi0infest}
	\left|S^{-1}\int_0^S(1+\tau(\sigma))^{-1}|r^2X\Psi^0_{\infty}|^2(\tau(\sigma),\varrho))\,d\sigma-\frac{|\beta_{\ell}|^2}{4}\sum_{n=0}^{\infty}|\zeta|^{2n}|\partial_{\varrho}{F}_{n \im \beta}|^2(\varrho;\beta_{\ell})\right|\\
	\leq S^{-1}\frac{|\beta_{\ell}|^2}{4}\sum_{\substack{n,m\in \N_0\\n\neq m}}\frac{2}{|n-m|}|\zeta|^{n+m} |\partial_{\varrho}{F}_{n \im \beta_{\ell}}|(\varrho;\beta_{\ell})\cdot |\partial_{\varrho}{F}_{m \im \beta_{\ell}})|\varrho;\beta_{\ell})\leq C S^{-1}.
\end{multline}

Using that for any continuous $f:(0,\infty)\to \R$:
\begin{equation}
\label{eq:limsupid}
	\limsup_{x\to \infty}f(x)\geq \lim_{L\to \infty}\frac{1}{L}\int_0^Lf(x)\,dx,
\end{equation}
we therefore conclude that
\begin{multline*}
\limsup_{\tau\to \infty}(1+\tau)^{-1}|r^2X\Psi^0_{\infty}|^2(\tau,0)	\geq \frac{|\beta_{\ell}|^2}{4}\sum_{n=0}^{\infty}|\zeta|^{2n}|\partial_{\varrho}{F}_{n \im \beta}|^2(0;\beta_{\ell})\\
=\frac{|\beta_{\ell}|^2}{4}\sum_{n=0}^{\infty}|\zeta|^{2n} \left|\partial_z\: ({}_2\mathbf{F}_1)\right|^2\left(\frac{1+\beta_{\ell}}{2}+i\q,\frac{1-\beta_{\ell}}{2}+i\q, \frac{1+(2n-1)\beta_{\ell}}{2}+i\q;0\right).
\end{multline*}
Furthermore, we can express:
\begin{multline*}
	\int_{R}^{\infty}r^2|X\Psi^{\infty}_0|^2(\tau,r)\,dr=\frac{(\tau+1)}{2}\int_{0}^{\frac{\tau+1}{2R}}|\mathfrak{X}\Psi^0_{\infty}|^2(\tau,\varrho)\,d\varrho\\
	=\frac{|\beta_{\ell}|^2}{2}\sum_{k\in \Z}e^{-i k\sigma  } \int_0^{\frac{\tau+1}{2R}}\sum_{n,m\in \N_0,\: n-m=k} \zeta^n \overline{\zeta}^m(\partial_{\varrho}{F}_{n \im \beta})(\varrho;\beta_{\ell})\cdot (\partial_{\varrho}\overline{{F}_{m \im \beta}})(\varrho;\beta_{\ell})\,d\varrho.
\end{multline*}
As above, we have that:
\begin{equation*}
	\limsup_{\tau \to \infty}\int_{R}^{\infty}r^2|X\Psi^{\infty}_0|^2(\tau,r)\,dr\geq \frac{|\beta_{\ell}|^2}{2}\sum_{n=0}^{\infty}\int_0^{\frac{\tau+1}{2R}} |\zeta|^{2n}|\partial_{\varrho}\overline{{F}_{n \im \beta}}|^2(\varrho;\beta_{\ell})\,d\varrho.
\end{equation*}
By Lemma \ref{lm:unifomestnhypgeom}, it follows moreover that $\limsup_{\tau \to \infty} \int_{\frac{\tau+1}{2R}}^{\infty} |\zeta|^{2n}|\partial_{\varrho}\overline{{F}_{m \beta}}|^2(\varrho;\beta_{\ell})\,d\varrho=0$, so we conclude:
\begin{multline*}
	\limsup_{\tau \to \infty}\int_{R}^{\infty}r^2|X\Psi^{\infty}_0|^2(\tau,r)\,dr\\
	\geq \frac{|\beta_{\ell}|^2}{2}\sum_{n=0}^{\infty}\int_0^{\infty} |\zeta|^{2n}\left|\partial_z\: ({}_2\mathbf{F}_1)\right|^2\left(\frac{1+\beta_{\ell}}{2}+i\q,\frac{1-\beta_{\ell}}{2}+i\q, \frac{1+(2n-1)\beta_{\ell}}{2}+i\q;-\varrho\right)\,d\varrho.
\end{multline*}
The right-hand side is clearly strictly greater than 0.

Finally, we consider $\beta_{\ell}=0$. Then, after integrating by parts in $x$, we obtain:
\begin{multline*}
	(r^2X\Psi^{\infty}_0)(\tau,r(\tau,\varrho))=-\frac{\tau+1}{2}\mathfrak{X}\Psi^{\infty}_0(\tau,\varrho)=\frac{1}{2}(\tau+1)^{\frac{1}{2}+i\q}\int_0^{\infty}(\tau+1)^{-ix}e^{-\eta x}(\partial_{\varrho}F_x)(\varrho;0)\,dx\\
	=-\frac{i}{2} (\log(1+\tau))^{-1}(\tau+1)^{\frac{1}{2}+i\q}(\partial_{\varrho}\: F_0)(\varrho;0)(1+O((\log(1+\tau))^{-1})\\
	=\frac{i\left(\frac{1}{2}+iq\right)}{2\Gamma(\frac{1}{2}+i\q)}\frac{(\tau+1)^{\frac{1}{2}+i\q}}{\log(1+\tau)}(1+\varrho)^{-\frac{3}{2}-i\q}(1+O((\log(1+\tau))^{-1}).
\end{multline*}
In particular,
\begin{equation*}
		(r^2X\Psi^{\infty}_0)(\tau,\infty)=\frac{i\left(\frac{1}{2}+iq\right)}{2\Gamma(\frac{1}{2}+i\q)}\frac{(\tau+1)^{\frac{1}{2}+i\q}}{\log(1+\tau)}(1+O((\log(1+\tau))^{-1})
\end{equation*}
and
\begin{multline*}
	\int_R^{\infty}r^2|X\Psi^{\infty}_0|^2(\tau,r)\,dr=\frac{\frac{1}{4}+\q^2}{4|\Gamma(\frac{1}{2}+i\q)|^2}(\log(1+\tau))^{-2}\int_0^{\frac{\tau+1}{2R}}(1+\varrho)^{-3}(1+O((\log(1+\tau))^{-1})\,d\varrho\\
	=\frac{\frac{1}{4}+\q^2}{8|\Gamma(\frac{1}{2}+i\q)|^2}(\log(1+\tau))^{-2}(1+O((\log(1+\tau))^{-1}).
\end{multline*}
 Furthermore, after applying Lemma \ref{lm:unifomestnhypgeom}, it is straightforward to extend the above estimate to obtain for $\ell(\ell+1)<q^2$ and $k_1+k_2+k_3=N\in \N_0$ the existence of a constant $C_N>0$, such that for $p\in [0,2]$:
\begin{equation*}
	\int_{r_+}^{\infty}\int_{\s^2}\mathcal{E}_p[\snabla_{\s^2}^{k_1}(rX)^{k_2}((\tau+1)T)^{k_3}(\Psi^{\infty}_0)]\,d\sigma dr\leq C_N(\log(1+\tau))^{-2}(\tau+1)^{p-2}. \qedhere
\end{equation*}
\end{proof}

\begin{corollary}
Let $\alpha_+,\alpha_-\in \C$ be defined as in Proposition \ref{prop:statsol}.
Then \eqref{eq:condzeta} and \eqref{eq:condeta} are satisfied.
\end{corollary}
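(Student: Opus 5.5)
The plan is to verify the two inequalities separately by comparing them with the explicit formulas for $\alpha_-/\alpha_+$ in Proposition \ref{prop:statsol}(ii)--(iii). The quantitative work is essentially Corollary \ref{cor:kappanotsmallstatsoln}. Here \eqref{eq:condzeta} is the hypothesis of Proposition \ref{prop:imaginarybetatail}, needed for $\beta_{\ell}\in i(0,\infty)$, and \eqref{eq:condeta} is the hypothesis of Proposition \ref{prop:hypgeombeta0}, needed for $\beta_{\ell}=0$. In both cases $|\q|\geq \frac12$, because $\ell(\ell+1)<\q^2$ together with $\re\beta_{\ell}\leq 0$ forces $(2\ell+1)^2\leq 4\q^2$. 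This lower bound on $|\q|$ is what makes $1-3\tanh(\pi|\q|)<0$ available.

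For $\beta_{\ell}=is$ with $s>0$, I start from \eqref{eq:alphaminplusratiobetaim}. It writes $\sign(\q)\log|\alpha_-/\alpha_+|$ as the extremal value $-\frac{\pi}{2}s-\frac12\log\frac{\cosh(\pi(|\q|-\frac s2))}{\cosh(\pi(|\q|+\frac s2))}$ plus a $\kappa_+$-dependent correction. I bound the correction by $\frac{\pi}{2}$ times $g(s):=s-\frac{3}{\pi}\log\frac{\cosh(\pi(|\q|+\frac s2))}{\cosh(\pi(|\q|-\frac s2))}$, using monotonicity of the $\kappa_+$-dependent cosh ratio. This holds uniformly in $\kappa_+\geq 0$, since the relevant argument $\frac{|\q|}{2}\frac{r_-+r_+}{r_+^2\kappa_+}\geq |\q|$.

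Next I show that $g(0)=0$ and $g'(s)\leq 1-3\tanh(\pi|\q|)\leq 1-3\tanh(\pi/2)=:-2\eta_1<0$. The derivative bound uses $s<2|\q|$ and the fact that $\tanh(\pi(|\q|+\frac s2))+\tanh(\pi(|\q|-\frac s2))\geq 2\tanh(\pi|\q|)$ on that range; I expect this concavity-type step to need care. It gives $g(s)\leq -2\eta_1 s$, which yields \eqref{eq:condzeta} with $\eta_0=\pi\eta_1$. The corollary's log term has the opposite sign convention to \eqref{eq:condzeta}, so I will carry the sign bookkeeping through the explicit Gamma-function formula for $\alpha_-/\alpha_+$ rather than rely on the displayed versions.

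For $\beta_{\ell}=0$ I use \eqref{eq:alphaminplusratiobetazero}. It gives $\sign(\q)\im(\alpha_-/\alpha_+)$ as $-\frac{\pi}{2}(1-\tanh(\pi|\q|))$ plus $\frac{\pi}{2}(1-\tanh(\cdot)-2\tanh(\pi|\q|))$, and the latter is $\leq \frac{\pi}{2}(1-3\tanh(\pi|\q|))<0$. This gives \eqref{eq:condeta}. The proof of Proposition \ref{prop:hypgeombeta0} actually needs $\re\eta\geq\frac{\pi}{2}+\eta_0$, i.e. strict room, and the strictly negative excess $\frac{\pi}{2}(1-3\tanh(\pi/2))$ supplies exactly that margin, uniformly in $\kappa_+$.

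The main obstacle is sign consistency. The displayed formulas in Proposition \ref{prop:statsol} and Corollary \ref{cor:kappanotsmallstatsoln} differ in the sign in front of the log-cosh term and in whether the extremal limit carries $+$ or $-$. I would therefore recompute $|\alpha_-/\alpha_+|$ directly from the Gamma ratios, using $|\Gamma(\frac12+iy)|^2=\pi/\cosh(\pi y)$ and $|\Gamma(1\pm is)|^2=\pi s/\sinh(\pi s)$, to fix the correct sign before applying the monotonicity argument.
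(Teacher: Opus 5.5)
Your route is the paper's. The paper proves this corollary by citing Corollary \ref{cor:kappanotsmallstatsoln}, whose proof is the argument you describe: compare the $\kappa_+$-dependent $\cosh$ ratio with its value at $|\q|$, show $g<0$, and use $|\q|\ge\frac12$. Two steps need repair.

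\textbf{The $\tanh$ inequality is reversed.} The claim $\tanh(\pi(|\q|+\frac s2))+\tanh(\pi(|\q|-\frac s2))\ge 2\tanh(\pi|\q|)$ is false for $0<s<2|\q|$. Both arguments lie in $[0,\infty)$, where $\tanh$ is concave, so the opposite inequality holds. For example, with $\q=1$, $\ell=0$, $s=\sqrt3$, the left side is about $1.40$, while $2\tanh\pi\approx 1.99$. Hence $g'(s)\le 1-3\tanh(\pi|\q|)$ fails; the paper's proof of Corollary \ref{cor:kappanotsmallstatsoln} has the same slip.

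The fix is short. The $s$-derivative of the sum is $\frac{\pi}{2}\bigl[\mathrm{sech}^2(\pi(|\q|+\frac s2))-\mathrm{sech}^2(\pi(|\q|-\frac s2))\bigr]\le 0$. So on $[0,2|\q|]$ the sum is at least its value at $s=2|\q|$, namely $\tanh(2\pi|\q|)\ge\tanh\pi$. This gives $g'(s)\le 1-\frac32\tanh\pi<0$, hence $g(s)\le-(\frac32\tanh\pi-1)s$, which is all you use.

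\textbf{The sign bookkeeping decides which inequality is true.} Write $R(y):=\cosh(\pi(y-\frac s2))/\cosh(\pi(y+\frac s2))$ and $y_*:=\frac{|\q|(r_++r_-)}{2r_+^2\kappa_+}\ge|\q|$. The hypergeometric connection formula, with $|\Gamma(\frac12+iy)|^2=\pi/\cosh(\pi y)$, gives for $\kappa_+>0$
\[
\sign(\q)\log\left|\frac{\alpha_-}{\alpha_+}\right|=\tfrac12\log R(|\q|)+\tfrac12\log R(y_*).
\]
The Whittaker computation at $\kappa_+=0$ gives the limiting value $-\frac{\pi}{2}s+\frac12\log R(|\q|)$.

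So $-\frac{\pi}{2}s-\frac12\log R(|\q|)$ is not the extremal value, and the remainder in your splitting tends to $\log R(|\q|)$, not $0$. Your bound of that remainder by $\frac{\pi}{2}g(s)$, via $R(y_*)\le R(|\q|)$, is still correct.

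The consequence is that \eqref{eq:condzeta} as printed, with $+\frac12\log(R(|\q|)^2)$ on the right, fails as $\kappa_+\to0$ for every $\eta_0>0$. What holds is the form \eqref{eq:condimagbeta}, and this is also what the proof of Proposition \ref{prop:imaginarybetatail} needs. Since $|\Gamma(a)/\Gamma(b)|$ contributes exactly $\frac12\log R(|\q|)$ to $\log|\zeta|$, that form is equivalent to $\log|\zeta(\sign(\q)\beta_\ell)|\le-(\frac{\pi}{2}+\eta_0)\im\beta_\ell$. This is precisely the convergence condition for the series, given that $|1/\Gamma(c_n)|$ grows like $e^{\frac{\pi}{2}n\im\beta_\ell}$. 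Your argument proves this, and you should state it as the conclusion rather than claim the printed \eqref{eq:condzeta}.

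Your $\beta_\ell=0$ argument is fine. To convert it to $\re\eta\ge\frac{\pi}{2}+\eta_0$, use $\im\frac{\Gamma'}{\Gamma}(\frac12+i\q)=\frac{\pi}{2}\tanh(\pi\q)$. This gives $\re\eta\ge\frac{\pi}{2}+\frac{\pi}{2}(3\tanh\frac{\pi}{2}-1)$, i.e.\ \eqref{eq:condeta} with $-\eta_0$ in place of the printed $+\eta_0$, which is the form actually needed.
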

\begin{proof}
This follows immediately from Corollary \ref{cor:kappanotsmallstatsoln}.
\end{proof}

\bibliographystyle{alpha}


	\end{document}